\documentclass[11pt]{article}
\usepackage[utf8]{inputenc}
\usepackage{amsmath,amsfonts,amssymb}
\usepackage{amsthm}
\usepackage{latexsym}
\usepackage[noadjust]{cite}
\usepackage{graphicx}
\usepackage{xcolor}
\usepackage{dirtytalk}
\usepackage{mathtools}
\usepackage[margin=1in]{geometry}
\usepackage{thm-restate}
\usepackage{enumitem}
\usepackage[linesnumbered,ruled]{algorithm2e}
\usepackage[colorlinks=true, allcolors=blue]{hyperref}
\usepackage[nameinlink,capitalise]{cleveref}
\hypersetup{
    citecolor={violet}
}

\newtheorem{theorem}{Theorem}[section]
\newtheorem{corollary}[theorem]{Corollary}
\newtheorem{lemma}[theorem]{Lemma}

\newtheorem{claim}[theorem]{Claim}
\theoremstyle{definition}
\newtheorem{definition}{Definition}[section]
\newtheorem{remark}[theorem]{Remark}
\newtheorem{fact}[theorem]{Fact}

\newcommand{\F}{\mathbb{F}}

\newcommand{\E}{\mathbb{E}}
\newcommand{\R}{\mathbb{R}}
\newcommand{\Z}{\mathbb{Z}}

\newcommand{\calS}{\mathcal{S}}

\newcommand{\zo}{\{0, 1\}}

\newcommand{\wt}{\mathrm{wt}}

\title{Fault-Tolerant Distance Oracles Below the $n \cdot f$ Barrier}
\date{\today}
\author{Sanjeev Khanna \thanks{Courant Institute, New York University. Supported in part by NSF award CCF-2402284 and AFOSR award FA9550-25-1-0107. Contact: \href{mailto:sanjeev.khanna@nyu.edu}{sanjeev.khanna@nyu.edu}.} \and Christian Konrad\thanks{School of Computer Science, University of Bristol, Bristol, UK. Contact: \href{mailto:christian.konrad@bristol.ac.uk}{christian.konrad@bristol.ac.uk}.} \and Aaron Putterman\thanks{Harvard University. Supported in part by the Simons Investigator awards of Madhu Sudan and Salil Vadhan, and AFOSR award FA9550-25-1-0112. Contact: \href{mailto:aputterman@g.harvard.edu}{aputterman@g.harvard.edu}.}}

\begin{document}
\pagenumbering{gobble}

\maketitle

\begin{abstract}
Fault-tolerant spanners are fundamental objects that preserve distances in graphs even under edge failures. A long line of work culminating in Bodwin–Dinitz–Robelle (SODA~2022) gives $(2k-1)$-stretch, $f$-fault-tolerant spanners with $
O\big(k^2 f^{\,\frac12-\frac{1}{2k}} n^{1+\frac1k} + k f n\big)
$
edges for any odd $k$. For any $k = \tilde{O}(1)$, this bound is essentially optimal for deterministic spanners in part due to a known folklore lower bound that \emph{any} $f$-fault-tolerant spanner requires $\Omega(nf)$ edges in the worst case.
For $f \ge n$, this $\Omega(nf)$ barrier means that any $f$-fault tolerant spanners are trivial in size. Crucially however, this folklore lower bound exploits that the spanner \emph{is itself a subgraph}. It does \emph{not} rule out distance-reporting data structures that may not be subgraphs. This leads to our central question: can one beat the $n \cdot f$ barrier with fault-tolerant distance oracles?

We give a strong affirmative answer to this question. As our first contribution, we construct $f$-fault-tolerant distance oracles
with stretch $O(\log(n)\log\log(n))$ that require only $\widetilde{O}(n\sqrt{f})$ bits of space; substantially below the spanner barrier of $n \cdot f$. Beyond this, in the regime $n \leq f \leq n^{3/2}$ we show that by using our new \emph{high-degree, low-diameter} decomposition, we can even achieve constant stretch. Specifically, we obtain stretch $7$ distance oracles in space $\widetilde{O}(n^{3/2}f^{1/3})$ bits.

Building on the principles used here, we obtain  two further consequences:
(a) randomized \emph{oblivious} sketches of size $\widetilde{O}(n\sqrt{f})$ that, after any $|F| \le f$ edge deletions by an \emph{oblivious} adversary, recover an $O(\log(n)\log\log(n))$-spanner of $G \setminus F$ with high probability, and
(b) insertion–deletion streaming algorithms that, under at most $f$ deletions overall, yield \emph{deterministic} $O(\log(n)\log\log(n))$-stretch distance oracles as well as \emph{randomized} $O(\log(n)\log\log(n))$-stretch spanners against an \emph{oblivious adversary} in $\widetilde{O}(n^{4/3} f^{1/3})$ space.

Conceptually, we exploit the robustness of high-degree graph structures with small diameter under bounded deletions to \emph{compress away} the parts of a graph that are information-theoretically expensive to store for deterministic spanners but not for deterministic oracles or sketches against oblivious adversaries. Our framework combines this structural robustness with ideas from deterministic and oblivious sparse recovery: dense regions act as inherently stable components requiring little explicit storage, while the small remaining boundary is summarized through compact deterministic sketches. When randomness is allowed, the same primitives yield oblivious sketches and streaming algorithms that materialize these certificates as edges. This unifying viewpoint drives all our results, achieving strong compression below the $n\cdot f$ barrier across static, oblivious, and streaming settings.

\end{abstract}

\pagebreak

\tableofcontents

\pagebreak  

\pagenumbering{arabic}

\section{Introduction}

\subsection{Background}\label{sec:background}

Spanners are a fundamental data structure in computer science, first introduced by Peleg and Ullman \cite{peleg1987optimal} and Peleg and Sch{\"a}ffer \cite{peleg1989graph} as a compact way to store approximate shortest path information in a graph. Given a graph $G = (V, E)$, a subgraph $H = (V, E')$ is said to be a $t$-spanner if, for every pair of vertices $(u,v) \in \binom{V}{2}$ 
\[
\mathrm{dist}_{H}(u,v) \leq t \cdot \mathrm{dist}_{G}(u,v),
\]
where we use $\mathrm{dist}_{H}(u,v)$ to mean the length of the shortest path between $u$ and $v$ in the graph $H$.
While constructing a $t$-spanner is straightforward (indeed, $G$ itself is a trivial $t$-spanner), the main algorithmic challenge lies in achieving \emph{sparsity}, that is, reducing the number of edges while approximately preserving distances. A sparse spanner naturally lends itself to faster computation of shortest paths on the spanner and also leads to more efficient storage. For these reasons, spanners have found a vast array of applications, for instance in routing \cite{peleg1989trade}, synchronizers \cite{awerbuch1990network}, broadcasting \cite{awerbuch1992efficient, peleg2000distributed}, and efficient sparsification algorithms \cite{ADKKP16}.
Motivated by these diverse applications, a long line of research has focused on understanding the tradeoff between stretch and sparsity, culminating in the seminal work of Alth{\"o}fer et al.~\cite{althofer1993sparse}, which established an essentially optimal relationship between the two.

However, a critical dimension missing in the aforementioned spanner constructions is that they are not \emph{fault-tolerant}. Indeed, in real-world network dynamics, one often must worry about failures in the underlying graph, motivating the question of whether a subgraph remains a spanner even after some edge failures in the original graph $G$. Formally, we say that a subgraph $H \subseteq G$ is an \textbf{$f$-fault tolerant $t$-spanner of $G$} if, for any possible subset of $f$ edge deletions $F$, it is the case that $H - F$ is still a $t$-spanner of $G - F$. A similar notion can be defined for vertex deletions which has inspired a comprehensive and parallel line of work (see, e.g., \cite{bodwin2018optimal, bodwin2019trivial, bodwin2021optimal, dinitz2020efficient, parter2022nearly}).

Such fault-tolerant spanners were first introduced by Levcopoulos, Narasimhan, and Smid \cite{levcopoulos1998efficient}, though their work addressed only the geometric setting. For general graphs, the study was initiated later by Chechik, Langberg, Peleg, and Roditty~\cite{chechik2009fault}, launching a sustained effort to understand the three-way tradeoff between \emph{stretch}, \emph{sparsity}, and \emph{fault-tolerance}.

In their original work, \cite{chechik2009fault} showed that for an $f$-fault tolerant spanner with stretch $2k-1$, one can achieve a sparsity bound of $O(f \cdot n^{1 + 1 / k})$ edges. This was subsequently improved in a sequence of works of \cite{bodwin2018optimal,  bodwin2019trivial, bodwin2021optimal, dinitz2020efficient, popova2026new} and finally in the work of Bodwin, Dinitz, and Robelle \cite{bodwin2022partially}, culminated in $f$-fault tolerant spanners with stretch $2k-1$ and sparsity

\begin{align}\label{eq:bodwinSparsity}
O(k^2 f^{1/2 - (1/2k)}n^{1 + 1/k} + kfn) 
\end{align}
many edges when $k$ is odd, and 
\[
O(k^2 f^{1/2}n^{1 + 1/k} + kfn)
\] 
when $k$ is even.

Prior to this, the work of Bodwin, Dinitz, Parter, and Williams \cite{bodwin2018optimal} had already established \emph{lower bounds} on the sparsity of any $f$-fault tolerant $2k-1$-spanner, showing that any such graph must retain $\Omega(f^{1/2 - 1/2k} \cdot n^{1 + 1/k})$ many edges. This lower bound essentially establishes \cite{bodwin2022partially} as an optimal fault-tolerant spanner result (at least in the case of $k$ odd).
At first glance, the extra additive term $O(nf)$ in~\cref{eq:bodwinSparsity} might appear to leave a gap, but it is in fact unavoidable. Any $f$-fault-tolerant spanner must, in the worst case, contain $\Omega(nf)$ edges. To see this, consider any $f$-regular graph $G$. We claim that any $f$-fault tolerant spanner $H$ of $G$ must in fact be \emph{equal} to $G$. Indeed, if, for some vertex $v$, $H$ stores $\leq f-1$ of the edges incident to $v$, then by setting the deletion set $F = \Gamma_{H}(v)$ (i.e., all adjacent edges to $v$ in the spanner), it is then the case that $v$ is a disconnected vertex in $H - F$, while $v$ still has at least one neighbor in $G - F$, thereby violating the spanner condition. Hence, any $f$-fault tolerant spanner of $G$ (for any finite stretch) must in fact be $G$, and we obtain our $\Omega(nf)$ lower bound.

Interestingly, in the same work, Bodwin, Dinitz, Parter, and Williams~\cite{bodwin2018optimal} also examined an analogous question for more general \emph{distance-reporting data structures}. Formally:
\begin{definition}\label{def:distanceOracle}
	We say that a data structure $\mathcal{S}$ is an \textbf{$f$-fault-tolerant $t$-distance oracle} if, when instantiated on any graph $G = (V, E)$, the data structure can be queried with a failure set $F \subseteq E$ satisfying $|F| \leq f$ and a vertex pair $a,b \in V$, returning a value $\widehat{\mathrm{dist}}_{G-F}(a,b)$ such that
	\[
	\mathrm{dist}_{G-F}(a,b) \leq \widehat{\mathrm{dist}}_{G-F}(a,b) \leq t \cdot \mathrm{dist}_{G-F}(a,b).
	\]
\end{definition}

In essence, an $f$-fault-tolerant $t$-distance oracle answers the same queries as an $f$-fault-tolerant $t$-spanner, reporting approximate distances after up to $f$ failures, but \emph{without} requiring the stored object to be a subgraph of $G$.
Beyond establishing the $f$-fault-tolerant $(2k-1)$-spanner lower bound of $\Omega(f^{1/2 - 1/2k} n^{1 + 1/k})$ edges, Bodwin et al.~\cite{bodwin2018optimal} also asked whether a comparable lower bound could extend to general $f$-fault-tolerant $(2k-1)$-\emph{distance oracles}. They indeed proved an $\Omega(f^{1/2 - 1/2k} n^{1 + 1/k})$ \emph{bit-complexity} lower bound, but only for a much stronger model of deletions, one that permits deletions $F$ to be completely arbitrary, not necessarily restricted to being a subset of $E$ \cite{bodwin2018optimalCorrection}.\footnote{Note that in this stronger deletion model, one can in fact easily show a lower bound of $\Omega(nf)$ bits, and so we do not use this model in our work. See~\cref{sec:edgeSubsetCondition} for details.} 

Nevertheless, this discussion of fault-tolerant distance oracles raises a tantalizing possibility. The folklore $\Omega(nf)$ lower bound for deterministic fault-tolerant spanners relies \emph{crucially} on the fact that the representation must itself be a graph. There is, at present, no corresponding barrier known for more general distance-reporting structures. Moreover, for nearly all parameter regimes in which $f$ is polynomially large, the proposed bound of $\Omega(f^{1/2 - 1/2k} n^{1 + 1/k})$ bits is \emph{asymptotically smaller} than the $nf$ folklore limit. This discrepancy highlights the possibility that classical spanners might be storing far more information than is necessary to answer distance queries reliably. These observations lead to the central question motivating this work:

\begin{center}
\emph{Can $f$-fault tolerant $t$-distance oracles be implemented in $o(nf)$ bits of space?}
\end{center}

As our main contribution, we answer this question in the affirmative, showing that one can decisively surpass the $nf$ barrier.

\paragraph{Our conceptual approach.}
We exploit the robustness of high-degree graph structures under bounded deletions to \emph{compress away} the parts of a graph that are information-theoretically expensive to store for deterministic spanners but not for deterministic oracles or sketches against oblivious adversaries.
Our framework combines this structural robustness with ideas from deterministic and oblivious sparse recovery: the structured dense regions act as intrinsically stable “cores’’ whose connectivity can be reasoned about without storing all edges explicitly, while the small residual boundary of the graph is captured exactly using compact sketches.
This hybrid viewpoint, namely, structural robustness for the dense part, and algebraic recovery for the sparse part, forms the unifying principle behind all our results. It allows us to bypass the $n\!\cdot\!f$ barrier entirely, yielding subquadratic deterministic oracles, oblivious sketches, and streaming algorithms. 

The remainder of this section presents these results in turn, before the Technical Overview (\cref{sec:technical_overview}) describes the mechanisms in detail.

\subsection{Our Contributions}

\paragraph{Fault-Tolerant Distance Oracles}

Our first contribution is the design of an $f$-fault-tolerant $O(\log(n)\log\log(n))$-stretch distance oracle that decisively bypasses the folklore $\Omega(nf)$ lower bound for spanners:

\begin{theorem}[Deterministic oracles in $\widetilde{O}(n\sqrt{f})$ space]\label{thm:introDO} 
There is a \emph{deterministic} data structure, which, when instantiated on any graph $G=(V,E)$ and $f \in [0,\binom{n}{2}]$:
\begin{enumerate}
\item uses $\widetilde{O}(n\sqrt{f})$ bits of space, and
        \item supports a query operation which, for any $F \subseteq E$ satisfying $|F| \leq f$, and any $u, v \in V$, returns a value $\widehat{\mathrm{dist}}_{G - F}(u,v)$ such that 
        \[
        \mathrm{dist}_{G - F}(u,v)\leq \widehat{\mathrm{dist}}_{G - F}(u,v) \leq O(\log(n)\log\log(n)) \cdot \mathrm{dist}_{G - F}(u,v).
        \]
    \end{enumerate}
\end{theorem}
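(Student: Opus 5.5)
We construct the oracle by splitting the graph into a dense part, which is robust to $f$ deletions and needs little explicit storage, and a sparse part, which we store through deterministic sparse-recovery sketches. Fix a degree threshold $d = \Theta(\sqrt{f}\cdot \mathrm{polylog}(n))$.

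\textbf{Step 1: high-degree core decomposition.} Repeatedly peel vertices whose current degree is below $d$. The edges removed this way form a set $E_{\mathrm{low}}$ of at most $nd = \widetilde{O}(n\sqrt f)$ edges, and we store them explicitly. What remains is a subgraph $G_{\mathrm{high}}$ of minimum degree at least $d$. We decompose $G_{\mathrm{high}}$ into clusters of low diameter using an expander-style or low-diameter decomposition with $O(\log n)$ levels. Cutting the inter-cluster edges costs only a $1/\log n$ fraction per level; these edges get charged and recursed on, and this recursion over $O(\log n)$ levels with $O(\log\log n)$-type diameter loss per level is where the $O(\log n\log\log n)$ stretch would come from. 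Within each cluster we store a sparse certificate, namely a spanning structure of diameter $O(\log n)$, together with the cluster identities. This takes $\widetilde O(n)$ bits per level.

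\textbf{Step 2: robustness of clusters.} The central structural claim is that a cluster with minimum degree $d$ (or conductance $\phi$) cannot be split badly by $f$ deletions. Only vertices that lose a constant fraction of their cluster-degree can be separated from the core, and there are at most $O(f/d) = O(\sqrt f)$ such vertices per affected region. After deleting $F$, the remaining core still has low diameter, by expander pruning (Saranurak–Wang style). So at query time, given $F$, we deterministically run pruning on the stored cluster information to identify the pruned vertices. The remaining core is treated as a super-node of diameter $O(\log n)$. This step is the main obstacle, because we must do the pruning without having stored the cluster's edges. My first attempt would be to store each cluster as an expander certificate of degree $\mathrm{polylog}(n)$, together with per-vertex degree counts inside the cluster. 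That lets us bound, from $F$ alone, which vertices may have lost connectivity.

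\textbf{Step 3: handling pruned vertices by deterministic sparse recovery.} A pruned vertex $v$ must be reconnected through its surviving edges. For each vertex we store a deterministic $k$-sparse recovery sketch of its incident edge vector (for example a Vandermonde / Reed–Solomon syndrome) with $k = \Theta(\sqrt f)$. This takes $\widetilde O(\sqrt f)$ bits per vertex, so $\widetilde O(n\sqrt f)$ in total. Vertex-cluster incidence is aggregated, so that the vector of $v$'s edges into each cluster is summed. At query time we subtract the known edges of $F$ from the sketches. A vertex that retains few neighbours (at most $\sqrt f$) has its remaining edges recovered exactly. A vertex that retains many neighbours must send them to cores, since few vertices are pruned, and a degree or averaging argument lets us recover at least one edge into a surviving core. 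We may also need to recurse: pruned vertices adjacent only to pruned vertices are handled by the sketch sums over the pruned set, which has size $O(\sqrt f)$ per region.

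\textbf{Step 4: assembling the answer.} Build the query graph $H_F$ consisting of:
\begin{itemize}
\item the stored low-degree edges minus $F$;
\item the super-nodes of the surviving cores, each weighted by its diameter bound;
\item the recovered edges.
\end{itemize}
Return $\mathrm{dist}_{H_F}(u,v)$. The lower bound $\mathrm{dist}_{G-F}(u,v) \le \widehat{\mathrm{dist}}_{G-F}(u,v)$ holds because every edge of $H_F$ is real and every core is connected in $G-F$. For the upper bound, we replace each edge of a shortest $G-F$ path either by itself, if it was stored or recovered, or by a traversal through a core. This costs a factor equal to the core diameter times the number of hierarchy levels, giving $O(\log n\log\log n)$.

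Summing the costs gives total space $\widetilde O(nd + n\sqrt f) = \widetilde O(n\sqrt f)$.
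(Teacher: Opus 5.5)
Your Step 2 is the decisive step, you leave it open, and the mechanism you propose for it cannot work.

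\textbf{The missing robustness argument.} Expander pruning at query time needs the cluster's edge set, which you do not store. Storing a $\mathrm{polylog}(n)$-degree certificate instead does not help: $F$ is arbitrary (and may be chosen adaptively), so it can delete the whole certificate, after which pruning on it says nothing about $G-F$. Pruning also does not give the property you actually need, namely that the surviving core is determined by $F$ and the stored degree counts alone. The pruned set depends on the edges. In a $\phi$-expander whose minimum degree is below roughly $\sqrt{f/\phi}$, $f$ deletions can isolate a dense set whose vertices all keep most of their degree. The missing idea is a direct robustness lemma that uses the cluster minimum degree $D \ge 8\sqrt{f}\log n$:
\begin{enumerate}
\item Let $V_{\mathrm{good}}$ be the vertices whose post-deletion degree in the cluster is at least $D/2$. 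The bad vertices number at most $4f/D$ and have original volume at most $4f$. Hence every cut $S \subseteq V_{\mathrm{good}}$ loses at most $5f$ boundary edges.
\item If the smaller side has volume at least $f\log^2 n$, this loss is negligible against its $\Omega(f\log n)$ boundary.
\item Otherwise that side has at most $\sqrt{f}\log n/2$ vertices. Every good vertex keeps degree at least $D/4 \ge 2\sqrt{f}\log n$ inside $V_{\mathrm{good}}$, so half its edges leave $S$.
\item So $(G-F)[V_{\mathrm{good}}]$ is still an $\Omega(1/\log n)$ lopsided expander, with no pruning at all.
\item By pigeonhole, every vertex of post-deletion cluster degree at least $4f/D+1$ has a good neighbor.
\end{enumerate}
Degree counts alone therefore certify that all such vertices are pairwise within $O(\log n\log\log n)$ in $G-F$. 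The certificate, the pruning, and the recursion over pruned vertices become unnecessary. This does need each cluster to inherit minimum degree $D$ together with its conductance, not one or the other. Peeling before decomposing does not guarantee this; the paper peels at $D\log^2 n$ and uses the self-loop trick.

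\textbf{Other steps that need repair.}
\begin{itemize}
\item A deterministic $k$-sparse recovery sketch yields nothing usable once its support exceeds $k$. So you cannot ``recover at least one edge into a surviving core.'' Only the existence of a good neighbor is needed: the oracle can put all vertices of post-deletion degree above $4f/D$ in a cluster into one clique or super-node.
\item The sketches must be kept separately for $v$'s edges inside its own cluster at each level, not aggregated across clusters. Otherwise $v$ can drop below the threshold in one cluster while keeping many edges elsewhere, and decoding fails.
\item You need a deterministic rule assigning each $e \in F$ to the cluster that contains it, so that degree counts and sketches are updated exactly. The rule is the first level at which both endpoints share a cluster, valid because clusters are induced subgraphs of that level's residual graph.
\item Your stretch accounting is wrong. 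Each edge of a shortest $G-F$ path lies in exactly one level's cluster, or among the explicitly stored edges, and is replaced independently, so levels do not multiply. The $O(\log n\log\log n)$ is just the diameter bound $O(\log\log n/\phi)$ for a $\phi=\Omega(1/\log n)$ lopsided expander. ``Core diameter $O(\log n)$ times $O(\log n)$ levels'' would give $O(\log^2 n)$.
\end{itemize}
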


Since the oracle above is deterministic, its guarantees hold simultaneously for all $F \subseteq E$ with $|F|\le f$ and all vertex pairs $u,v \in V$. This result establishes a sharp separation between fault-tolerant distance oracles and fault-tolerant spanners: the former can be implemented in polynomially smaller space. For example, when $f=n$, no nontrivial deterministic fault-tolerant spanner exists for any finite stretch, whereas \cref{thm:introDO} demonstrates that significant compression, namely, by a factor of $\sqrt{f}=\sqrt{n}$, is achievable once we shift to the goal of distance reporting.

Our data structure is also robust enough to work with weighted graphs: for graphs with polynomially bounded edge weights, we show that $f$-fault-tolerant $O(\log(n)\log\log(n))$-stretch distance oracles can still be implemented in $\widetilde{O}(n\sqrt{f})$ bits of space (see \cref{sec:weighted} for details).

While \cref{thm:introDO} is stated existentially, we further show that, with only an additional $\log(n)$ factor loss in the stretch guarantee, the oracle can be constructed and queried efficiently in polynomial time:

\begin{theorem}[Poly-time deterministic oracles in $\widetilde{O}(n\sqrt{f})$ space]\label{thm:IntrofaultTolerantDOEfficient}
There is a \emph{deterministic} data structure, constructible in \emph{polynomial-time}, such that, when instantiated on any graph $G=(V,E)$ and $f \in [0,\binom{n}{2}]$:
\begin{enumerate}
\item uses $\widetilde{O}(n\sqrt{f})$ bits of space, and
\item supports a polynomial time query operation which, for any $F \subseteq E$ satisfying $|F| \leq f$, and any $u, v \in V$, returns a value $\widehat{\mathrm{dist}}_{G - F}(u,v)$ such that 
        \[
        \mathrm{dist}_{G - F}(u,v)\leq \widehat{\mathrm{dist}}_{G - F}(u,v) \leq O(\log^3(n)) \cdot \mathrm{dist}_{G - F}(u,v).
        \]
    \end{enumerate}
\end{theorem}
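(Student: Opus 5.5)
The plan is to make the existential construction behind \cref{thm:introDO} algorithmic, and to pay one extra $\log(n)$ factor in stretch for replacing its non-constructive piece by an efficient substitute. The oracle has two layers.

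\textbf{High-degree, low-diameter cores.} Set a degree threshold $d \approx \sqrt{f}\cdot\mathrm{polylog}(n)$. Repeatedly peel off vertices of degree below $d$, so that every remaining vertex has degree at least $d$ within the surviving subgraph. Partition the surviving subgraph into clusters of low diameter using a low-diameter decomposition, for example the deterministic ball-growing procedure of Awerbuch--Peleg or Linial--Saks type. This runs in polynomial time and gives clusters of diameter $O(\log n)$ with few inter-cluster edges. The existential proof may get diameter $O(\log n \log\log n)$ through a finer decomposition. Here I would accept a weaker, efficiently computable decomposition, and this is where the extra $\log n$ enters, giving the final $O(\log^3 n)$.

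\textbf{Robustness of the cores.} The key structural claim is that inside a cluster where every vertex has degree at least $d$, any $f$ deletions disconnect, or lengthen paths for, only vertices that lose a constant fraction of their degree. Since losing a constant fraction costs at least $d/2$ deletions per vertex, only $O(f/d)=O(\sqrt f)$ vertices are affected. The remaining vertices keep $O(\log n)$-hop connectivity, shown by an expansion or ball-growing argument inside the cluster with stretch factor $O(\log^2 n)$. For each cluster we store only its vertex set and a spanning certificate. At query time we recompute which vertices lost many edges.

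\textbf{Sparse boundary.} All edges incident to peeled vertices, together with the inter-cluster edges, form a graph of total size about $n\sqrt f$. These we store explicitly, or, where multiplicity is high, through deterministic $k$-sparse recovery sketches. Vandermonde- or Reed--Solomon-style syndromes over a field of size $\mathrm{poly}(n)$ work here, with $k=O(\sqrt f)$ per vertex. From such a sketch, any vertex whose surviving neighbourhood toward the boundary has size at most $k$ can be decoded exactly after subtracting the deleted edges in $F$. Decoding is a polynomial-time syndrome decoding via Berlekamp--Massey. The affected core vertices, of which there are only $O(\sqrt f)$ per cluster, are handled the same way: their incident edges are recovered from their sketches once the edges in $F$ are subtracted.

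\textbf{Query.} Given $F$ and $u,v$, the query proceeds as follows:
\begin{enumerate}
\item Remove $F$ from the stored explicit edges and from the sketches.
\item Decode the neighbourhoods of the affected vertices.
\item Contract each intact core to a virtual star, with weight equal to its diameter bound.
\item Run Dijkstra on the resulting weighted graph, which has size $\widetilde O(n\sqrt f)$.
\end{enumerate}
The returned value never underestimates $\mathrm{dist}_{G-F}(u,v)$, because every virtual edge is backed by a real path in $G-F$ of at most that length. Stretch $O(\log^3 n)$ follows because each edge of a shortest path either survives in the recovered graph or lies inside an intact core, where it is replaced by a detour of length $O(\log^3 n)$.

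\textbf{Main obstacle.} The hardest step is certifying efficiently, at query time, that a core remains low-diameter after the deletions. Without enumerating exponentially many failure sets, we must know which core vertices to treat as affected. I would try a degree-loss criterion: mark a core vertex as affected if $F$ contains at least $d/4$ of its core edges. I would then prove, via an iterated peeling at query time (a cascade of at most $f/(d/4)$ removals), that the unmarked vertices still have $O(\log^2 n)$ diameter. The proof would rely on the $O(\log n)$-diameter guarantee from the polynomial-time decomposition combined with a ball-growing expansion bound.
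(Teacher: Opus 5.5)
There is a genuine gap, and it sits in the step you flag as the main obstacle. The structural claim in your ``Robustness of the cores'' paragraph is false for clusters that are only known to have low diameter and minimum degree $d$, so a low-diameter decomposition cannot replace the expander decomposition.

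Here is a counterexample. Take two $d$-regular expanders $A$ and $B$ on $f$ vertices each, with $d\approx\sqrt f\,\mathrm{polylog}(n)$, and join them by a perfect matching $M$. This graph has minimum degree $d$ and diameter $O(\log n)$. An LDD may therefore legitimately return it as a single cluster, and ball growing has no reason to split it, since every ball grows in both halves at once through $M$.

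Now let $F=M$. This disconnects $A$ from $B$, yet every vertex loses exactly one incident edge. Your degree-loss criterion marks nobody and no query-time cascade starts. The cluster is contracted to a star, and the oracle returns a finite value for $u\in A$, $v\in B$, although $\mathrm{dist}_{G-F}(u,v)=\infty$. So the guarantee $\mathrm{dist}_{G-F}\le\widehat{\mathrm{dist}}_{G-F}$ fails. No rule based on degrees can catch this, because the damage is invisible in the degrees.

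The missing ingredient is conductance. The paper keeps the expander decomposition and only makes it efficient:
\begin{itemize}
\item It first peels vertices of degree at most $100D\log^4 n$.
\item It then repeatedly cuts along an $O(\log n)$-approximate sparsest cut (Leighton--Rao), adding self-loops so that conductance is measured against the original degrees.
\item This yields $\Omega(1/\log^2 n)$-expanders of minimum degree $D=8\sqrt f\log^2 n$.
\end{itemize}
In such a cluster, every set of volume at least roughly $f\log^4 n$ has $\Omega(f\log^2 n)\gg 5f$ crossing edges. Such a set therefore survives both the deletions and the removal of the at most $4f/D$ ``bad'' vertices that lost half their degree. Smaller sets are handled by the minimum degree. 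Hence the good vertices still induce an $\Omega(1/\log^2 n)$-expander of diameter $O(\log^3 n)$. This degraded conductance, not any diameter loss in an LDD, is where the extra $\log n$ comes from.

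Two further parts of your plan would also need to change.
\begin{itemize}
\item \textbf{Vertex classification.} It must be by post-deletion degree, with threshold $k=4f/D$. A vertex of degree at most $k$ is decoded from its syndrome, which is stored for every cluster vertex. A vertex of degree above $k$ necessarily has a good neighbor, because there are at most $4f/D$ bad vertices. Your ``lost at least $d/4$ edges'' rule leaves vertices whose remaining degree is far above $k$; these can be neither decoded nor certified.
\item \textbf{Inter-cluster edges.} These are in general a constant fraction of $m$, not $\widetilde O(n\sqrt f)$, so they cannot simply be stored. The paper recurses $O(\log n)$ times on them. Each deleted edge is routed to the first cluster containing both endpoints, which is valid because clusters are induced subgraphs of the residual graph. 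This ensures the correct sketch and degree counter are updated.
\end{itemize}
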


Both of these aforementioned results rely on \emph{expander decomposition}, and the $\mathrm{polylog}(n)$ stretch exactly reflects the naive bound on the diameter of an expander. Thus, a natural question to ask is whether these distance oracles can still achieve $o(nf)$ bits of space \emph{while also achieving constant stretch}. In this direction, we provide an affirmative answer:

\begin{theorem}[Poly-time deterministic oracles with constant stretch]\label{thm:constantStretchIntro}
There is a \emph{deterministic} data structure, constructible in \emph{polynomial-time}, such that, when instantiated on any graph $G=(V,E)$ and $n \leq f \leq n^{3/2}$:
\begin{enumerate}
\item uses $\widetilde{O}(n^{3/2}f^{1/3})$ bits of space, and
\item supports a polynomial time query operation which, for any $F \subseteq E$ satisfying $|F| \leq f$, and any $u, v \in V$, returns a value $\widehat{\mathrm{dist}}_{G - F}(u,v)$ such that 
        \[
        \mathrm{dist}_{G - F}(u,v)\leq \widehat{\mathrm{dist}}_{G - F}(u,v) \leq 7 \cdot \mathrm{dist}_{G - F}(u,v).
        \]
    \end{enumerate}
\end{theorem}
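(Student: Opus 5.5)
The plan is to use a \emph{high-degree, low-diameter decomposition}: split $G$ into dense clusters of small diameter that survive any $f$ deletions, plus a sparse remainder that we store exactly. Fix a degree threshold $d$, to be chosen later with $d \gg f^{1/3}$ and the right choice governing the space bound. Greedily, while some vertex $v$ has at least $d$ neighbors not yet clustered, create a cluster consisting of $v$ together with those neighbors. Each cluster has a center joined by a star to at least $d$ members, so there are at most $n/d$ clusters. Every unclustered vertex ends with fewer than $d$ edges to unclustered vertices, so those edges (at most $nd$) are stored explicitly.

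The difficulty is that a star is not robust: deleting a member's edge to the center costs only one fault. To handle this, I would strengthen clusters to 2-hop neighborhoods, or equivalently require every member to share many common neighbors with the center. Concretely, a cluster $C$ is kept only if every pair $x,y\in C$ has more than $2f/|C|$-type redundancy: for each $x \in C$ there are at least $f+1$ edge-disjoint paths of length at most $3$ to the center through $C$. Then no set $F$ with $|F|\le f$ can disconnect $x$ from the center within distance $3$. I expect to extract such clusters by iterative peeling, deleting members with too few such paths. Members of a robust cluster then lie within distance $6$ of each other in $G-F$ for every $F$.

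For edges between different clusters, or between a cluster and unclustered vertices, I would store for each pair (vertex $u$, cluster $C$) either the full list of edges from $u$ into $C$, if there are at most $f$ of them, or just a flag saying ``$\ge f+1$ edges'', since that connection survives any $F$. Storing up to $f$ edges per pair would exceed the budget, so I would instead use deterministic sparse recovery (an $f$-sparse recovery sketch of size $\widetilde{O}(\cdot)$) on the incident edge vector, and balance thresholds so the total is $\widetilde{O}(n\cdot n/d\cdot s + nd)$. Choosing $d\approx n^{1/2}f^{1/3}$ should give $\widetilde{O}(n^{3/2}f^{1/3})$ when $n\le f\le n^{3/2}$.

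At query time, build a graph $H$ on $V$ as follows. It contains the stored sparse edges minus $F$. It contains recovered inter-cluster edges, obtained by subtracting $F$ from the sketches once they become sparse. It contains a virtual ``distance-$\le$'' connection of length $6$ between any two vertices of the same cluster, and a length-1 edge for each flagged robust pair $(u,C)$ to all of $C$. Return the distance in $H$, with weights set so that the answer never underestimates $\mathrm{dist}_{G-F}$. For stretch, take a shortest path $P$ in $G-F$ and replace each maximal segment inside a cluster by a jump. Each edge of $P$ leaving a cluster is either explicitly present or certified robust. Consecutive cluster visits cost at most $6+1$ per original edge, which gives stretch $7$.

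The main obstacle is the cluster construction: guaranteeing robust diameter $\le 6$ under \emph{any} $f$ deletions while keeping the number of clusters $O(n/d)$, and making the charging argument give exactly $7$ rather than a larger constant. I would first try to prove the peeling lemma, that peeling vertices with fewer than $d/2$ neighbors inside the cluster leaves a cluster where any two members share at least $f+1$ common neighbors whenever $d^2/n > f$.
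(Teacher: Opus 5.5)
There is a genuine gap, and it lies in the robustness mechanism itself. When $n \le f \le n^{3/2}$, every vertex has degree at most $n-1 < f+1$, so $F$ can delete \emph{every} edge at any chosen vertex. Three of your ingredients therefore fail.
\begin{itemize}
\item No cluster member can have $f+1$ edge-disjoint paths to its center.
\item No pair $(u,C)$ can ever carry the flag ``$\ge f+1$ edges''.
\item The hypothesis $d^2/n > f$ of your peeling lemma cannot hold, because $d \le n$ gives $d^2/n \le n \le f$. Your own choice $d \approx n^{1/2}f^{1/3}$ gives $d^2/n \approx f^{2/3} \ll f$.
\end{itemize}
So every robust cluster is empty and no flags ever exist.

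If you nonetheless insert the length-$6$ intra-cluster connections, the oracle becomes incorrect. An $F$ that isolates one member, or the center (at most $n-1 \le f$ deletions), makes $\mathrm{dist}_{G-F}$ infinite while $H$ reports a finite value. This violates $\mathrm{dist}_{G-F} \le \widehat{\mathrm{dist}}_{G-F}$. Without flags, each $(u,C)$ sketch must also recover up to $|C|$ surviving edges (e.g.\ when $F=\emptyset$). It must then encode the entire edge set between $u$ and $C$, and the space degenerates to essentially storing $G$. A guarantee that holds for every vertex under every $F$ is unavailable in this regime; it has to be conditional on what $F$ does.

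The missing ideas are a conditional robustness lemma paired with a deterministic dichotomy, plus redundancy across roots.

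\textbf{Conditional robustness.} The paper uses $2$-hop stars $\{v,L_1,L_2\}$ of minimum degree $d^* \approx d^2/(2n) \approx f^{2/3}\log^2 n$, far below $f$. Suppose the root loses at most $d^*/2$ star edges. Then every vertex whose \emph{post-deletion} star degree is at least $5f/d^*$ reaches the root within $3$ hops from $L_1$, or $4$ hops from $L_2$. The proof is an averaging argument. The surviving neighbors of such a vertex had total degree at least $5f$ before deletions, and only $2f$ of that volume can be removed. So some neighbor keeps more than half its degree, hence more than $d^*/2$ neighbors. Since at most $d^*/2$ vertices of $L_1$ lost their root edge, one of those neighbors still reaches the root.

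\textbf{Dichotomy via sparse recovery.} A vertex whose star degree drops below $5f/d^* = O(f^{1/3})$ has all its remaining star edges recovered exactly. This uses a sparse-recovery sketch of threshold $O(f^{1/3})$, which is affordable. Stored degree counters decide which case applies. They are updated correctly because the stars covering each deleted edge can be identified from the stored vertex sets: they are its first $10f^{1/3}$ eligible stars.

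\textbf{Redundancy across roots.} The root's fragility is handled by covering, not by robustness. Each edge lies in $10f^{1/3}$ stars with distinct roots, so some covering star's root loses at most $2f/(10f^{1/3}) < d^*/2$ edges. Your disjoint single-center clusters have no such redundancy.

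\textbf{Stretch and space.} Stretch $7 = 3+4$ follows because every $2$-hop-star edge joins $L_1$ to $L_2$ or to the root. The space bound comes from charging $\widetilde{O}(f^{1/3})$ bits per star vertex against the at most $20 f^{1/3} n d$ edge-slots available at each dyadic level $d \ge n^{1/2}f^{1/3}\log n$.
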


For instance, using $f = n$, the above implies that stretch $7$ distance oracles can be implemented in $\widetilde{O}(n^{11/6})$ bits of space, while any $n$-fault-tolerant spanner with stretch $7$ would necessarily require $\Omega(n^2)$ many edges.

Proving this theorem relies on a new type of graph decomposition into sub-structures that we call \emph{$2$-hop stars}, which simultaneously provide high minimum degree and constant diameter, even under potentially huge numbers of edge deletions. We view this new graph decomposition as one of our core contributions, and elaborate further upon this technique in \cref{sec:technical_overview}.

Even when unbounded computational time is available, we conjecture that $\Omega(n\sqrt{f})$ bits are necessary for any $f$-fault-tolerant $\mathrm{polylog}(n)$-stretch distance oracle, suggesting that our bounds are essentially tight. Prior work on \emph{vertex-incidence linear sketches}, a central technique (see \cite{AGM12, AGM12b, GMT15} for examples of its use) for compressing graphs in settings with an arbitrary number of edge deletions, provides compelling evidence that any linear sketch capable of recovering a $\mathrm{polylog}(n)$-stretch distance oracle requires $n^{2-o(1)}$ bits~\cite{filtser2021graph, CKL22}. In our setting when $f = \Omega(n^2)$ (morally, the regime of unbounded deletions), these results imply an $\Omega(n f^{1/2 - o(1)})$-bit barrier for general fault-tolerant distance oracles, unless there is a non-trivial separation between the sizes of vertex-incidence linear sketches and arbitrary distance oracles. This matches our upper bound up to lower-order terms. We discuss this conjecture in greater detail, and show how it would even imply an $\Omega(n f^{1/2 - o(1)})$-bit lower bound for \emph{arbitrary} $f$ in \cref{sec:oneWayCommunication}.

\paragraph{Fault-Tolerant Oblivious Spanners}

In many applications, however, it is desirable to explicitly recover a spanner of the underlying graph that reports actual paths and not merely distances. The folklore $\Omega(nf)$ lower bound rules this out for deterministic constructions or against adaptive failures. This naturally raises the question: can we achieve similar bounds when the failures are \emph{oblivious}, i.e., chosen independently of the algorithm’s randomness? Our second main result answers this affirmatively, showing that one can recover spanners after oblivious deletions while maintaining the same $\widetilde{O}(n\sqrt{f})$ space as in \cref{thm:introDO}:

\begin{theorem}[Oblivious sketches in $\widetilde{O}(n\sqrt{f})$ space]\label{thm:IntroObliviousSpanner}
There is a \emph{randomized} sketch, which, when applied to a graph $G=(V,E)$ and $f \in [0,\binom{n}{2}]$, uses $\widetilde{O}(n\sqrt{f})$ bits of space and for any set of edge deletions $F \subseteq E$ with $|F| \le f$, with probability at least $1 - 1/{\rm poly}(n)$ over the randomness of sketch,
can be used to recover an $O(\log(n)\log\log(n))$-stretch spanner of $G - F$.
\end{theorem}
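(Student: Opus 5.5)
The plan is to reuse the structure behind \cref{thm:introDO}: split $G$ into a dense part that survives deletions without being stored explicitly, and a sparse remainder that is handled by sparse recovery. The one change is that the deterministic recovery sketches are replaced by randomized linear sketches (in the style of $\ell_0$-sampling / AGM sketches), so that actual edges are output rather than only distance certificates.

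\textbf{Step 1 (decomposition).} Set a degree threshold $d = \Theta(\sqrt{f}\,\mathrm{polylog}(n))$ and peel $G$ in rounds.
\begin{enumerate}
\item While some vertex has degree less than $d$ in the remaining graph, remove it and put its incident edges into a \emph{sparse} set $E_s$. This gives $|E_s| \le nd = \widetilde{O}(n\sqrt{f})$, so $E_s$ can be stored explicitly.
\item The remaining graph has minimum degree at least $d$. Run an expander decomposition on it with conductance $\phi = 1/\mathrm{polylog}(n)$. Each expander cluster has diameter $O(\log(n)\log\log(n)/\phi')$ for the appropriate boundedness parameter, and the inter-cluster edges are a $\widetilde{O}(\phi)$ fraction.
\item Recurse on the inter-cluster edges, giving $O(\log n)$ levels.
\end{enumerate}
Throughout, degrees and volumes are kept at least $d$ inside each cluster.

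\textbf{Step 2 (sketches).} For each expander cluster $C$ I would store only $\widetilde{O}(|C|\sqrt{f})$ bits. This consists of $\widetilde{O}(\sqrt{f})$ randomized $\ell_0$-samplers per vertex, applied to its incident edges within the cluster, together with the vertex's membership information. At query time the deletion set $F$ is known, so its contribution is subtracted from every linear sketch and the sketches then describe $C \setminus F$. The key structural claim is a robustness lemma: an expander with minimum degree $\ge d \gg \sqrt{f}$ loses at most a small fraction of each vertex's degree after $|F|\le f$ deletions, except for a ``damaged'' set $D$ with $\sum_{v\in D} \deg(v) = O(f/\phi)$.
\begin{itemize}
\item Vertices outside $D$ stay in a pruned expander of $C\setminus F$. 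Its diameter remains $O(\log(n)\log\log(n))$, measured up to the conductance-dependent factor, and because $F$ is oblivious, independent samples of its edges via the samplers yield a connected low-diameter spanner w.h.p. (a cut-sampling argument over the expander).
\item For damaged vertices, their total incident volume is $O(f/\phi)$. Vertices of degree at least $\sqrt{f}$ number at most $O(\sqrt{f}/\phi)$ per cluster. Vertices of degree at most $\sqrt{f}$ can have their whole neighborhood recovered by a $\sqrt{f}$-sparse-recovery sketch. Higher-degree damaged vertices retain at least $\deg - |F\cap\delta(v)|$ edges, and the $\ell_0$ samplers return one of them, which reattaches the vertex to a surviving neighbor.
\end{itemize}

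\textbf{Step 3 (stretch).} For an edge $(u,v)\in G-F$, either it lies in $E_s$ and is stored, or it is inter-cluster and handled at a deeper recursion level, or it lies within a cluster. In the last case $u$ and $v$ are connected in the recovered subgraph by a path of length $O(\log(n)\log\log(n))$ through the surviving core. Summing over a shortest path then gives the stretch bound. A union bound over the $\mathrm{poly}(n)$ sampler events, each holding with probability $1-1/\mathrm{poly}(n)$ for the fixed oblivious $F$, gives the high-probability guarantee. Note that this requires the decomposition to be computed from $G$ alone, not from $F$.

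The main obstacle is the robustness lemma in Step 2. It has to show that after $f$ adversarial (but oblivious) deletions, the samples restricted to the surviving core still span it with low diameter. It also has to show that damaged vertices can be reattached with only $\widetilde{O}(\sqrt{f})$ sketch size per vertex, without losing more than an $O(\log\log n)$ factor. I would first try to prove it via expander pruning (Saranurak–Wang style), using that pruning removes volume $O(f/\phi)$. I would then balance the threshold $\sqrt{f}$ between the two vertex classes.
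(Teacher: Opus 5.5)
Your architecture matches the paper's: peel, expander-decompose and recurse, per-vertex linear sketches, exact recovery for low-degree vertices, and a core that survives the deletions. The gap is in the step that actually materializes spanner edges inside the surviving core, and this is exactly where the paper does something different. You take core edges only from the $\widetilde{O}(\sqrt{f})$ per-vertex $\ell_0$-samplers and appeal to ``a cut-sampling argument over the expander.'' These are not uniform edge samples. An edge $xy$ is hit with probability about $s/\deg(x)+s/\deg(y)$, so a vertex of degree $\Delta$ contributes only $s$ samples however large $\Delta$ is. The union bound behind \cref{clm:subsampleExpander} then fails. That bound works because a $k$-set has volume at least $kD$, so under uniform sampling at rate $\log^3 n/D$ its expected sampled boundary is $\Omega(k\log^2 n)$, which beats the $n^k$ sets of size $k$. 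Under your sampling, consider a set made of $\phi d$ hubs of degree $\Theta(k)$ that carry its whole boundary (going to other vertices of degree $\Theta(k)$), plus $k$ degree-$d$ vertices adjacent only to each other and to every hub. It has conductance $\Theta(\phi)$ but only $O(\phi d s)$ expected sampled boundary edges, independent of $k$. The paper avoids this as follows.
\begin{itemize}
\item For each cluster $C$ it stores about $|E(C)|\log^5 n/D$ \emph{global} $\ell_0$-samplers over the cluster's edge set, which give genuinely uniform edge samples.
\item To afford them, it runs the decomposition over degree scales $D=n/2,n/4,\dots,8\sqrt{f}\log n$. At scale $D$ only $\widetilde{O}(nD)$ edges remain, so these samplers cost $\widetilde{O}(n)$ per scale.
\end{itemize}
A per-vertex-only scheme, which would make the scales unnecessary, can be rescued, but only with a stronger tool. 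One route is Bencz\'ur--Karger / Fung--Hariharan--Harvey--Panigrahi importance sampling. It uses that in an $\Omega(\phi)$-expander every edge $xy$ has local edge connectivity at least $\phi\min(\deg x,\deg y)$, which shows the reweighted per-vertex sample is a cut sparsifier of the core; ball growing then bounds the hop diameter. As written, this step is missing.

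There are three further issues.
\begin{itemize}
\item \textbf{Stretch.} With ordinary conductance-$\phi$ clusters and Saranurak--Wang pruning, the core has conductance $\Omega(\phi)$ and hence diameter $O(\log n/\phi)$. That is $O(\log^2 n)$ at best, not $O(\log n\log\log n)$. The paper gets the better bound from \emph{lopsided} expanders (\cref{clm:diameterBound}), and both the robustness step and the sampling step must preserve lopsided conductance. It does not use pruning. Instead it shows directly (\cref{clm:goodGoodExpanderfDeletion}) that the vertices of post-deletion degree at least $D/2$ induce a lopsided expander: sets of volume at least $f\log^2 n$ lose at most $5f$ boundary edges, and smaller sets expand trivially because $D\ge 8\sqrt{f}\log n$.
\item \textbf{Threshold balance.} Your own count allows up to $O(\sqrt{f}/\phi)$ damaged vertices, but you only fully recover neighborhoods of size at most $\sqrt{f}$. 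A damaged vertex of degree just above $\sqrt{f}$ may then have no undamaged neighbor to reattach to. The damaged set must be smaller than the recovery threshold, as in the paper (at most $4f/D$ bad vertices, recovery threshold $4f/D$). With pruning, this means using $|P|\le \mathrm{Vol}(P)/d=O(f/(\phi d))$ with $d\ge C\sqrt{f}/\phi$.
\item \textbf{Routing deletions.} Each deleted edge must be subtracted only from the sketches of the one cluster containing it. That cluster is the first one containing both endpoints, as in \cref{clm:findingEdgeSpanner}, which relies on clusters being induced and peeled edges being stored explicitly. Subtracting the edge from every sketch would make the samplers return deleted edges.
\end{itemize}
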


Both \cref{thm:introDO} and \cref{thm:IntroObliviousSpanner} also have immediate implications for designing efficient protocols for distance oracles and oblivious spanners in the one-way two-party communication setting. In this setting, Alice is given a graph $G = (V, E)$ and Bob is given a set of edges $F \subseteq E, |F| \leq f$. The goal is to understand the smallest possible size of a message that Alice can transmit to Bob that enables him to reconstruct a distance oracle (or spanner) for the post-deletion graph $G-F$. By directly using \cref{thm:introDO} and \cref{thm:IntroObliviousSpanner} and sending the corresponding sketches, these yield protocols communicating $\widetilde{O}(n\sqrt{f})$ bits of space. We elaborate more on this setting in \cref{sec:oneWayCommunication}.

\paragraph{Preserving Distances in Streams of Edges}
We next consider the \emph{streaming setting}, where edges of the graph are revealed as an arbitrary sequence of edge insertions and deletions. Prior work has produced both non-trivial linear sketches for this model and limitations of linear sketching based approaches. In particular, Filtser, Kapralov, and Nouri~\cite{filtser2021graph} gave a linear sketching algorithm that constructs spanners of stretch $\widetilde{O}(n^{\frac{2}{3}(1-\alpha)})$ using $\widetilde{O}(n^{1+\alpha})$ space, and Chen, Khanna, and Li~\cite{CKL22} later showed that, for a broad class of linear sketching algorithms and $\alpha < 1/10$, this tradeoff is essentially optimal. In particular, when the stream admits an unbounded number of deletions and the goal is to maintain a $\mathrm{polylog}(n)$-stretch spanner, all currently known algorithms require $\Omega(n^{2-o(1)})$ bits of space.

Motivated by \cref{thm:introDO} and \cref{thm:IntroObliviousSpanner}, we ask whether similar guarantees can be achieved in the \emph{bounded-deletion} streaming model, where the total number of edge deletions in the stream is limited to $f$ arbitrary deletions, for some given parameter $f$. This model, studied in works such as~\cite{jayaram2018data, khanna2025near, khanna2025streaming}, captures a natural middle ground between \emph{fully dynamic} and \emph{insertion-only streaming}.

In this setting, we design algorithms that construct both \emph{deterministic distance oracles} and \emph{randomized spanners} against oblivious adversaries with subquadratic space usage for polynomially large values of $f$.

\begin{theorem}[Streaming deterministic oracles in $\widetilde{O}(n^{4/3} f^{1/3})$ space]\label{thm:IntroFaultTolerantDOStream}
    For any $f \in [0,\binom{n}{2}]$, there is a \emph{deterministic} streaming algorithm which when given an arbitrarily ordered stream consisting of edge insertions (defining a graph $G$) and at most $f$ edge deletions (denoted by $F \subseteq G$):
    \begin{enumerate}
        \item uses space at most $\widetilde{O}(n^{4/3} f^{1/3})$ bits, and
        \item supports a query operation which, for any vertices $u, v \in V$, returns a value $\widehat{\mathrm{dist}}_{G - F}(u,v)$ such that 
        \[
        \mathrm{dist}_{G - F}(u,v)\leq \widehat{\mathrm{dist}}_{G - F}(u,v) \leq O(\log(n)\log\log(n)) \cdot \mathrm{dist}_{G - F}(u,v).
        \]
    \end{enumerate}
\end{theorem}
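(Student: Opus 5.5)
The plan is to simulate the static construction of \cref{thm:introDO} in a stream, treating the dense, expander-like part of the graph with a degree threshold $d$ and the sparse residual part with deterministic sparse recovery. The static oracle rests on two facts. First, a high-degree expander-like cluster, say with minimum degree at least $d \gg f$ and conductance $\phi$, stays low-diameter after any $f$ deletions. Second, the remaining sparse edges can be stored exactly, or through a deterministic $k$-sparse recovery sketch.

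The obstacle in the stream is that the decomposition depends on the final graph $G-F$, and we cannot compute an expander decomposition online. Instead, I would pick a threshold $d$ and process the stream in an insertion-only fashion, since there are at most $f$ deletions. Concretely:

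1. Maintain, for every vertex, the first $d$ inserted incident edges explicitly. This costs $\widetilde{O}(nd)$ bits.
2. Maintain a deterministic $O(f)$-sparse recovery structure that records all deletions. It suffices to store the deleted edges themselves, which costs $\widetilde{O}(f)$ bits.
3. Maintain an insertion-only spanner-like certificate for the edges beyond the first $d$ at each vertex, using the greedy $O(\log n)$-spanner maintained online. This costs $\widetilde{O}(n)$ edges but is fragile under deletions.

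The key structural claim to prove is the following. Take the graph $H_d$ of vertices with degree at least $d$ in $G$, and apply the expander decomposition of the earlier sections with parameter chosen so that clusters have internal minimum degree $\Omega(d/\log n)$. If $d \ge f^{?}$, deleting $f$ edges leaves each cluster with diameter $O(\log n \log\log n)$, except for a set of at most $\widetilde{O}(f/d)$ "damaged" vertices per cluster. These damaged vertices are exactly those that lost a constant fraction of their degree. Edges between clusters, and edges incident to low-degree vertices, number $\widetilde{O}(n d)$ overall after the decomposition, so they can be stored. The trouble is that we cannot know clusters until the end. So instead of storing the full graph, I would store a sample of $d$ edges per vertex together with sketches recovering the cut edges.

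The balancing step is where $n^{4/3}f^{1/3}$ should come from. Storing neighborhoods of size $d$ costs $nd$. Damaged vertices cost $f/d$ each, and for each of them we must recover all incident edges, up to $n$. This gives roughly $n \cdot (f/d)\cdot (n/d)$ via the number of clusters times recovery, i.e. $n^2 f/d^2$. Setting $nd = n^2 f/d^2$ gives $d = (nf)^{1/3}$ and space $n^{4/3}f^{1/3}$, which matches the target. So the concrete plan is: at the end of the stream, run the (existential) construction from \cref{thm:introDO} on the stored data, and answer queries as in that theorem. For the surviving edges incident to damaged vertices, use a deterministic $n^{2/3}$-sparse recovery per vertex (or per cluster) maintained linearly during the stream, so that it tolerates the deletions.

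I expect the main obstacle to be the middle step. There I must argue that the per-vertex sketches maintained obliviously to the final decomposition suffice to recover the edges crossing the boundary in $G-F$. The first thing I would try is to make the recovery linear, e.g. Vandermonde-based $k$-sparse recovery with $k = n^2f/d^3$ per vertex. Then I would show that, after subtracting the stored first-$d$ edges and the explicitly recorded deletions, the residual vector at every boundary vertex is $k$-sparse.
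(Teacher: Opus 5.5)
Your balancing arithmetic lands on the right $d=(nf)^{1/3}$, but there is a genuine gap: the streaming mechanism behind it does not work, and that mechanism is the whole content of this theorem.

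\textbf{The stored data is not enough.} You keep the first $d$ edges at each vertex, an insertion-only greedy spanner, and the deleted edges. This does not determine $G$, so at the end of the stream you cannot ``run the construction of \cref{thm:introDO} on the stored data.'' That construction needs the full pre-deletion graph to compute the decomposition, the in-cluster degrees, and the in-cluster sketches. Moreover, the first $d$ edges and the spanner can themselves be wiped out by the deletions, so they certify nothing.

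\textbf{The per-vertex sketch is not sparse.} Your fallback, a per-vertex linear sketch of the whole neighborhood, fails at exactly the step you flagged. Sparse recovery in the static oracle works only because $\mathcal{S}_v$ encodes the edges of $v$ inside one fixed cluster. A vertex whose in-cluster degree has dropped to at most $4f/D$ therefore has a $4f/D$-sparse vector. A global neighborhood vector, minus the first $d$ edges and the deletions, still contains every other surviving edge of $v$, in other clusters or crossing, and there can be $\Theta(n)$ of them. Consistently, your choice $k=n^2f/d^3$ equals $n$ at $d=(nf)^{1/3}$, i.e. no compression at all.

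\textbf{One decomposition is not enough.} A single decomposition does not leave only $\widetilde O(nd)$ inter-cluster edges. \cref{thm:expanderDecomp} leaves up to $m/2+O(nD\log^2 n)$, which is why the static oracle uses $O(\log n)$ levels.

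\textbf{The missing idea is to commit to clusters online.} The paper keeps a buffer of unassigned edges of capacity $\Theta(n^{4/3}f^{1/3}\log^4 n)$. Whenever it fills, it runs \cref{thm:expanderDecomp} with $D=n^{1/3}f^{1/3}\log^2 n$ on the buffer. It then freezes the resulting vertex sets together with in-cluster degrees and $4f/D$-sparse recovery sketches, and puts only the crossing edges back in the buffer.

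Every later insertion whose endpoints both lie in an existing cluster is added linearly to the sketches of the first such cluster, not to the buffer. This maintains the invariant of \cref{clm:findingEdgeStream}, so each recorded deletion can be routed to the correct sketch at query time. Because clusters keep receiving edges after they are formed, correctness needs the insertion-tolerant robustness statement \cref{cor:robustnessfDeletionInsertion}, which your plan never addresses.

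Each decomposition frees a constant fraction of the buffer, so there are $\widetilde O(n/D)=\widetilde O(n^{2/3}/f^{1/3})$ rounds, each costing $\widetilde O(nf/D)$ bits of sketches. That product, $n^2f/D^2$, balanced against the buffer size $nD$, is the true origin of the term you wrote as ``clusters times recovery.'' The range $f\le\sqrt n$ is handled separately by an ordinary fault-tolerant spanner.
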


\begin{theorem}[Streaming oblivious spanners in $\widetilde{O}(n^{4/3} f^{1/3})$ space]\label{thm:IntroObliviousSpannerStream}
    For any $f \in [0,\binom{n}{2}]$, there is a \emph{randomized} streaming algorithm which, when given an arbitrarily ordered stream consisting of edge insertions (defining a graph $G$) and at most $f$ \emph{oblivious} edge deletions (denoted by $F \subseteq G$):
    \begin{enumerate}
        \item uses space at most $\widetilde{O}(n^{4/3} f^{1/3})$ bits, and
        \item with probability at least $1 - 1/{\rm poly}(n)$ (over the internal randomness of the algorithm), outputs an $O(\log(n)\log\log(n))$-stretch spanner of $G - F$.
    \end{enumerate}
\end{theorem}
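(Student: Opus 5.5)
The plan is to build the streaming algorithm from the same two primitives as \cref{thm:IntroObliviousSpanner}: an expander decomposition of the final graph and oblivious sparse recovery. The difficulty is that the decomposition of $G$ cannot be computed in one pass, so the algorithm cannot know in advance which vertices are "dense." We therefore fix a degree threshold $d$, to be balanced at the end, and run two components in parallel.

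\textbf{Component (i): randomized sketches for low-degree vertices.} For each vertex $v$ we maintain an $\ell_0$-sampler-based sparse-recovery sketch of its incidence vector. It recovers all incident edges whenever the final degree of $v$ in $G-F$ is at most $d$. This costs $\widetilde O(nd)$ bits. The sketches are linear, so they handle insertions and deletions, and their correctness against oblivious deletions follows from the choice of hash functions being independent of the stream.

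\textbf{Component (ii): sampled edge sets for the dense part.} We store a uniformly random subset of the inserted edges, sampling each edge with probability $p$, together with enough information to certify dense structure. Concretely, for high-degree vertices we keep $\widetilde O(n/d \cdot \text{stuff})$ edges, chosen so that each expander cluster of min-degree $\gtrsim d$ in $G$ keeps, after $F$ is removed, a $\mathrm{polylog}$-diameter spanning structure. Because deletions are oblivious, only a $p$-fraction of the deleted set $F$ hits the sample in expectation. A cluster of degree $\ge d$ survives the loss of $f$ edges as long as, roughly, fewer than $\phi d$ edges are lost per vertex or the damage is confined to a small boundary.

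\textbf{Handling the boundary with a third tier.} Following the structure of \cref{thm:introDO}, clusters whose degree lies between $d$ and $\sqrt f$-type thresholds would be handled by recovering the boundary: all vertices that lose many edges (at most $f/d$ of them) are handled by storing their sampled neighborhoods at rate giving $\widetilde O(n)$ edges each. This is the source of a term of the form $\widetilde O(n f / d^2)$ or similar.

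\textbf{Balancing.} Setting the costs $nd$ against $nf/d^{2}$-type terms gives $d=f^{1/3}n^{1/3}$ and total space $\widetilde O(n^{4/3}f^{1/3})$, matching the statement. At query time we recompute an expander decomposition of the recovered graph and output recovered sparse edges plus spanning trees of sampled cluster edges. This gives stretch $O(\log n\log\log n)$ from the cluster diameter, as in \cref{thm:IntroObliviousSpanner}.

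\textbf{Main obstacle.} The hard step is arguing that the decomposition computed at the end from the stored data agrees with one that is robust for $G-F$. We must show that every edge of $G-F$ not directly recovered has both endpoints in a surviving dense cluster whose sampled edges still connect them with low diameter. I would first try a charging argument: deleted edges inside a cluster are counted against its expansion, and clusters that fail are peeled and recursed, with polylog loss per level. The exact exponent bookkeeping in the third tier is where I am least certain.
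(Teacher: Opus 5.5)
\textbf{The gap: crossing edges are never stored.} Nothing in your construction preserves an edge of $G-F$ whose endpoints both have high degree but which does not lie inside a common dense cluster, i.e.\ a crossing edge of the decomposition. Component (i) only returns edges at vertices of degree at most $d$. Component (ii) only certifies connectivity inside clusters.

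Concretely, let $G$ be two disjoint copies of $K_{2d}$ joined by a single edge $(u,v)$, and let $F=\emptyset$. Both endpoints have degree $2d>d$, so component (i) does not return $(u,v)$. The edge lies in no dense cluster, and a $p$-sample keeps it only with probability $p$. So with probability $1-p$ your output disconnects the two cliques. Recomputing a decomposition of the recovered graph at query time cannot help, because this edge was never stored.

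Your ``main obstacle'' (every unrecovered edge has both endpoints in a surviving dense cluster) is therefore not a technical step but a false statement, and no charging argument will rescue it. Such edges must be kept explicitly.

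\textbf{How the paper handles this.} It uses a buffer. Every inserted edge that does not fit into an already-created expander component is stored verbatim. When the buffer reaches $C n^{4/3}f^{1/3}\log^4 n$ edges, the algorithm runs the min-degree-$D$ decomposition of \cref{thm:expanderDecomp} with $D=n^{1/3}f^{1/3}\log^2 n$. Each expander's edges are replaced by sketches: $4f/D+1$ $\ell_0$-samplers per vertex, and $\widetilde O(m_j/D)$ edge samplers for a component with $m_j$ edges. Only the crossing edges are kept, which frees a quarter of the buffer.

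Later insertions whose endpoints lie in an existing component are routed to the first such component. This is what makes it determinable which sketch a deletion must be subtracted from (\cref{clm:findingEdgeStream}). It is also why the robustness statement needed is the insertion-tolerant \cref{cor:robustnessfDeletionInsertion}, applied to a decomposition of the inserted graph built during the stream, not to a decomposition of $G-F$ computed afterwards.

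\textbf{The accounting also does not hold together.} Balancing $nd$ against $nf/d^2$ gives $d=f^{1/3}$, not $(nf)^{1/3}$, so your choice of $d$ does not follow from your stated costs. Your third tier cannot be implemented as written: which vertices will lose many edges is unknown during the stream, so storing sampled neighborhoods for them means storing them for all $n$ vertices.

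In the paper the real tradeoff is buffer size $\approx nD$ against (number of refills $\approx n/D$) times (per-refill sketch cost $\approx nf/D$), i.e.\ $nD$ versus $n^2f/D^2$. This gives $D=(nf)^{1/3}$ and the $\widetilde O(n^{4/3}f^{1/3})$ bound. The per-component edge samplers cost $\widetilde O(n^{5/3}f^{-1/3})$ in total, which fits the budget only when $f\ge\sqrt n$. Your uniform sample at rate $\widetilde O(1/d)$ over up to $n^2$ edges has the same cost and hits the same obstruction. The paper treats $f<\sqrt n$ separately, using a greedy $f$-fault-tolerant spanner with $\widetilde O(nf)$ edges.
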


As an illustration of the above theorem, consider when $f = n$. Then the above algorithm requires only $\widetilde{O}(n^{5/3})$ bits of space while maintaining $O(\log(n)\log\log(n))$ stretch, significantly improving upon previously known fault-tolerant spanner constructions, even in the static case.

Taken together, our results highlight a perhaps somewhat surprising conclusion: once the subgraph constraint, inherent in the definition of spanners, is dropped, fault-tolerant distance oracles achieve polynomially smaller space, even in regimes where \emph{no non-trivial spanner exists}.

\paragraph{Comparison with Prior Works}

To our knowledge, only one prior work even discusses the possibility of distance oracles handling more than $n$ edge faults. Indeed, the work of Bodwin, Haeupler, and Parter \cite{BHP24} shows that for a graph $G = (V, E)$, there are spanners which store only $\widetilde{O}(nf)$ many edges, but can tolerate any set of deletions $F \subseteq E$ such that $F$'s \emph{maximum degree} is bounded by $f$ (and thus includes some sets of up to $nf$ many edges, provided they are of the prescribed form). Thus, \cite{BHP24} codifies that the primary difficulty in designing distance reporting data structures is in allowing some vertices to receive a potentially \emph{unbounded} number of deletions. 

Part of our work, like many others \cite{SW18, GKKS20,GGKKS22, BBGNSS22, BHP24}, uses an expander decomposition to compress graphs into smaller space while preserving distances. The key insight we introduce here is that good enough expanders can even handle \emph{unbounded degree edge deletions} while still ensuring that the remaining, high degree vertices are still well-connected (as opposed to prior works like \cite{BHP24} which only consider bounded-degree deletions). This is in turn coupled with the first (to our knowledge) use of \emph{deterministic sparse recovery} in the design of distance oracles to recover edges from any vertices of too low degree. Only when using both of these insights in combination do we get the first fault-tolerant distance oracles which tolerate \emph{unbounded} degree deletions. 

Even so, the use of expanders naturally leads to a barrier of $\mathrm{polylog}(n)$ stretch. To achieve the constant stretch of \cref{thm:constantStretchIntro}, we need a graph decomposition which simultaneously (a) is resilient to \emph{unbounded degree} deletions at each vertex, and (b) \emph{witnesses short paths} between the remaining (post-deletion) high-degree vertices. To our knowledge, no such graph decomposition was known before our work; rather, to overcome this obstacle we carefully introduce a new (efficient) graph decomposition into so-called \emph{$2$-hop stars} which simultaneously achieves both of the desired conditions. We believe that this decomposition is interesting in its own right, and may find other applications in distance-preserving data structures.

\subsection{Technical Overview}\label{sec:technical_overview}

At a high level, our results rely on two complementary insights. First, by carefully decomposing a graph into \emph{structured}, dense regions, we can ensure that these remain (mostly) well-connected
even after many edge deletions, effectively allowing us to treat such regions as \emph{self-certifying cores} whose connectivity can be inferred without explicitly
storing all edges. Second, the small residual boundary left after removing these cores can be compactly represented and exactly recovered using ideas from
\emph{deterministic/oblivious sparse recovery}. 
Together, these principles yield a unified compression framework for preserving distances:
robust structural cores handle large-scale connectivity,
while algebraic sketches encode the fine-grained deletions.
The remainder of this section develops this framework in increasing generality; building intuition first with $\mathrm{polylog}(n)$ stretch for high-degree expanders, generalizing this to arbitrary graphs via expander decomposition, then using our \emph{$2$-hop star} decomposition to achieve \emph{constant} stretch,
and finally applying these to oblivious and streaming settings.

\subsubsection{High-Degree Expanders Under Edge Faults}

We begin with the first component of our framework, namely, the structural robustness of high-degree expanders.
Intuitively, if a graph has sufficiently large minimum degree and expansion, then after deleting a bounded number of edges, the remaining high-degree vertices still induce an expander of nearly the same quality. This phenomenon underlies our ability to “compress away’’ dense regions without losing the ability to certify connectivity.

\paragraph{The Robustness Lemma}
The starting point for all of our results is a novel, strong ``robustness'' phenomenon for high-degree expanders under edge deletions. 
Formally, consider a graph $G=(V,E)$ on $n$ vertices with minimum degree $D$, and suppose $G$ is an $\Omega(1/\log n)$-expander under the usual conductance definition.
This means that for every set $S \subseteq V$,
\[
\frac{\delta_G(S)}{\min(\mathrm{Vol}_G(S), \mathrm{Vol}_G(V\setminus S))} = \Omega(1 / \log(n)),
\]
where $\delta_G(S)$ refers to the number of edges leaving the set $S$ in $G$, and $\mathrm{Vol}_G(S)$ refers to the sum of vertex degrees in the set $S$. 

Now, we wish to understand how $G$'s structure evolves after performing $f$ edge deletions. Towards this end, we let $F \subseteq E$ denote this set of edges to be removed, and crucially, we define $V_{\mathrm{good}} = \{ v \in V: \mathrm{deg}_{G - F}(v) \geq \frac{D}{2}\}$. Intuitively, this set of ``good'' vertices is exactly those vertices whose degree remains largely intact; at least half of the original degree lower bound $D$. The most basic form of our robustness property can then be stated as follows:

\begin{lemma}[Robustness of High-Degree Expanders, informal]
    Let $G, F, f, V_{\mathrm{good}}$ be given as above, and suppose that $D \geq 8 \sqrt{f}\log(n)$. Then, $(G-F)[V_{\mathrm{good}}]$ is an $\Omega(1 / \log(n))$-expander.
\end{lemma}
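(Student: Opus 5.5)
Fix an arbitrary $S \subseteq V_{\mathrm{good}}$ with $\mathrm{Vol}_{G-F}(S) \le \mathrm{Vol}_{G-F}(V_{\mathrm{good}}\setminus S)$, where volumes are now taken in $H := (G-F)[V_{\mathrm{good}}]$. The plan is to lower bound $\delta_H(S)$ by starting from $\delta_G(S)$ and subtracting two kinds of lost edges. Edges of $F$ cost at most $f$. Edges from $S$ to $B := V\setminus V_{\mathrm{good}}$ are the second loss. Since $G$ is a $\phi$-expander with $\phi=\Omega(1/\log n)$, this gives
\[
\delta_H(S) \ge \phi\min(\mathrm{Vol}_G(S),\mathrm{Vol}_G(V\setminus S)) - f - |E_{G-F}(S,B)| .
\]

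\textbf{Step 1: the bad set is small.} Every $v\in B$ lost more than $\deg_G(v)-D/2 \ge D/2$ of its edges to $F$. Each deleted edge touches two vertices, so $|B|\cdot D/2 \le 2f$, giving $|B|\le 4f/D$. Consequently $|E(S,B)| \le |B|\cdot |S|$, and also $|E(S,B)| \le \mathrm{Vol}_G(B)$. I would use the first bound: $|E(S,B)|\le 4f|S|/D$.

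\textbf{Step 2: compare to the volume of $S$.} Every vertex of $S$ has degree at least $D/2$ in $G-F$. The goal is to show the three loss terms are each at most a constant fraction of $\phi\,\mathrm{Vol}_G(S)$.
\begin{itemize}[label={}]
\item \emph{The $F$ term.} We need $f \le \tfrac{\phi}{4}|S|D$. This holds when $|S| \ge 4f/(\phi D)$.
\item \emph{The bad-set term.} We need $4f|S|/D \le \tfrac{\phi}{4}|S|D$, i.e.\ $D^2 \ge 16f/\phi$. With $\phi=\Omega(1/\log n)$ this is $D \gtrsim \sqrt{f\log n}$, which is exactly where the hypothesis $D\ge 8\sqrt f\log n$ enters (with room to spare).
\item \emph{Small $S$.} When $|S| < 4f/(\phi D)$, I would avoid the global expansion bound and argue locally. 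Each $v\in S$ has at least $D/2$ neighbors in $G-F$. At most $|S|$ of them lie in $S$ and at most $|B|\le 4f/D$ lie in $B$. Since $|S|\lesssim f\log n/D \le D/64$ (again by $D^2\ge 64 f\log^2 n$) and $4f/D \ll D$, each vertex has $\ge D/4$ edges leaving $S$ into $V_{\mathrm{good}}$. Hence $\delta_H(S)\ge |S|D/4 \ge \Omega(\mathrm{Vol}_H(S)/\cdot)$, though this needs care because volumes may be large (see Step 3).
\end{itemize}

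\textbf{Step 3: the main obstacle, volume mismatches.} $\mathrm{Vol}_H(S)$ can be far larger than $|S|D$ when degrees exceed $D$. The bounds above are in terms of $|S|D$, so they must be restated in terms of volumes.

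For the $B$-edges, I would instead use $|E_{G-F}(S,B)|\le \sum_{v\in S}\min(\deg(v),|B|)$. Since $|B|\le 4f/D \le D/(16\log^2 n)\cdot\ldots \ll \phi\deg(v)$, this loss is at most $\tfrac{\phi}{4}\mathrm{Vol}_G(S)$ vertex by vertex. This cleanly handles that term.

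For the $F$ term, I would split according to whether $\mathrm{Vol}_G(S)\ge 4f/\phi$:
\begin{itemize}[label={}]
\item If so, subtraction works directly.
\item Otherwise, I would use the local argument per vertex: each $v$ has at most $|S|+|B|$ internal-or-bad neighbors, with $|S|\le \mathrm{Vol}_G(S)/D < 4f/(\phi D)\ll D$. So at least $\deg_{G-F}(v)-o(D)\ge \deg_{G-F}(v)/2$ of its edges cross. That yields $\delta_H(S)\ge \mathrm{Vol}_H(S)/2$.
\end{itemize}

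Finally, the min with the complement's volume is handled symmetrically. One notes $\mathrm{Vol}_G(V\setminus S)\ge \mathrm{Vol}_G(V_{\mathrm{good}}\setminus S)\ge \mathrm{Vol}_H(V_{\mathrm{good}}\setminus S)$, and $\mathrm{Vol}_H\le\mathrm{Vol}_G$ throughout, so the final bound is $\Omega(\phi)$.
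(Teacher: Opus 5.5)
Your proof is correct and follows essentially the paper's route. You bound $|V_{\mathrm{bad}}|\le 4f/D$, lower bound the new cut by the original cut minus the $F$-edges and the edges into $V_{\mathrm{bad}}$, use that the original volumes dominate the new ones, and split into a large-volume case (additive loss negligible against $\phi$ times the volume) and a small-volume case ($|S|\ll D$, so each vertex sends most of its edges out of $S$). The differences are cosmetic: you charge the edges into $V_{\mathrm{bad}}$ per vertex via $|V_{\mathrm{bad}}|\ll \phi D$ instead of via $\mathrm{Vol}_G(V_{\mathrm{bad}})\le 4f$. You also split on $\mathrm{Vol}_G(S)$ rather than on the volume in $(G-F)[V_{\mathrm{good}}]$, which is harmless because the two differ by only $O(f)$, negligible against your threshold $4f/\phi$; it would help to state this comparison explicitly rather than leave it to ``handled symmetrically.''
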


This lemma implies that, provided the minimum degree of $G$ is sufficiently large, after performing our edge deletions from $F$, the remaining high degree vertices \emph{still constitute a good expander} regardless of how the edges were deleted! Further, by leveraging the connection between edge expansion and diameter, one can show that this means that $(G-F)[V_{\mathrm{good}}]$ has diameter bounded by $O(\log^2(n))$ (i.e., that all pairs of vertices in $V_{\mathrm{good}}$ are connected by paths of length $O(\log^2(n))$). 

It turns out the minimum degree condition above is essential for establishing the above lemma. Indeed, it ensures that \emph{not too many} vertices lose most of their incident edges, and by a careful argument, allows us to argue that the ``core'' of good vertices \emph{roughly inherits} the expansion properties of the original graph. We provide a formal proof of this robustness property in \cref{sec:HDX}.

\paragraph{Extending the Robustness to Lower Degree Vertices}

In fact, while the above lemma implies that the vertices in $V_{\mathrm{good}}$ continue to be connected by short paths, a key property we exploit is that this continues to hold \emph{even for some vertices outside} $V_{\mathrm{good}}$!

\begin{lemma}[Strong Robustness of High-Degree Expanders, informal]\label{lem:strongRobustness}
    Let $G, F, f, V_{\mathrm{good}}$ be given as above, and suppose that $D \geq 8 \sqrt{f}\log(n)$. Then, any vertices $u, v$ of degree $\geq \frac{4f}{D} + 1$ in $G-F$ are connected by a path of length $O(\log^2(n))$.
\end{lemma}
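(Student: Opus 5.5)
The plan is to reduce to the previous robustness lemma. That lemma tells us $(G-F)[V_{\mathrm{good}}]$ is an $\Omega(1/\log n)$-expander, hence has diameter $O(\log^2 n)$. So it suffices to show that every vertex $u$ with $\deg_{G-F}(u) \geq \frac{4f}{D}+1$ has a neighbor in $G-F$ that lies in $V_{\mathrm{good}}$ (or lies in $V_{\mathrm{good}}$ itself). Then any two such vertices $u,v$ step into $V_{\mathrm{good}}$ with one edge each and are joined by a path of length $O(\log^2 n)+2$ inside $(G-F)[V_{\mathrm{good}}]$.

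\textbf{Step 1: few bad vertices.} Let $B = V \setminus V_{\mathrm{good}}$. Each $w \in B$ has $\deg_G(w) \geq D$ but $\deg_{G-F}(w) < D/2$, so $w$ is incident to more than $D/2$ edges of $F$. Every edge of $F$ is incident to at most two vertices of $B$, so summing over $B$ gives $|B|\cdot D/2 < 2f$. Hence $|B| < 4f/D$.

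\textbf{Step 2: each qualifying vertex has a good neighbor.} Fix $u$ with $\deg_{G-F}(u) \geq \frac{4f}{D}+1$. If $u \in V_{\mathrm{good}}$ we are done. Otherwise, $u$ has at least $\frac{4f}{D}+1 > |B|$ distinct neighbors in $G-F$. Since at most $|B|-1$ of them can lie in $B\setminus\{u\}$, at least one neighbor lies in $V_{\mathrm{good}}$. The counting in Step 1 is what dictates the threshold $\frac{4f}{D}+1$, and it is consistent with the stated bound.

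\textbf{Step 3: combine.} Given $u,v$ meeting the degree condition, pick good neighbors $u',v'$ (or $u'=u$, $v'=v$ if they are already good). The previous lemma gives a $u'$–$v'$ path of length $O(\log^2 n)$ in $(G-F)[V_{\mathrm{good}}]\subseteq G-F$. Prepending and appending the edges $uu'$ and $v'v$ completes the path.

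\textbf{Main obstacle.} The substantive content is entirely in the robustness lemma: that the good core remains an expander, together with the conductance-to-diameter conversion. If the lemma is taken as given, the only point needing care is that the diameter bound holds for the induced graph on $V_{\mathrm{good}}$ itself, not merely as a statement about expansion. For this I would invoke the standard fact that a $\phi$-expander on $m$ edges has diameter $O(\log(m)/\phi)$, applied with $\phi=\Omega(1/\log n)$. I would also check that $V_{\mathrm{good}}$ is nonempty. This follows because $u$ itself witnesses a good vertex.
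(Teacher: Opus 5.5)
Your proposal is correct and follows essentially the same route as the paper. Both arguments bound the number of bad vertices by $4f/D$ via double counting deleted edges (the paper phrases this as $\mathrm{Vol}_H(V_{\mathrm{bad}}) \le 4f$), deduce that every vertex of post-deletion degree at least $\frac{4f}{D}+1$ is good or has a good neighbor, and then route through $(G-F)[V_{\mathrm{good}}]$, whose expansion (from the preceding robustness claim) bounds its diameter. Your cases match the paper's three cases (both good, one good, both bad).
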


This lemma relies on bounding how many vertices are \emph{not in} $V_{\mathrm{good}}$. Indeed, by showing that the number of non-good vertices is at most $\frac{4f}{D}$, this means that if two vertices $u, v$ both have degree $\geq \frac{4f}{D} + 1$, then $u$ and $v$ must also both have \emph{good} neighbors $u'$ and $v'$. But, from the robustness lemma above we already know that $u'$ and $v'$ are connected by a path of length $O(\log^2(n))$, and so then $u$ and $v$ must be as well! Thus, the high-degree expander on its own (even after deletions!) provides a proof that high-degree pairs of vertices are connected by short paths.

\subsubsection{Designing a Distance Oracle For High-Degree Expanders}

With these robustness guarantees, we now discuss how to design an entire fault-tolerant distance oracle for high-degree expanders. We continue using the same parameters as in the previous discussion: i.e., we are given an $\Omega( 1 / \log(n))$ expander graph $G = (V, E)$, with $n$ vertices, a bound $f$ on the number of edge faults, and a lower bound of $D \geq 8 \sqrt{f} \log(n)$ on the degree of $G$.

To build intuition for the distance oracle, we consider the result of applying the deletions $F$ to the graph $G$. We observe that given only the \emph{vertex degrees} in the graph $G - F$, one can already report distances for the \emph{vast majority} of vertex pairs. Indeed, by \cref{lem:strongRobustness}, we know that any vertices of degree $\geq \frac{4f}{D} +1$ must be connected by paths of length $O(\log^2(n))$, even without \emph{any other knowledge} of what the graph looks like. 

Unfortunately, this same guarantee does not hold for the low-degree vertices, and indeed their structure can be arbitrarily bad (even disconnected from the rest of the graph). But the fact that these vertices are low degree raises another intriguing possibility: namely that, because their degrees are small, perhaps we can recover \emph{all} of their incident edges. It turns out that this is indeed possible, using a type of ``deterministic sparse recovery'' sketch, which we prove in \cref{sec:efficientUniqueSum}.

\begin{claim}[Informal Deterministic Sparse Recovery]
    For a universe $[u]$ and a constant $k$ there is a linear function $g: \zo^{[u]} \rightarrow \F_u^{2k+1}$ such that for any $S \subseteq [u]$, $|S| \leq k$, $g(\zo^S)$ is unique. 
\end{claim}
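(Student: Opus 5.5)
We prove the claim with a power-sum (moment) sketch over a prime field, in the style of Reed–Solomon / BCH syndrome decoding. Fix a prime $p$ with $p > u$ and $p > k$, and by Bertrand's postulate take $p \le 2\max(u,k)$, so field elements cost $O(\log u)$ bits. This is what $\F_u$ in the statement means. Identify $[u]$ with distinct nonzero elements $1,\dots,u$ of $\F_p$. For $x \in \zo^{[u]}$ define
\[
g(x) = \Bigl(\sum_{i} x_i,\ \sum_i x_i\, i,\ \sum_i x_i\, i^2,\ \dots,\ \sum_i x_i\, i^{2k}\Bigr) \in \F_p^{2k+1}.
\]
This map is $\F_p$-linear when $x$ is viewed as a vector over $\F_p$. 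That linearity is exactly what lets us update the sketch under deletions: we simply subtract $g(\zo^F)$.

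Injectivity on sets of size at most $k$ follows from a Vandermonde argument. Suppose $S \ne T$ with $|S|,|T| \le k$ and $g(\zo^S) = g(\zo^T)$. By linearity, $g(\zo^S - \zo^T) = 0$. The vector $\zo^S - \zo^T$ is supported on $S \triangle T$, which has size $m \le 2k$, and its entries are $\pm 1$, all nonzero in $\F_p$ since $p>2$. Restrict to the first $m$ coordinates of $g$, namely the powers $i^0,\dots,i^{m-1}$ for $i \in S\triangle T$. This gives an $m\times m$ Vandermonde system in distinct nodes, whose determinant is nonzero. Hence the only solution is the zero vector, a contradiction. In fact $2k$ coordinates already suffice, so the stated $2k+1$ gives slack, with the zeroth moment recording $|S| \bmod p$.

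The main subtlety is not uniqueness but choosing the field so that the integer $\pm1$ coefficients stay nonzero and the nodes stay distinct. This is why we need $p > u$ and $p \ge 3$, which is easy.

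The uniqueness proof is the claim as stated. For the efficiency statement in \cref{sec:efficientUniqueSum}, we would then decode $S$ from its power sums $p_1,\dots,p_k$. Since $p > k$, Newton's identities are invertible over $\F_p$, so they give the elementary symmetric polynomials of $S$. From these we form $\prod_{s\in S}(z - s)$ and find its roots by testing all $u$ candidates, or by polynomial-time root finding. The size $|S|$ is read off from the zeroth coordinate, which is valid because $|S| \le k < p$.
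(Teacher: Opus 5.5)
Your proof is correct and is essentially the paper's argument. The paper takes $g$ to be the parity-check matrix of a Reed--Solomon code of distance $2k+1$ over a prime field $\F_q$ with $q \in [u,2u]$, and gets uniqueness because no nonzero vector of weight at most $2k$ lies in its kernel. Your power-sum (Vandermonde) matrix is an explicit parity-check matrix of a generalized Reed--Solomon code with that distance, your Vandermonde-invertibility step is exactly the proof of the distance bound, and Newton's identities play the role of syndrome decoding. One small advantage of your write-up: treating $\zo^S - \zo^T$ as a $\pm 1$ vector supported on $S \triangle T$ is more careful than the paper's $g(\zo^{S \oplus S'}) = 0$, which is only literally valid in characteristic $2$.
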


This primitive allows us to store a short linear sketch of each vertex’s neighborhood that can be uniquely decoded once its degree drops below $k$.
For our use case, we initialize $k = \frac{4f}{D}$, and initialize $u = \binom{|V|}{2}$ (i.e., to be the set of all edges). Now for each vertex $v$, we start by storing the sketching function $g$ applied to the \emph{original neighborhood} of $v$ in $G$, which we denote by $\mathcal{S}_v = g(\Gamma_G(v))$. Later, when edges are deleted in the set $F$, we carefully update the sketch, importantly exploiting its \emph{linearity}. In fact, via these careful updates, the sketch guarantees that if the degree of any vertex becomes sufficiently small, then we can \emph{exactly recover} the neighborhood of $v$ in $G - F$!

Combining these two modes, implicit connectivity via short paths for vertices in dense cores, and exact recovery for vertices with sparse boundaries, yields a deterministic oracle with desired space complexity of $\widetilde{O}(n\sqrt{f})$ bits for high-degree expanders.

\subsubsection{Fault-Tolerant Distance Oracles in General Graphs}

Unfortunately, for arbitrary graphs $G$, there is no guarantee that their structure is as robust under arbitrary sets of $f$ deletions. Thus, if we are to leverage the robustness of high-degree expanders, we must show that given a graph $G$, we can \emph{find} high-degree expander subgraphs. Fortunately, this turns out to be doable (and even efficiently, with mild ramifications to the expansion).

Formally, we take advantage of a two-step decomposition beginning with a graph $G = (V, E)$, and a target minimum degree $D$: we start by iteratively removing all vertices of degree $\leq D \cdot \log^2(n)$, and then, on the remaining ``core'' of high-degree vertices, we invoke an expander decomposition. In fact, by leveraging a careful self-loop argument originating in \cite{SW18}, we can even show that the resulting expanders from this decomposition \emph{approximately inherit} the minimum degree!

After just one level of expander decomposition however, the only guarantee is that a constant fraction of edges have been covered by the expanders. To achieve the sparsity necessary for our distance oracle, we invoke $O(\log(n))$ recursive levels of this so-called \emph{minimum-degree preserving expander decomposition}. We note that similar expander decompositions have appeared in several prior works, see, for instance, \cite{SW18, GKKS20,GGKKS22, BBGNSS22}.

With the robustness of high-degree expanders and the high-degree expander decomposition in hand, the path forward is now a natural one, but one that still requires care. Indeed, we compute this recursive high-degree expander decomposition of the starting graph $G$, and for each expander, we store the list of its vertices, its degrees, and our sparse recovery sketches. However, recall that our goal is to create a \emph{fault-tolerant} distance oracle; thus, when edge deletions arrive, we still must determine where to perform these edge deletions. Indeed, incorrectly applying even a single deletion can invalidate our sparse recovery sketches. 

Nevertheless, we show that a careful ordering of our expanders in conjunction with them being induced subgraphs leads to a \emph{deterministic} procedure to determine which expander contains a given deleted edge. Then, by taking the \emph{union} of all of our expander sketches, we show that the total space required, and the stretch achieved, satisfy the conditions of \cref{thm:faultTolerantDO}.

\subsubsection{Achieving Constant Stretch via Covering by $2$-Hop Stars}

\paragraph{The Challenge}

The preceding sections outline an argument via expander decomposition for building $f$-fault tolerant distance oracles with stretch $O(\log^2(n))$ in space $\widetilde{O}(n \sqrt{f})$. While we later present a more refined version which achieves stretch $O(\log(n)\log\log(n))$, it is worth observing that \emph{any} expander-based construction will \emph{inherently} require stretch $\Omega(\log(n))$: indeed, even an $\Omega(1)$ expander as per our definition may still have a diameter of $\Omega(\log(n))$ which leads to a stretch of the same value. 

To achieve $O(1)$ stretch, we must then design a different graph structure which, for any set of deletions $F$, \emph{still witnesses} $O(1)$ length paths between any vertices which did not undergo too many deletions, \emph{even if} some vertices may have received unbounded degree deletions. Towards designing such a structure, we can observe that this condition effectively imposes two \emph{necessary} (although not necessarily sufficient) constraints on our graph structure, even before any deletions are imposed: (a) the \emph{minimum degree} should be (polynomially) large, and (b) the diameter should be $O(1)$. Thus, to build intuition for our final distance oracle data structure, we first study whether even these simpler graph structures are guaranteed to exist.

\paragraph{Finding High Min-Degree and Constant Diameter Subgraphs}

Our first step towards achieving $O(1)$ stretch is showing that the above conditions \emph{are} simultaneously achievable in dense graphs. More formally, we show the following:

\begin{lemma}
    Let $G = (V, E)$ be a graph on $n$ vertices with $n \cdot d$ edges. Then, there is an induced subgraph $G[V']$ such that:
    \begin{enumerate}
        \item $G[V']$ has minimum degree $\Omega \left (\frac{d^2}{n} \right )$.
        \item $G[V']$ has diameter $4$.
    \end{enumerate}
\end{lemma}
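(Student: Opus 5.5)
The plan is to first pass to a subgraph of large minimum degree, then build the subgraph around a single center vertex $v$. Every kept vertex will be within distance $2$ of $v$, which gives diameter $4$. Minimum degree will be enforced by peeling a bipartite auxiliary graph.

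\textbf{Step 1 (min-degree core).} Iteratively delete vertices of degree $< d/2$ from $G$. Each deletion removes fewer than $d/2$ edges, so in total fewer than $nd/2 < nd$ edges are removed. Hence a nonempty induced subgraph $H = G[U]$ survives with minimum degree at least $d/2$. Fix any $v \in U$ and let $N = \Gamma_H(v)$, so $|N| \geq d/2$.

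\textbf{Step 2 (bipartite peeling).} Form an auxiliary bipartite graph $B$. Its left side is a copy of $N$ and its right side is a copy of $U$. We put an edge between $x \in N$ and $y \in U$ whenever $\{x,y\} \in E(H)$. Every $x \in N$ has degree at least $d/2$ in $H$, so
\[
|E(B)| \geq |N| \cdot \tfrac{d}{2} \geq \tfrac{d^2}{4}.
\]
Set $\tau = d^2/(16n)$ and iteratively delete from $B$ any vertex (on either side) of current degree $< \tau$. $B$ has at most $2n$ vertices, so this removes fewer than $2n\tau = d^2/8 < d^2/4$ edges. A nonempty core therefore survives, with left part $N' \subseteq N$ and right part $W' \subseteq U$. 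Every vertex of $N'$ has at least $\tau$ neighbors in $W'$, and every vertex of $W'$ has at least $\tau$ neighbors in $N'$. Both parts are nonempty, since any surviving edge has one endpoint on each side.

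\textbf{Step 3 (define $V'$ and verify).} Let $V' = \{v\} \cup N' \cup W'$ and take the induced subgraph $G[V']$.

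Minimum degree:
\begin{itemize}
\item A vertex of $N'$ or $W'$ keeps its $\ge\tau$ distinct $B$-neighbors. These are genuine edges of $G$ with both endpoints in $V'$, and a vertex lying on both sides only gains neighbors. Induced subgraphs contain all such edges.
\item The center $v$ is adjacent to all of $N'$. Any $w \in W'$ has at least $\tau$ neighbors in $N'$, so $|N'| \geq \tau$ and $\deg(v) \ge \tau$.
\end{itemize}
Thus the minimum degree is at least $\tau = \Omega(d^2/n)$.

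Diameter: each vertex of $N'$ is adjacent to $v$. Each vertex of $W'$ has a neighbor in $N'$, so it is at distance at most $2$ from $v$. Any two vertices of $V'$ are therefore joined through $v$ by a walk of length at most $4$.

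\textbf{Main obstacle.} The delicate point is keeping the $2$-hop witness structure intact through the peeling. Naively peeling $G[\{v\}\cup N \cup \Gamma(N)]$ by degree could delete $v$, or could delete the middle-layer vertices that certify short paths. The bipartite formulation avoids this. Degree is measured only through edges into the middle layer $N'$, so every surviving outer vertex automatically retains a path of length $2$ to $v$. The edge-counting bound ($d^2/4$ edges against fewer than $d^2/8$ removed) is what guarantees that the core is nonempty, and that in turn requires Step 1's minimum-degree bound of $d/2$ for the vertices of $N$.
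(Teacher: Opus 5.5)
Your proof is correct and shares the paper's skeleton. Like the paper, you pass to a high minimum-degree core, fix a root $v$, find $\Omega(d^2)$ edges out of $\Gamma(v)$ in a bipartite graph on $O(n)$ vertices, peel it to minimum degree $\Omega(d^2/n)$, and re-attach $v$, whose degree is certified because any surviving outer vertex forces $|N'| \ge \tau$.

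The difference is how you treat edges inside $\Gamma(v)$. The paper (\cref{lem:highDegreeLowDiam}) splits into two cases:
\begin{itemize}
\item If $G[\Gamma(v)]$ has at least $|\Gamma(v)|\cdot d/10$ edges, it peels inside $\Gamma(v)$ to get a $1$-hop star of minimum degree $d/10$ and diameter $2$.
\item Otherwise most edges out of $\Gamma(v)$ leave $\{v\}\cup\Gamma(v)$, and it peels the bipartite graph between $\Gamma(v)$ and that disjoint outer layer.
\end{itemize}
Your double cover, whose right side is all of $U$ (including $N$ and even $v$), absorbs both cases in a single peeling. Edges inside $N$ simply become left-to-right edges, and the diameter bound survives because every right vertex still has a left neighbor in $N' \subseteq \Gamma(v)$.

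This is shorter and avoids the case analysis. The price is a worse constant ($d^2/(16n)$ versus the paper's $d^2/(2n)$, or $d/10$ in the dense case), and layers $N'$ and $W'$ that may overlap. For the statement as posed that is harmless, since only the minimum degree and diameter of the induced subgraph matter.

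The paper's case split pays off later. It produces stars with $L_1\cap L_2=\emptyset$ and edge set exactly $E[\{v\},L_1]\cup E[L_1,L_2]$. That is the structure \cref{clm:highDegreeConnectRootDepth2}, \cref{cor:nDeletionsStretch8}, and the edge-eligibility rule in the oracle construction are written for.
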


The proof of this lemma relies on a structure that we call the \emph{$2$-hop star}. More formally, let us suppose that our starting graph $G$ actually has \emph{minimum} degree $d$ (as opposed to just average degree). Because our goal is to find a graph with constant diameter, we choose a vertex $v$, and will grow a ball out from this vertex. Indeed, to start, we consider the graph $G[v \cup \Gamma(v)]$ which is the induced subgraph on $v$ and all of $v$'s neighbors. 

Naturally, $G[v \cup \Gamma(v)]$ satisfies the diameter constraint, as every vertex is connected to $v$, and thus all pairwise distances are bounded by $2$. However, it is possible that the \emph{minimum degree} constraint \emph{is not} satisfied; we can consider, for instance, that $G[v \cup \Gamma(v)]$ is simply a star. The key observation however is the following: if the average degree of $G[v \cup \Gamma(v)] < d / 10$ (as it would be in a star), then we will simply \emph{grow out} another layer of the star from every vertex in $\Gamma(v)$. That is to say, we now consider the graph $G[v \cup \Gamma(v) \cup \Gamma^2(v)]$ (where we now consider only the edges from $v$ to $\Gamma(v)$, and from $\Gamma(v)$ to $\Gamma^2(v)$, in particular, we do not consider the edges entirely contained \emph{within} one of $\Gamma(v)$ or $\Gamma^2(v)$).

A priori, it may seem that this runs into the same issues as before, namely that some vertices in $G[v \cup \Gamma(v) \cup \Gamma^2(v)]$ may have small degree, and this is indeed true. However, the key property we exploit is that $G[v \cup \Gamma(v) \cup \Gamma^2(v)]$ must have \emph{high average degree}. This is because $|\Gamma(v)| \geq d$, and each vertex $u \in \Gamma(v)$ has $\geq 9d/10$ edges \emph{leaving} $\Gamma(v)$. Written another way, this implies that $|E[\Gamma(v), \Gamma^2(v)]| = \Omega(d^2)$. If we trivially bound the number of vertices in $v \cup \Gamma(v) \cup \Gamma^2(v)$ by $n$, then this means that the average degree of $G[v \cup \Gamma(v) \cup \Gamma^2(v)]$ is $\Omega \left ( \frac{d^2}{n}\right )$. By simply peeling off vertices of degree $o(d^2 / n)$, we can even turn this into an induced subgraph with $\Omega \left ( \frac{d^2}{n}\right )$ \emph{minimum degree}.

All that remains to be seen is that this procedure does not decrease the diameter. Because every remaining vertex in $\Gamma^2(v)$ is connected to vertices only in $\Gamma(v)$, to bound the diameter by $4$, we effectively must only show that the \emph{root vertex} $v$ is \emph{not peeled} during this procedure. Indeed, if the vertex $v$ remains intact, then every other vertex trivially finds a path to $v$ of length $2$.
Now, to see this, we observe that each remaining \emph{unpeeled} vertex in $\Gamma^2(v)$ must have degree $\Omega \left ( \frac{d^2}{n}\right )$. Since $\Gamma(v)$ and $\Gamma^2(v)$ form a bipartite graph, this is only possible if $\Gamma(v)$ itself retains $\Omega \left ( \frac{d^2}{n}\right )$ unpeeled vertices. But, because $v$ has an edge to every unpeeled vertex in $\Gamma(v)$, this means that $v$'s degree is also sufficiently high to have not been peeled!

\paragraph{Showing Robustness Under Deletions for $2$-hop Stars}

Formally, a $2$-hop star is specified by a \emph{root vertex} $v$, the first layer of neighbors $L_1 \subseteq \Gamma(v)$, and the second layer of neighbors $L_2 \subseteq \Gamma^2(v)$, with the convention that the only edges participating in the star are those edges either (a) between $v$ and $L_1$ or (b) between $L_1$ and $L_2$. In a graph $G$ with minimum degree $d$, we showed above that \emph{for every vertex $v$}, one can find a $2$-hop star (or $1$-hop)\footnote{Note that this case is even simpler than the $2$-hop case, so we omit it from the discussion here.} with minimum degree $d^* = \Omega(d^2/n)$ and diameter $4$.

However, there is already a glaring shortcoming in $2$-hop stars when it comes to guaranteeing small diameter under deletions: indeed, if in a $2$-hop star, \emph{all of the root's edges} are deleted, then there is immediately no guarantee that the rest of graph has small diameter or even that it is connected! The key fact that we exploit though, is that the root is the \emph{only} ``sensitive'' vertex. Indeed, provided the root vertex does not receive too many deletions, then \emph{all the other} high-degree vertices (after performing deletions) are still connected with constant diameter!

We formalize this with the following key lemma:

\begin{lemma}\label{lem:highDegreeConnectRootDepth2Intro}
    Let $H$ be a $2$-hop star with root $v$, and vertex set $\{v, L_1, L_2\}$ that has minimum degree $d^*$. Now, consider a set $F$ of $\leq f$ edge deletions such that at most $d^*/2$ of $v$'s edges are deleted, and assume $\frac{d^*}{2} \geq \frac{5f}{d^*}$. Then, for any vertex $u$ with $\mathrm{deg}_{H - F}(u) \geq \frac{5f}{d^*}$, if $u$ is in $L_1$, $u$ has a path of length $\leq 3$ to $v$ in $H - F$, and if $u$ is in $L_2$, $u$ has a path of length $\leq 4$ to $v$ in $H - F$.
\end{lemma}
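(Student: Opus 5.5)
The plan is to exploit the fact that the root loses few edges. Let $R \subseteq L_1$ denote the set of first-layer vertices still adjacent to $v$ in $H-F$. Every path I build will end with an edge from $R$ to $v$. The key steps are two counting arguments that bound the number of ``bad'' vertices in each layer by $O(f/d^*)$, which is strictly less than $5f/d^*$. A vertex $u$ of degree at least $5f/d^*$ then always has a non-bad neighbor to route through.

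\textbf{Step 1: bad second-layer vertices.} Since at most $d^*/2$ root edges are deleted, $|L_1 \setminus R| \le d^*/2$. Every $w \in L_2$ has all of its $\ge d^*$ star-edges going to $L_1$, so at least $d^*/2$ of them go to $R$. Call $w$ \emph{bad} if it has no neighbor in $R$ in $H-F$; then at least $d^*/2$ edges of $F$ are incident to $w$. Every $L_1$--$L_2$ edge has exactly one endpoint in $L_2$. Hence the set $B_2$ of bad $L_2$-vertices has size at most $f/(d^*/2) = 2f/d^*$. Every $w \in L_2 \setminus B_2$ has a path $w$--$y'$--$v$ of length $2$ in $H-F$, with $y' \in R$.

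\textbf{Step 2: the case $u \in L_1$.} If $u \in R$, the path has length $1$. Otherwise $u$'s remaining neighbors all lie in $L_2$, since edges inside $L_1$ are not star edges, and there are at least $5f/d^* > |B_2|$ of them. So some neighbor $w \notin B_2$ exists, giving the path $u$--$w$--$y'$--$v$ of length $3$.

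\textbf{Step 3: bad first-layer vertices.} Call $y \in L_1$ \emph{bad} if in $H-F$ it is not adjacent to $v$ and has no neighbor in $L_2 \setminus B_2$. The hypothesis $d^*/2 \ge 5f/d^*$ gives $|B_2| \le 2f/d^* \le d^*/5$. So a bad $y$, which has $\ge d^*$ star-edges, must have lost at least $d^* - d^*/5 - 1 \ge d^*/2$ of them; this is up to an additive constant, absorbed using that $d^*$ is large. Each edge of $F$ has at most one endpoint in $L_1$. Hence there are at most $f/(d^*/2) = 2f/d^*$ bad $L_1$-vertices.

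\textbf{Step 4: the case $u \in L_2$.} All of $u$'s neighbors lie in $L_1$. If one of them is in $R$, the path has length $2$. Otherwise $u$ has at least $5f/d^* > 2f/d^*$ neighbors, so one of them, say $y$, is not bad. Since $y \notin R$, $y$ has a neighbor $w \in L_2 \setminus B_2$. Then $u$--$y$--$w$--$y'$--$v$ is a path of length $4$.

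The main obstacle is Step 3. The naive count (``$y$ lost many edges'') only works after first bounding $|B_2|$. We must then use the hypothesis $d^*/2 \ge 5f/d^*$ to ensure that $B_2$ is too small to absorb $y$'s neighborhood. The small additive slack, coming from the root edge and the rounding, needs a little care.
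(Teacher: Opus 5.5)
Your proof is correct and is essentially the paper's argument. The paper averages over $u$'s surviving neighbors to find one that kept more than half its degree, so it still reaches $L_1\setminus\mathrm{BAD}$ (your $R$). It then handles $u\in L_2$ by applying the $L_1$ case to such a neighbor via $d^*/2\ge 5f/d^*$, which is your $B_2$/$B_1$ counting phrased locally rather than as global bad-set bounds. One small fix: the $-1$ in Step 3 is unnecessary, because a bad $y$ has also lost its root edge. So $y$ lost at least $d^*-|B_2|\ge 4d^*/5$ edges, and no largeness of $d^*$ is needed.
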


For intuition behind this proof, consider a vertex $u$ in $L_1$ with $\mathrm{deg}_{H - F}(u) \geq \frac{5f}{d^*}$. If $u$ happens to still have an edge to $v$ in the post-deletion graph, then the lemma follows trivially. Otherwise, suppose that all $\geq \frac{5f}{d^*}$ of $u$'s edges only go to $L_2$. The key point is that, because the total deletion budget was only $f$ edges, some of $u$'s neighbors must still have high degree. Indeed, for the vertices in $u$'s neighborhood in $H - F$, their total degree \emph{before deletions} is
\[
\sum_{w \in \Gamma_{H -F}(u)} \mathrm{deg}_H(w) \geq \frac{5f}{d^*} \cdot d^* = 5f.
\]
Thus, after $f$ deletions,
\[
\sum_{w \in \Gamma_{H -F}(u)} \mathrm{deg}_{H- F}(w) \geq \sum_{w \in \Gamma_{H -F}(u)} \mathrm{deg}_H(w) - f > \frac{1}{2} \cdot \sum_{w \in \Gamma_{H -F}(u)} \mathrm{deg}_{H}(w),
\]
and so there must be some vertex $w \in \Gamma_{H -F}(u)$ whose degree post-deletions is $\mathrm{deg}_{H- F}(w)  > \mathrm{deg}_{H}(w)/2 \geq d^* / 2$. Thus, $w$ has more than $d^* / 2$ neighbors in $L_1$. But, by assumption, \emph{at most} $d^* / 2$ vertices in $L_1$ had their edges to the root vertex $v$ deleted. Thus, $w$ must still have \emph{at least} one neighbor in $L_1$ whose edge to $v$ remains intact post deletions! Denoting this vertex in $L_1$ by $y$, we trace out a path from $u$ to $w$ to $y$ to $v$, meaning $y$'s distance from $v$ is at most $3$. A similar argument will show that for vertices in $L_2$ of sufficiently high degree, they retain a path to $v$ of length $4$.

\paragraph{Building the Fault Tolerant Distance Oracle}

Now, starting with \cref{lem:highDegreeConnectRootDepth2Intro}, we can already make some key insights. Indeed, imagine that an edge $e$ participates in a $2$-hop star originating from root vertex $v$, and that $v$ does not receive too many deletions in the deletion set $F$. Then, \cref{lem:highDegreeConnectRootDepth2Intro} implies that any high degree vertices (here meaning $\approx f / d^*$) are connected by short paths. But what about those vertices with degree smaller than $f / d^*$? Here, just as in the expander-based distance oracle we take advantage of our \emph{sparse recovery sketches}. By instantiating our sparse recovery sketches with threshold $\approx f / d^*$ on each (non-root) vertex in a $2$-hop star, we can ensure that provided that the root $v$ doesn't have too many edges deleted:

\begin{enumerate}
    \item If a non-root vertex $u$ has degree $\geq f / d^*$, then $u$ has a short path to $v$.
    \item If a non-root vertex $u$ has degree $< f / d^*$, then we can recover all of its neighboring edges!
\end{enumerate}

Denoting our $2$-hop star by $H$, this means that for any edge $e = (x,y) \in H$, as long as the root $v$ doesn't have too many deletions, we either \emph{explicitly} recover the edge $e$ (via sparse recovery), or we are guaranteed a short path (of length $\leq 7$)\footnote{The $7$ comes from the fact that at most one of $x,y$ can be in $L_2$, as there are no edges between vertices in $L_2$. So, the stretch is bounded by $4 + 3$.} exists between $x$ and $y$ via $v$!

This now motivates the final piece of our data structure. Imagine that starting with our graph $G$, we decomposed $G$ into a sequence of $2$-hop stars $H_1, H_2, \dots H_k$. Now, after performing deletions $F$, imagine that it was the case that, \emph{for every edge} $e \in G - F$, it was the case that there was \emph{some} $2$-hop star $H_i$ such that $e \in H_i - F$, and the root vertex in $H_i - F$ didn't receive more than $d^* / 2$ deletions. The above would then imply that the stretch reported between $e$'s endpoints is not too large.

Naturally however, the problem is that a naive decomposition into $2$-hop stars will not provide the above guarantee, as it is possible that for many edges, their corresponding $2$-hop stars had too many edge deletions incident upon their root. To address this, we introduce redundancy: for every edge $e \in E$, rather than enforcing that $e$ participates in only one $2$-hop star, we enforce that $e$ participates in $\approx \frac{4f}{d^*}$ many $2$-hop stars, each from a different root. In this way, no matter which edges are deleted in the future, we are still guaranteed that there is \emph{some} $2$-hop star $H_i$ which contains $e$ and preserves high root degree (and thus witnesses a short path between $e$'s endpoints).

In \cref{sec:constantStretch}, we show that such a decomposition into different $2$-hop stars which ``covers'' each edge sufficiently many times is possible, while still using space below the $nf$ bit barrier. This construction, along with a more careful version of the above argument which bounds the stretch, then leads to the proof of \cref{thm:constantStretchIntro}.

\subsubsection{Beyond the Static Setting: Oblivious Adversary and Streaming Spanners}

\paragraph{Oblivious Spanners}
Upon revisiting the aforementioned expander-based construction, one can see that the only reason the distance oracle is not a spanner is that \emph{within} the high degree vertices of each expander component (post-deletions), we do not recover an actual subgraph, but rather only have the guarantee that all vertices in a high-degree expander core are connected by short paths. As mentioned earlier, this shortcoming is inherent: there is no way to \emph{deterministically} recover a subgraph which witnesses this tight connectivity without then creating a spanner, which in turn carries with it an $\Omega(nf)$ lower bound. However, if we allow ourselves some randomness, and to only report distances correctly for \emph{most} deletion sets $F$, then we can in fact design sketches to recover subgraphs which witness the small diameter property within these expanders, thereby recovering spanners of the post-deletion graph (which we call \emph{oblivious spanners}).

Doing this relies on the notion of a \emph{randomized} $\ell_0$-sampler linear sketch. Such linear sketches behave similarly to that of the deterministic sparse recovery sketch, with the modification that when the degree of a vertex is \emph{larger} than $\frac{4f}{D}$, the sampler now returns a \emph{random sample} of $\frac{4f}{D}$ many edges (with high probability).

Thus, to recover these subgraph witnesses to the low-diameter-ness of the high degree vertices we store (a) $\ell_0$-samplers to sample $\frac{4f}{D}$ many edges from each vertex and (b) a \emph{global random sample} of roughly $n \cdot D$ of the edges in the graph. The sketch in (a) ensures that high-degree vertices recover at least one neighbor in $V_{\mathrm{good}}$, and a careful analysis of (b) ensures that the resulting sampled edges from the induced subgraph on $V_{\mathrm{good}}$ in fact constitute a good expander. Together with the sketches used in the distance oracle (recovering all edges incident on low degree vertices) this constitutes an $O(\log^2(n))$-spanner in the post-deletion graph.

\paragraph{Distance Oracles and Spanners in Streams}

All of the preceding discussion has relied on the ability to compute the expander decomposition and sketches of the \emph{entire graph}, presented all at the same time. To generalize these results to the streaming setting, where edges are only presented one at a time, requires more care. To simplify the parameters in the discussion below, we focus on the setting when $f = n$.

Because our streaming algorithm cannot afford to store \emph{all} of the edges in the graph, it instead stores a ``buffer'' of edges, in which it places all of the non-expander edges so far. Whenever the size of the buffer reaches a set threshold (which we set to be roughly $n^{5/3}$ many edges), this then triggers a round of high-degree ($D = n^{2/3}$) expander decomposition on the buffer edges. Any edges landing in an expander are removed from the buffer, while those not in expanders remain in the buffer. By our expander decomposition properties, we know that an $\Omega(1)$ fraction of edges are then placed in \emph{some} expander, and so this opens up more capacity (an $\Omega(1)$ fraction) in the buffer.

Importantly, in this setting we can no longer afford to use a degree lower bound of $D \approx \sqrt{f}$, and instead we require a larger choice of the degree. Indeed, in the regime above where $f = n$, we require $D \approx n^{2/3}$. 

There is one final detail which appears when dealing with edges in streams: typically, the expander decomposition has the nice property that expanders are all induced subgraphs. In the static setting, this gave us an important structural property to exploit when finding the expander component that contains a deleted edge $e$. In the streaming setting, this is no longer the case, as when we first compute our expander decomposition, it is only an induced subgraph with respect to the \emph{current state of the buffer}. Naively, this then leads to ambiguity when it comes time to delete an edge $e$, as it could foreseeably be in any one out of many different expander components. 

To overcome this, we introduce a minor caveat in the above buffer algorithm: whenever an edge $e$ is inserted in the stream, we first check if \emph{any of the already existing} expander components can receive the edge $e$ (i.e., if both endpoints of $e$ are in the same component), and if so, we insert this edge into the sketches of that component. By careful book-keeping, this leads to a deterministic way to identify \emph{which} expander component an edge deletion belongs to, but also requires a stronger version of the robustness lemma: namely that, for any high-degree expander component, \emph{and any set of edge insertions}, after a subsequent round of $f$ edge deletions, it is still the case that high-degree vertices have low diameter. We prove this lemma in \cref{sec:HDX} thereby yielding the correctness of the streaming algorithms in \cref{sec:streaming}.

\subsection{Open Questions}
Our results highlight several interesting directions for future work:

\begin{enumerate}
\item \textbf{Optimal Space for Fault-Tolerant Connectivity.} Consider the problem of designing fault-tolerant \emph{connectivity} oracles. In this setting, one is given a graph $G = (V, E)$, along with a parameter $f$, and wishes to report, for any $F \subseteq E, |F| \leq f$, the connected components of $G - F$. Our work shows that this can be done in $\widetilde{O}(n \sqrt{f})$ many bits of space. Is there a matching lower bound? It suffices to show that for $f = n^2$ any such connectivity data structure must store $\Omega(n^2)$ bits of space.
    \item \textbf{Optimal Space / Fault Tolerance / Stretch Tradeoff.} Our work shows that stretch $7$ in the presence of $f$ faults can be achieved in $\widetilde{O}(n^{3/2} f^{1/3})$ bits of space, and that stretch $O(\log(n) \log\log(n))$ can be achieved in $\widetilde{O}(n \sqrt{f})$ bits of space. But what is the optimal amount of space for a data structure achieving stretch $2k-1$ and $f$-fault tolerance? 
\end{enumerate}

\section{Preliminaries}

\subsection{Notation}

Throughout the paper, we use $G = (V, E)$ to denote a graph, with $V$ being the vertex set and $E$ being the edge set. We use $n$ to denote the number of vertices, and use $m$ to denote the number of edges. For a set $V' \subseteq  V$, we let $G[V']$ denote the induced subgraph on $V'$, i.e., all edges $e$ such that $e \subseteq V'$. For sets $S, T \subseteq V$, we use $E_G[S, T]$ to denote all edges that have one endpoint in $S$ and one endpoint in $T$. When the graph $G$ is clear from context, we drop $G$ from the subscript.

Often, we use $|G[V']|$ to denote the number of edges in the induced subgraph on $V'$.

\subsection{Linear Sketching Basics}

To start, we introduce several notions that we will require. The first are those of linear sketching and $\ell_0$-samplers:

\begin{definition}[Linear Sketching]
    For graphs over a vertex set $V$, we say that a sketch $\mathcal{S}: \zo^{\binom{V}{2}} \rightarrow \Z^s$ is a \emph{linear sketch}, if, for a graph $G_A = (V, E_A)$ and a graph $G_B = (V, E_B)$ such that $E_B \subseteq E_A$, it is the case that $\mathcal{S}(G_A - G_B) = \mathcal{S}(G_A) - \mathcal{S}(G_B)$.
\end{definition}

In this setting, we will represent a graph via its incidence matrix in $\{-1,0,1\}^{V \times E}$. Formally:

\begin{definition}\cite{AGM12}\label{def:graphEncoding}
    Given an unweighted graph $G = (V, E)$ with $|V| = n$, define the $n \times \binom{n}{2}$ matrix $A$ with entries $(i, e)$, where $i \in [n]$ and $e \in [\binom{n}{2}]$. We say that 
    \[
    A_{i,e} = \begin{cases}
        1 & \text{ if } i \in e, i \neq \max_{j \in e} j, \\
        -1 & \text{ if } i \in e, i = \max_{j \in e} j, \\
        0 & \text{ else.}
    \end{cases}
    \]
    Let $a_1, \dots a_n$ be the rows of the matrix $A$. The support of $a_i$ corresponds with the neighborhood of the $i$th vertex.
\end{definition}

\begin{definition}\label{def:ell0samp}
    Let $S$ be a turnstile stream, $S = s_1, \dots s_t$, where each $s_i = (u_i, \Delta_i)$, and the aggregate vector $x \in \R^u$ where $x_i = \sum_{j:u_j = i} \Delta_i$. Given a target failure probability $\delta$, a $\delta$ $\ell_0$-sampler for a non-zero vector $x$ returns $\perp$ with probability $\leq \delta$, and otherwise returns an element $i \in [u]$ with probability $\frac{|x_i|_0}{|x|_0}$.
\end{definition}

Throughout this paper, we will often set $u = \binom{n}{2}$ to be the number of possible edges (sometimes referred to as the number of \emph{edge slots}) in the graph. We will use the following construction of $\ell_0$-samplers:

\begin{theorem}\cite{jowhari2011tight, CF14}\label{thm:ell0samp}
    For any universe of size $u$, there exists a \emph{linear} sketch-based $\delta$-$\ell_0$-sampler using space $O(\log^2(u) \log (1 / \delta))$.\footnote{Assuming that the number of deletions is bounded by $\mathrm{poly}(u)$.} 
\end{theorem}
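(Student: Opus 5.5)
The plan is the standard subsampling construction: hash coordinates into geometrically shrinking levels, run an exact $1$-sparse recovery test at each level, and repeat independently to drive the failure probability down to $\delta$. Since the statement is quoted from \cite{jowhari2011tight, CF14}, I only reconstruct that argument and check the three parameters: linearity, space, and the output distribution.

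\textbf{Step 1 (subsampling levels).} Fix a hash function $h:[u]\to[u]$. For each level $j=0,1,\dots,\lceil\log u\rceil$ let $I_j=\{i : h(i)\le u/2^j\}$, and let $x^{(j)}$ be $x$ restricted to $I_j$. Each $x^{(j)}$ is a linear image of $x$. Suppose $h$ is sufficiently independent; to start I would take it pairwise independent. Then at the level $j^*$ with $2^{j^*}\approx |x|_0$, the expected number of surviving support elements is $\Theta(1)$. A second-moment calculation then shows that $|\mathrm{supp}(x^{(j^*)})|=1$ with constant probability.

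\textbf{Step 2 ($1$-sparse recovery).} For each level, store three linear counters:
\begin{itemize}
\item $A_j=\sum_i x^{(j)}_i$;
\item $B_j=\sum_i i\,x^{(j)}_i$;
\item a fingerprint $C_j=\sum_i x^{(j)}_i r^i \bmod p$, for a random $r$ and a prime $p=\mathrm{poly}(u)$.
\end{itemize}
If $x^{(j)}$ is $1$-sparse, the candidate index is $i=B_j/A_j$, and the test is $C_j\equiv A_j r^{i}\pmod p$. A vector that is not $1$-sparse passes the test with probability at most $u/p$, by Schwartz--Zippel applied to a nonzero polynomial of degree at most $u$. All counters are linear in $x$.

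The footnote's assumption, that the number of deletions is $\mathrm{poly}(u)$, is what makes the counters fit in $O(\log u)$ bits: it keeps the entries of $x$, and hence $A_j$ and $B_j$, bounded by $\mathrm{poly}(u)$. This gives $O(\log u)$ levels of $O(\log u)$ bits each, i.e.\ $O(\log^2 u)$ bits per repetition. The sampler outputs the index from the smallest-numbered level that passes the test, and outputs $\perp$ if no level passes.

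\textbf{Step 3 (amplification).} Run $O(\log(1/\delta))$ independent copies of Steps 1--2 and output the answer of the first copy that succeeds. Each copy succeeds with constant probability, so all copies fail with probability at most $\delta$. Union-bounding the fingerprint false positives over all $O(\log u\cdot\log(1/\delta))$ level tests costs only $1/\mathrm{poly}(u)$. This gives total space $O(\log^2 u\log(1/\delta))$.

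\textbf{Main obstacle: uniformity of the output.} The delicate point is that, conditioned on not returning $\perp$, the output must be (nearly) uniform over $\mathrm{supp}(x)$. With a fully random $h$ this holds by symmetry. The catch is that storing a fully random $h$ would break the space bound, and with only pairwise independence the conditional distribution can be skewed.

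My first attempt would be the approach of \cite{jowhari2011tight}. The sketch is a small-space algorithm reading the stream, so I would replace the truly random bits by the output of Nisan's pseudorandom generator. This preserves the output distribution up to $1/\mathrm{poly}(u)$ error, and the seed costs $O(\log^2 u)$ bits, which fits within the budget. The alternative, following \cite{CF14}, is $O(\log(1/\delta))$-wise independent hashing together with a direct analysis showing that the conditional output distribution is within the required tolerance of uniform. I expect this uniformity argument under limited independence to be the only step that is not routine bookkeeping.
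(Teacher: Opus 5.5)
The paper gives no proof of this statement; it is quoted from \cite{jowhari2011tight, CF14}, so your reconstruction has to stand on its own. Its skeleton is sound and fits the subsample/recover/verify template of \cite{CF14}. It is not the construction of \cite{jowhari2011tight}, though. As I recall it, that construction uses a single structure with $s$-sparse recovery, $s=\Theta(\log(1/\delta))$, at every level, so the $\log(1/\delta)$ factor comes from $s$ rather than from independent repetitions. Your variant (1-sparse recovery plus a fingerprint, repeated $O(\log(1/\delta))$ times) has simpler decoding and the same $O(\log^2 u\log(1/\delta))$ budget. Steps 1--3 and the space count are fine, provided $p$ exceeds the $\mathrm{poly}(u)$ bound on the counter values, so that the Schwartz--Zippel polynomial stays nonzero modulo $p$.

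The step that does not go through as written is the derandomization. Nisan's generator fools a machine that reads its random bits once, in a fixed order, and its seed length is $O(S\log R)$, where $S$ is the space of the machine being fooled. You propose to fool the sketch itself. That machine reads $h(i)$ in stream order, with repeats, and uses $\Theta(\log^2 u)$ bits per copy, so the generic bound gives an $O(\log^3 u)$-bit seed per copy, which breaks the budget. Two fixes are needed.

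\emph{Reorder the stream.} By linearity the final sketch depends only on $x$, so you may analyze the stream that presents each coordinate once, in increasing order.

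\emph{Fool a much smaller tester.} Fix $x$ and a target $i^*\in\mathrm{supp}(x)$. The tester reads $h(1),\dots,h(u)$ in order, keeps for each level the number of surviving support elements capped at $2$, and records the deepest level $\lambda$ containing $i^*$. Because the levels are nested, the output (absent fingerprint errors) equals $i^*$ exactly when the first level with count $1$ has index at most $\lambda$. Success is also read off from the counts.

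This tester has $O(\log u)$ bits of state. With $R=O(u\log u)$, the seed is therefore $O(\log^2 u)$ per copy, using an independent seed for each copy. The probabilities of success and of each output then move by only $u^{-\Omega(1)}$. Fingerprint errors are still covered by your union bound, since $r$ is independent of $h$.

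If you prefer the limited-independence route, $O(\log(1/\delta))$-wise independence only gives relative error $\mathrm{poly}(\delta)$, which is a constant when $\delta$ is constant. You would need $\Theta(\log u)$-wise independence, which is still $O(\log^2 u)$ bits per copy. You would also need an Indyk-style approximate min-wise argument applied to the event that $i^*$ is the minimum and is isolated by some threshold.
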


We also formally state the following deletion property of $\ell_0$-samplers, as we will make use of this throughout our paper.

\begin{remark}\label{rmk:removeEdge}
    Suppose we have a linear sketch for the $\ell_0$-sampler of the edges leaving some vertex $v$, denoted by $\calS(v, R)$ (where $R$ denotes the randomness used to build the sketch). Suppose further that we know there is some edge $e$ leaving $v$ that we wish to remove from the support of $\calS(v, R)$, which was recovered independently of the randomness $R$. Then, we can simply add a linear vector update to $\calS(v, R)$ that cancels out the coordinate corresponding to this edge $e$. 
\end{remark}

Note that the sketches described so far are \emph{randomized}, and thus are only valid against oblivious adversaries. Ultimately, one of our contributions is to design sketches that are valid against adaptive adversaries. For this, we use the following facts regarding deterministic sparse recovery.

\begin{fact}\label{fact:uniqueSum}[See, for instance, \cite{KLS93}, Lemma 2.1]
    For a universe $[u]$ and a constant $k$, there is a function $g: [u] \rightarrow \R^{\geq 0}$ such that:
    \begin{enumerate}
        \item For any $S \subseteq [u]$, $|S| \leq k$, $g(S) = \sum_{i \in S}g(i)$ is unique. 
        \item For every $i \in [u]$, $g(i) \leq \mathrm{poly}(u^k)$.
    \end{enumerate}
\end{fact}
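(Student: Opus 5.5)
The plan is a Sidon-type construction: map each element of $[u]$ to a vector over a prime field so that every set of at most $k$ elements has a distinct sum, and then read that vector off as a single nonnegative integer. Fix a prime $p$ with $u < p \le 2u$, which exists by Bertrand's postulate, and work over $\F_p$. To each $i \in [u]$ assign the moment vector
\[
v(i) = (i, i^2, \dots, i^{k}) \in \F_p^{k}.
\]

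The key claim is that the power-sum vector $\sum_{i\in S} v(i)$ determines $S$ whenever $|S| \le k$. We argue by contradiction. Suppose $S \neq T$ are two such sets with equal sums. Cancelling $S\cap T$, we may assume $S$ and $T$ are disjoint, so $|S \cup T| \le 2k$, and for every $1\le j\le k$ we have $\sum_{S} i^j = \sum_{T} i^j$.

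To get from this to a contradiction, we use Newton's identities, since every element has multiplicity one. They relate the power sums $p_1,\dots,p_k$ to the elementary symmetric polynomials $e_1,\dots,e_k$ via
\[
j e_j = \sum_{\ell=1}^{j} (-1)^{\ell-1} e_{j-\ell}\, p_\ell .
\]
Since $k \le u < p$, every integer $j \le k$ is invertible mod $p$, so the identities can be solved for $e_1,\dots,e_k$. Hence $S$ and $T$ have the same $e_1,\dots,e_k$. For $|S| = s \le k$ we have $e_j(S) = 0$ for $j > s$, so the polynomial $\prod_{i\in S}(x-i)$ is determined up to the shift $x^{k-s}$.

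The sizes also agree. Writing $\prod_{i \in S}(1 - i t) = \sum_j (-1)^j e_j(S) t^j$, both sides are polynomials of degree at most $k$ whose coefficients $e_1,\dots,e_k$ coincide, so they are equal. Comparing degrees, using that all $i \ne 0$, forces $|S| = |T|$. The monic polynomials $\prod_{i\in S}(x-i)$ and $\prod_{i\in T}(x-i)$ are therefore identical, and unique factorization in $\F_p[x]$ gives $S = T$, a contradiction.

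One technicality is the element $0$, which would be invisible to every coordinate of $v$. We avoid it by using labels $1,\dots,u$, which are all nonzero mod $p$ because $u < p$.

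To obtain a real-valued $g$, we must convert field sums into a single integer without the carries between coordinates that would destroy injectivity. Represent the $j$-th coordinate as the integer $i^j \bmod p$, which lies in $[0,p)$, and set
\[
g(i) = \sum_{j=1}^k (i^j \bmod p)\, B^{\,j-1}, \qquad B = kp.
\]
For $|S|\le k$, each integer coordinate sum $\sum_{i\in S} (i^j \bmod p)$ is less than $kp = B$, so the base-$B$ digits of $g(S)$ are exactly these integer coordinate sums. Reducing each digit mod $p$ recovers the $\F_p$ power sums, and the claim above then recovers $S$. So $g(S)$ is unique.

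The size bound follows from $g(i) < B^{k} = (kp)^k \le (2u^2)^k = \mathrm{poly}(u^k)$, using $k \le u$.

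The main obstacle is the invertibility step in Newton's identities, which needs $p > k$, together with justifying the equal-size conclusion. Choosing $p > u \ge k$ and using the generating-polynomial argument handles both. An alternative that avoids Newton's identities entirely is a Vandermonde argument: any $2k$ distinct columns $(i^j)_{j=0}^{2k-1}$ are linearly independent, so the difference of two distinct indicator vectors cannot vanish. This uses $2k$ moments instead of $k$, which matches the paper's $\F_u^{2k+1}$ phrasing, and it costs only a factor of two in the exponent.
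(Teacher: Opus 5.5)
Your argument is correct, but it is not the route the paper takes.

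The paper does not prove this Fact at all; it defers to \cite{KLS93}. The only related argument it supplies is for the efficient variant \cref{clm:efficientUniqueSum} in \cref{sec:efficientUniqueSum}. There, $g$ is the syndrome map of a Reed--Solomon code of distance $2k+1$ over $\F_q$. Any $2k$ columns of the parity-check matrix are linearly independent, so $\mathbf{1}_S-\mathbf{1}_{S'}$ (weight between $1$ and $2k$) is not in the kernel, and recovery is RS syndrome decoding. That is essentially the Vandermonde alternative in your closing remark.

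Your main proof differs in two ways.

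\emph{Fewer moments.} You use only $k$ power sums and Newton's identities (valid since $p>u\ge k$), exploiting the fact that the vector being decoded has $0/1$ entries. This halves the number of coordinates, and decoding stays efficient: Newton's identities, then root-finding by scanning $[u]$. The price is generality. $k$ power sums cannot separate arbitrary weight-$\le k$ vectors with general $\F_q$ entries; already for $k=1$, $\mathbf{1}_{\{a\}}$ and $-\mathbf{1}_{\{p-a\}}$ collide. The syndrome construction handles such vectors. This never matters in the paper's applications, because only edges actually present are ever subtracted, so the residual is always a genuine indicator vector.

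\emph{Integer packing.} Your carry-free base-$B$ encoding with $B=kp$ produces exactly the object the Fact asks for: a nonnegative integer-valued $g$ with $g(i)<(2u^2)^k=\mathrm{poly}(u^k)$. The $\F_q^{2k}$-valued construction in the appendix does not give this without an extra packing step of the same kind.

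Two cosmetic points. You tacitly assume $k\le u$, which is harmless since you can replace $k$ by $\min(k,u)$. The reduction to disjoint $S,T$ is unnecessary, since your generating-polynomial argument applies to any two sets of size at most $k$.
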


For efficient implementations, we also use the following generalization of the sketch mentioned above. This sketch can also be used instead of the one above for the inefficient sketches, but we include the one above as it is simpler to state.

\begin{claim}\label{clm:efficientUniqueSum}
    For a universe $[u]$ and a constant $k$, let $q$ be a prime of size $[u, 2u]$. There is a polynomial time computable linear function $g: \zo^{[u]} \rightarrow \F_q^{2k+1}$ such that:
    \begin{enumerate}
        \item For any $S \subseteq [u]$, $|S| \leq k$, $g(\zo^S)$ is unique. 
        \item For any vector $z \in \F_q^{2k+1}$ such that there exists $S \subseteq [u]$, $|S| \leq k$ and $g(\zo^S) = z$, there is a polynomial time algorithm for recovering $S$ given $z$.
    \end{enumerate}
\end{claim}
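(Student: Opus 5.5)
We realize $g$ as a Reed--Solomon / Vandermonde syndrome map over $\F_q$, i.e., power-sum moments. First identify the universe $[u]$ with distinct nonzero field elements $\alpha_1,\dots,\alpha_u \in \F_q$; this is possible since $q \ge u+1$ once we take $q$ strictly larger than $u$ (a prime in $[u,2u]$ exists by Bertrand's postulate and is found in polynomial time by trial division). Then define, for $x \in \zo^{[u]}$,
\[
g(x) = \Bigl( \textstyle\sum_{i} x_i \alpha_i^{j} \Bigr)_{j=0}^{2k} \in \F_q^{2k+1}.
\]
This map is manifestly linear, since it is multiplication by the $(2k+1)\times u$ Vandermonde matrix $M_{j,i}=\alpha_i^j$. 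It is computable in polynomial time. The linearity is exactly what the paper needs to update the sketch under deletions.

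For uniqueness, suppose $g(\zo^S)=g(\zo^{T})$ with $|S|,|T|\le k$. Then $M(\zo^S-\zo^T)=0$, where the difference vector has support of size at most $2k$. Any $2k$ columns of $M$ restricted to rows $0,\dots,2k-1$ form a square Vandermonde matrix in distinct $\alpha_i$, which is nonsingular. Hence the difference is $0$ and $S=T$. (Row $2k$ is superfluous but harmless; the count $|S| \bmod q$ is also read off from row $0$.)

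For recovery, we decode the syndrome $z_j=\sum_{i\in S}\alpha_i^j$ by the standard Berlekamp--Massey / Peterson procedure. Let $\Lambda(X)=\prod_{i\in S}(1-\alpha_i X)=\sum_{t=0}^{s}\lambda_t X^t$ be the error-locator polynomial, with $s=|S|$. The key identity is that the sequence $(z_j)$ satisfies the linear recurrence $\sum_{t=0}^{s}\lambda_t z_{j-t}=0$ for $s\le j\le 2k$, because each geometric sequence $\alpha_i^j$ is annihilated by $(1-\alpha_i X)$. Since $s\le k$ and we have $2k+1$ terms, Berlekamp--Massey returns the unique shortest linear recurrence, which equals $\Lambda$. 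Uniqueness follows from the classical fact that a recurrence of length $\le k$ is determined by $2k$ terms. Equivalently, we can solve the $s\times s$ Hankel system for each candidate $s=k,k-1,\dots$ and take the largest $s$ for which the Hankel matrix is nonsingular, whose nonsingularity factors as $V^\top DV$ with $V$ Vandermonde. Finally, we find the roots of $\Lambda$ by evaluating it at each $\alpha_i^{-1}$, which takes $u\cdot\mathrm{poly}(k)$ time. The inverses of the roots give $S$.

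The main obstacle is the recovery step: we must justify that the minimal recurrence recovered equals $\Lambda$ and not some spurious shorter recurrence. This is where we use that all multiplicities are $1\ne 0$ in $\F_q$, so that $D$ is nonsingular, together with $q>u$, so that the $\alpha_i$ are distinct and nonzero. We handle it via the Hankel factorization argument above.
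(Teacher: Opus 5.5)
Your proposal is correct and is essentially the paper's argument: the paper takes $g$ to be the syndrome map of a Reed--Solomon code of distance $2k+1$, gets uniqueness from the minimum distance, and cites standard syndrome decoding. Your Vandermonde matrix, nonsingular-minor argument, and Berlekamp--Massey/Hankel decoding simply make those three steps explicit. Two small side notes: your choice of $q$ strictly greater than $u$ (so that the evaluation points are nonzero) costs nothing, since Bertrand's postulate supplies such a prime; and your use of the difference vector $\zo^S-\zo^T$ is slightly more careful than the paper's $\zo^{S\oplus S'}$.
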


We include a proof of the above claim in the appendix (see \cref{sec:efficientUniqueSum}).

\subsection{Expanders and Expander Decomposition}

Next, we require some definitions and notions of expansion.

Notationally, we will use $\mathrm{deg}(v)$ to refer to the degree of vertex $v$. In general, this degree may be taken with respect to arbitrary subgraphs $H$ of a graph $G$. When this degree is taken with respect to such a subgraph, it will be denoted by $\mathrm{deg}_H(v)$. Going forward, we use $G[V']$ to denote the \emph{induced subgraph} of $G$ on vertex set $V'$.

\begin{definition}
	For a graph $G = (V, E)$, and for a set $S \subseteq V$, we let $\mathrm{Vol}(S) = \sum_{v \in S} \mathrm{deg}(v)$, and we let $\delta(S) = \left | \{ e = (u,v) \in E: u \in S, v \in \bar{S} \vee u \in \bar{S}, v \in S \} \right |$.
\end{definition}

With this, we can now define the expansion of each set $S \subseteq V$:

\begin{definition}
	For a graph $G = (V, E)$, and for a set $S \subseteq V$, we say the conductance of $S$ is
	\[
	\phi(S) = \frac{\delta(S)}{\min(\mathrm{Vol}(S), \mathrm{Vol}(\bar{S}))}.
		\]
\end{definition}

The expansion of the graph $G$ is now the minimum conductance of any set $S \subseteq V$:

\begin{definition}
	For a graph $G = (V, E)$, we define its expansion as $\phi(G) = \min_{S: \emptyset \subsetneq S \subsetneq V} \phi(S)$.
\end{definition}

We also occasionally use the notion of a \emph{lopsided-expander} (see \cite{bringmann2025near}):

\begin{definition}
    For a graph $G = (V, E)$, and a cut $S \subseteq V$, we define the lopsided-conductance of $S$ as
    \[
    \psi(S) = \frac{\delta(S)}{\min(\mathrm{Vol}(S), \mathrm{Vol}(\bar{S})) \cdot \log \frac{\mathrm{Vol}(V)}{\min(\mathrm{Vol}(S), \mathrm{Vol}(\bar{S}))}}.
    \]
    A graph $G = (V, E)$ is a $\psi_0$-lopsided expander if for every $S \subseteq V$, $\psi(S) \geq \psi_0$.
\end{definition}

We will frequently use the following expander-decomposition statement, which is essentially a restatement of \cite{SW18, BBGNSS22, GGKKS22, bringmann2025near}:

\begin{theorem}\label{thm:expanderDecomp}
    Let $G = (V, E)$ be a graph, with $|V| = n$ and $|E| = m$, and let $1 \leq D \leq n$ be an arbitrary parameter. Then, there exists a set of disjoint parts of the vertex set, denoted $V_1, \dots V_k$ such that:
    \begin{enumerate}
        \item Each induced subgraph $G[V_i]$ is an $\Omega(1 / \log(n))$ lopsided-expander. 
        \item Each induced subgraph $G[V_i]$ has minimum degree $D$. 
        \item The total number of edges \emph{not in} one of $G[V_i], i \in [k]$ is bounded by
        \[
        \left |E - \bigcup_{i \in [k]} G[V_i] \right | \leq O(n D\log^2(n)) + \frac{m}{2}.
        \]
    \end{enumerate}
\end{theorem}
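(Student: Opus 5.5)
We prove \cref{thm:expanderDecomp} in two stages: first peel low-degree vertices to get a high-degree core, then run a self-loop-augmented lopsided expander decomposition on that core, recursing on pieces that lose the minimum-degree property.

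\textbf{Peeling.} Set a threshold $D' = \Theta(D\log^2 n)$. Repeatedly delete any vertex whose current degree is below $D'$, charging the at most $D'$ incident edges to that vertex. This removes at most $nD' = O(nD\log^2 n)$ edges and leaves an induced core $C$ with minimum degree at least $D'$.

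\textbf{Decomposition of the core.} Run a (lopsided) expander decomposition in the style of \cite{SW18, bringmann2025near} on $G[C]$ with conductance parameter $\phi = \Theta(1/\log n)$. Choose the constant so that the number of inter-cluster edges is at most $m/2$; a standard decomposition cuts only $O(\phi\log n)\cdot m$ edges, so a suitable constant suffices. To make each cluster an expander \emph{as an induced subgraph}, use the self-loop trick of \cite{SW18}. For every edge leaving a cluster $V_i$, attach a self-loop to its endpoint inside $V_i$. The decomposition then certifies that $G[V_i]$ plus these loops is an $\Omega(1/\log n)$ lopsided expander, so volumes are measured with respect to the original degrees in $G[C]$.

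\textbf{Recovering minimum degree.} This is the key step. Inside a cluster, a vertex $v$ may have lost most of its degree to the cut. Call a vertex \emph{bad} if more than half of its core degree leaves $V_i$. Apply a trimming step (as in \cite{SW18}): repeatedly remove from $V_i$ any vertex whose degree into the current set falls below $\deg_C(v)/2$. Because the loop-augmented cluster expands, the total volume trimmed is $O(\delta(V_i)/\phi)$. Charging these removals to the cut edges gives $O(\log n)$ times the cut edges. Choosing the decomposition's conductance parameter smaller by an extra $\log$ factor keeps this total within $m/2$, with some slack going into the $D\log^2 n$ peeling threshold. After trimming, every remaining vertex has degree at least $D'/2 \ge D$ inside its cluster. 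Trimming preserves lopsided expansion up to constants, since the trimmed set is exactly what the trimming lemma guarantees.

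\textbf{Main obstacle and fallback.} The main obstacle is the interplay of trimming with \emph{lopsided} conductance, specifically whether the $\log(\mathrm{Vol}(V)/\min\mathrm{Vol})$ factor survives trimming. If the direct argument fails, I would recurse instead. Re-peel each cluster at threshold $D$ and re-decompose, charging each removed vertex's $\le D$ edges to the $O(nD)$ budget at each of $O(\log n)$ recursion levels. This accounts for the $\log^2 n$ factor in $O(nD\log^2 n)$, and the total becomes $O(nD\log^2 n) + m/2$ as claimed.
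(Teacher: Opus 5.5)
\textbf{There is a genuine gap in your ``Recovering minimum degree'' step.} Your peeling step and your use of the self-loop trick match the paper, but the trimming argument does not go through as written.

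\emph{The accounting fails.} Suppose the trimmed volume in $V_i$ is $O(\delta(V_i)/\phi)$. Then the total over all clusters is $O(\mathrm{cut}/\phi)$. The only available bound on the cut at parameter $\phi$ is $O(\phi\log n)\cdot m$, so the trimmed volume is $O(m\log n)$ however you choose $\phi$. Lowering $\phi$ by a $\log$ factor shrinks the cut but inflates $1/\phi$ by the same factor, so it cannot bring the total under $m/2$. It would also leave you with only $\Omega(1/\log^2 n)$-expanders, which contradicts item~1.

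\emph{Your trimming rule does not preserve expansion.} The rule ``remove a vertex once less than half its degree stays inside'' is not the flow-based trimming of Saranurak--Wang, and it need not preserve (lopsided) conductance. A surviving set $S$ can keep exactly half of each vertex's degree inside $S$ and send the rest into the trimmed part. Then $S$ is disconnected from the remaining vertices.

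\emph{The fallback recursion has the same budget problem.} Each re-decomposition cuts a constant fraction of its input's edges again. Nothing in your sketch bounds the sum over levels by $m/2$, or shows that the recursion stops after $O(\log n)$ levels.

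\textbf{The missing idea: the self-loop guarantee already gives the degree bound.} You invoke this guarantee yourself, and applying it to singleton cuts is exactly how the paper argues. Let $V_i$ be a cluster with $|V_i|\ge 2$ and let $v\in V_i$, with volumes measured by the core degrees $\deg_C$. Apply the expansion guarantee to the singleton cut $\{v\}$.
\begin{itemize}
\item Self-loops do not cross this cut, so $\delta(\{v\})=\deg_{G[V_i]}(v)$.
\item Lopsided conductance at least $\psi_0$ implies ordinary conductance at least $\psi_0$ up to a constant, since the logarithmic factor is bounded below by a constant.
\item Every core vertex has $\deg_C \ge D'=\Theta(D\log^2 n)$, and $\mathrm{Vol}_C(V_i\setminus\{v\})$ contains the degree of some other core vertex.
\end{itemize}
Together these give
\[
\deg_{G[V_i]}(v) \;\ge\; \Omega\!\left(\tfrac{1}{\log n}\right)\cdot \min\bigl(\deg_C(v),\, \mathrm{Vol}_C(V_i\setminus\{v\})\bigr) \;\ge\; \Omega\!\left(\tfrac{D\log^2 n}{\log n}\right) \;\ge\; D .
\]
Singleton clusters can simply be dropped, because their incident edges are already counted among the cut edges.

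So no trimming, recursion, or change of $\phi$ is needed. A vertex can never lose more than a $1-\Omega(1/\log n)$ fraction of its degree to the cut, and the $\log^2 n$ in the peeling threshold absorbs exactly that loss.
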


We prove this statement below:

\begin{proof}
    To start, we iteratively remove vertices in $G$ of degree $\leq D \log^2(n)$. Note that each time we remove such a vertex, the number of edges that is removed is $\leq D \log^2(n)$. By default, these removed vertices and their edges will not participate in any expanders, and yield the $O(n D \log^2(n))$ in the above bound. Now, let $V'$ be the resulting vertex set for which the minimum degree of $G[V']$ is now $\geq D \log^2(n)$.

    On $G[V']$ we now apply the expander decomposition of \cite{SW18} adapted to lopsided expanders, (see \cite{bringmann2025near} for this argument).\footnote{See \cite{BBGNSS22, GGKKS22} for similar invocations of this expander decomposition.} This guarantees a decomposition of $V'$ into $V_1, \dots V_k$ such that each $G[V_i]$ is an $\Omega(1 /  \log(n))$ lopsided-expander, and that the number of edges not contained in any expander is at most $m / 2$. Further, this guarantees that, for a vertex $v \in V_i$, 
    \[
    \frac{\delta_{G[V_i]}(v)}{\mathrm{Vol}_G(v)} = \Omega(1 /\log(n)),
    \]
    i.e., that the expansion guarantee holds even when the volume is measured with respect to the original vertex degrees (this argument relies on adding self-loops to vertices to keep their degrees the same throughout the expander decomposition process - in this way, the volume of each set of vertices is unchanged after removing the crossing edges, see \cite{SW18} for this argument). Importantly, this directly implies that 
    \[
    \mathrm{deg}_{G[V_i]}(v) = \Omega \left ( \frac{\mathrm{deg}_{G}(v)}{\log(n)}  \right )\geq D.
    \]
\end{proof}

A property of expanders that we will frequently use is that expansion provides a bound on the diameter of the graph:

\begin{definition}
    The diameter of a graph $G = (V, E)$ is the maximum distance between any two vertices $u, v \in V$.
\end{definition}

We have the following elementary connection between expansion and diameter:

\begin{claim}\label{clm:diameterBound}
    Let $G = (V, E)$ be a $\phi$ lopsided-expander for $\phi = O(1 / \log(n))$. Then, the diameter of $G$ is bounded by $O(\log\log(n) / \phi)$.
\end{claim}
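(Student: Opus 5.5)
We grow balls around a vertex and bound how many times the volume can expand before it covers half the graph. Fix $u \in V$ and write $B_r(u)$ for the ball of radius $r$ around $u$. Let $\mathrm{Vol}(V) = 2m$ with $m \le n^2$, so $\log \mathrm{Vol}(V) = O(\log n)$. Whenever $\mathrm{Vol}(B_r) \le \mathrm{Vol}(V)/2$, the cut $S = B_r$ has the ball as its smaller side. The lopsided-expansion condition then gives
\[
\delta(B_r) \;\ge\; \phi \cdot \mathrm{Vol}(B_r)\cdot \log\frac{\mathrm{Vol}(V)}{\mathrm{Vol}(B_r)}.
\]
Every edge leaving $B_r$ has its other endpoint in $B_{r+1}$, so $\mathrm{Vol}(B_{r+1}) \ge \mathrm{Vol}(B_r) + \delta(B_r)$. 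Writing $x_r = \mathrm{Vol}(B_r)$ and $N = \mathrm{Vol}(V)$, this yields the recursion
\[
x_{r+1} \;\ge\; x_r\left(1 + \phi\log(N/x_r)\right).
\]
We then take logs and track $y_r = \log(N/x_r)$. Using $\log(1+t) \ge t/2$ for $t \le 1$, we get $y_{r+1} \le y_r - \Omega(\min(1, \phi y_r))$. Since $\phi = O(1/\log n)$ and $y_r \le \log N = O(\log n)$, we have $\phi y_r = O(1)$. Hence $y_{r+1} \le y_r(1 - c\phi)$ for some constant $c$, so $y_r$ decays geometrically at rate $(1-c\phi)$. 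Starting from $y_0 \le \log N = O(\log n)$, after $r = O(\log\log n/\phi)$ steps we reach $y_r \le 1$, that is, $x_r \ge N/e$.

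The geometric-decay phase ends once $\mathrm{Vol}(B_r) \ge N/2$, which happens within $O(\log \log n/\phi)$ steps. Two more steps finish the argument.

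First, we must pass from "at least a constant fraction" to "more than half". Once $x_r \ge N/e$, lopsided expansion still applies with log factor $\log(N/x_r) \ge \log 2$, so each step multiplies the volume by $(1 + \Omega(\phi))$. Raising the volume from $N/e$ to above $N/2$ therefore costs only $O(1/\phi)$ additional steps.

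Second, we combine two vertices. For $u, v$, both $B_R(u)$ and $B_R(v)$ have volume exceeding $N/2$ for $R = O(\log\log n/\phi)$. The two balls must therefore share a vertex, which gives $\mathrm{dist}(u,v) \le 2R$. We note that $B_0 = \{u\}$ has volume at least $1$, since the paper's graphs have positive degrees (and self-loops only help), so $y_0 \le \log N$.

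The main obstacle is the bookkeeping in the recursion. If $\phi y_r$ could be large, the multiplicative step would degenerate, but the assumption $\phi = O(1/\log n)$ rules this out. The decay argument also gives $O(\log(\log N)/\phi)$ steps rather than $O(\log N/\phi)$, and this improvement is exactly where the lopsided $\log$ factor is used.
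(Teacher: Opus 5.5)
The paper states this claim without proof (it calls it an ``elementary connection''), so there is no argument of its own to compare with. Your ball-growing argument is the standard proof, and it is correct.

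The key inequality $x_{r+1}\ge x_r\bigl(1+\phi\log(N/x_r)\bigr)$ gives $y_{r+1}\le(1-c\phi)\,y_r$. This is exactly where the lopsided factor turns the usual $O(\log N/\phi)$ bound into $O(\log\log N/\phi)$. You also use the hypothesis $\phi=O(1/\log n)$ correctly: combined with $N\le n^2$, it bounds $\phi y_r$ by a constant, which is what lets you linearize $\log(1+\phi y_r)$.

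One small simplification is available. The contraction $y_{r+1}\le(1-c\phi)y_r$ holds for every $r$ with $x_r\le N/2$. So you can run it directly until $y_r<\log 2$, which still takes $O(\log\log n/\phi)$ steps. That removes the need to stop at volume $N/e$ and add a separate $O(1/\phi)$ phase.
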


Likewise, we will make use of the fact that randomly sampling edges in a high-degree expander preserves expansion up to a constant factor (with high probability). Formally:

\begin{claim}\label{clm:subsampleExpander}
    Let $G = (V, E)$ be a graph with minimum degree $D$ and (lopsided-)expansion $\Omega(1 / \log(n))$. Then, randomly sampling $\frac{|E|\log^3(n)}{D}$ many edges from $G$ yields an $\Omega(1 / \log(n))$ (lopsided-)expander with probability $1 - \frac{1}{2^{\Omega(\log^2(n))}}$.
\end{claim}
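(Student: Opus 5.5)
We must prove \cref{clm:subsampleExpander}: sampling $s = |E|\log^3(n)/D$ edges from a graph with minimum degree $D$ and (lopsided) expansion $\Omega(1/\log n)$ keeps expansion $\Omega(1/\log n)$ with probability $1-2^{-\Omega(\log^2 n)}$.

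We analyze the independent-sampling variant: keep each edge with probability $p = \log^3(n)/D$, capped at $1$. If $p \ge 1$ we keep $G$ itself and there is nothing to prove. Sampling exactly $s$ edges without replacement is negatively associated, so the Chernoff bounds below still apply. Write $H$ for the sampled graph; a set's volume in $H$ is measured by $H$-degrees. We need both that cuts survive and that volumes do not inflate.

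The first step controls degrees. Each vertex has $\mathbb{E}[\deg_H(v)] = p\deg_G(v) \ge \log^3 n$. By Chernoff and a union bound over the $n$ vertices, every vertex satisfies $\deg_H(v)\in[\tfrac12 p\deg_G(v),\, 2p\deg_G(v)]$ with probability $1-n\,e^{-\Omega(\log^3 n)}$. Hence $\mathrm{Vol}_H(S)=\Theta(p\,\mathrm{Vol}_G(S))$ for every $S$. For the lopsided version, the ratio $\mathrm{Vol}(V)/\min(\cdot)$ changes by only a constant factor, so the $\log$ term changes by at most an additive constant. I would restrict to $\min$-volume at most half the total so this stays harmless.

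The second step controls cuts, and this is the main obstacle: there are $2^n$ cuts, so a naive union bound fails. I would group cuts by $k=\mathrm{Vol}_G(S)$, the smaller side. Since the minimum degree is $D$, any such $S$ has $|S|\le k/D$. So the number of cuts of volume about $k$ is at most $\binom{n}{k/D}\le n^{k/D}=e^{(k/D)\ln n}$.

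For such a cut, $\delta_G(S)\ge \phi k$ with $\phi=\Omega(1/\log n)$, in the lopsided case even $\phi k\log(\mathrm{Vol}/k)$. Thus $\mathbb{E}[\delta_H(S)]\ge p\phi k=\Omega(k\log^2 n/D)$. Chernoff gives $\Pr[\delta_H(S)<\tfrac12\mathbb{E}]\le e^{-\Omega(k\log^2 n/D)}$.

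Comparing with the count $e^{(k/D)\ln n}$, the failure term dominates by a $\log n$ factor. Summing over all $k\ge D$ therefore gives total failure at most $\sum_k e^{-\Omega(k\log^2 n/D)}\le 2^{-\Omega(\log^2 n)}$, using $k/D\ge1$. In the lopsided case the extra factor $\log(\mathrm{Vol}/k)$ only helps, and carries through to the ratio.

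Finally, combine the two steps. Every cut has $\delta_H(S)\ge \tfrac12 p\,\delta_G(S)$, and $\mathrm{Vol}_H(S)\le 2p\,\mathrm{Vol}_G(S)$ on the smaller side. Dividing, the conductance (resp. lopsided conductance) of $S$ in $H$ is at least a constant times its value in $G$, hence $\Omega(1/\log n)$.

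The delicate point is the counting-versus-concentration balance in the second step. The choice $p=\log^3 n/D$ is exactly what makes the per-cut exponent $k\log^2 n/D$ beat the entropy $(k/D)\log n$. If needed, I would refine the count via $|S|\le k/D$ rather than $|S|\le n$.
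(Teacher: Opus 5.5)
Your proof is correct and follows essentially the same route as the paper: a Chernoff bound per cut, union-bounded over sets stratified by size, with $p=\log^3 n/D$ making the exponent $\Omega(|S|\log^2 n)$ beat the entropy $|S|\log n$. The paper stratifies by $|S|=k$, counting $\le 2^{k\log n}$ sets and using $\mathrm{Vol}(S)\ge kD$; your stratification by volume with $|S|\le k/D$ is the same inequality. Your per-vertex degree concentration and your explicit handling of the smaller side and of the lopsided $\log(\mathrm{Vol}(V)/\min)$ factor are somewhat more careful than the paper's per-set volume bound and its one-line assertion for the lopsided case.
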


\begin{proof}
Consider any set $S \subseteq V$ of size $|S| = k$. By definition $\mathrm{Vol}(S) \geq k \cdot D$, and because $G$ is an $\Omega(1 / \log(n))$ expander, this means that $\delta(S) = \Omega(k D / \log(n))$ as well.

Now, when we sample $\frac{|E|\log^3(n)}{D}$ many edges (yielding a graph $\widetilde{G}$), this means that in expectation, $\mathrm{Vol}_{\widetilde{G}}(S)$ has $\frac{\mathrm{Vol_G}(S)\log^3(n)}{D} \geq k \log^2(n)$ surviving edges, and in expectation, $\delta_{\widetilde{G}}(S) =\Omega(k \cdot \log^{2}(n))$ as well. By a Chernoff bound, this means that with probability $1 - 2^{-\Omega(k \log^2(n))}$, 
\[
\mathrm{Vol}_{\widetilde{G}}(S) \leq 2 \cdot \E[\mathrm{Vol}_{\widetilde{G}}(S)] \leq \frac{\log^3(n)}{D} \cdot \mathrm{Vol}_{G}(S),
\]
and 
\[
\delta_{\widetilde{G}}(S) \geq \frac{1}{2} \cdot \E[\delta_{\widetilde{G}}(S)] \geq \frac{\log^3(n)}{D} \delta_G(S). 
\]
In particular, this means that with probability $1 - 2^{-\Omega(k \log^2(n))}$,
\[
\phi_{\widetilde{G}}(S) \geq \frac{\delta_{\widetilde{G}}(S)}{\mathrm{Vol}_{\widetilde{G}}(S)} = \Omega \left (\frac{\delta_G(S)}{\mathrm{Vol}_{G}(S)} \right )  = \Omega(1 / \log(n)). 
\]
Note that in the case of lopsided-expanders, the above equation implies that random sampling even preserves the stronger lopsided-expansion requirement.

Now, because there are $\leq 2^{k \log(n)}$ sets $S$ of size $k$, this means that by a union bound, \emph{every set} $S$ of size $k$ has expansion $\Omega(1 / \log(n))$ with probability $1 - 2^{k \log(n) -\Omega(k \log^2(n))} = 1 - 2^{-\Omega(k \log^2(n))}$. Finally, we can take a union bound over $k$ ranging from $1$ to $n$ conclude that every set $S$ has expansion $\Omega(1 / \log(n))$ with probability $1 - 2^{- \Omega(\log^2(n))}$.
\end{proof}

For the efficient implementation of our distance oracle, we use the following efficient statement of expander decomposition.

\begin{theorem}\label{thm:efficientExpanderDecomp}
    Let $G = (V, E)$ be a graph, with $|V| = n$ and $|E| = m$, and let $1 \leq D \leq n$ be an arbitrary parameter. Then, there is a polynomial time algorithm finding a set of disjoint parts of the vertex set, denoted $V_1, \dots V_k$ such that:
    \begin{enumerate}
        \item Each induced subgraph $G[V_i]$ is an $\Omega(1 / \log^2(n))$-expander. 
        \item Each induced subgraph $G[V_i]$ has minimum degree $D$. 
        \item The total number of edges \emph{not in} one of $G[V_i], i \in [k]$ is bounded by
        \[
        \left |E - \bigcup_{i \in [k]} G[V_i] \right | \leq O(n D\log^4(n)) + \frac{m}{2}.
        \]
    \end{enumerate}
\end{theorem}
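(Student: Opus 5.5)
We will prove \cref{thm:efficientExpanderDecomp} the same way as \cref{thm:expanderDecomp}, substituting a polynomial-time approximate expander decomposition for the existential one and paying an extra logarithmic factor in the conductance. We set $D' = D\log^4(n)$ (up to a suitably large constant) and iteratively peel every vertex whose current degree is at most $D'$. Each peeled vertex removes at most $D'$ edges, so the peeling discards $O(nD\log^4(n))$ edges in total. It leaves a vertex set $V'$ for which $G[V']$ has minimum degree greater than $D'$. This step is clearly polynomial time.

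On $G[V']$ we run a polynomial-time expander decomposition, taking care to use one with self-loops. Concretely, we can recursively find a sparse cut using an $O(\log n)$-approximation to sparsest cut or conductance (Leighton--Rao, or the cut-matching game of Khandekar--Rao--Vazirani / Räcke–Shah–Täubig). Throughout the recursion, whenever an edge is cut, we add self-loops at its endpoints so that each vertex keeps its original $G[V']$-degree, exactly as in \cite{SW18}. The recursion works as follows.

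\begin{itemize}
\item We use target conductance $\phi = c/\log n$ for the cut-finding routine.
\item If the approximation algorithm certifies that no cut of conductance below $\phi$ exists up to its $O(\log n)$ loss, we declare the current part an $\Omega(\phi/\log n) = \Omega(1/\log^2 n)$-expander.
\item Otherwise we cut along the returned cut, whose conductance is at most $\phi$.
\end{itemize}

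The standard charging argument bounds the total number of cut edges by $O(\phi\log n)\cdot m$. We charge each cut edge to the smaller-volume side. Since volumes are preserved by the self-loops, each vertex lies on the smaller side $O(\log m)$ times. Choosing $c$ small enough makes this at most $m/2$.

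We expect the main obstacle to be avoiding the usual unbalanced-cut recursion blowup while keeping the volume-preserving self-loop accounting valid. The recursion must remain polynomial and the charging must hold with volumes measured in the original degrees. We will first try plain recursion, which is polynomial since each cut strictly refines the partition and there are at most $n$ parts. For the charging we use the self-loop volumes, and we do not need near-linear time.

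It remains to handle the degree guarantee. Each final part $V_i$, with self-loops, is an $\Omega(1/\log^2 n)$-expander with volume measured by $\deg_{G[V']}$. Applying the conductance condition to the singleton $\{v\}$ gives
\[
\deg_{G[V_i]}(v) = \Omega\left(\frac{\deg_{G[V']}(v)}{\log^2 n}\right) \ge \Omega\left(\frac{D\log^4 n}{\log^2 n}\right) \ge D.
\]
The induced subgraph without loops has conductance at least that of the looped graph, because removing loops only decreases volumes. Hence each $G[V_i]$ is itself an $\Omega(1/\log^2 n)$-expander. Adding the peeled edges to the cut edges gives the bound $O(nD\log^4 n) + m/2$ on the number of edges outside the parts.
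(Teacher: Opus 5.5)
Your proposal is correct and follows essentially the same route as the paper. Both peel vertices of degree below $\Theta(D\log^4 n)$, then repeatedly cut along the cut returned by an $O(\log n)$-approximate conductance algorithm whenever its conductance is at most $\Theta(1/\log n)$, adding self-loops so volumes stay fixed; the cut edges are bounded by the smaller-side charging argument over $O(\log m)$ halvings, and the degree bound comes from the singleton cut $\{v\}$. One small point you should add, which the paper also glosses over: the recursion can split off singleton parts, which are expanders only vacuously and violate the degree condition, so simply omit them from the output (harmless, since they contain no edges).
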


\begin{proof}
    The algorithm is simply to use a $C \cdot \log(n)$-approximation algorithm for calculating the conductance of a graph $G$ (see \cite{leighton1999multicommodity}) and then to iteratively remove the sparsest cut found by the approximation algorithm so long as the conductance of the returned cut is $\leq \frac{1}{100 C \log(n)}$ (for a large constant $C$). 

    As in \cite{SW18}, every time a sparse cut $S$ is found with conductance $\leq \frac{1}{100 C \log(n)}$ we charge the crossing edges to the remaining edges on the smaller side of the cut. Note that each edge can only be charged $O(\log(m))$ times (after being on the smaller side of a cut $\log(m)$ times, the graph is empty). Likewise, because each sparse cut has conductance $\leq \frac{1}{100 C \log(n)}$, if $S$ denotes the smaller side of the cut, then at least half of the edges in $\mathrm{Vol}(S)$ \emph{stay within} $S$. Thus, when we charge the remaining edges in $S$, each receives charge $\leq 2 \cdot \Phi(S) \leq \frac{2}{100 C \log(n)}$. At the termination of the procedure, the total charge on each edge is at most $\log(m) \cdot \frac{2}{100 C \log(n)} \leq \frac{4}{100}$, and so the number of cut edges is at most $4m / 100$.

    Note that the stopping condition of the above algorithm is that the returned sparsest cut has conductance $> \frac{1}{100 C \log(n)}$, which, by our approximation guarantee, implies that all cuts have conductance $> \frac{1}{100 C \log(n)} \cdot \frac{1}{C \log(n)} = \Omega(1 / \log^2(n))$.

To get the degree condition in the above claim, we iteratively remove all vertices with degree $\leq 100 D \log^4(n)$ before starting the expander decomposition algorithm. This removes at most $O(n D \log^4(n))$ edges from the graph (note that these removed vertices are then reported as their own components), but ensures that all degrees are at least $100 D \log^4(n)$ when we start the expander decomposition. Next, whenever we cut the graph along a sparse cut, for each cut edge $(u,v)$, we add self-loops on $u, v$ to keep their degrees the same. The remaining charging argument stays the same, but in the final expander components, we know that each $G[V_i]$ is an $\Omega(1 / \log^2(n))$-expander \emph{even with} the original vertex volumes (see \cite{SW18} for the original proof using this statement). In particular each vertex cut in $G[V_i]$ has expansion $\Omega(1 / \log^2(n))$, which means for $v \in V_i$, $E_{G[V_i]}[v, V_i - v] \geq \Omega(1 / \log^2(n)) \cdot \mathrm{deg}_G(v) = \Omega(100 D \log^4(n) / \log^2(n)) \geq D$. 

The bound on the number of crossing edges follows from the number of edges removed during the degree decomposition ($O(n D \log^4(n))$) along with the number of edges removed during expander decomposition ($4m/100$).
\end{proof}

\section{Robustness of High-Degree Expanders Under Bounded Edge Deletions}\label{sec:HDX}

In this section, we motivate the high-degree expander decomposition discussed in \cref{thm:expanderDecomp} by showing the \emph{robustness} of such expanders under edge insertions and deletions. 

We denote the number of deletions by $f$. Formally, we prove the following lemma:

\begin{lemma}\label{lem:robustnessfDeletion}
    Let $f$ be a parameter, and let $H = (V_H, E_H)$ be an $\Omega(\frac{1}{\log(n)})$-lopsided-expander with minimum degree $D \geq 8\sqrt{f}\log(n)$. Now, suppose that there are at most $f$ edge deletions to $H$, resulting in a graph $H' = (V_H, E'_H)$. Then, for every pair of vertices $u_1, u_2 \in V_H$ such that $\mathrm{deg}_{H'}(u_b) \geq \frac{4f}{D} + 1$ for $b \in \{1, 2\}$, it is the case that $\mathrm{dist}_{H'}(u_1, u_2) = O(\log(n)\log\log(n))$.
\end{lemma}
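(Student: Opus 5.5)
Let $F$ with $|F|\le f$ be the deleted edges, and set $V_{\mathrm{good}}=\{v\in V_H:\mathrm{deg}_{H'}(v)\ge \mathrm{deg}_H(v)/2\}$. A vertex outside $V_{\mathrm{good}}$ loses at least $\mathrm{deg}_H(v)/2\ge D/2$ incident edges. Each deleted edge touches two vertices, so $|V_H\setminus V_{\mathrm{good}}|\le 2f/(D/2)=4f/D$. Hence any $u_b$ with $\mathrm{deg}_{H'}(u_b)\ge 4f/D+1$ has a neighbor in $V_{\mathrm{good}}$ in $H'$. It therefore suffices to show that all good vertices are pairwise within distance $O(\log n\log\log n)$ in $H'$, up to an additive $2$.

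The plan is to show that $H'$ still satisfies a lopsided-expansion-type inequality for cuts measured by $H$-volume. We then run the ball-growing argument of \cref{clm:diameterBound} from each good vertex. Fix $S\subseteq V_H$ and write $\mathrm{vol}=\min(\mathrm{Vol}_H(S),\mathrm{Vol}_H(\bar S))$. In $H$ we have $\delta_H(S)\ge \psi_0\,\mathrm{vol}\log(\mathrm{Vol}_H(V)/\mathrm{vol})$ with $\psi_0=\Omega(1/\log n)$, and $\delta_{H'}(S)\ge \delta_H(S)-\min(f,\delta_H(S))$. Two regimes arise.

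\emph{Large sets.} Suppose $\mathrm{vol}\ge \sqrt f\cdot D/(4\sqrt{f})\cdot\ldots$; concretely, $\psi_0\,\mathrm{vol}\ge 2f$ holds once $\mathrm{vol}\gtrsim f\log n$. Then deletions destroy at most half of the cut, so the expansion survives up to a factor of $2$.

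\emph{Small sets.} Suppose $\mathrm{vol}<C f\log n$. Then $|S|\le \mathrm{vol}/D\le Cf\log n/D\le C\sqrt f/8$, using $D\ge 8\sqrt f\log n$. We use a counting argument instead: if $S$ contains a good vertex $v$ with all of its $H'$-neighbors in $S$, then $|S|\ge \mathrm{deg}_{H'}(v)\ge D/2\ge 4\sqrt f\log n$, which contradicts $|S|$ being smaller. The argument will not go through good vertices one at a time, though. Instead we grow balls $B_i$ around a good vertex in $H'$. While $\mathrm{Vol}_H(B_i)$ is small, every vertex of $B_i$ that is good contributes $\ge D/2$ surviving edges, and at most $|B_i|^2/2$ of these can stay inside $B_i$. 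Since $|B_i|\ll D$ in this range, most surviving edges leave $B_i$. So the ball grows quickly in terms of $H$-volume until it reaches volume about $f\log n$. The first few layers need care: $B_1$ is the $H'$-neighborhood of a good vertex, with size $\ge D/2\ge 4\sqrt f\log n$. For $|B_i|\ge D/2$, the small-set regime can instead be handled with the $H$-expansion, because $\mathrm{Vol}_H(B_i)\ge D^2/2\ge 32f\log^2 n$. This already exceeds $f/\psi_0$ for a suitable constant, so the large-set regime applies immediately after one step.

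Once the ball is in the large-set regime, we follow the proof of \cref{clm:diameterBound} on $H'$, using $H$-volumes as the potential. Let $\mathrm{Vol}_H(B_{i+1})\ge \mathrm{Vol}_H(B_i)+\delta_{H'}(B_i)$, which is valid because each crossing edge adds a new endpoint and $H$-degrees bound multiplicities. Combined with the lopsided bound, the $H$-volume grows by a factor of $1+\Omega(\psi_0\log(\mathrm{Vol}/\mathrm{vol}))$. This gives $O(\log\log n/\psi_0)=O(\log n\log\log n)$ steps until each ball around a good vertex captures more than half of $\mathrm{Vol}_H(V)$. Then the balls from $u_1'$ and $u_2'$ intersect, and the distance bound follows. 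The main obstacle is the last half-volume phase. There, the smaller side is the complement, and the complement may contain vertices stripped of edges (bad vertices). I would handle it by running the growth symmetrically from both endpoints so that only the "smaller side is the ball" case is ever needed, and by checking that the deletion slack $f$ is dominated by $\psi_0\mathrm{vol}$ throughout.
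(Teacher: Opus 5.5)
Your proof is correct, but it takes a different route from the paper.

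\textbf{The paper's route.} Good vertices are those with $\mathrm{deg}_{H'}(v)\ge D/2$. The paper proves the structural statement of \cref{clm:goodGoodExpanderfDeletion}: the induced graph $H'[V_{\mathrm{good}}]$ is itself an $\Omega(1/\log n)$-lopsided expander. Large cuts lose at most $5f$ edges, namely $f$ deletions plus the $\le 4f$ volume of bad vertices. Small cuts are handled separately: every good vertex keeps degree $\ge D/4$ inside $H'[V_{\mathrm{good}}]$, while such a cut has at most $\sqrt f\log n/2$ vertices. The diameter bound then comes from \cref{clm:diameterBound}.

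\textbf{Your route.} You never pass to $V_{\mathrm{good}}$. You grow balls in $H'$, measure them by $H$-volume, apply the lopsided inequality in $H$ itself, and pay only an additive $f$. The small-cut regime never arises, because one step from a good vertex already gives $\mathrm{Vol}_H(B_1)\ge D^2/2\ge 32f\log^2 n\gg f/\psi_0$. You stop each ball once it exceeds $\mathrm{Vol}_H(V_H)/2$, so only the case where the ball is the smaller side is ever used. The obstacle you flag at the end is therefore already resolved by your own setup. Two parts of the write-up can go: the small-set counting paragraph, and the garbled large-set threshold, which should simply read $\mathrm{vol}\ge 2f/\psi_0$.

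\textbf{What each buys.} Your argument is more economical. It needs no small-cut analysis. It also never has to re-establish the lopsided inequality under the volume measure of $H'[V_{\mathrm{good}}]$, where a good vertex of large $H$-degree may keep only about $D/4$ edges. The paper's argument gives a stronger, reusable conclusion: the surviving core is itself an expander of minimum degree $\Omega(D)$. That is exactly what \cref{clm:recoverExpander} invokes to show that randomly sampled edges of the core still witness short paths. Your argument yields only the distance bound, and its paths may pass through bad vertices.
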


To prove this lemma, we require the following notion of ``good vertices'':

\begin{definition}\label{def:HdelfDeletion}
 For a vertex $v \in V_H$, we say that $v$ is \emph{good} if $\mathrm{deg}_{H'}(v) \geq \frac{D}{2}$, and otherwise say that $v$ is bad. We let $V_{\mathrm{good}}$ denote the set of all good vertices, and likewise let $V_{\mathrm{bad}}$ denote the set of all bad (not good) vertices. 
\end{definition}

We start by showing that the graph $H'[V_{\mathrm{good}}]$ is a good expander (and therefore has small diameter):

\begin{claim}\label{clm:goodGoodExpanderfDeletion}
    Let $H' = (V_H, E'_H)$ be constructed as in \cref{def:HdelfDeletion} with the graph $H$ as in \cref{lem:robustnessfDeletion}. Then, the induced subgraph $H'[V_{\mathrm{good}}]$ has diameter $O(\log(n)\log\log(n))$.
\end{claim}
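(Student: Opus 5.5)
We plan to show that $H'[V_{\mathrm{good}}]$ is itself an $\Omega(1/\log(n))$-lopsided-expander, up to a constant factor loss. The diameter bound $O(\log(n)\log\log(n))$ then follows directly from \cref{clm:diameterBound}. The argument compares, for each cut $S \subseteq V_{\mathrm{good}}$ with $\mathrm{Vol}_{H'[V_{\mathrm{good}}]}(S) \le \mathrm{Vol}_{H'[V_{\mathrm{good}}]}(V_{\mathrm{good}}\setminus S)$, the crossing edges in $H'[V_{\mathrm{good}}]$ against the crossing edges of $S$ in $H$. It then charges the loss to two sources: the $f$ deleted edges, and edges from $S$ into $V_{\mathrm{bad}}$.

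\textbf{Step 1: bound the bad vertices.} Every bad vertex lost more than $\mathrm{deg}_H(v) - D/2 \ge \mathrm{deg}_H(v)/2$ edges. Summing over bad vertices, each deletion is counted at most twice, so $\sum_{v\in V_{\mathrm{bad}}}\mathrm{deg}_H(v) \le 4f$. In particular $|V_{\mathrm{bad}}| \le 4f/D$ and $\mathrm{Vol}_H(V_{\mathrm{bad}}) \le 4f$. Also, each good vertex $v$ has $\mathrm{deg}_{H'}(v) \ge D/2$. At most $|V_{\mathrm{bad}}| \le 4f/D \le D/(16\log^2 n) \ll D/4$ of its neighbors are bad, using $D^2 \ge 64 f\log^2 n$. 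Hence $\mathrm{deg}_{H'[V_{\mathrm{good}}]}(v) \ge D/4$.

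\textbf{Step 2: the cut comparison.} Fix $S \subseteq V_{\mathrm{good}}$ on the smaller side and consider the cut $(S, V_H\setminus S)$ in $H$. Let $\mathrm{Vol}_H(S) \le \mathrm{Vol}_H(V_H \setminus S)$ hold up to constants; if not, swap to the complement within the good vertices and handle it symmetrically. Lopsided expansion of $H$ gives
\[
\delta_H(S) \ge \Omega\left(\tfrac{1}{\log n}\right)\mathrm{Vol}_H(S)\log\tfrac{\mathrm{Vol}(V_H)}{\mathrm{Vol}_H(S)}.
\]
The crossing edges of $S$ in $H'[V_{\mathrm{good}}]$ are at least $\delta_H(S) - f_S - \mathrm{Vol}_H(V_{\mathrm{bad}}\cap\Gamma(S))$, where $f_S$ counts deleted edges on the cut. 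The bad term is at most $\min(|S|\cdot 4f/D,\ 4f)$, and $f_S \le \min(f, \mathrm{Vol}_H(S))$. Since $\mathrm{Vol}_H(S) \ge |S| D$, the term $|S|\cdot 4f/D$ is at most $\mathrm{Vol}_H(S)\cdot 4f/D^2 \le \mathrm{Vol}_H(S)/(16\log^2 n)$. This is a lower-order fraction of the expansion lower bound, so it is harmless.

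\textbf{Step 3: the main obstacle.} The hardest term is the direct deletions $f_S$ on the cut. When $\mathrm{Vol}_H(S)$ is small, we cannot simply subtract $f$. Instead we use that every vertex of $S$ is good. Each $v\in S$ retains at least $D/4$ edges in $H'[V_{\mathrm{good}}]$, so if $S$ is small ($|S| \lesssim D/8$), at least half of each vertex's retained edges leave $S$. This gives crossing edges $\Omega(\mathrm{Vol}(S))$ directly. When $|S| \gtrsim D/8$, we have $\mathrm{Vol}_H(S) \ge D^2/8 \ge 8 f\log^2 n$, so $f \le \mathrm{Vol}_H(S)/(8\log^2 n)$, which is again dominated by the $\Omega(\mathrm{Vol}_H(S)/\log n)$ expansion.

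In both regimes we obtain $\delta_{H'[V_{\mathrm{good}}]}(S) = \Omega(\delta_H(S))$ or $\Omega(\mathrm{Vol}(S))$. Since $\mathrm{Vol}_{H'[V_{\mathrm{good}}]}(S) \le \mathrm{Vol}_H(S)$, lopsided conductance is preserved up to constants. The log factor can only improve, because volumes only shrink while the total volume changes by a constant factor. Applying \cref{clm:diameterBound} finishes the proof. The delicate bookkeeping is in the case split of Step 3 and in checking the complement-side volumes, which I expect to require the most care.
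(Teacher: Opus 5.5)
Your proposal is correct and follows the paper's argument. You bound $|V_{\mathrm{bad}}| \le 4f/D$ and $\mathrm{Vol}_H(V_{\mathrm{bad}}) \le 4f$, and deduce that every good vertex keeps degree $\ge D/4$ in $H'[V_{\mathrm{good}}]$. Small cuts are handled by the ``at least half of each vertex's edges leave $S$'' argument, large cuts by noting the $O(f)$ lost crossing edges are lower order against $\Omega(\mathrm{Vol}/\log n)$, and you finish with \cref{clm:diameterBound}. You split on $|S|$ versus $D/8$ where the paper splits on volume versus $f\log^2 n$, which is immaterial.

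The worry about which side of the cut is smaller in $H$ is unnecessary, since $\min(\mathrm{Vol}_{H'[V_{\mathrm{good}}]}(S),\mathrm{Vol}_{H'[V_{\mathrm{good}}]}(V_{\mathrm{good}}\setminus S)) \le \min(\mathrm{Vol}_H(S),\mathrm{Vol}_H(V_H\setminus S))$ holds directly; this is how the paper handles it. Also, the loss to bad vertices should be stated as $|E_H(S,V_{\mathrm{bad}})|$, not as the volume of bad neighbors of $S$.
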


\begin{proof}
    To prove this, we in fact show that the subgraph $H'[V_{\mathrm{good}}]$ is an $\Omega(1 / \log(n))$-lopsided-expander. From here, \cref{clm:diameterBound} directly implies that diameter is bounded by $O(\log(n)\log\log(n))$. 

    Now, to show that $H'[V_{\mathrm{good}}]$ is an $\Omega(1 / \log(n))$-lopsided-expander, we start by bounding the number of bad vertices. In particular, because the minimum degree of $H$ was originally $D$ in $E_H$, and any bad vertex has degree $\leq \frac{D}{2}$ in $E'_H$, it must be the case that each vertex $v \in V_{\mathrm{bad}}$ had at least half of its edges deleted, and that therefore, there can be at most $\frac{4f}{D}$ many bad vertices.

Now, let us consider the expansion of any set $S \subseteq V_{\mathrm{good}}$. By definition, 
\[
\phi_{H'[V_{\mathrm{good}}]}(S) = \frac{\delta_{H'[V_{\mathrm{good}}]}(S)}{\min(\mathrm{Vol}_{H'[V_{\mathrm{good}}]}(S), \mathrm{Vol}_{H'[V_{\mathrm{good}}]}(V_{\mathrm{good}} - S))}.
\]
Now, immediately, we can observe that 
\[
\min(\mathrm{Vol}_{H'[V_{\mathrm{good}}]}(S), \mathrm{Vol}_{H'[V_{\mathrm{good}}]}(V_{\mathrm{good}} - S)) \leq \min(\mathrm{Vol}_{H}(S), \mathrm{Vol}_{H}(V_{\mathrm{good}} - S))
\]
\[
\leq \min(\mathrm{Vol}_{H}(S), \mathrm{Vol}_{H}(V_H - S))
\]
as undoing the deletions and restriction to the induced subgraph can only increase the individual vertex degrees. Now, it remains only to upper bound the numerator of the above expression. 

Here, we claim that $\delta_{H'[V_{\mathrm{good}}]}(S) \geq \delta_H(S) - 5f$. To see why, we can first observe $\mathrm{Vol}_H(V - V_{\mathrm{good}}) \leq 4f$, as the $f$ deletions remove \emph{at least half} of the total volume of $V - V_{\mathrm{good}}$. But, $f$ deletions can only remove a total volume of $2f$, and as such, the volume of $V - V_{\mathrm{good}}$ must be bounded by $4f$. Thus, we see that $\delta_{H'[V_{\mathrm{good}}]}(S) \geq \delta_{H'}(S) - 4f$, as restricting to the induced subgraph on $V_{\mathrm{good}}$ removes at most $4f$ volume, and thus at most $4f$ crossing edges. Now, it is straightforward to see that $\delta_{H'}(S) \geq \delta_H(S) - f$, as $H'$ is achieved by deleting at most $f$ edges from $H$. Thus,  we see that 
\[
\delta_{H'[V_{\mathrm{good}}]}(S) \geq \delta_{H'}(S) - 4f \geq \delta_H(S) - 5f,
\]
 as we desire. All together, this implies that 
 \[
 \phi_{H'[V_{\mathrm{good}}]}(S) = \frac{\delta_{H'[V_{\mathrm{good}}]}(S)}{\min(\mathrm{Vol}_{H'[V_{\mathrm{good}}]}(S), \mathrm{Vol}_{H'[V_{\mathrm{good}}]}(V_{\mathrm{good}} - S))} \geq
 \]
 \[
 \frac{\delta_H(S) - 5f}{\min(\mathrm{Vol}_{H}(S), \mathrm{Vol}_{H}(V_H - S))}.
 \]

 Now, if we consider a set $S \subseteq V_{\mathrm{good}}$ such that $\min(\mathrm{Vol}_{H'[V_{\mathrm{good}}]}(S), \mathrm{Vol}_{H'[V_{\mathrm{good}}]}(V_{\mathrm{good}} - S))\geq f\log^2(n)$, then we are immediately done, as 
 \[
 \frac{\delta_H(S) - 5f}{\min(\mathrm{Vol}_{H}(S), \mathrm{Vol}_{H}(\mathrm{GOOD} - S))} \geq \frac{\delta_H(S)}{\min(\mathrm{Vol}_{H}(S), \mathrm{Vol}_{H}(\mathrm{GOOD} - S))} - \frac{5f}{f\log^2(n)} 
 \]
 \[ = \phi_H(S) - O(1 / \log^2(n)) = \Omega(\phi_H(S)),
 \]
 as $\phi_H(S) = \Omega(1 / \log(n))$. This immediately yields that for sets $S \subseteq V_{\mathrm{good}}$ such that 
 \[
 \min(\mathrm{Vol}_{H'[V_{\mathrm{good}}]}(S), \mathrm{Vol}_{H'[V_{\mathrm{good}}]}(V_{\mathrm{good}} - S)) \geq f \log^2(n),
 \]
 $\phi_{H'[V_{\mathrm{good}}]}(S) = \Omega(\phi_H(S))$ (and thus satisfy the lopsided-expansion requirement). So it remains only to lower bound the expansion for sets $S \subseteq V_{\mathrm{good}}$ for which $\min(\mathrm{Vol}_{H'[V_{\mathrm{good}}]}(S), \mathrm{Vol}_{H'[V_{\mathrm{good}}]}(V_{\mathrm{good}} - S)) \leq f \log^2(n)$. Indeed, let us consider such a set $S$, and let us assume WLOG that the minimum is achieved for $S$ (as opposed to its complement). 
 
 To lower bound the expansion of $S$, we first lower bound the degree of every vertex in $H'[V_{\mathrm{good}}]$. Indeed, as mentioned above, we know that $\mathrm{Vol}_H(V - V_{\mathrm{good}}) \leq 4f$, so we know that 
 \begin{align}\label{eq:boundBadSizefdel}
 |V_{\mathrm{bad}}| = |V - V_{\mathrm{good}}| \leq \frac{4f}{D} \leq \frac{4f}{8\sqrt{f}\log(n)} \leq \frac{ \sqrt{f}}{2\log(n)},
 \end{align}
 where we have used our assumption that $D \geq 8\sqrt{f}\log(n)$.
 Likewise, we know that for every $v \in V_{\mathrm{good}}$, $\mathrm{deg}_{H'}(v) \geq \frac{D}{2}$. Thus, when we restrict to the induced subgraph on $V_{\mathrm{good}}$, the degree of $v$ can decrease by at most $\frac{ \sqrt{f}}{2\log(n)}$, as at most this many vertices are deleted. I.e., 
 \begin{align}\label{eq:boundDegreeGOODfdel}
 \mathrm{deg}_{H'[V_{\mathrm{good}}]}(v) \geq \mathrm{deg}_{H'}(v) - \frac{\sqrt{f}}{2\log(n)} \geq \frac{D}{2} - \frac{D}{16\log^2(n)} \geq \frac{D}{4}.
 \end{align}

 So, when working with a set $S$ such that $\mathrm{Vol}_{H'[V_{\mathrm{good}}]}(S) \leq f \log^2(n)$, it must be the case that 
 \[
 |S| \leq \frac{f\log^2(n)}{D/4} \leq \frac{f \log^2(n)}{2\sqrt{f}\log(n)} \leq \sqrt{f}\log(n)/2.
 \]
 Now, for a vertex $v \in S$, because (by \cref{eq:boundDegreeGOODfdel}) $\mathrm{deg}_{H'[V_{\mathrm{good}}]}(v) \geq \frac{D}{4} \geq 2 \sqrt{f}\log(n)$, this immediately implies that at least half of $v$'s edges leave the set $S$, as at most $|S|-1 \leq \sqrt{f}\log(n)/2$ of $v$'s neighbors can be \emph{within} the set $S$. Thus, $\phi_{H'[V_{\mathrm{good}}]}(S) \geq \frac{1}{2}$. This concludes the proof, as we have shown that for every $S \subseteq V_{\mathrm{good}}$, $\phi_{H'[V_{\mathrm{good}}]}(S) = \Omega( \phi_H(S))$, and thus $H'[V_{\mathrm{good}}]$ has the same lopsided-expansion properties as $H$, and therefore has $O(\log(n) \log\log(n))$ diameter (by \cref{clm:diameterBound}). 
\end{proof}

Now, we proceed to a proof of our more general lemma:

\begin{proof}[Proof of \cref{lem:robustnessfDeletion}.]
    Consider any two vertices $u_1, u_2 \in V_H$ such that $\mathrm{deg}_{H'}(u_b) \geq \frac{4f}{D} + 1$ for $b \in \{1, 2\}$. We have several cases:

\begin{enumerate}
    \item If $u_1, u_2 \in V_{\mathrm{good}}$, then by \cref{clm:diameterBound}, $\mathrm{dist}_{H'}(u_1, u_2) = O(\log(n) \log\log(n))$. 
    \item If $u_1 \in V_{\mathrm{good}}$ and $u_2 \notin V_{\mathrm{good}}$ (or vice versa), then importantly, we rely on the fact that $\deg_{H'}(u_2) \geq \frac{4f}{D} + 1$. In particular, by \cref{eq:boundBadSizefdel}, we know that $|V_{\mathrm{bad}}| \leq \frac{4f}{D}$. Because $\deg_{H'}(u_2) > |V_{\mathrm{bad}}|$, this means that $u_2$ must have some neighbor $v' \in V_{\mathrm{good}}$. In particular, we then obtain that 
    \[
    \mathrm{dist}_{H'}(u_1, u_2) \leq 1 + \mathrm{dist}_{H'}(u_1, v') \leq 1 + O(\log(n)\log\log(n)),
    \]
    where we use the bound on $\mathrm{dist}_{H'}(u_1, v')$ from the preceding point. 
    \item If $u_1 \notin V_{\mathrm{good}}$ and $u_2 \notin V_{\mathrm{good}}$, but $\mathrm{deg}_{H'}(u_b) \geq \frac{4f}{D} + 1$ for $b \in \{1, 2\}$, then we again use the reasoning from the preceding point. In particular, we know there is some neighbor $v'_1$ of $u_1$ in $H'$ and some neighbor $v'_2$ of $u_2$ in $H'$ such that $v'_1, v'_2 \in V_{\mathrm{good}}$. Thus, \[
    \mathrm{dist}_{H'}(u_1, u_2) \leq 2 + \mathrm{dist}_{H'}(v'_1, v'_2) \leq 2 + O(\log(n)\log\log(n)),
    \]
    yielding our bound.
\end{enumerate}
\end{proof}

Note that, in the streaming setting, we will also deal with expanders where an arbitrary number of \emph{insertions} may occur alongside the $f$ deletions. In this case, we use the following corollary of \cref{lem:robustnessfDeletion}:

\begin{corollary}\label{cor:robustnessfDeletionInsertion}
        Let $f$ be a parameter, and let $H = (V_H, E_H)$ be an $\Omega(\frac{1}{\log(n)})$-expander with minimum degree $D \geq 8\sqrt{f}\log(n)$. Now, suppose that there is an arbitrary number of edge insertions in $H$, and that there are at most $f$ edge deletions to $H$, resulting in a graph $H' = (V_H, E'_H)$. Then, for every pair of vertices $u_1, u_2 \in V_H$ such that $\mathrm{deg}_{H'}(u_b) \geq \frac{4f}{D} + 1$ for $b \in \{1, 2\}$, it is the case that $\mathrm{dist}_{H'}(u_1, u_2) = O(\log(n)\log\log(n))$.
\end{corollary}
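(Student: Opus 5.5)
The plan is to reduce to \cref{lem:robustnessfDeletion} via monotonicity of distances: inserting edges can only shorten paths. Let $I$ be the set of inserted edges and $F$ the deleted edges, with $|F|\le f$. Some deleted edges may be inserted ones; we restrict attention to $F_0 = F \cap E_H$, the deletions that hit original edges, with $|F_0| \le f$. Define the auxiliary graph $H'' = (V_H, E_H \setminus F_0)$. This is exactly $H$ with at most $f$ deletions, so \cref{lem:robustnessfDeletion} applies to it. Since $E_H \setminus F_0 \subseteq E'_H$ (every original edge not deleted survives in $H'$), $H''$ is a subgraph of $H'$, and hence $\mathrm{dist}_{H'}(x,y) \le \mathrm{dist}_{H''}(x,y)$ for all $x,y$.

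The obstacle is that the hypothesis concerns $\mathrm{deg}_{H'}(u_b)$, which may be inflated by inserted edges, whereas \cref{lem:robustnessfDeletion} needs $\mathrm{deg}_{H''}(u_b) \ge \frac{4f}{D}+1$. A vertex with high $H'$-degree coming only from insertions might be isolated in $H''$. I would therefore not invoke the lemma as a black box, but rerun its final case analysis. By \cref{clm:goodGoodExpanderfDeletion} applied to $H''$ (defining good vertices via $\mathrm{deg}_{H''} \ge D/2$), $H''[V_{\mathrm{good}}]$ has diameter $O(\log(n)\log\log(n))$ and $|V_{\mathrm{bad}}| \le \frac{4f}{D}$.

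Now take $u_b$ with $\mathrm{deg}_{H'}(u_b) \ge \frac{4f}{D}+1 > |V_{\mathrm{bad}}|$. The graph $H'$ is simple on $V_H$; I read insertions as edges among the vertices of the component, which is how the streaming algorithm uses them. Then $u_b$ has at least $\frac{4f}{D}+1$ distinct neighbors in $H'$, so at least one neighbor $v'_b \in V_{\mathrm{good}}$ (or $u_b$ is itself good), and this neighbor is adjacent via an edge of $H'$, whether original or inserted. This is precisely the step where the original proof used only the count of neighbors in $H'$, so insertions cause no harm.

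Concluding, $\mathrm{dist}_{H'}(u_1,u_2) \le 2 + \mathrm{dist}_{H'}(v'_1,v'_2) \le 2 + \mathrm{dist}_{H''[V_{\mathrm{good}}]}(v'_1,v'_2) = O(\log(n)\log\log(n))$. One detail to check is that the corollary assumes a plain $\Omega(1/\log n)$-expander rather than a lopsided one. I would either note that the decomposition produces lopsided expanders, or observe that \cref{clm:goodGoodExpanderfDeletion}'s argument preserves whichever expansion notion $H$ has, combined with a diameter bound of $O(\log^2 n)$ for ordinary expanders. I would state the bound in whichever form \cref{clm:diameterBound} supports.
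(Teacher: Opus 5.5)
Your proposal is correct and follows essentially the same route as the paper. Apply only the deletions of original edges to $H$, use the good core's $O(\log(n)\log\log(n))$ diameter together with $|V_{\mathrm{bad}}|\le \frac{4f}{D}$, then use $\mathrm{deg}_{H'}(u_b)>|V_{\mathrm{bad}}|$ to step to a good neighbor along an edge of $H'$ (original or inserted), paying an additive $2$.

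Your two refinements go beyond the paper's terse proof and are warranted. Restricting to $F\cap E_H$ handles deletions of inserted edges. Also, the paper silently invokes \cref{lem:robustnessfDeletion}, which needs the lopsided hypothesis; for a plain $\Omega(1/\log(n))$-expander only $O(\log^2(n))$ is available, so the corollary should be read for the lopsided expanders that \cref{thm:expanderDecomp} produces.
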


\begin{proof}
    Consider applying \emph{only the deletions} to the graph $H$. Then, by \cref{lem:robustnessfDeletion}, we know that the vertices $V_{\mathrm{good}}$ have diameter $O(\log(n)\log\log(n))$, and that there are at most $\frac{4f}{D}$ vertices that are \emph{not} in $V_{\mathrm{good}}$.

    Now, let us consider the graph $H'$ that also includes the edge insertions, and consider any vertices $u_1, u_2 \in V_H$ such that $\mathrm{deg}_{H'}(u_b) \geq \frac{4f}{D} + 1$ for $b \in \{1, 2\}$. Because $|V_{\mathrm{bad}}| \leq \frac{4f}{D}$, it must be the case that there exists $v_1$ neighboring $u_1$ and $v_2$ neighboring $u_2$ such that $v_1, v_2 \in V_{\mathrm{good}}$, and so by \cref{lem:robustnessfDeletion}, we know there is a path of length $O(\log(n)\log\log(n))$ between $v_1, v_2$. In total then, we see that
    \[
    \mathrm{dist}_{H'}(u_1, u_2) \leq 2 + \mathrm{dist}_{H'_{\mathrm{del}}}(v_1, v_2) \leq O(\log(n)\log\log(n)),
    \]
    where we use $H'_{\mathrm{del}}$ to refer to the graph $H'$ with only deletions included (and no insertions). This concludes the corollary. 
\end{proof}

Likewise, we are not restricted to choosing our expansion parameter to be $\Omega(1 / \log(n))$. In general, we can use graphs with worse expansion provided we increase the minimum degree requirement. We include this version of the statement for the later sections when we work with \emph{efficient} expander decompositions (which lead to worse expansion).

\begin{lemma}\label{lem:robustnessfDeletionWorseExpansion}
    Let $f$ be a parameter, and let $H = (V_H, E_H)$ be an $\Omega(\frac{1}{\log^2(n)})$-expander with minimum degree $D \geq 8\sqrt{f}\log^2(n)$. Now, suppose that there are at most $f$ edge deletions to $H$, resulting in a graph $H' = (V_H, E'_H)$. Then, for every pair of vertices $u_1, u_2 \in V_H$ such that $\mathrm{deg}_{H'}(u_b) \geq \frac{4f}{D} + 1$ for $b \in \{1, 2\}$, it is the case that $\mathrm{dist}_{H'}(u_1, u_2) = O(\log^3(n))$.
\end{lemma}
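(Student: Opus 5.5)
The plan is to mirror the proof of \cref{lem:robustnessfDeletion} step by step, with $\phi = \Omega(1/\log^2(n))$ in place of $\Omega(1/\log(n))$ and the stronger degree bound $D \geq 8\sqrt{f}\log^2(n)$ absorbing the extra logarithm. Since $H$ is now an ordinary (not lopsided) expander, the diameter bound will not come from \cref{clm:diameterBound}. Instead I will use the standard ball-growing bound: a $\phi$-expander has diameter $O(\log(\mathrm{Vol})/\phi)$. Here $\mathrm{Vol} \leq n^2$, so this gives $O(\log(n)/\phi) = O(\log^3(n))$, which matches the target.

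\textbf{Step 1: $H'[V_{\mathrm{good}}]$ is an $\Omega(1/\log^2(n))$-expander.} Good and bad vertices are defined as in \cref{def:HdelfDeletion}. The same counting as before gives $\mathrm{Vol}_H(V_{\mathrm{bad}}) \leq 4f$ and $|V_{\mathrm{bad}}| \leq 4f/D \leq \sqrt{f}/(2\log^2(n))$. Fix $S \subseteq V_{\mathrm{good}}$. As before, the denominator is at most $\min(\mathrm{Vol}_H(S),\mathrm{Vol}_H(V_H - S))$ and the numerator is at least $\delta_H(S) - 5f$. I split on the smaller volume $\mathrm{Vol}_{H'[V_{\mathrm{good}}]}(S)$, using threshold $f\log^3(n)$ instead of $f\log^2(n)$.

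In the large case, the loss is $5f/(f\log^3(n)) = O(1/\log^3(n))$. This is $o(\phi_H(S))$ because $\phi_H(S) = \Omega(1/\log^2(n))$, so the expansion survives up to a constant factor.

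In the small case, every good vertex keeps degree at least $D/2 - |V_{\mathrm{bad}}| \geq D/4$ inside $H'[V_{\mathrm{good}}]$. Hence $|S| \leq 4f\log^3(n)/D \leq f\log(n)/(2\sqrt{f}) = \sqrt{f}\log(n)/2$. Each $v \in S$ has degree at least $D/4 \geq 2\sqrt{f}\log^2(n)$, so more than half of its edges leave $S$, and the conductance of $S$ is at least $1/2$.

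\textbf{Step 2: the diameter bound.} Step 1 together with the ball-growing bound gives diameter $O(\log^3(n))$ for $H'[V_{\mathrm{good}}]$.

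\textbf{Step 3: handling bad endpoints.} The three-case argument from the proof of \cref{lem:robustnessfDeletion} carries over verbatim. Any $u$ with $\deg_{H'}(u) \geq 4f/D + 1 > |V_{\mathrm{bad}}|$ has a good neighbor. So two such vertices are within distance $2 + O(\log^3(n))$ of each other.

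\textbf{Main obstacle.} I expect the delicate part to be the bookkeeping of logarithms in the threshold of Step 1. The threshold must be large enough that the additive loss $5f$ is negligible against $\phi = 1/\log^2(n)$, yet small enough that the resulting size bound on $S$ stays below half the good-vertex degree $D/4$. The choice $f\log^3(n)$ should satisfy both, given $D \geq 8\sqrt{f}\log^2(n)$. If the constants turn out tight, I would adjust the threshold to $c f\log^{3}(n)$ for a suitable constant $c$.

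A secondary point is justifying the diameter bound for non-lopsided expanders. I would prove it directly by ball growing: around any vertex, the ball's volume grows by a factor of $(1+\phi)$ per step until it exceeds half of the total volume, which takes $O(\log(n)/\phi)$ steps. Two such balls, one from each endpoint, must then intersect.
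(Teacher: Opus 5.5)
Your proposal is correct and follows the paper's approach. The paper omits this proof, saying only that it mirrors that of \cref{lem:robustnessfDeletion}, and your write-up supplies exactly the needed modifications: you raise the volume threshold to $f\log^3(n)$ so the additive $5f$ loss is $o(1/\log^2(n))$ while still keeping $|S| \leq \sqrt{f}\log(n)/2$, and you replace \cref{clm:diameterBound} (stated only for lopsided expanders) with the standard $O(\log(n)/\phi)$ ball-growing diameter bound, which yields the claimed $O(\log^3(n))$.
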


We omit the proof of this lemma, as it exactly mirrors the proof of \cref{lem:robustnessfDeletion}.

\section{Fault-Tolerant Distance Oracles in $\widetilde{O}(n \sqrt{f})$ Space}

With our high-degree expander robustness lemma, we are now ready to prove our main result regarding the existence of small space, fault-tolerant distance oracles. Formally, we prove the following:

\begin{theorem}\label{thm:faultTolerantDO}
    There is a data structure which, when instantiated on a graph $G = (V, E)$ on $n$ vertices, and any parameter $f \in \left [\binom{n}{2} \right ]$:
    \begin{enumerate}
        \item Uses space at most $\widetilde{O}(n \sqrt{f})$ bits. 
        \item Supports a query operation which, for any $F \subseteq E$ satisfying $|F| \leq f$, and any vertices $u, v \in V$, returns a value $\widehat{\mathrm{dist}}_{G - F}(u,v)$ such that 
        \[
        \mathrm{dist}_{G - F}(u,v)\leq \widehat{\mathrm{dist}}_{G - F}(u,v) \leq O(\log(n)\log\log(n)) \cdot \mathrm{dist}_{G - F}(u,v).
        \]
    \end{enumerate}
\end{theorem}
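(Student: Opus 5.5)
We build the oracle from a recursive minimum-degree-preserving expander decomposition, store only light information per expander, and argue about queries edge by edge. Set $D = 8\sqrt{f}\log(n)$. Let $E_0 = E$. For levels $j = 1, \dots, L = O(\log n)$, apply \cref{thm:expanderDecomp} to $(V, E_{j-1})$ with degree parameter $D$. This gives disjoint parts $V^{(j)}_1, \dots, V^{(j)}_{k_j}$. Each $G_{j-1}[V^{(j)}_i]$ is an $\Omega(1/\log n)$ lopsided-expander with minimum degree at least $D$. Let $E_j$ be the edges of $E_{j-1}$ not inside any part. Then $|E_j| \le |E_{j-1}|/2 + O(nD\log^2 n)$, so after $L$ levels the leftover satisfies $|E_L| = O(nD\log^2 n) = \widetilde{O}(n\sqrt f)$. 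We store $E_L$ explicitly.

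For each level $j$ and each part $V^{(j)}_i$ we store three things:
\begin{enumerate}
\item the vertex list of the part, with its index $i$ recorded at each vertex;
\item the degree of every vertex inside $H^{(j)}_i := G_{j-1}[V^{(j)}_i]$;
\item for each vertex $v$, the deterministic sparse-recovery sketch of \cref{fact:uniqueSum} (or \cref{clm:efficientUniqueSum}) with threshold $k = 4f/D + 1$, applied to $v$'s neighborhood in $H^{(j)}_i$.
\end{enumerate}
Within a level the parts are disjoint, so per level there are $O(n)$ labels and degrees and $n$ sketches. Each sketch uses $O(k)$ words of $O(k\log n)$ or $O(\log n)$ bits. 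With the efficient version this is $O(f/D \cdot \log n) = \widetilde{O}(\sqrt f)$ bits per vertex. Summing over $L$ levels gives $\widetilde{O}(n\sqrt f)$ in total.

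Next we locate each deleted edge $e=(a,b) \in F$ deterministically. Process the levels in order $j = 1, 2, \dots$. The edge $e$ belongs to $H^{(j)}_i$ exactly when $a$ and $b$ carry the same part label $i$ at level $j$ and $e$ was not claimed at an earlier level. This works because the parts are induced subgraphs of $(V, E_{j-1})$ and $e \in E_{j-1}$. If $e$ is claimed at no level, it lies in $E_L$. Having found the owner, we decrement the stored degrees of $a$ and $b$ and linearly subtract $e$ from both endpoint sketches. After all of $F$ is processed, each $H^{(j)}_i - F_i$ has exact degrees and sketches, where $F_i = F \cap E(H^{(j)}_i)$ and $|F_i| \le f$.

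Now we build a weighted auxiliary graph $\widehat G$. It contains every edge of $E_L \setminus F$ with weight 1. For each $H^{(j)}_i$, every vertex $v$ with $\deg_{H^{(j)}_i - F_i}(v) \le 4f/D$ has its full neighborhood decoded from its sketch, and those edges enter $\widehat G$ with weight 1. The vertices of the part with degree at least $4f/D + 1$ form a set $Q$. We join every pair in $Q$ by an edge of weight $c\log n\log\log n$, where $c$ is the hidden constant of \cref{lem:robustnessfDeletion}. The query returns $\mathrm{dist}_{\widehat G}(u,v)$.

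For the lower bound $\mathrm{dist}_{G-F}(u,v) \le \widehat{\mathrm{dist}}_{G-F}(u,v)$: every weight-1 edge is a real edge of $G - F$. Every clique edge between two vertices of $Q$ is realized by a path of at most that length in $H^{(j)}_i - F_i \subseteq G - F$, by \cref{lem:robustnessfDeletion}. This lemma applies since $|F_i| \le f$ and the minimum degree is at least $D \ge 8\sqrt f\log n$.

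For the upper bound, fix a shortest $u$–$v$ path in $G - F$ and take any edge $e=(x,y)$ on it. This edge lies in $E_L \setminus F$ or in exactly one $H^{(j)}_i - F_i$. In the first case it is in $\widehat G$. In the second case, if $x$ or $y$ has low degree, $e$ is recovered by sparse recovery. Otherwise both endpoints are in $Q$, and they are joined by an edge of weight $O(\log n\log\log n)$. So each path edge costs at most $O(\log n \log\log n)$, which gives the stretch.

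The step needing the most care is the sketch-update consistency. Every deletion must reach exactly the sketch that contains it, which rests on the level-by-level induced-subgraph labeling above. Sparse recovery must also be applied only when the post-deletion degree is below the threshold, and the stored exact degrees let us check that condition. We expect no other obstacles beyond parameter bookkeeping, such as bits per sketch entry under \cref{fact:uniqueSum} versus \cref{clm:efficientUniqueSum}.
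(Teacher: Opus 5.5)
Your proposal is correct and follows the paper's proof essentially step for step. It uses the same recursive expander decomposition with $D = 8\sqrt{f}\log n$ and stores the same data: part labels, per-part degrees, and deterministic sparse-recovery sketches. Deletions are routed by the same first-level-containing-both-endpoints rule (\cref{clm:findingEdge}), and the same clique-plus-recovered-neighborhood auxiliary graph is analyzed via \cref{lem:robustnessfDeletion}. The differences are cosmetic. You weight only the clique edges by $c\log n\log\log n$ and return the weighted distance, whereas the paper uses unit weights and multiplies the final distance by $C''\log n\log\log n$; both give $O(\log n\log\log n)$ stretch, yours slightly tighter. You also run a fixed $O(\log n)$ levels rather than the paper's edge-count stopping rule, which changes nothing.
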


The data structure $D$ that we create is \emph{deterministic}, and hence the correctness guarantee holds \emph{for every} set $F \subseteq E, |F| \leq f$, even when such queries are made by an adaptive adversary.

\subsection{Constructing the Data Structure}\label{sec:buildDO}

As alluded to above, the data structure relies on a recursive, high-degree expander decomposition of the input graph $G$. We first describe this process informally, and then present the formal pseudocode below (see \cref{alg:buildDataStructure}).

Given a starting graph $G = (V, E)$, and a parameter $f$ designating the number of edge-faults that the data structure should tolerate, we start by decomposing the graph $G$ using \cref{thm:expanderDecomp} with minimum degree $D = 8 \sqrt{f}\log(n)$. By \cref{thm:expanderDecomp}, this guarantees that the graph is decomposed into induced subgraphs $G[V_1], \dots G[V_k]$ such that each is an $\Omega(1 / \log(n))$ lopsided-expander, and each has minimum degree $D$. Furthermore, \cref{thm:expanderDecomp} guarantees that the number of edges in $E - G[V_1] - \dots - G[V_k] = X$ is bounded by $|E| / 2 + O(nD\log^2(n)) = |E|/2 + O(n \sqrt{f}\log^2(n))$. 

Now, for each expander $G[V_i]$, our data structure stores three pieces of information:
\begin{enumerate}
    \item The names of all the vertices in this component (i.e., it stores $V_i$).
    \item The degrees of all the vertices in $G[V_i]$.
    \item For every vertex $v \in V_i$, it stores a deterministic sparse recovery sketch on the edges incident to $v$ in $G[V_i]$ (using the function $g$ from \cref{fact:uniqueSum}). 
\end{enumerate}

Given an edge deletion, the first piece of information allows the data structure to decide where the edge should be deleted from (i.e., which expander). The second piece of information allows the data structure to track the degrees of vertices under edge deletions, and finally, the third piece of the data structure allows for recovery of the \emph{the entire neighborhood} of a vertex, whenever the degree becomes sufficiently small. 

However, the above procedure only reduces the number of edges in the graph to $|E| / 2 + O(nD\log^2(n)) = |E|/2 + O(n \sqrt{f}\log^2(n))$. Thus, the data structure then recurses on these remaining edges (which we denote by $X$), storing the same sketch for the result of the expander decomposition on $X$. This process repeats until the number of remaining edges becomes $O(nD\log^2(n))$, at which point the algorithm stores all the remaining edges.

We present the formal algorithm constructing the data structure in \cref{alg:buildDataStructure}.

\begin{algorithm}[h]
    \caption{BuildDistanceOracle$(G = (V, E), f)$}\label{alg:buildDataStructure}
    $\ell = 1$. \\
    $G_{\ell}= G$. \\
    $D = 8 \sqrt{f}\log(n)$. \label{line:setDegree}\\
    Let $g$ be a function as guaranteed by \cref{fact:uniqueSum}, initialized with $u = \binom{n}{2}$ and $k = \frac{4f}{D}$.\label{line:uniqueSumFunction}\\
    Let $C$ be a sufficiently large constant. \\
    \While{$G_{\ell}$ has $\geq C n \sqrt{f}\log^3(n)$ edges \label{line:whileLoop}}{
    Let $V^{(\ell)}_1, \dots V^{(\ell)}_{k_{\ell}}$ denote the high-degree expander decomposition of $G_{\ell}$ as guaranteed by \cref{thm:expanderDecomp} with $D$, and let $X_{\ell+1}$ denote the crossing edges. \label{line:expanderDecomp}\\
    \For{$j \in [k_{\ell}]$, and each vertex $v \in V_j^{(\ell)}$}{ 
    Let $\Gamma_{G_{\ell}[V_j^{(\ell)}]}(v)$ denote the edges leaving $v$ in the expander $G_{\ell}[V_j^{(\ell)}]$. \\
    Let $\mathcal{S}_v^{(\ell)} = g(\Gamma_{G_{\ell}[V_j^{(\ell)}]}(v))$. \\
    }
    Let $G_{\ell +1} = (V, X_{\ell+1})$. \\
    $\ell \leftarrow \ell + 1$.
    }
    $\mathrm{DataStructure} = \bigg ( \bigg \{(V^{(j)}_1, \dots V^{(j)}_{k_{j}}): j \in [\ell] \bigg \}, \{\mathrm{deg}_{G_{j}[V_i^{(j)}]}(v): j \in [\ell], i \in k_{j}, v \in V^{(j)}_i \}, \{ \mathcal{S}_v^{(j)}: j \in [\ell], v \in \bigcup_{i \in k_{j}} V^{(j)}_i \}, X_{\ell} \bigg )$.\\
\Return{$\mathrm{DataStructure}$}.
\end{algorithm}

\subsection{Bounding the Space}

Now, we bound the space used by \cref{alg:buildDataStructure}. We prove the following lemma:

\begin{lemma}\label{lem:spaceBound}
    For a graph $G = (V, E)$, and a parameter $f$, \cref{alg:buildDataStructure} returns a data structure that requires $\widetilde{O}(n \sqrt{f})$ bits of space.
\end{lemma}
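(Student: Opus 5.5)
The plan is to bound the number of levels $\ell$ in the while loop of \cref{alg:buildDataStructure}, bound the bits stored per level, and then add the final explicitly stored edge set $X_\ell$.

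\textbf{Number of levels.} By \cref{thm:expanderDecomp} with $D = 8\sqrt{f}\log(n)$, every iteration gives
\[
|X_{\ell+1}| \leq |E(G_\ell)|/2 + c\, n\sqrt{f}\log^3(n)
\]
for some constant $c$. The loop only runs while $|E(G_\ell)| \geq C n\sqrt{f}\log^3(n)$. With $C \geq 4c$, the additive term is at most $|E(G_\ell)|/4$, so $|E(G_{\ell+1})| \leq \tfrac34 |E(G_\ell)|$. Since $|E| \leq n^2$, the loop terminates after $O(\log n)$ iterations.

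\textbf{Bits per level.} The parts $V^{(\ell)}_1,\dots,V^{(\ell)}_{k_\ell}$ at a fixed level are disjoint.
\begin{itemize}
\item Storing them, for instance as a component label per vertex, costs $O(n\log n)$ bits.
\item Storing all degrees costs $O(n\log n)$ bits.
\item Each vertex $v$ receives one sketch $\mathcal{S}^{(\ell)}_v = g(\Gamma(v))$. By \cref{fact:uniqueSum} with $u=\binom n2$ and $k = 4f/D = \sqrt{f}/(2\log n)$, each value $g(i)$ is at most $\mathrm{poly}(u^k)$. A sum of at most $n$ such values is therefore an integer bounded by $n\cdot \mathrm{poly}(n^{2k})$, and it takes $O(k\log n) = O(\sqrt f)$ bits.
\end{itemize}
So one level uses $O(n\sqrt f + n\log n)$ bits.

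\textbf{Total.} Over $O(\log n)$ levels this is $\widetilde O(n\sqrt f)$. The final stored set $X_\ell$ has fewer than $Cn\sqrt f\log^3 n$ edges, each named with $O(\log n)$ bits, which adds $\widetilde O(n\sqrt f)$.

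\textbf{Main obstacle.} The step needing care is the sketch size. The sketch may be applied to a neighborhood with far more than $k$ elements, since only sums of at most $k$ elements are guaranteed unique. The bit-length must therefore be bounded via the magnitude bound of \cref{fact:uniqueSum} summed over up to $n$ terms, rather than via uniqueness.

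If one prefers the field-valued sketch of \cref{clm:efficientUniqueSum}, the bound is immediate: $2k+1$ field elements of $O(\log n)$ bits each, giving $O(\sqrt f)$ bits.

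The edge case $f<1$, or any regime where $D > n$ makes the decomposition trivial, is handled directly. Then the loop never runs, or $X_\ell$ is stored within the stated bound.
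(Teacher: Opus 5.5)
Your proposal is correct and follows the paper's proof essentially step for step. The shared steps are: the $\tfrac34$ geometric decay of $|X_\ell|$ (with $C=4c$) giving $O(\log n)$ levels, $\widetilde{O}(n)$ bits per level for the parts and degrees, $O(k\log n)=O(\sqrt f)$ bits per sketch via the magnitude bound on a sum of at most $n$ values, and $\widetilde{O}(n\sqrt f)$ bits for the final $X_\ell$.

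One small quibble concerns your $f<1$ remark. For $f=0$ the loop threshold $Cn\sqrt f\log^3 n$ is $0$, so the loop would never terminate rather than never run. That case is excluded anyway, since the lemma implicitly assumes $f\ge 1$.
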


\begin{proof}
    First, we bound the number iterations in \cref{alg:buildDataStructure} (equivalently, we bound the final value of $\ell$). For this, recall that the starting graph $G$ has $\leq n^2$ edges and by \cref{thm:expanderDecomp}, there exists a constant $C'$ such that after each invocation of the expander decomposition, the number of edges in $X_i$ is at most $|X_{i-1}|/2 + C' \cdot n \cdot D \cdot\log^2(n)$. So, if we let our large constant in \cref{alg:buildDataStructure} be $C = 4 \cdot C'$, then we know that in every iteration where $|X_i| \geq C n \sqrt{f}\log^3(n)$, it is the case that $|X_{i+1}| \leq \frac{|X_i|}{2} + \frac{|X_i|}{4} \leq \frac{3|X_i|}{4}$. Thus after only $\ell = O(\log(n))$ iterations, it must be the case that $|X_{\ell}| \leq C \cdot n \sqrt{f}\log^3(n)$. 

    The first piece of information that is stored in our data structure is the identity of all the components in each level of our expander decomposition. These components always form a partition of our vertex set, and thus storing $(V^{(\ell)}_1, \dots V^{(j)}_{k_{j}})$ for a single $ j \in [\ell]$ requires $\widetilde{O}(n)$ bits of space. Across all $\ell = O(\log(n))$ levels of the decomposition, storing this information still constitutes only $\widetilde{O}(n)$ bits of space.

    Next, to store all the vertex degrees, for each vertex we store a number in $[n]$, which requires $O(\log(n))$ bits. Across all $n$ vertices and $O(\log(n))$ levels of the decomposition, this too constitutes only $\widetilde{O}(n)$ bits of space.

    Next, we consider the sparse recovery sketches that we store for each vertex. Recall that each sketch is invoked using \cref{fact:uniqueSum} with parameter $u = \binom{n}{2}$, and $k = \frac{4f}{D}$. Thus, each sketch is the sum of at most $n$ numbers, each of which is bounded by $\mathrm{poly} \left ( \binom{n}{2} \right )^k$. Thus the bit complexity is 
    \[
    O \left (\log(\mathrm{poly} \left ( \binom{n}{2} \right )^k) \right ) = O(k \cdot \log(n)) = O \left ( \frac{4f \log(n)}{D}\right ) =  O(\sqrt{f})
    \]
    many bits. Now, across all $\ell = O(\log(n))$ levels, and all $n$ vertices, the total space required for storing these sketches is $\widetilde{O}(n \sqrt{f})$ bits of space. 

    Finally, we bound the space of the final crossing edges $X_{\ell}$. By the termination condition in \cref{alg:buildDataStructure}, it must be the case that $|X_{\ell}| = O(n \sqrt{f}\log^3(n))$, and therefore the total space required for storing these edges is $\widetilde{O}(n \sqrt{f})$ bits of space. This concludes the lemma. 
\end{proof}

\subsection{Reporting Distances}

In this section, we show how to answer distance queries using our above data structure. We first describe the procedure informally, and later present pseudocode the actual implementation. 

Indeed, let $G = (V, E)$ denote the starting graph for which our data structure is instantiated, and let $F \subseteq E, |F| \leq f$ denote the set of deleted edges. To start, we claim that, given this set of deletions $F$, there is a simple way to determine which expander each edge $e \in F$ should be deleted from. This follows from the fact in the $j$th level of our decomposition, the expanders are \emph{induced subgraphs} of $G_j$. So, for any edge $e = (u,v)$, the expander which contains $(u,v)$ is the \emph{first} expander we ever saw in our decomposition where the vertices $u$ and $v$ were together. Note that until this happens, $(u,v)$ is always a crossing edge, and once $(u,v)$ are together in an expander, this means $(u,v) \in G_j$ and there exists $V_i^{(j)}$ such that $u, v \in V_i^{(j)}$, and therefore $(u,v) \in G[V_i^{(j)}]$. So, indeed there is a simple deterministic way to determine which expander an edge $(u,v)$ is contained in. 

We include this claim formally here:

\begin{claim}\label{clm:findingEdge}
    Let $G = (V, E)$ be a graph, let $f$ be a parameter, and let $\mathrm{DataStructure}$ be constructed as in \cref{alg:buildDataStructure}. Then, an edge $e \in G_{j}[V_i^{(j)}]$ if and only if $V_i^{(j)}$ is the component with the \emph{smallest} value of $j$ such that $e \subseteq V_i^{(j)}$.
\end{claim}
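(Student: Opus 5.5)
The plan is a direct argument from the structure of \cref{alg:buildDataStructure}. Two facts are used throughout. First, the edge sets of successive levels are nested: $G_1 = G$, and $G_{j+1} = (V, X_{j+1})$ where $X_{j+1}$ consists of exactly the edges of $G_j$ that do \emph{not} lie in any $G_j[V_i^{(j)}]$. Second, within a single level the parts $V_1^{(j)},\dots,V_{k_j}^{(j)}$ are disjoint, so an edge $e=(u,v)$ has $e \subseteq V_i^{(j)}$ for at most one $i$ at each level $j$. I interpret the claim for edges $e \in E$ (or $e \in G_1$). In that setting, the ``smallest $j$'' is taken over all levels at which some part of that level contains both endpoints of $e$.

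The key step is an invariant proved by induction on $j$. For every edge $e \in E$, we have $e \in G_j$ if and only if no part of any earlier level $j' < j$ contains both endpoints of $e$. The base case $j=1$ holds trivially. For the inductive step, suppose $e \in G_j$. Then $e \in G_{j+1}=X_{j+1}$ exactly when $e$ is not in $G_j[V_i^{(j)}]$ for any $i$. Because the expanders are \emph{induced} subgraphs of $G_j$, for $e \in G_j$ we have $e \in G_j[V_i^{(j)}]$ iff $e \subseteq V_i^{(j)}$. Hence $e$ survives to level $j+1$ iff no part at level $j$ contains both endpoints. If instead $e \notin G_j$, then $e \notin G_{j+1} \subseteq G_j$, and by the inductive hypothesis some earlier part already contained $e$. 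This closes the induction.

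Both directions of the claim follow from the invariant. ($\Rightarrow$) If $e \in G_j[V_i^{(j)}]$, then $e \in G_j$ and $e \subseteq V_i^{(j)}$. By the invariant, no part at a level $j'<j$ contains $e$, so $j$ is the smallest such level, and by disjointness $V_i^{(j)}$ is the unique part at that level containing $e$. ($\Leftarrow$) Suppose $V_i^{(j)}$ is a part at the smallest level $j$ with $e \subseteq V_i^{(j)}$. The invariant gives $e \in G_j$, and inducedness then gives $e \in G_j[V_i^{(j)}]$.

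I do not expect a real obstacle. The only care needed is to make inducedness (\cref{thm:expanderDecomp}) and disjointness of parts within a level explicit, since those are exactly what make the assignment of each edge to an expander deterministic. It is also worth noting, though it is not needed for the claim, that an edge may lie in $X_\ell$ (stored explicitly) without ever being placed in an expander.
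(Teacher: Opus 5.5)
Your proof is correct and follows the same reasoning the paper gives in the paragraph preceding the claim: an edge remains a crossing edge until the first level at which both endpoints lie in a common part, and since $G_j[V_i^{(j)}]$ is an induced subgraph, the edge is then placed in that expander. Your explicit induction, your use of disjointness of the parts within a level, and your restriction to $e \in E$ (the $\Leftarrow$ direction fails for non-edges) make that informal argument precise.
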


Now, for every edge $e \in F$, we do the following procedure:

\begin{enumerate}
    \item We let $V_i^{(j)}$ be the component with the smallest value of $j$ such that $u, v \in V_i^{(j)}$. 
    \item We decrement the degrees of $u, v$ in $V_i^{(j)}$ by $1$. 
    \item We set $\mathcal{S}_{u}^{(j)} \leftarrow \mathcal{S}_{u}^{(j)} - g((u,v))$, and similarly  $\mathcal{S}_{v}^{(j)} \leftarrow \mathcal{S}_{v}^{(j)} - g((u,v))$. 
\end{enumerate}

Now, the property we use is \cref{lem:robustnessfDeletion}: if, in an expander $V_i^{(j)}$, the degrees of $u, v$ remain sufficiently large, then we know that the distance between $u, v$ in $V_i^{(j)}$ is quite small (bounded by $O(\log(n)\log\log(n))$), and so for these high degree vertices, we can approximate their structure by a clique. However, if one of $u,v$ (or both) has degree in $V_i^{(j)}$ which is too small after deletions, then we instead invoke \cref{fact:uniqueSum}, and observe that for this low degree vertex, we can then recover \emph{all} of the remaining incident edges. 

Thus, after the deletions, our algorithm creates this auxiliary graph, where, within each expander, the remaining high degree vertices are connected by a clique, while for the low degree vertices, we instead recover their exact remaining neighborhoods. We compute the shortest path from $u$ to $v$ in this auxiliary graph, multiply it by $O(\log(n)\log\log(n))$, and claim that this satisfies the conditions of \cref{thm:faultTolerantDO}. We present the formal distance reporting algorithm in \cref{alg:reportDistance}.

\begin{algorithm}[h]
\caption{ReportDistance$(\mathrm{DataStructure}, a, b, F)$}\label{alg:reportDistance}
\For{$e = (u,v) \in F$}{
\If{exists component $V^{(j)}_i$ such that $e \subseteq V^{(j)}_i$}{
Let $V^{(j)}_i$ denote the first such component (i.e., for the smallest value of $j$). \\
$\mathrm{deg}_{G_{j}[V_i^{(j)}]}(v) \leftarrow \mathrm{deg}_{G_{j}[V_i^{(j)}]}(v) -1 $. \\
$\mathrm{deg}_{G_{j}[V_i^{(j)}]}(u) \leftarrow \mathrm{deg}_{G_{j}[V_i^{(j)}]}(u) -1 $. \\
$\mathcal{S}_v^{(j)} \leftarrow \mathcal{S}_v^{(j)} - g(e)$. \\
$\mathcal{S}_u^{(j)} \leftarrow \mathcal{S}_u^{(j)} - g(e)$. \\
}
\Else{
$X_{\ell} \leftarrow X_{\ell} - e$. \\
}
}
Let $H = (V, \emptyset)$. \\
\For{$V^{(j)}_i: j \in [\ell], i \in [k_j]$}{
Let $V'^{(j)}_i \subseteq V^{(j)}_i$ denote all vertices $u \in V^{(j)}_i$ such that $\mathrm{deg}_{G_{j}[V_i^{(j)}]}(u) \geq \frac{4f}{D} + 1$. \\
$H \leftarrow H + K_{V'^{(j)}_i}$. \label{line:addClique} \tcp{Add a clique on the high degree vertices.} 
\For{$u \in V^{(j)}_i - V'^{(j)}_i$}{
Let $\Gamma_{V^{(j)}_i}(u)$ denote the unique set of $\leq \frac{4f}{D}$ edges in $\binom{n}{2}$ such that $g(\Gamma_{V^{(j)}_i}(u)) = \mathcal{S}_u^{(j)}$. \label{line:efficientlyRecoverEdges}\\
$H \leftarrow H + \Gamma_{V^{(j)}_i}(u)$. \label{line:addFullNeighborhood}
}
}
$H \leftarrow H + X_{\ell}$. \label{line:finishH}\\
Let $\delta$ denote the distance between $a, b$ in $H$, and let $C''$ denote a large constant.  \\
\Return{$\delta \cdot C'' \cdot \log(n)\log\log(n)$.\label{line:returnDistance}}
\end{algorithm}

Now we prove the correctness of the above algorithm. To start, we have the following claim:

\begin{claim}\label{clm:Hcontains}
    Let $G = (V, E)$ be a graph, let $f$ be a parameter, and let $\mathrm{DataStructure}$ be the result of invoking \cref{alg:buildDataStructure} on $G, f$. Now, suppose that we call \cref{alg:reportDistance} with $\mathrm{DataStructure}$, a set of deletions $F$, and vertices $a, b$. Then, it must be the case that $G - F \subseteq H$, where $H$ is the resulting auxiliary graph whose construction terminates in \cref{line:finishH}.
\end{claim}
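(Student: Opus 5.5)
The plan is to show that every edge $e=(x,y)\in E\setminus F$ is placed into $H$ by \cref{alg:reportDistance}. First I will partition $E$ according to the decomposition built by \cref{alg:buildDataStructure}. Each edge of $G$ either lies in exactly one expander $G_j[V_i^{(j)}]$ or survives every level and ends up in the final crossing set $X_\ell$. This holds because each level partitions the current edge set $E(G_j)$ into the edges inside the components $G_j[V_i^{(j)}]$ and the crossing edges $X_{j+1}=E(G_{j+1})$.

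By \cref{clm:findingEdge}, the component $V_i^{(j)}$ to which the algorithm assigns $e$ is exactly the one whose expander contains $e$. Namely, it is the component with the smallest $j$ such that $e\subseteq V_i^{(j)}$. If no such component exists, then $e\in X_\ell$.

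Consequently, the deletion loop applies each $e'\in F$ to precisely the structure that holds it. After the loop, the following hold.
\begin{enumerate}
\item For every level $j$, component $i$ and vertex $u\in V_i^{(j)}$, the stored degree equals $\deg_{G_j[V_i^{(j)}]-F}(u)$.
\item By linearity of $g$, the sketch $\mathcal{S}_u^{(j)}$ equals $g\bigl(\Gamma_{G_j[V_i^{(j)}]-F}(u)\bigr)$.
\item The stored set $X_\ell$ equals the original $X_\ell\setminus F$.
\end{enumerate}
The one point needing care is that every $e'\in F$ subtracted from a sketch truly belonged to that sketch's support. This is what \cref{clm:findingEdge} and $F\subseteq E$ together guarantee, since otherwise $g$ would be evaluated on a signed vector rather than a set.

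Now fix a surviving edge $e=(x,y)\in E\setminus F$. If $e\in X_\ell$, then $e$ is added at \cref{line:finishH}. Otherwise $e\in G_j[V_i^{(j)}]-F$ for the unique $(i,j)$ identified above, and I split into two cases.
\begin{enumerate}
\item If both $x$ and $y$ have post-deletion degree at least $\frac{4f}{D}+1$, then both lie in $V'^{(j)}_i$, and the clique added at \cref{line:addClique} contains $e$.
\item Otherwise, say $x$ has post-deletion degree at most $\frac{4f}{D}=k$. Then $\Gamma_{G_j[V_i^{(j)}]-F}(x)$ is a set of size at most $k$ whose $g$-value equals $\mathcal{S}_x^{(j)}$. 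By the uniqueness in \cref{fact:uniqueSum}, the set recovered at \cref{line:efficientlyRecoverEdges} is exactly this neighborhood. That neighborhood contains $e$, so $e$ is added at \cref{line:addFullNeighborhood}.
\end{enumerate}

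Since $H$ is built on the vertex set $V$, every edge of $G-F$ lies in $H$, which gives $G-F\subseteq H$.

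The main obstacle is the bookkeeping step showing that the updated degrees and sketches equal those of $G_j[V_i^{(j)}]-F$, since the unique-decoding argument depends on it. I would handle it by an explicit invariant over the loop, using the uniqueness of the owning component from \cref{clm:findingEdge}.
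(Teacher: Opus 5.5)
Your proposal is correct and follows essentially the same route as the paper. Both use the same three-way case split: $e\in X_\ell$; both endpoints high-degree, so $e$ lies in the added clique; or one endpoint of degree at most $\frac{4f}{D}$, so $e$ is recovered by unique decoding of the updated sketch. The only difference is that you state explicitly the loop invariant (via \cref{clm:findingEdge} and $F\subseteq E$) that the updated degrees and sketches equal those of $G_j[V_i^{(j)}]-F$, which the paper uses implicitly in its displayed identity for $\mathcal{S}_u^{(j)}$.
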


\begin{proof}
    Let us consider any edge $e \in G - F$. By definition, any such edge must have also been in the initial graph $G$ on which we constructed our data structure. Now, there are several cases for us to consider:
    \begin{enumerate}
        \item Suppose $e \notin G_{j}[V_i^{(j)}]$ for any $i, j$. Then, it must be the case that $e \in X_{\ell}$ (i.e., the final set of crossing edges). But in \cref{line:finishH}, we add all edges in $X_{\ell} - F$ to $H$, and so if $e \notin F$ and $e \in X_{\ell}$, then $e \in H$. 
        \item Otherwise, this means that $e \in G_{j}[V_i^{(j)}]$ for some $i, j$. Again, we have two cases:
        \begin{enumerate}
            \item Suppose that after performing the deletions from $F$ it is the case that $\mathrm{deg}_{G_{j}[V_i^{(j)}] - F}(u) \geq \frac{4f}{D} + 1$ and $\mathrm{deg}_{G_{j}[V_i^{(j)}] - F}(v) \geq \frac{4f}{D} + 1$. Then, this means that $u, v \in V'^{(j)}_i $, and so $e \in K_{V'^{(j)}_i}$. Because we add $K_{V'^{(j)}_i}$ to $H$, then $e \in H$. 
            \item Otherwise, at least one of $u,v $ (WLOG $u$) has $\mathrm{deg}_{G_{j}[V_i^{(j)}] - F}(u) \leq \frac{4f}{D}$. This means that $u$ has at most $\frac{4f}{D}$ neighbors in $G_{j}[V_i^{(j)}]$ after performing the deletions in $F$, and that in particular, $\mathcal{S}_u^{(j)}$ is the sum of $\leq \frac{4f}{D}$ many terms. By \cref{fact:uniqueSum}, there must then be a unique set of $\leq \frac{4f}{D}$ edges in $\binom{V}{2}$, denoted $\Gamma_{V^{(j)}_i}(u)$, such that $g(\Gamma_{V^{(j)}_i}(u)) = \mathcal{S}_u^{(j)}$. Because at this point \[
            \mathcal{S}_u^{(j)} = \sum_{e \in \Gamma_{G_{j}[V_i^{(j)}]}(u)} g(e) - \sum_{e \in \Gamma_{G_{j}[V_i^{(j)}]}(u) \cap F} g(e) = \sum_{e \in \Gamma_{G_{j}[V_i^{(j)}] - F}(u)} g(e),
            \]
            this means that this set of edges $\Gamma_{V^{(j)}_i}(u)$ that we recover are indeed all remaining incident edges on vertex $u$ in $G_{j}[V_i^{(j)}] - F$. Because $e \notin F$, it must also be the case then that $e \in \Gamma_{V^{(j)}_i}(u)$, and so is therefore also in $H$. 
        \end{enumerate}
    \end{enumerate}
    
    Thus, for any edge $e \in G - F$, we do indeed have that $e \in H$.
\end{proof}

As an immediate corollary to the above claim, we can also lower bound the true distance between any vertices in $G - F$:

\begin{corollary}\label{cor:distanceLowerBound}
    Let $G, H, F, f$ be defined as in \cref{clm:Hcontains}. Then, for any vertices $a, b \in V$,  $\mathrm{dist}_{H}(a,b) \leq \mathrm{dist}_{G - F}(a,b)$.
\end{corollary}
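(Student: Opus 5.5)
This is immediate from \cref{clm:Hcontains}, using the monotonicity of shortest-path distance under adding edges. We view $G - F$ and $H$ as graphs on the same vertex set $V$ with unit edge lengths. By \cref{clm:Hcontains}, every edge of $G - F$ is also an edge of $H$, so $G - F$ is a subgraph of $H$.

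Fix $a, b \in V$. If $\mathrm{dist}_{G-F}(a,b) = \infty$, there is nothing to prove. Otherwise, let $P = (a = x_0, x_1, \dots, x_t = b)$ be a shortest path in $G - F$, so $t = \mathrm{dist}_{G-F}(a,b)$. Each edge $(x_{i-1}, x_i)$ lies in $G - F$, hence in $H$. Therefore $P$ is also a walk from $a$ to $b$ in $H$ of length $t$, which gives $\mathrm{dist}_H(a,b) \leq t$.

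One point needs care: $H$ is assembled as a union of cliques $K_{V'^{(j)}_i}$, recovered neighborhoods, and $X_\ell$, and the same edge may be added more than once. Such repetitions only create parallel copies of an edge, all of unit length, so they cannot lengthen any path. No step here is a genuine obstacle; all of the substance lives in \cref{clm:Hcontains}.
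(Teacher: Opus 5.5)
Your proof is correct and matches the paper's argument exactly: by \cref{clm:Hcontains} we have $G - F \subseteq H$, so any path between $a$ and $b$ in $G - F$ is also a path in $H$, which gives $\mathrm{dist}_H(a,b) \leq \mathrm{dist}_{G-F}(a,b)$. Your remarks on the infinite-distance case and on repeated edges are harmless additions that the paper leaves implicit.
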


\begin{proof}
    This trivially follows because any path in $G - F$ between $a, b$ will also be a path between $a, b$ in $H$. 
\end{proof}

More subtly, we also have the following upper bound on the distance between any vertices in $G - F$:

\begin{claim}\label{clm:distanceUpperBound}
    Let $G, H, F, f$ be defined as in \cref{clm:Hcontains}. Then, for any vertices $a, b \in V$,  $\mathrm{dist}_{G - F}(a,b) \leq O(\log(n)\log\log(n)) \cdot \mathrm{dist}_{H}(a,b)$.
\end{claim}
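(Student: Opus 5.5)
We prove the claim edge by edge. Fix a shortest $a$–$b$ path in $H$ of length $\delta = \mathrm{dist}_H(a,b)$. It suffices to show that every edge $(x,y)$ of $H$ satisfies
\[
\mathrm{dist}_{G-F}(x,y) = O(\log(n)\log\log(n)).
\]
Summing this bound over the $\delta$ edges of the path and applying the triangle inequality in $G-F$ then gives the claim. With $C''$ chosen as the hidden constant, the same bound also shows that the value returned in \cref{line:returnDistance} is at least $\mathrm{dist}_{G-F}(a,b)$.

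We classify each edge of $H$ by the line of \cref{alg:reportDistance} that added it.

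\textbf{Edges from $X_\ell$ (\cref{line:finishH}).} After the deletions are processed, $X_\ell$ contains only edges of $X_\ell - F$. Here we use \cref{clm:findingEdge}: an edge of $F$ that lies in no component belongs to $X_\ell$ and is removed in the else branch. So these edges lie in $G-F$ and have distance $1$.

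\textbf{Edges recovered by sparse recovery (\cref{line:addFullNeighborhood}).} Let $u \in V_i^{(j)} - V'^{(j)}_i$. Its updated degree is $\le 4f/D$, and its updated sketch equals the sketch of $\Gamma_{G_j[V_i^{(j)}]-F}(u)$. This uses \cref{clm:findingEdge}: a deleted edge is subtracted from exactly the component that contains it, so the degree counters and sketches track $G_j[V_i^{(j)}] - F$ exactly. By uniqueness in \cref{fact:uniqueSum}, the recovered set is exactly this neighborhood, so each recovered edge lies in $G-F$.

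\textbf{Clique edges (\cref{line:addClique}).} This is the main case. Take $x,y \in V'^{(j)}_i$, and set $H_0 = G_j[V_i^{(j)}]$. By \cref{thm:expanderDecomp}, $H_0$ is an $\Omega(1/\log n)$-lopsided expander with minimum degree $D = 8\sqrt{f}\log(n)$. Let $H_0' = H_0 - (F\cap E(H_0))$, which is obtained from $H_0$ by at most $f$ deletions. The tracked degrees equal $\deg_{H_0'}$, so $\deg_{H_0'}(x), \deg_{H_0'}(y) \ge 4f/D + 1$. Applying \cref{lem:robustnessfDeletion} to $H_0$ gives
\[
\mathrm{dist}_{H_0'}(x,y) = O(\log(n)\log\log(n)).
\]
Finally, $H_0' \subseteq G - F$, so the same bound holds in $G-F$.

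The main subtlety is justifying that the bookkeeping is exact: every deletion in $F$ is charged to the unique expander or residual set containing it, and to nothing else. This follows from \cref{clm:findingEdge} together with the fact that the levels partition the edges. We should also note that the constant in the distance bound is uniform over all components, so a single $C''$ suffices.
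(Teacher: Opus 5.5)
Your proposal is correct and follows essentially the same route as the paper: split the shortest $a$--$b$ path in $H$ into its edges, and note that edges from $X_\ell$ and edges from sparse recovery already lie in $G-F$. For clique edges, it applies \cref{lem:robustnessfDeletion} to $G_j[V_i^{(j)}]$ and uses $G_j[V_i^{(j)}]-F \subseteq G-F$. Your check via \cref{clm:findingEdge} that the degree counters and sketches track $G_j[V_i^{(j)}]-F$ exactly is left implicit in the paper, and it is what makes both the recovery step and the lemma's degree hypothesis valid.
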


\begin{proof}
    Let $a = v_0 \rightarrow v_1 \rightarrow v_2 \dots \rightarrow v_w \rightarrow b = v_{w+1}$ denote the shortest path between $a, b$ in $H$. Now consider any edge $v_{p}, v_{p+1}$ which is taken in this path. We claim that either: 
    \begin{enumerate}
        \item $(v_{p}, v_{p+1}) \in G - F$. 
        \item $v_{p}, v_{p+1}$ were vertices whose degree remained $\geq \frac{4f}{D} + 1$ in some expander component $G_{j}[V_i^{(j)}]$.
    \end{enumerate}

    This follows essentially by construction. There are three ways edges are added to $H$:
    \begin{enumerate}
        \item In \cref{line:finishH}, we add the final crossing edges $X_{\ell}$ which were not in $F$. Note that $X_{\ell} \subseteq G$ by construction, and so $X_{\ell} - F \subseteq G - F$.
        \item In \cref{line:addFullNeighborhood}, whenever a vertex has degree $\leq \frac{4f}{D}$ in an expander component $G_{j}[V_i^{(j)}] - F$, we recover its entire neighborhood and add those edges to $H$. Importantly, these are still edges that are a subset of $G - F$.
        \item Otherwise, in \cref{line:addClique}, for vertices whose degree remains $\geq \frac{4f}{D} + 1$ in some expander component $G_{j}[V_i^{(j)}]$ after performing deletions $F$, we add a clique on these vertices.
    \end{enumerate}

    So, either an edge in $H$ is in $G - F$, or the edge was added in \cref{line:addClique}, and corresponds to the remaining high degree vertices in some expander component $G_{j}[V_i^{(j)}]$, as we claimed.

    Finally then, to conclude, we claim that for any edge $(v_{p}, v_{p+1})$ which are vertices whose degree remained $\geq \frac{4f}{D} + 1$ in some expander component $G_{j}[V_i^{(j)}]$, it must be the case that 
    \[
    \mathrm{dist}_{G-F}(v_p, v_{p+1}) \leq O(\log(n)\log\log(n)).
    \]
    This follows exactly via \cref{lem:robustnessfDeletion}: indeed, starting with the graph $G_{j}[V_i^{(j)}]$ which has minimum degree $D$ (with $D \geq 8 \sqrt{f}\log(n)$), we performed $\leq f$ deletions. \cref{lem:robustnessfDeletion} shows exactly that, under these conditions, any vertices whose degree remains $\geq \frac{4f}{D} + 1$ must be connected by a path of length $O(\log(n)\log\log(n))$.

    Thus, for the path $a = v_0 \rightarrow v_1 \rightarrow v_2 \dots \rightarrow v_w \rightarrow b = v_{w+1}$ in $H$, we know that every edge used in the path is either present in $G - F$, or can be replaced by a path of length $O(\log(n)\log\log(n))$ in $G - F$. Thus, $\mathrm{dist}_{G - F}(a,b) \leq O(\log(n)\log\log(n)) \cdot \mathrm{dist}_{H}(a,b)$.
\end{proof}

Now, we prove \cref{thm:faultTolerantDO}.

\begin{proof}[Proof of \cref{thm:faultTolerantDO}.]
        Given a graph $G$ and parameter $f$, we instantiate the data structure using \cref{alg:buildDataStructure}. By \cref{lem:spaceBound}, we know that this data structure requires $\widetilde{O}(n \sqrt{f})$ bits of space to store. 
        
        When queried with vertices $u, v, F$ using \cref{alg:reportDistance}, we know that the data structure constructs a graph $H$ (depending on $F$) such that (by \cref{cor:distanceLowerBound} and \cref{clm:distanceUpperBound})
        \[
        \mathrm{dist}_{H}(u,v) \leq \mathrm{dist}_{G - F}(u,v) \leq O(\log(n)\log\log(n)) \cdot \mathrm{dist}_{H}(u,v).
        \]
        In particular, if we let $C''$ denote a sufficiently large constant such that 
        \[
        \mathrm{dist}_{H}(u,v) \leq \mathrm{dist}_{G - F}(u,v) \leq C'' \log(n)\log\log(n) \cdot \mathrm{dist}_{H}(u,v),
        \]
        then outputting $\widehat{\mathrm{dist}}_{G - F}(u,v) = C'' \log(n)\log\log(n) \cdot \mathrm{dist}_{H}(u,v)$ satisfies 
        \[
        \mathrm{dist}_{G - F}(u,v) \leq \widehat{\mathrm{dist}}_{G - F}(u,v) \leq C'' \log(n)\log\log(n) \cdot \mathrm{dist}_{G - F}(u,v),
        \]
        thereby satisfying the conditions of our theorem. 
\end{proof}

\subsection{Efficient Construction}

Note that \cref{thm:faultTolerantDO} can also be made efficient, paying only a logarithmic-factor blow-up in the stretch of the oracle:

\begin{theorem}\label{thm:faultTolerantDOEfficient}
    There is a data structure which, when instantiated on a graph $G = (V, E)$ on $n$ vertices, and any parameter $f \in \left [\binom{n}{2} \right ]$:
    \begin{enumerate}
    \item Can be constructed in polynomial time.
        \item Uses space at most $\widetilde{O}(n \sqrt{f})$ bits. 
        \item Supports a polynomial time query operation which, for any $F \subseteq E$ satisfying $|F| \leq f$, and any vertices $u, v \in V$, returns a value $\widehat{\mathrm{dist}}_{G - F}(u,v)$ such that 
        \[
        \mathrm{dist}_{G - F}(u,v)\leq \widehat{\mathrm{dist}}_{G - F}(u,v) \leq O(\log^3(n)) \cdot \mathrm{dist}_{G - F}(u,v).
        \]
    \end{enumerate}
\end{theorem}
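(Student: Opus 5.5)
We reuse the architecture of \cref{alg:buildDataStructure} and \cref{alg:reportDistance} and make three substitutions, one for each non-constructive ingredient.

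\textbf{Step 1: efficient expander decomposition.} We replace \cref{thm:expanderDecomp} with \cref{thm:efficientExpanderDecomp}. We now set the degree threshold to $D = 8\sqrt{f}\log^2(n)$ and the stopping threshold of the while loop to $C n\sqrt{f}\log^6(n)$. \cref{thm:efficientExpanderDecomp} guarantees that the number of leftover edges is at most $|X_\ell|/2 + O(nD\log^4 n)$. So the same calculation as in \cref{lem:spaceBound} shows that the edge count drops by a factor of $3/4$ per level. There are therefore $O(\log n)$ levels, and the final stored edge set has size $\widetilde{O}(n\sqrt{f})$. Each level is computed in polynomial time.

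\textbf{Step 2: efficient sparse recovery.} We replace the function $g$ of \cref{fact:uniqueSum} by the linear map of \cref{clm:efficientUniqueSum}, with $u=\binom{n}{2}$ and $k = 4f/D$, over $\F_q$ with $q \le 2u$. Each sketch is a vector in $\F_q^{2k+1}$, which takes $O(k\log n) = O(\sqrt{f}/\log n)$ bits. Summing over $n$ vertices and $O(\log n)$ levels gives $\widetilde{O}(n\sqrt{f})$ bits in total, matching the rest of the space accounting in \cref{lem:spaceBound}.

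Linearity over $\F_q$ lets us subtract $g(e)$ for each deleted edge exactly as in \cref{alg:reportDistance}. When a vertex's residual degree is at most $k$, part 2 of \cref{clm:efficientUniqueSum} decodes its residual neighborhood in polynomial time. This makes \cref{line:efficientlyRecoverEdges} efficient.

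The other query steps are also polynomial time. Finding the first component containing a deleted edge (\cref{clm:findingEdge}) is a scan over $O(\log n)$ partitions. Building $H$ with its cliques takes polynomial time, and so does computing BFS distances in $H$.

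\textbf{Step 3: correctness.} The argument for \cref{clm:Hcontains} and \cref{cor:distanceLowerBound} carries over verbatim; it only uses that expanders are induced subgraphs and that sketches decode uniquely. For the upper bound, we redo \cref{clm:distanceUpperBound} with \cref{lem:robustnessfDeletionWorseExpansion} in place of \cref{lem:robustnessfDeletion}. Each component is an $\Omega(1/\log^2 n)$-expander with minimum degree $D \ge 8\sqrt{f}\log^2 n$, and at most $f$ deletions fall inside it. Hence any two vertices with residual degree at least $4f/D + 1$ are at distance $O(\log^3 n)$ in $G-F$. Every clique edge of $H$ can therefore be replaced by a path of length $O(\log^3 n)$, and every other edge of $H$ lies in $G-F$. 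Returning $C''\log^3(n)\cdot \mathrm{dist}_H(a,b)$ then gives the claimed stretch.

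\textbf{Main obstacle.} The delicate point is \cref{lem:robustnessfDeletionWorseExpansion}, whose proof is only claimed to mirror the earlier one. I would check two places. First, the additive loss $5f$ in the cut size must be negligible against $\phi_H \cdot \mathrm{Vol} \approx \mathrm{Vol}/\log^2 n$; this now needs a large-set volume threshold of roughly $f\log^3 n$ instead of $f\log^2 n$. Second, for small sets we need $|S| \ll D/4$. With $D \ge 8\sqrt{f}\log^2 n$ and $\mathrm{Vol}(S)\le f\log^3 n$ we get $|S|\le \sqrt{f}\log n/2$, which is at most $D/8$, so the small-set case still has conductance at least $1/2$.

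One further subtlety: \cref{thm:efficientExpanderDecomp} provides ordinary rather than lopsided expansion. The diameter bound for ordinary expansion $\phi$ is the standard $O(\log n/\phi)$, which equals $O(\log^3 n)$ here. This is exactly where the extra logarithmic loss comes from, and it is why the stretch is $O(\log^3 n)$ rather than $O(\log n\log\log n)$.
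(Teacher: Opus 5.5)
Your proposal is correct and follows essentially the same route as the paper: swap in \cref{thm:efficientExpanderDecomp} with $D = 8\sqrt{f}\log^2(n)$, use the Reed--Solomon syndrome sketch of \cref{clm:efficientUniqueSum}, and redo the stretch argument via \cref{lem:robustnessfDeletionWorseExpansion} to get $O(\log^3(n))$ stretch. Your stopping threshold $C n\sqrt{f}\log^6(n)$ is the one consistent with the $O(nD\log^4(n))$ additive term; the paper states $\log^4(n)$, which appears to be an oversight. Your check that the large-set volume threshold in the robustness lemma must grow to about $f\log^3(n)$ fills in a detail the paper omits.
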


\begin{proof}
    We use the same algorithm \cref{alg:buildDataStructure}, with a few minor modifications:
    \begin{enumerate}
        \item Instead of using $D = 8 \sqrt{f}\log(n)$, we use $D = 8 \sqrt{f}\log^2(n)$ in \cref{line:setDegree}.
        \item We use \cref{thm:efficientExpanderDecomp} instead of \cref{thm:expanderDecomp} in \cref{line:expanderDecomp}.
        \item In \cref{line:whileLoop}, we use $\log^4(n)$ instead of $\log^3(n)$.
        \item Instead of using the function $g$ from \cref{fact:uniqueSum} in \cref{line:uniqueSumFunction}, we instead use the function $g$ from \cref{clm:efficientUniqueSum}.
    \end{enumerate}

    Note that the space bound on the data structure now follows from exactly the same bound as \cref{lem:spaceBound} with the introduction of an extra logarithmic factor, and the fact that storing the information from \cref{clm:efficientUniqueSum} also requires only $O(\frac{f}{D}\log(n))$ bits of space per vertex.

    The polynomial time construction of the data structure follows from the fact that the expander decomposition now runs in polynomial time as per \cref{thm:efficientExpanderDecomp}, and that the encoding function $g$ also runs in polynomial time as per \cref{clm:efficientUniqueSum}.

    For reporting distances, we again use \cref{alg:reportDistance}, with a few minor modifications:
    \begin{enumerate}
        \item In \cref{line:efficientlyRecoverEdges}, we now use the polynomial time recovery algorithm of \cref{clm:efficientUniqueSum} (and likewise, for updating the sketch when we delete edges).
        \item In \cref{line:returnDistance}, we instead multiply by $\log^3(n)$.
    \end{enumerate}

    The correctness of the algorithm then follows from exactly the same argument as in \cref{clm:distanceUpperBound}, with the only modification being that we now use \cref{lem:robustnessfDeletionWorseExpansion} instead of \cref{lem:robustnessfDeletion}, which leads to a diameter of $O(\log^3(n))$ instead of $O(\log(n)\log\log(n))$. This yields the theorem. 
\end{proof}

\subsection{Generalizing to Weighted Graphs}\label{sec:weighted}

As another generalization of \cref{thm:faultTolerantDO}, we show that our approach can be extended to create fault-tolerant distance oracles on weighted graphs (provided that the deletions \emph{include the weight} of the deleted edge):

\begin{theorem}\label{thm:faultTolerantDOWeighted}
    There is a data structure which, when instantiated on a graph $G = (V, E, w)$ on $n$ vertices, and any parameter $f \in \left [\binom{n}{2} \right ]$:
    \begin{enumerate}
        \item Uses space at most $\widetilde{O}(n \sqrt{f} \cdot \log(w_{\max}/ w_{\min}))$ bits (where $w_{\max}$ is the largest weight in $w$, and $w_{\min}$ is the smallest non-zero weight).
        \item Supports a query operation which, for any $F \subseteq \{(e, w_e): e \in E\}$ satisfying $|F| \leq f$, and any vertices $u, v \in V$, returns a value $\widehat{\mathrm{dist}}_{G - F}(u,v)$ such that 
        \[
        \mathrm{dist}_{G - F}(u,v)\leq \widehat{\mathrm{dist}}_{G - F}(u,v) \leq O(\log(n)\log\log(n)) \cdot \mathrm{dist}_{G - F}(u,v).
        \]
    \end{enumerate}
\end{theorem}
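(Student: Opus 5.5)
The plan is to bucket edges by weight into geometric classes and build one unweighted oracle (\cref{thm:faultTolerantDO}) per class. Set $W_i = \{e \in E : w_e \in [2^{i} w_{\min}, 2^{i+1} w_{\min})\}$ for $i = 0, \dots, \lceil \log(w_{\max}/w_{\min})\rceil$. Let $G_i = (V, W_i)$ be the unweighted graph on class $i$, and instantiate the data structure of \cref{alg:buildDataStructure} on $G_i$ with the same fault parameter $f$. Since the classes are disjoint and each oracle uses $\widetilde{O}(n\sqrt{f})$ bits by \cref{lem:spaceBound}, the total space is $\widetilde{O}(n\sqrt{f}\log(w_{\max}/w_{\min}))$. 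This explains why deletions must carry their weight: a deleted pair $(e, w_e)$ tells us exactly which class $i$ it belongs to. Within that class, \cref{clm:findingEdge} then locates the expander component deterministically, so deletions are routed to the correct sketches.

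On a query with deletion set $F$, split $F$ into $F_i = F \cap W_i$, with $|F_i| \le f$. Run the update and reconstruction phase of \cref{alg:reportDistance} on each class, obtaining auxiliary graphs $H_i$. Give every edge of $H_i$ weight $2^{i+1} w_{\min}$ if it is a genuine recovered edge; for this I will just use the upper end of the bucket for all edges of $H_i$. Then take the weighted union $H = \bigcup_i H_i$ with these weights divided appropriately, so that genuine edges are never overestimated by more than a factor $2$. To keep the lower bound exact, set the weight of every edge in $H_i$ to $2^{i} w_{\min}$, the lower end. Output $C''\log(n)\log\log(n)\cdot 2\cdot \mathrm{dist}_H(u,v)$.

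The two bounds follow as in the unweighted case. For the lower bound, \cref{clm:Hcontains} applied per class gives $G_i - F_i \subseteq H_i$. Each edge of $G - F$ appears in $H$ with weight at most its true weight, so $\mathrm{dist}_H \le \mathrm{dist}_{G-F}$. For the upper bound, take a shortest $u$--$v$ path in $H$. Each edge of $H_i$ is either a genuine edge of $G_i - F_i$, of true weight at most $2\cdot 2^{i}w_{\min}$, or a clique edge between two vertices of degree at least $4f/D+1$ in some expander of $G_i$. In the clique case, \cref{lem:robustnessfDeletion} gives a path of $O(\log n\log\log n)$ hops inside $G_i - F_i$. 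Every hop has weight less than $2^{i+1}w_{\min}$, so this replacement path has weighted length $O(\log n \log\log n)\cdot 2^{i}w_{\min}$. Summing over path edges gives $\mathrm{dist}_{G-F} \le O(\log n\log\log n)\cdot \mathrm{dist}_H$.

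The main point needing care is that the robustness lemma is applied per class, which is valid because it only uses hop counts inside a single expander of $G_i$ and all of that expander's edges lie in the same weight bucket. The loss from bucketing is a constant factor of $2$, which is absorbed into the stretch.
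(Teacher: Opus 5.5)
Your proposal is correct and follows essentially the same route as the paper. You use geometric weight classes, one copy of the unweighted oracle per class with each deletion routed to its class via its supplied weight, and a union $H = \bigcup_i H_i$ in which every edge of $H_i$ gets the lower endpoint $2^i w_{\min}$ of its bucket; this gives $\mathrm{dist}_H \le \mathrm{dist}_{G-F}$, while each edge of $H$ expands to at most $O(\log n \log\log n)$ hops, each of weight less than $2^{i+1} w_{\min}$. The only cleanup needed is to drop the aside about upper-end weights in your second paragraph: the lower-end weighting you settle on, together with the extra factor $2$ in your output, is exactly what makes both inequalities hold.
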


\begin{proof}
    The algorithm is a simple modification to \cref{alg:buildDataStructure}. WLOG, we assume the minimum edge weight is $1$, and we create graphs $G_1, G_2, G_4, \dots G_{2^{\lfloor \log(w_{\max})\rfloor}}$, where $G_i$ contains only those edges of weight $[i, 2 \cdot i)$. Importantly, in $G_i$ however, we treat the edges as being \emph{unweighted}.

    Now, for each $G_i$, we construct the distance oracle as in \cref{alg:buildDataStructure}. By \cref{thm:faultTolerantDO}, this yields the space bound of $\widetilde{O}(n \sqrt{f} \cdot \log(w_{\max}/ w_{\min}))$ bits, as we have $\log(w_{\max}/ w_{\min}))$ many copies of the sketch from \cref{thm:faultTolerantDO}.

    Next, when performing deletions, for each edge $e$ and weight $w_e$ in $F$, we perform the deletion from the corresponding graph $G_i$ such that $w_e \in [i, 2i )$. Then, using \cref{thm:faultTolerantDO}, we recover the graph $H_i$ for $G_i$ such that, for any vertices $u, v \in V$, it is the case that 
    \begin{align}\label{eq:compareDistances}
    \Omega(\mathrm{dist}_{G_i - F \cap G_i}(u,v) / \log^2(n)) \leq \mathrm{dist}_{H_i}(u,v)) \leq \mathrm{dist}_{G_i - F \cap G_i}(u,v),
    \end{align}
    as guaranteed by \cref{clm:distanceUpperBound} and \cref{cor:distanceLowerBound}.

    Finally, we now create the graph $H = H_1 \cup 2 \cdot H_2 \cup 4 \cdot H_4 \cup \dots \cup 2^{\lfloor \log(w_{\max})\rfloor} \cdot H_{2^{\lfloor \log(w_{\max})\rfloor}}$, i.e., where we weight the edges in $H_i$ by a factor of $i$. Now, we first claim that for any vertices $u, v$,
    \[
    \mathrm{dist}_H(u,v) \leq \mathrm{dist}_{G - F}(u,v).
    \]
    Indeed, consider any edge used in the path between $u,v$ in $G$, and let us denote this edge by $(v_1, v_2)$. Suppose the weight of this edge is $w_{v_1, v_2}$, and therefore is in graph $G_i - F \cap G_i$. Then, we know that $\mathrm{dist}_{G_i - F \cap G_i}(v_1,v_2) = 1$, and by the above, then $\mathrm{dist}_{H_i}(v_1,v_2) = 1$ as well. Now, when we increase the weight of edges by $i$, we see that $\mathrm{dist}_H(v_1, v_2) \leq i$, as the edges in graph $H_i$ are given weight $i$. At the same time, because $(v_1, v_2)$ is in $G_i$, $w_{v_1, v_2} \geq i$, it is the case that using the edge $(v_1, v_2)$ in $H$ contributes less weight to the path than in $G - F$.

    Next, we show that 
    \[
    \mathrm{dist}_{G - F}(u,v) \leq O(\log(n)\log\log(n)) \cdot \mathrm{dist}_H(u,v).
    \]

    Again, we consider the shortest path between $u, v$, and suppose that it uses an edge $(v_1, v_2) \in H$. Suppose further that $(v_1, v_2) \in H_i$. Then, it must be the case that $v_1, v_2$ are connected by a path of length $\leq O(\log(n)\log\log(n))$ in $G_i - F \cap G_i$ by \cref{eq:compareDistances}. Because every edge in $G_i - F \cap G_i$ has weight $\leq 2i$, the length of this path in $G$ is bounded by $O(i \log(n)\log\log(n))$. Likewise, the edge $(v_1, v_2)$ has length $\geq i$ in $H$, and so we indeed see that the contribution to the distance from the segment $v_1, v_2$ in $G-F$ is at most $O(\log(n)\log\log(n))$ times that of the contribution in $H$. 

    Together, the above imply that
    \[
    \Omega(\mathrm{dist}_{G-F}(u,v) / \log(n)\log\log(n)) \leq \mathrm{dist}_{H}(u,v)) \leq \mathrm{dist}_{G-F}(u,v),
    \]
    and so by computing the shortest paths in $H$, and then scaling the reported distances by $O(\log(n)\log\log(n))$, we obtain the stated claim. 
\end{proof}

\section{Achieving Constant Stretch via Covering by $2$-Hop Stars}\label{sec:constantStretch}

In this section, we show how to achieve \emph{constant} stretch (in fact, the stretch will be $7$) while storing $o(nf)$ bits of space. In particular, we prove the following (which is a restatement of \cref{thm:constantStretchIntro}):

\begin{theorem}\label{thm:constantStretch}
    There is an $f$-fault tolerant deterministic distance oracle with stretch $7$, which, for $n \leq f \leq n^{3/2}$, uses $\widetilde{O}(n^{3/2}f^{1/3})$ bits of space.
\end{theorem}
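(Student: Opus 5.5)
The plan is to replace expanders in the framework of \cref{thm:faultTolerantDO} with a redundant cover by $2$-hop stars. Fix a threshold $d$, to be optimized at the end. Iteratively peel vertices of degree $<d$ and store all peeled edges explicitly; this costs $\widetilde{O}(nd)$ bits. The remaining core has minimum degree $d$.

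\textbf{Existence of stars.} On the core I would first prove the existence lemma from the overview, in a form rooted at a prescribed vertex. For every vertex $v$ there is a $1$-hop or $2$-hop star rooted at $v$ with minimum degree $d^* = \Omega(d^2/n)$. A star rooted at $v$ uses only the edges between $v$ and $L_1$ and between $L_1$ and $L_2$. The proof grows $\Gamma(v)$, then $\Gamma^2(v)$, and peels vertices of degree $<d^*$. The key step is that $v$ is never peeled: any surviving $L_2$ vertex forces $d^*$ surviving $L_1$ vertices, each adjacent to $v$.

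\textbf{Redundant cover.} I would then build a family $H_1,\dots,H_k$ of stars, together with a leftover edge set stored explicitly, such that every core edge $e$ not stored explicitly lies in at least $r := \lceil 2f/d^* \rceil + 1$ stars with pairwise distinct roots. The construction is greedy. Maintain a multiplicity counter on each edge. Repeatedly pick a root, and find a star of minimum degree $d^*$ in the subgraph of edges that have not yet been covered $r$ times and are not already covered by a star with this root. Then increment the counters. When no such star exists, the residual graph has low minimum degree in the relevant sense, so peel again and store explicitly.

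For each star we store four things, mirroring \cref{alg:buildDataStructure}:
\begin{itemize}
\item[(i)] its root and the layers $L_1,L_2$ (star edges are exactly the $G$-edges between $v$ and $L_1$ and between $L_1$ and $L_2$, so membership of a deleted edge can be decided deterministically, analogously to \cref{clm:findingEdge});
\item[(ii)] all degrees;
\item[(iii)] a sketch from \cref{clm:efficientUniqueSum} with threshold $k \approx 5f/d^*$ for every non-root vertex;
\item[(iv)] the root degree.
\end{itemize}
The space is $\widetilde{O}(f/d^*)$ bits per (vertex, star) incidence plus $\widetilde{O}(nd)$. Summed over incidences, the total should be $\widetilde{O}(nd + (\#\text{incidences})\cdot f/d^*)$. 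I expect the incidence count to be controlled by $r$ times roughly $n$ per root, up to the $n$ roots. Choosing $d$ with $d^* = d^2/n$ and $(d^*)^2 \ge 10f$, and balancing the terms, should give $\widetilde{O}(n^{3/2}f^{1/3})$ in the range $n\le f\le n^{3/2}$.

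\textbf{Main obstacle.} The hardest step is bounding the total incidence count of the cover, i.e.\ showing that the redundant cover does not blow up the space. My first attempt would be a charging argument: each star has $\Omega(d^*|H|)$ edges, and each edge is charged at most $r$ times, so $\sum_i |H_i| \le r|E|/d^*$. That bound is too weak when $|E|\approx n^2$. I would therefore try rooting one star per (vertex, round) and bounding the number of rounds by $O(r\log n)$, which gives $\widetilde{O}(nr)$ stars of size $\le n$. The exponent bookkeeping at this step is where I am least sure.

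\textbf{Query.} Given $F$, update the degrees and sketches star by star. Call star $H_i$ \emph{intact} if its root lost at most $d^*/2$ star edges. For each intact star, recover the full remaining neighborhood (within $H_i$) of every vertex of residual degree $<5f/d^*$ and add those edges with weight $1$. Then add an edge of weight $3$ from the root to every high-degree $L_1$ vertex and weight $4$ for every high-degree $L_2$ vertex. Finally add the explicit edges minus $F$. Return $\mathrm{dist}_H(a,b)$ in this weighted auxiliary graph.

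By \cref{lem:highDegreeConnectRootDepth2Intro}, which needs $d^*/2\ge 5f/d^*$, every weighted edge is realized by a path in $G-F$ of at most its weight. Hence $\mathrm{dist}_{G-F}\le \mathrm{dist}_H$.

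Conversely, take any $e=(x,y)\in G-F$. Since $|F|\le f$, at most $2f/d^*$ roots can lose more than $d^*/2$ edges, so at least one of the $r$ stars containing $e$ is intact. In that star, either $e$ is recovered or both endpoints are high degree. Since $L_2$ is independent in the star, at most one endpoint lies in $L_2$, so they are joined through the root by weight at most $3+4=7$. This gives $\mathrm{dist}_H\le 7\,\mathrm{dist}_{G-F}$, i.e.\ stretch $7$.
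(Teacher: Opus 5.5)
Your query procedure is sound, and using weighted root edges of length $3$ and $4$ gives stretch exactly $7$. That is cleaner than putting a clique on all high-degree star vertices, since two high-degree $L_2$ vertices are only guaranteed to be within $8$. One small correction: each deletion hits two endpoints, so up to about $4f/d^*$ roots can lose more than $d^*/2$ edges, and you need $r>4f/d^*$.

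\textbf{The gap is the space bound,} which is the whole content of the theorem, and neither of your routes reaches $\widetilde{O}(n^{3/2}f^{1/3})$. With a single peeling threshold $d$, the core can have $\Theta(n^2)$ edges while its minimum degree is only $d$. Charging then gives only $O(rn^2/d^*)$ incidences at $\widetilde{O}(f/d^*)$ bits each. With $d^*=d^2/n$ the total is $\widetilde{O}(nd+n^5f^2/d^6)$, which at best is $\widetilde{O}(n^{11/7}f^{2/7})$. That exceeds $n^{3/2}f^{1/3}$ for every $f<n^{3/2}$; for example, at $f=n$ it gives $n^{13/7}$ versus $n^{11/6}$. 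The ``rounds'' fallback is worse: $\widetilde{O}(nr)$ stars of size up to $n$ gives an incidence bound of $n^2r\ge n^2$, which is already trivial.

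\textbf{The missing idea} is a geometric schedule of thresholds $d=n/2,n/4,\dots,d_{\min}=n^{1/2}f^{1/3}\log n$ that couples edge count to minimum degree. Let $G_{\mathrm{unsat}}$ be the edges covered fewer than $r=\Theta(f^{1/3})$ times. At scale $d$, build stars only inside $\mathrm{Peel}(G_{\mathrm{unsat}},d)$. Because peeling at $2d$ already emptied $G_{\mathrm{unsat}}$, it has at most $2nd$ edges. Every star built there has minimum degree at least $d^2/(2n)$, so charging gives $O(r\cdot nd\cdot n/d^2)=O(rn^2/d)$ incidences at this scale. Each incidence costs $\widetilde{O}(f^{1/3})$ bits. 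This is valid because $d^*\ge d_{\min}^2/(2n)=\Omega(f^{2/3}\log^2 n)$, which makes $5f/d^*=O(f^{1/3})$ and $(d^*)^2\ge 10f$. The sum over scales is geometric and dominated by $d_{\min}$, giving $\widetilde{O}(n^2f^{2/3}/d_{\min})=\widetilde{O}(n^{3/2}f^{1/3})$. The edges left after the last scale number at most $nd_{\min}$, which fits the same budget.

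\textbf{A second problem is your mechanism for distinct roots.} For root $v$ you exclude edges already covered by a star rooted at $v$, so the host graph depends on $v$. The star-existence lemma needs minimum degree $d$ in that host graph, so it can fail at every root while $G_{\mathrm{unsat}}$ is still dense. ``Peel again and store explicitly'' then gives no $\widetilde{O}(nd)$ bound on the leftover. Reusing roots across rounds also leaves distinctness among the $r$ stars covering a fixed edge unargued.

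The paper avoids this by never reusing a root. Every star contains $\Omega((d^2/n)^2)$ edges, so scale $d$ creates $O(rn^3/d^3)$ stars. Over all scales this totals $O(n^{3/2}/(f^{2/3}\log^3 n))$, which is below $d_{\min}$ once $f\ge n$. Any nonempty peeled graph has more than $d_{\min}$ vertices, so a fresh root always exists, and the lemma applies to it directly in a graph of minimum degree $d$. This is exactly where the hypothesis $f\ge n$ is used, and your proposal never uses it.
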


Note that once $f$ is larger than $n^{3/2}$ the space complexity is $O(n^2)$ which is trivial. Otherwise, for instance, when $f = n$, we achieve non-trivial space complexity:

\begin{corollary}
    There is an $n$-fault tolerant distance oracle with stretch $7$, which uses $\widetilde{O}(n^{11/6})$ bits of space.
\end{corollary}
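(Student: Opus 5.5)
This follows directly from \cref{thm:constantStretch} by instantiating its fault parameter at $f = n$. The plan has two steps: first check that $f = n$ is admissible, then substitute it into the space bound.

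\textbf{Admissibility.} \cref{thm:constantStretch} requires $n \leq f \leq n^{3/2}$. The choice $f = n$ meets the lower end with equality, and $n \leq n^{3/2}$ holds for all $n \geq 1$. So the theorem applies and gives an $n$-fault tolerant deterministic distance oracle. It answers every query with a failure set $F \subseteq E$, $|F| \leq n$, with stretch $7$ in the sense of \cref{def:distanceOracle}. The stretch guarantee transfers verbatim, because the corollary asks for nothing beyond what the theorem supplies at this parameter value.

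\textbf{Space bound.} Substituting $f = n$ into $\widetilde{O}(n^{3/2} f^{1/3})$ gives
\[
\widetilde{O}\bigl(n^{3/2} \cdot n^{1/3}\bigr) = \widetilde{O}\bigl(n^{3/2 + 1/3}\bigr) = \widetilde{O}\bigl(n^{11/6}\bigr).
\]
The polylogarithmic factors hidden in $\widetilde{O}$ are unaffected by the substitution.

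There is no real obstacle: all the substantive work, namely the $2$-hop star covering and the robustness argument of \cref{lem:highDegreeConnectRootDepth2Intro}, lives inside \cref{thm:constantStretch}. The only point worth stressing is that $n^{11/6} = o(n^2)$. This contrasts with the folklore $\Omega(nf) = \Omega(n^2)$ lower bound for $n$-fault tolerant spanners described in \cref{sec:background}, and is what makes the corollary nontrivial.
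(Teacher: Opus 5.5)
Your proposal is correct and matches the paper's route exactly. The corollary is obtained by instantiating \cref{thm:constantStretch} at $f = n$, which lies in the admissible range $n \leq f \leq n^{3/2}$, and computing $n^{3/2}\cdot n^{1/3} = n^{11/6}$; the stretch-$7$ guarantee carries over unchanged.
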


\begin{remark}
    For comparison, any $f$-fault tolerant spanner for $f \in [n, n^{3/2}]$ requires $\Omega(n^2)$ edges (and thus bits) to store. \cref{thm:constantStretch} shows that in the setting of deterministic fault-tolerant distance oracles \emph{even with  constant-stretch}, an analogous lower bound does not hold, and we can in fact report distances in $o(n^2)$ bits of space.
\end{remark}

\subsection{Constructing High Degree $2$-Hop Stars}

To start, we present a simple graph structure with high minimum degree and small diameter, which we refer to as a \emph{$2$-hop star}:

\begin{definition}
    Given a graph $G = (V, E)$, a $2$-hop star for a vertex $v \in V$, is a set of vertices $\{v, L_1, L_2\}$, where $L_1, L_2 \subseteq V$, $L_1 \cap L_2 = \emptyset$, $L_1 \subseteq \Gamma(v)$, and $L_2 \subseteq \Gamma(L_1)$. That is to say, $L_1$ is a subset of the neighbors of $v$, and $L_2$ is a subset of the neighbors of $L_1$. A $2$-hop star consists only of the edges $E[\{v\}, L_1]$ and $E[L_1, L_2]$. Importantly, it does not contain the edges that \emph{are entirely contained inside} $L_1$ or $L_2$.
\end{definition}

We will also use $1$-hop stars:

\begin{definition}
    Given a graph $G = (V, E)$, a $1$-hop star for a vertex $v \in V$ is a set of vertices $v, L_1$ such that $L_1 \subseteq \Gamma(v)$. 
\end{definition}

With this, we can state our first lemma:

\begin{lemma}\label{lem:highDegreeLowDiam}
    Let $G$ be a graph on $n$ vertices with minimum degree $d$. Then, for any vertex $v \in V$, one can find either:
    \begin{enumerate}
        \item A $1$-hop star $S = \{v, L_1\}$ such that $G[S]$ has minimum degree $\geq d / 10$.
        \item A $2$-hop star $S = \{v, L_1, L_2\}$ such that $|L_1| \geq \frac{d^2}{2n}$ and the induced bipartite graph between $L_1$ and $L_2$ has minimum degree $\geq \frac{d^2}{2 n}$.
    \end{enumerate}
\end{lemma}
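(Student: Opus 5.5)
Fix $v$ and set $L_1 = \Gamma(v)$, so $|L_1| \geq d$. The proof splits on the density inside the closed neighborhood $N = \{v\} \cup \Gamma(v)$.

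\textbf{Case 1: $G[N]$ contains a dense core that keeps $v$.} Peel $G[N]$: repeatedly delete any vertex other than $v$ whose current degree in $G[N]$ is below $d/10$. Let $L_1' \subseteq \Gamma(v)$ be the survivors. If $|L_1'| \geq d/10$, then $S = \{v, L_1'\}$ is a $1$-hop star with the required minimum degree. Every survivor $u \in L_1'$ has degree $\geq d/10$ by the stopping rule. The root $v$ is adjacent to all of $L_1'$, so its degree is $|L_1'| \geq d/10$.

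\textbf{Case 2: few survivors, $|L_1'| < d/10$.} Here at least $|\Gamma(v)| - d/10 \geq 9d/10$ neighbors $u$ were peeled. A peeled $u$ had fewer than $d/10$ remaining neighbors in $N$ at its removal time, but it may have had more neighbors in $N$ earlier.

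\textbf{Main obstacle.} We must show each $u$ has many neighbors outside $N$, without relying on the peeling order.

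\emph{First attempt.} Consider the bipartite graph $B$ between $\Gamma(v)$ and $V \setminus N$. Orient each edge inside $G[\Gamma(v)]$ toward the endpoint peeled first. Each peeled vertex receives fewer than $d/10$ edges from later vertices, so the number of edges in $G[\Gamma(v)]$ is below $|\Gamma(v)| \cdot d/10$ plus a survivor contribution of at most $(d/10)^2$. A Markov/averaging step then shows at least half of $\Gamma(v)$ has at most $d/2$ neighbors inside $N$.

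\emph{Cleaner alternative.} Simply take $L_1 = \{u \in \Gamma(v) : \deg_{G[N]}(u) < d/2\}$, and split on whether $|L_1| \geq d/2$.
\begin{itemize}
\item If $|L_1| < d/2$, then more than half of $\Gamma(v)$ has inside-degree at least $d/2$. This gives a dense induced subgraph on $N$, which we would peel at threshold $d/10$ to obtain Case 1. The difficulty is that peeling could remove $v$, so the root requirement needs care.
\item If $|L_1| \geq d/2$, each $u \in L_1$ has at least $d/2$ edges to $L_2 := V \setminus N$.
\end{itemize}

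\textbf{Building the $2$-hop star.} In the bipartite graph $B$ between $L_1$ and $L_2$ there are at least $d^2/4$ edges on at most $n$ vertices. Peel $B$ at threshold $t = d^2/(2n)$ on both sides. The number of removed edges is at most $t \cdot n \cdot$ (small constant). To keep the removed edges below the $d^2/4$ available, the thresholds may need adjusting, e.g. take $L_1 = \{u : \deg_{G[N]}(u) \le d/4\}$. This keeps a nonempty bipartite core with minimum degree $\geq d^2/(2n)$ on both sides.

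Finally, since each surviving $L_2$-vertex has at least $d^2/(2n)$ neighbors in the surviving $L_1$, we get $|L_1| \geq d^2/(2n)$. All surviving $L_1$-vertices are adjacent to $v$, so $\{v, L_1, L_2\}$ is a valid $2$-hop star.

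I expect the constant bookkeeping in the case split, so that either the $1$-hop core keeps $v$ or the bipartite peeling has enough edge surplus, to be the fiddly step. The intended route is to define the case split directly by the peeling outcome, as in Cases 1 and 2.
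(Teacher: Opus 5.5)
Your Case 1 is correct. Your overall plan also matches the paper's: split on the outcome of peeling $G[N]$, $N=\{v\}\cup\Gamma(v)$, with $v$ protected, and otherwise peel a bipartite graph between $\Gamma(v)$ and $V\setminus N$. The paper instead splits on whether $|G[\Gamma(v)]|\ge |\Gamma(v)|\,d/10$.

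The gap is in Case 2: you never show that the bipartite graph has enough edges to support a core of minimum degree $d^2/(2n)$. Peeling at threshold $t=d^2/(2n)$ on up to $n$ vertices can delete almost $nt=d^2/2$ edges, so you need at least $d^2/2$ cross edges. Your route only yields $d^2/4$: you require each $u\in L_1$ to have many neighbors outside $N$, then use Markov to keep half of $\Gamma(v)$ with inside-degree at most $d/2$. You notice the shortfall and defer it to ``the thresholds may need adjusting.'' The adjustment you suggest, filtering at $\deg_{G[N]}(u)\le d/4$, shrinks the guaranteed size of $L_1$ and makes things worse. In fact no filter level rescues a Markov-based argument: with average inside-degree about $d/5$, the best it guarantees is roughly $0.3\,d^2$ cross edges when $|\Gamma(v)|=d$. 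So as written you prove the lemma only with a constant smaller than $d^2/(2n)$. The ``cleaner alternative'' also leaves its dense subcase, where the root may be peeled, unresolved.

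The missing point is that no per-vertex bound is needed. \cref{clm:edgesToMinDegree} uses only the total edge count, and members of $\Gamma(v)$ with few outside neighbors are simply peeled away. Your first attempt already supplies the count:
\begin{enumerate}
\item In Case 2 the survivor set is actually empty, since every survivor has at least $d/10$ neighbors in $\{v\}\cup L_1'$, so a nonempty $L_1'$ has size at least $d/10$.
\item Orienting each edge of $G[\Gamma(v)]$ toward its first-peeled endpoint gives $|G[\Gamma(v)]|<|\Gamma(v)|(d/10-1)$. The $-1$ holds because the degree at peeling time includes the edge to $v$.
\item Taking all of $\Gamma(v)$ as the left side,
\[
|E[\Gamma(v),V\setminus N]|\;\ge\;|\Gamma(v)|(d-1)-2\,|G[\Gamma(v)]|\;>\;\tfrac{4}{5}\,|\Gamma(v)|\,d\;\ge\;\tfrac{d^2}{2}.
\]
\item Peeling this bipartite graph at threshold $d^2/(2n)$ therefore leaves a nonempty core with minimum degree at least $d^2/(2n)$ on both sides, and your final paragraph finishes the proof.
\end{enumerate}
This global count is exactly the paper's Case 2 computation.
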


Before proving this, we first recall a some basic fact about graphs:

\begin{claim}\label{clm:edgesToMinDegree}
    Let $G$ be a graph on $n$ vertices with $n \cdot d'$ edges. Then, there exists an induced subgraph of $G$ with minimum degree $\geq d'$.
\end{claim}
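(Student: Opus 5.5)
The plan is the standard peeling argument. Start from $G$ itself and repeatedly delete any vertex whose current degree in the remaining induced subgraph is strictly less than $d'$. Deleting a vertex whose current degree is below $d'$ removes at most $d'-1 < d'$ edges.

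Suppose, for contradiction, that the process deletes every vertex. The first $n-1$ deletions each remove fewer than $d'$ edges. The last vertex is isolated at that point, so its deletion removes no edges. Summing these losses, the total number of edges removed is strictly less than $(n-1)d' \le n d'$. But every edge of $G$ is eventually removed, so $G$ would have fewer than $n d'$ edges, contradicting the hypothesis $|E| = n \cdot d'$.

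Hence the process halts on a nonempty vertex set $V'$. By the stopping rule, every vertex of $V'$ has degree at least $d'$ in $G[V']$, so $G[V']$ is the desired induced subgraph. (For $d' = 0$ the statement is trivial; any single vertex works.)

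There is no real obstacle. The only care needed is the strictness in the count: deleted vertices must have degree strictly less than $d'$, so that $n$ deletions cannot account for all $n d'$ edges. Equivalently, one can note that the edge-to-vertex ratio $|E|/|V|$ never decreases when a vertex of degree below $d'$ is removed, so the set cannot be emptied. I would present the counting version since it is the cleaner of the two.
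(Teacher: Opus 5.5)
Your proof is correct and is the same peeling argument the paper uses: repeatedly delete a vertex of current degree below $d'$, note that each deletion removes fewer than $d'$ edges, and conclude that the process cannot remove all $n d'$ edges, so it halts on a nonempty induced subgraph of minimum degree at least $d'$. One small nit: ``at most $d'-1$'' needs $d'$ to be an integer, which fails in the paper's applications such as $d' = d^2/(2n)$; your count only uses the strict bound $< d'$, so the argument is unaffected.
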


\begin{proof}
    Indeed, starting with the original, entire graph $G$, while the minimum degree is $< d'$, we remove a vertex of degree $< d'$. We continue repeating this procedure until all remaining vertices have degree $\geq d'$. Observe that each time a vertex is removed, we lose $< d'$ many edges from the graph. Thus, across $\leq n$ peels, it is only possible to peel $< n d'$ many edges from the graph, meaning that the resulting graph must be non-empty (and thus, it does indeed have minimum degree $\geq d'$).
\end{proof}

Now, we proceed to a proof of \cref{lem:highDegreeLowDiam}: 

\begin{proof}[Proof of \cref{lem:highDegreeLowDiam}.]
    Indeed, let $G$ be a graph with minimum degree $d$, and let $v \in V$ be arbitrary. Now, to start, let us consider the set of all neighbors of $v$, i.e., $\Gamma(v)$. We proceed by cases:
    \begin{enumerate}
        \item Suppose that $G[\Gamma(v) ]$ contains $\geq (|\Gamma(v)|) \cdot d / 10$ many edges. Then, by \cref{clm:edgesToMinDegree}, we know that there is some subgraph $L_1 \subseteq \Gamma(v)$ such that $G[L_1]$ has minimum degree $\geq d/10$. Because $G[L_1]$ has minimum degree $\geq d/10$, we also know that $|L_1| \geq d/10$, and thus the graph $G[\{v \} \cup L_1]$ also has minimum degree $\geq d/10$. This is because $v$ has an edge to every vertex in $L_1$ (as $L_1 \subseteq \Gamma(v)$), while the other vertices satisfy the minimum degree requirement by virtue of $G[L_1]$'s minimum degree.
        \item Otherwise, suppose that $G[\Gamma(v) ]$ contains $< (|\Gamma(v)|) \cdot d / 10$ many edges. This instead means that among the edges leaving vertices in $\Gamma(v)$, at least $(d - d/10 -1) \cdot |\Gamma(v)|$ (the $1$ is for the edge to $v$) many edges go to $V \setminus (\{v \} \cup \Gamma(v))$. Let us use $V'$ to denote these vertices in $\Gamma(\Gamma(v))$ which are not $v$ or $\Gamma(v)$. In particular, we can build a bipartite graph between $\Gamma(v)$ and $V'$ with the promise that this bipartite graph has at least $|\Gamma(v)| \cdot (d - d/10 -1) \geq d^2 / 2$ many edges, and $\leq n$ many vertices. By \cref{clm:edgesToMinDegree}, there exists an induced subgraph of this bipartite graph with minimum degree $\geq \frac{d^2}{2n}$. Let us denote the left and right sides of this induced bipartite subgraph by $L_1$ and $L_2$.

        Importantly, because the minimum degree of $L_2$ is $\geq \frac{d^2}{2n}$, this implies that $|L_1| \geq \frac{d^2}{2n}$, which in turn implies that $v$ has $\geq \frac{d^2}{2n}$ many edges to $L_1$. Thus, the induced subgraph on $S = \{v, L_1, L_2\}$ has minimum degree $\geq \frac{d^2}{2n}$, and likewise, the induced bipartite graph between $L_1$ and $L_2$ has minimum degree $\geq \frac{d^2}{2 n}$.
    \end{enumerate}

    Because the above cases are collectively exhaustive, this yields our desired lemma.
\end{proof}

In fact, the above claim is even implementable in polynomial time, as it relies only on iteratively peeling vertices and computing degrees in induced subgraphs:

\begin{claim}
     Let $G$ be a graph on $n$ vertices with minimum degree $d$. For any vertex $v \in V$, one can find the promised induced subgraphs of \cref{lem:highDegreeLowDiam} in polynomial time.
\end{claim}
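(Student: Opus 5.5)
The plan is to turn the existential proof of \cref{lem:highDegreeLowDiam} into an algorithm by checking that each step it uses is a degree computation, an edge count, or an application of the peeling procedure of \cref{clm:edgesToMinDegree}. All of these are polynomial-time operations. Given $G$ and $v$, we first compute $\Gamma(v)$ in $O(n)$ time. We then count the edges of $G[\Gamma(v)]$ in $O(n^2)$ time and compare this count with $|\Gamma(v)| \cdot d/10$. This comparison decides which of the two cases of the proof we are in.

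\textbf{Case 1.} We run the peeling procedure from the proof of \cref{clm:edgesToMinDegree} on $G[\Gamma(v)]$ with threshold $d/10$. While some vertex has current degree $< d/10$, we delete it and update its neighbors' degrees. There are at most $n$ deletions, each costing $O(n)$ time to find and update, so the procedure is polynomial. The argument in \cref{clm:edgesToMinDegree} shows that at most $< |\Gamma(v)| \cdot d/10$ edges are ever removed. Hence the surviving set $L_1$ is nonempty and has minimum degree $\ge d/10$, and we output $\{v, L_1\}$. That $G[\{v\}\cup L_1]$ has minimum degree $\ge d/10$ follows exactly as in the lemma's proof, and needs no further computation.

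\textbf{Case 2.} We explicitly form $V' = \Gamma(\Gamma(v)) \setminus (\{v\}\cup\Gamma(v))$ and the bipartite graph $B = E[\Gamma(v), V']$, which takes $O(n^2)$ time. We then run the same peeling procedure on $B$ with threshold $d^2/(2n)$, where the degrees are bipartite degrees in the current induced subgraph. The edge-count bound $|E(B)| \ge d^2/2$ from the lemma's proof, together with \cref{clm:edgesToMinDegree}, guarantees that the process ends with a nonempty induced bipartite subgraph with sides $L_1 \subseteq \Gamma(v)$ and $L_2 \subseteq V'$ and minimum degree $\ge d^2/(2n)$. We output $\{v, L_1, L_2\}$. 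The required properties, including $|L_1| \ge d^2/(2n)$, follow directly from the existential proof.

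The main point needing care is that the peeling must be an actual deterministic procedure rather than a bare existence claim, and that its guarantee (the output is nonempty) rests only on edge counting. Both hold, since the proof of \cref{clm:edgesToMinDegree} is already phrased as this algorithm. The overall running time is therefore $O(n^2)$ per root vertex $v$.
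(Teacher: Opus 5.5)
Your proposal is correct and follows the paper's own justification: the paper simply notes that the proof of \cref{lem:highDegreeLowDiam} uses only degree and edge-count computations in induced subgraphs together with the peeling procedure of \cref{clm:edgesToMinDegree}. You make that observation explicit, with the case test against $|\Gamma(v)|\cdot d/10$, the two peeling runs whose nonemptiness follows from edge counting, and an $O(n^2)$ running-time bound per root.
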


\subsection{Robustness of High Degree Stars Under Edge Deletions} 

As we shall see, the utility of these high degree stars is that they ensure that even under (potentially a significant number) of edge deletions, vertices with mildly high degree still remain connected. We start by proving this in the setting of $1$-hop stars, and then generalize this to $2$-hop stars. Before we begin, we also remark that we will use the term \textbf{root} to refer to the vertex $v$ from which the resulting star is grown. 

\subsubsection{Robustness of High Degree $1$-Hop Stars Under Edge Deletions}

Our starting claim is the following:

\begin{claim}\label{clm:highDegreeConnectRoot}
    Let $H$ be a $1$-hop star with root $v$, and vertex set $\{v, L_1\}$ such that $H$ has minimum degree $d$. Now, consider a set $F$ of $\leq f$ edge deletions such that at most $d/2$ of $v$'s edges are deleted. Then, for any vertex $u$ with $\mathrm{deg}_{H - F}(u) \geq \frac{5f}{d}$, $u$ has a path of length $\leq 3$ to $v$ in $H - F$.
\end{claim}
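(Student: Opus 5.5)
The plan is to mirror the averaging argument sketched in the technical overview for \cref{lem:highDegreeConnectRootDepth2Intro}, specialized to the simpler $1$-hop setting. Here $H = G[\{v\} \cup L_1]$ is an induced subgraph with minimum degree $d$, and every $u \in L_1$ is adjacent to $v$ in $H$. Fix a vertex $u$ with $\mathrm{deg}_{H-F}(u) \geq \frac{5f}{d}$.

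If $u = v$ or $(u,v) \in H - F$, we are done with a path of length $\leq 1$. So assume $u \in L_1$ and $(u,v) \in F$. Let $N = \Gamma_{H-F}(u)$, so $|N| \geq 5f/d$; all of these neighbors lie in $L_1$ (or are $v$ itself, which again finishes immediately). Each $w \in N$ has $\mathrm{deg}_H(w) \geq d$, hence
\[
\sum_{w \in N} \mathrm{deg}_H(w) \geq 5f.
\]
Each deleted edge lowers this sum by at most $2$, so
\[
\sum_{w \in N} \mathrm{deg}_{H-F}(w) \geq \sum_{w\in N}\mathrm{deg}_H(w) - 2f > \tfrac12 \sum_{w \in N}\mathrm{deg}_H(w).
\]
Therefore some $w \in N$ satisfies $\mathrm{deg}_{H-F}(w) > \mathrm{deg}_H(w)/2 \geq d/2$.

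If $(w,v)$ survives, the path $u \to w \to v$ has length $2$. Otherwise, $w$ has more than $d/2$ surviving neighbors, all in $L_1 \cup \{v\}$. At most $d/2$ vertices of $L_1$ have lost their edge to $v$, by the hypothesis on the root. Hence some surviving neighbor $y$ of $w$ satisfies $(y,v) \in H - F$, giving the path $u \to w \to y \to v$ of length $3$ (and if $y = v$ it is shorter).

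The main thing to check carefully is the averaging step. It should use the factor $2$ for edges with both endpoints in $N$, and the constant $5$ leaves room for this, since $5f - 2f > 5f/2$. It also needs a strict inequality when comparing $\mathrm{deg}_{H-F}(w)$ against the at most $d/2$ ``orphaned'' $L_1$ vertices. We need strictly more than $d/2$ surviving neighbors of $w$ in order to find one still attached to $v$, and this is exactly what the strict averaging gives.
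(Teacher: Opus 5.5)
Your proposal is correct and follows essentially the same argument as the paper. The paper's proof has the same steps: a set $\mathrm{BAD}$ of at most $d/2$ vertices of $L_1$ that lost their edge to $v$, then averaging over $u$'s surviving neighbors with a loss of at most $2f$ to find a $w$ with more than $d/2$ surviving neighbors, and finally a neighbor of $w$ outside $\mathrm{BAD}$, giving the path $u \to w \to y \to v$ of length $3$.
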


\begin{proof}
    Notationally, we let $\mathrm{BAD}$ denote the set of all vertices in $L_1$ which do not have an edge to $v$ in $H - F$. From the above, we know that $|\mathrm{BAD}| \leq \frac{d}{2}$.
    
    Now, consider such a vertex $u$ in $H - F$.    
    To start, if $u$ has a direct edge to $v$ (i.e., $u$ is not in $\mathrm{BAD}$), then we are immediately done. Otherwise, we know that $u$ has $\geq \frac{5f}{d}$ many other neighbors in $L_1$, which we denote by $\Gamma'(u)$. Thus, in $H$ (before deletions) the total sum of degrees of vertices in $\Gamma'(u)$ is at least $5f$. Because $F$ only constitutes $f$ deletions, we know that 
    \[
    \sum_{w \in \Gamma'(u)} \mathrm{deg}_{H - F}(w) \geq  \left ( \sum_{w \in \Gamma'(u)} \mathrm{deg}_{H }(w) \right ) - 2f > \frac{\left ( \sum_{w \in \Gamma'(u)} \mathrm{deg}_{H }(w) \right )}{2}.
    \]
    Thus, there must be some vertex $w \in \Gamma'(u)$ such that $\mathrm{deg}_{H - F}(w) > \mathrm{deg}_{H}(w)/2$. If $w$ has an edge directly to $v$, we are of course immediately done. Otherwise, this implies that $w$ has $> d/2$ many neighbors in $H - F$; at the same time, we know that $|\mathrm{BAD}| \leq \frac{d}{2}$. This means there must be some vertex $\ell \in \Gamma_{H - F}(w)$ such that $\ell \notin \mathrm{BAD}$. So, we recover a path from $u \rightarrow w \rightarrow \ell \rightarrow v$ of length $3$.
\end{proof}

As a corollary of the above claim, we have the following:

\begin{corollary}
    Let $H$ be a $1$-hop star with root $v$, and vertex set $\{v, L_1\}$ such that $H$ has minimum degree $d$. Now, consider a set $F$ of $\leq f$ edge deletions such that at most $d/2$ of $v$'s edges are deleted. Then, for any edge $(u,w) \in H$, either:
    \begin{enumerate}
    \item At least one of $u, w$ has degree $< \frac{5f}{d}$ in $H - F$. 
    \item $u, w$ remain connected by a path of length $\leq 6$ in $H - F$.
    \end{enumerate}
\end{corollary}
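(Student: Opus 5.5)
This corollary should follow directly from \cref{clm:highDegreeConnectRoot} by routing through the root $v$. Fix an edge $(u,w) \in H$, and suppose the first alternative fails. Then both $\mathrm{deg}_{H-F}(u) \geq \frac{5f}{d}$ and $\mathrm{deg}_{H-F}(w) \geq \frac{5f}{d}$. The hypotheses of \cref{clm:highDegreeConnectRoot} are exactly the ones assumed here: $H$ is a $1$-hop star with minimum degree $d$, $|F| \leq f$, and at most $d/2$ of $v$'s edges are deleted. So the claim applies to each endpoint separately.

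Applying it gives a path $P_u$ from $u$ to $v$ of length at most $3$ in $H - F$, and a path $P_w$ from $w$ to $v$ of length at most $3$ in $H - F$. Concatenating $P_u$ with the reverse of $P_w$ yields a $u$--$w$ walk in $H - F$ of length at most $6$. Any walk contains a path between its endpoints of no greater length, so $\mathrm{dist}_{H-F}(u,w) \leq 6$, which is the second alternative.

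Two small points need checking.

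\begin{enumerate}
\item \textbf{The root as an endpoint.} If $u = v$ or $w = v$, the same argument works, since the path from $v$ to itself has length $0$; the bound only improves. Note also that \cref{clm:highDegreeConnectRoot} is stated for any vertex $u$ of sufficient degree, and its proof implicitly treats $u \in L_1$. In a $1$-hop star every non-root vertex lies in $L_1$, so this covers all cases.
\item \textbf{No dependence on the surviving edge.} The argument never uses that $(u,w)$ itself survives in $H - F$; it only needs both endpoints to have high degree. This is what makes the corollary useful when $(u,w) \in F$ or when we want to replace a clique edge by a short path.
\end{enumerate}

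I expect no real obstacle beyond these checks: the entire content sits in \cref{clm:highDegreeConnectRoot}, and the corollary is a triangle-inequality step through $v$.
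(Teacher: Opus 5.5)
Your proposal is correct and matches the paper's proof. You apply \cref{clm:highDegreeConnectRoot} to both endpoints and concatenate the two length-$\le 3$ paths through the root $v$. Your extra checks (the root as an endpoint, and shortening a walk to a path) are harmless details that the paper leaves implicit.
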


\begin{proof}
    Suppose that $u, w$ both have degree $\geq \frac{5f}{d}$. Then, by \cref{clm:highDegreeConnectRoot}, both $u$ and $w$ have paths to $v$ in $H - F$ of length $\leq 3$. Composing these paths yields a path of length $\leq 6$ from $u$ to $w$ in $H - F$, as we desire.
\end{proof}

\subsubsection{Robustness of High Degree $2$-Hop Stars Under Edge Deletions}

In fact, the reasoning from the above subsection generalizes in large part to $2$-hop stars. In this section, we prove the following:

\begin{claim}\label{clm:highDegreeConnectRootDepth2}
    Let $H$ be a $2$-hop star with root $v$, and vertex set $\{v, L_1, L_2\}$ that has minimum degree $d$. Now, consider a set $F$ of $\leq f$ edge deletions such that at most $d/2$ of $v$'s edges are deleted, and assume $\frac{d}{2} \geq \frac{5f}{d}$. Then, for any vertex $u$ with $\mathrm{deg}_{H - F}(u) \geq \frac{5f}{d}$, if $u$ is in $L_1$, $u$ has a path of length $\leq 3$ to $v$ in $H - F$, and if $u$ is in $L_2$, $u$ has a path of length $\leq 4$ to $v$ in $H - F$.
\end{claim}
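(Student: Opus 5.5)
We mirror the averaging argument of \cref{clm:highDegreeConnectRoot}, exploiting the bipartite-layered structure of the $2$-hop star. Let $\mathrm{BAD} \subseteq L_1$ be the vertices whose edge to $v$ lies in $F$, so $|\mathrm{BAD}| \leq d/2$ by assumption. The edges of $H$ are only of the forms $(v,L_1)$ and $(L_1,L_2)$. Hence every neighbor in $H$ of a vertex of $L_2$ lies in $L_1$. Likewise every neighbor of a vertex of $L_1$ is either $v$ or lies in $L_2$.

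\textbf{Key averaging step.} For any vertex $x$ with $\mathrm{deg}_{H-F}(x) \geq \frac{5f}{d}$, we claim $x$ has a neighbor $w \in \Gamma_{H-F}(x)$ with $\mathrm{deg}_{H-F}(w) > \mathrm{deg}_H(w)/2 \geq d/2$. Indeed, the pre-deletion degrees sum to $\sum_{w \in \Gamma_{H-F}(x)} \mathrm{deg}_H(w) \geq \frac{5f}{d}\cdot d = 5f$. The deletions reduce this sum by at most $2f$ (each deleted edge lowers at most two degrees). The post-deletion sum therefore exceeds half the pre-deletion sum, and some $w$ must retain more than half its degree.

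\textbf{Case $u \in L_1$.} If $(u,v) \notin F$ we are done. Otherwise $u \in \mathrm{BAD}$, so all of $u$'s surviving neighbors lie in $L_2$. Applying the averaging step with $x=u$ yields $w \in L_2$ with more than $d/2$ surviving neighbors, all in $L_1$. Since $|\mathrm{BAD}| \leq d/2$, some surviving neighbor $y$ of $w$ lies in $L_1 \setminus \mathrm{BAD}$. This gives the path $u \to w \to y \to v$ of length $3$.

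\textbf{Case $u \in L_2$.} Every surviving neighbor of $u$ lies in $L_1$. If one of them is not in $\mathrm{BAD}$, we get a path of length $2$. Otherwise all of them are in $\mathrm{BAD}$. Applying the averaging step with $x=u$ gives some $y \in \Gamma_{H-F}(u) \subseteq L_1$ with $\mathrm{deg}_{H-F}(y) > d/2 \geq \frac{5f}{d}$, where the last inequality is exactly the hypothesis $\frac{d}{2} \geq \frac{5f}{d}$. The $L_1$ case then gives a path of length at most $3$ from $y$ to $v$, so $\mathrm{dist}_{H-F}(u,v) \leq 1+3 = 4$.

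The main point needing care is this last chaining step. The intermediate vertex $y$ must itself meet the degree threshold required to invoke the $L_1$ case, which is why the hypothesis $\frac{d}{2} \geq \frac{5f}{d}$ is imposed. The rest is the same averaging-plus-pigeonhole argument as in \cref{clm:highDegreeConnectRoot}.
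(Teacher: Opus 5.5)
Your proof is correct and follows the paper's argument essentially step for step. Both use the same averaging bound (pre-deletion degree mass at least $5f$, at most $2f$ lost) to find a neighbor that keeps more than half its degree, then a pigeonhole against $|\mathrm{BAD}| \leq d/2$ for the $L_1$ case, and for the $L_2$ case both chain through the $L_1$ case using $\frac{d}{2} \geq \frac{5f}{d}$; your extra length-$2$ shortcut in the $L_2$ case is a harmless refinement.
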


\begin{proof}[Proof of \cref{clm:highDegreeConnectRootDepth2}]
    We proceed in two parts. We first prove the above claim for vertices in $L_1$, and then prove the above claim for vertices in $L_2$. As in the proof of \cref{clm:highDegreeConnectRoot}, we let $\mathrm{BAD}$ denote the set of all vertices in $L_1$ which do not have an edge to $v$ in $H - F$.

    \begin{enumerate}
        \item Let $u \in L_1$ such that $u$'s degree in $H - F$ is $\geq \frac{5f}{d}$. If $u \notin \mathrm{BAD}$, then we are immediately done. Otherwise, we know that $u$ must have at least $\frac{5f}{d}$ many neighbors in $L_2$ in $H - F$. We denote these neighbors of $u$ by $\Gamma'(u)$. Thus, in $H$ (before deletions) the total sum of degrees of vertices in $\Gamma'(u)$ is at least $5f$. Because $F$ only constitutes $f$ deletions, we know that 
    \[
    \sum_{w \in \Gamma'(u)} \mathrm{deg}_{H - F}(w) \geq  \left ( \sum_{w \in \Gamma'(u)} \mathrm{deg}_{H }(w) \right ) - 2f > \frac{\left ( \sum_{w \in \Gamma'(u)} \mathrm{deg}_{H }(w) \right )}{2}.
    \]
    Thus, there must be some vertex $w \in \Gamma'(u)$ such that $\mathrm{deg}_{H - F}(w) > \mathrm{deg}_{H}(w)/2$. For this vertex $w \in L_2$, $w$ has $> d/2$ many neighbors in $H - F$; at the same time, we know that $|\mathrm{BAD}| \leq \frac{d}{2}$. This means there must be some vertex $\ell \in \Gamma_{H - F}(w)$ such that $\ell \notin \mathrm{BAD}$. So, we recover a path from $u \rightarrow w \rightarrow \ell \rightarrow v$ of length $3$, concluding this case.
    \item Let $u \in L_2$ such that $u$'s degree in $H - F$ is $\geq \frac{5f}{d}$. This means that $u$ has at least $\frac{5f}{d}$ many neighbors in $L_1$ in the graph $H - F$. We denote these neighbors of $u$ by $\Gamma'(u)$. We can now repeat the same reasoning as above: in $H$ (before deletions) the total sum of degrees of vertices in $\Gamma'(u)$ is at least $5f$. Because $F$ is only $n$ deletions, we know that 
    \[
    \sum_{w \in \Gamma'(u)} \mathrm{deg}_{H - F}(w) \geq  \left ( \sum_{w \in \Gamma'(u)} \mathrm{deg}_{H }(w) \right ) - 2f > \frac{\left ( \sum_{w \in \Gamma'(u)} \mathrm{deg}_{H }(w) \right )}{2},
    \]
    and so there is some vertex in $\Gamma'(u)$ which undergoes $\leq d/2$ many deletions. Thus, there is some vertex $w$ in $\Gamma'(u)$ which has degree $\geq d/2 \geq \frac{5f}{d}$. By case 1, we know that $w$ therefore has a path of length $\leq 3$ to $v$, and thus $u$ must have a path of length $\leq 4$ to $v$, as we desire.
    \end{enumerate}

    This concludes the claim, as we have shown it to be true for vertices in $L_1$ and $L_2$.
\end{proof}

As before, this claim also immediately gives us the following corollary:

\begin{corollary}\label{cor:nDeletionsStretch8}
    Let $H$ be a $2$-hop star with root $v$, and vertex set $\{v, L_1, L_2\}$ that has minimum degree $d$. Now, consider a set $F$ of $\leq f$ edge deletions such that at most $d/2$ of $v$'s edges are deleted, and assume $\frac{d}{2} \geq \frac{5f}{d}$. Then, for any edge $(u,w) \in H$, either:
    \begin{enumerate}
    \item At least one of $u, w$ has degree $< \frac{5f}{d}$ in $H - F$. 
    \item $u, w$ remain connected by a path of length $\leq 7$ in $H - F$.
    \end{enumerate}
\end{corollary}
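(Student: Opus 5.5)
The statement to prove is \cref{cor:nDeletionsStretch8}. I would derive it directly from \cref{clm:highDegreeConnectRootDepth2}, in the same way as the $1$-hop corollary, but using the bipartite-like structure of the $2$-hop star to improve the naive bound $4+4=8$ down to $7$.

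Fix an edge $(u,w) \in H$ and assume that case 1 fails, so both $\mathrm{deg}_{H-F}(u) \geq \frac{5f}{d}$ and $\mathrm{deg}_{H-F}(w) \geq \frac{5f}{d}$. The hypotheses of \cref{clm:highDegreeConnectRootDepth2} are exactly those of the corollary: minimum degree $d$, at most $d/2$ deletions at the root, $|F| \leq f$, and $\frac{d}{2} \geq \frac{5f}{d}$. So the claim applies to each endpoint separately, whenever that endpoint is not the root.

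Next I would split by where the edge lies. By definition, a $2$-hop star contains only edges in $E[\{v\},L_1]$ and $E[L_1,L_2]$. In particular, no edge has both endpoints in $L_2$, and every edge has at least one endpoint in $\{v\} \cup L_1$.

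\begin{itemize}
\item If one endpoint is the root, say $u = v$, then either $(u,w) \notin F$ and the distance is $1$, or we use the path from $w$ to $v$ given by the claim, which has length at most $4$.
\item If $u \in L_1$ and $w \in L_2$, the claim gives a path from $u$ to $v$ of length at most $3$ and a path from $w$ to $v$ of length at most $4$, all inside $H-F$.
\end{itemize}

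In the second case, concatenating the two paths through $v$ gives a walk from $u$ to $w$ in $H-F$ of length at most $7$. That walk contains a path of length at most $7$, which is what we need.

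There is no real obstacle here. The one point needing care is to justify why the bound is $7$ and not $8$: this is exactly the observation that $L_2$ is independent within $H$, so at most one endpoint can require the length-$4$ bound. It is also worth noting that all paths are taken in $H-F$, which is a subgraph of $G-F$; this is what the oracle construction will use later.
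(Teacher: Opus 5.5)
Your proof is correct and follows the paper's argument: apply \cref{clm:highDegreeConnectRootDepth2} to both endpoints and use that no star edge lies inside $L_2$, which gives the bound $3+4=7$. You are in fact slightly more careful than the paper: its proof asserts that every star edge joins $L_1$ and $L_2$, and so it skips the root edges $E[\{v\},L_1]$, which you handle explicitly (for those edges the claim even gives length at most $3$).
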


\begin{proof}

    Suppose that $u, w$ both have degree $\geq \frac{5f}{d}$. Because $(u,w)$ is an edge in a $2$-hop star, it must be that one of $u,w$ is in $L_1$ and the other is in $L_2$.
    
    Then, by \cref{clm:highDegreeConnectRootDepth2}, one of $u,w$ has a path to $v$ in $H - F$ of length $\leq 4$ and the other a path of length $\leq 3$. Composing these paths yields a path of length $\leq 7$ from $u$ to $w$ in $H - F$, as we desire.
\end{proof}

\subsection{Building the Distance Oracle}

With the above claims established, we now detail the construction of our distance oracle. In particular, there is one key piece of intuition that need: the above claims show that for an edge $e = (u,w)$, the endpoints $u$ and $w$ are ``robustly connected'' under deletions, provided that $u, w$ retain a non-negligible degree, and, more importantly, that \emph{the root does not experience too many deletions}. As we shall see, this latter condition ends up being the more difficult condition to ensure when constructing our distance oracles. In fact, in order to ensure this condition holds, we will store \emph{extra stars}, each initialized with different roots, thus ensuring that for any set of deletions there is some star whose root does not exceed the deletion budget.

\subsubsection{Some Basic Subroutines}

With this intuition in mind, we now begin presenting some basic sub-routines that we will use in our algorithm. First, we use an algorithm called $\mathrm{Peel}$ to ensure that our starting graph has a large minimum degree:

\begin{algorithm}[H]
\caption{Peel$(G = (V, E), d)$}\label{alg:Peel}
$V' = V$. \\
\While{$\mathrm{mindeg}(G[V']) < d$}{
Let $u$ be a vertex of degree $< d$. \\
$V' \leftarrow V' \setminus \{ v\}$.
}
\Return{$G[V']$.}
\end{algorithm}

\begin{claim}
    When initialized on a graph $G = (V, E)$, \cref{alg:Peel} returns an induced subgraph $G[V']$ which is either empty or has minimum degree $\geq d$.
\end{claim}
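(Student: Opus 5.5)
The argument is a termination-plus-invariant proof for the while loop of \cref{alg:Peel}. I read the removed vertex in the loop body as the chosen low-degree vertex $u$; the line $V' \leftarrow V' \setminus \{v\}$ is a typo for $V' \leftarrow V' \setminus \{u\}$, and with that reading the claim follows almost immediately.

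\textbf{Termination.} Each iteration either exits, or the minimum degree of $G[V']$ is below $d$. In the latter case some vertex $u$ of degree $<d$ exists, and it is removed, so $|V'|$ strictly decreases. Hence the loop runs at most $|V| = n$ times.

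\textbf{Invariant.} Throughout the run, $V' \subseteq V$, so the returned object $G[V']$ is always an induced subgraph of $G$.

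\textbf{Exit condition.} The loop stops only when its guard fails, that is, when $\mathrm{mindeg}(G[V']) \geq d$. The one degenerate case is $V' = \emptyset$. Here I adopt the convention that the minimum degree of the empty graph is $+\infty$ (the minimum over an empty set). Then the guard fails and the algorithm returns the empty graph, which is allowed by the statement. This also covers the possibility that peeling removes every vertex.

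Combining termination, the invariant, and the exit condition, the output is an induced subgraph that is either empty or has minimum degree $\geq d$. The only real subtlety is the empty-graph convention and the variable typo; there is no genuine obstacle. For the later use, I would also note that by the argument of \cref{clm:edgesToMinDegree}, if $G$ has at least $n d$ edges then fewer than $nd$ edges are ever peeled. So the output is nonempty in that case, although the claim as stated does not require this.
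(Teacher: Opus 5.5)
Your proof is correct and follows the same reasoning as the paper, which simply observes that if the loop exits with some vertices remaining, then the guard $\mathrm{mindeg}(G[V']) < d$ must have failed. You additionally make explicit several points the paper leaves implicit: the termination bound, the convention that the empty graph makes the guard fail, and the $u$/$v$ typo in the loop body.
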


\begin{proof}
    If the algorithm stops before peeling all vertices off, the resulting graph must have minimum degree $\geq d$.
\end{proof}

\begin{definition}
    For a graph $G$ with minimum degree $d$, we will also use $\mathrm{ConstructStar}(G, v, d)$ to refer to the algorithm of \cref{lem:highDegreeLowDiam} for building our stars. Importantly, we assume that the output of $\mathrm{ConstructStar}$ is a list of vertices \emph{not including the edges}. This is essential for getting our ultimate space savings, as we forego storing edges explicitly.
\end{definition}

Often, we will invoke \cref{clm:efficientUniqueSum} on the neighborhood of a single vertex:

\begin{corollary}\label{cor:recoverNeighborsLowDegree}
    When invoked on the neighborhood of a single vertex (in a graph with $n$ vertices) with sparsity threshold $k$, \cref{clm:efficientUniqueSum} requires $O(k \log(n))$ many bits to store, and ensures that if all but $\leq k$ edges are deleted, then the remaining edges can be recovered exactly. 
\end{corollary}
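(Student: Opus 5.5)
The statement to prove is \cref{cor:recoverNeighborsLowDegree}. The plan is to instantiate \cref{clm:efficientUniqueSum} directly on the edge slots incident to a single vertex and use linearity, exactly as in \cref{alg:reportDistance}.

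Fix a vertex $x$. Take the universe to be the $\binom{n}{2}$ edge slots, so $u=\binom{n}{2}$, or alternatively the $n-1$ slots incident to $x$. Choose a prime $q\in[u,2u]$, which exists by Bertrand's postulate, and let $g:\zo^{[u]}\to\F_q^{2k+1}$ be the linear map of \cref{clm:efficientUniqueSum} with threshold $k$. We store $\calS_x=g(\mathbf{1}_{\Gamma(x)})$. This is $2k+1$ field elements, each taking $\lceil\log_2 q\rceil=O(\log u)=O(\log n)$ bits. The total is therefore $O(k\log n)$ bits, which proves the space bound.

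For recovery, process each deleted edge $e\in F$ incident to $x$ by the update $\calS_x\leftarrow\calS_x-g(\mathbf{1}_{\{e\}})$. Because $g$ is linear over $\F_q$ and $\mathbf{1}_{\Gamma(x)}-\sum_{e\in F\cap\Gamma(x)}\mathbf{1}_{\{e\}}=\mathbf{1}_{\Gamma(x)\setminus F}$, we have $\calS_x=g(\mathbf{1}_{\Gamma(x)\setminus F})$ after all updates. The subtraction is valid because each deleted edge lies in $\Gamma(x)$, so no coordinate becomes $-1$; this is where we use $F\subseteq E$. Now suppose $|\Gamma(x)\setminus F|\le k$. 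Then item 1 of \cref{clm:efficientUniqueSum} says the set $S=\Gamma(x)\setminus F$ is the unique set of size at most $k$ with this sketch value, and item 2 recovers $S$ in polynomial time.

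The only point needing care is the field-size and bit accounting, specifically that $\log q=O(\log n)$ so that $2k+1$ elements fit in $O(k\log n)$ bits. Everything else is immediate from the claim.
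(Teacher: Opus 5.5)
Your proposal is correct and matches the paper, which treats this corollary as an immediate consequence of \cref{clm:efficientUniqueSum}: the sketch is $2k+1$ elements of $\F_q$ with $q \le 2\binom{n}{2}$, hence $O(k\log n)$ bits. Linearity lets you subtract the deleted edges, and once at most $k$ edges remain, items 1 and 2 give unique and efficient recovery. Your remark that each subtracted edge must actually lie in the sketched neighborhood is exactly why the paper takes care to identify which sketch each deleted edge belongs to.
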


We will use $\mathrm{SparseRecovery}(E, k)$ to denote initializing the sketch of \cref{cor:recoverNeighborsLowDegree} on a set of edges $E$ with sparsity threshold $k$. 

With this, we can now proceed to the statement of our overall algorithm. 

\subsubsection{Formal Algorithm}

Our algorithm takes in a graph $G = (V, E)$ and returns a fault-tolerant distance oracle data structure with stretch $7$. The intuition behind the algorithm is as follows:

\begin{enumerate}
    \item For each edge $e = (u,w) \in E$, we want to find multiple high-degree stars that all contain this edge. In particular, every star we construct will have minimum degree $\approx f^{2/3}$, and thus after a sequence of $f$ deletions, there can be at most $\approx f^{1/3}$ many stars for which the roots all receive $\geq f^{2/3}$ many deletions. Thus, if we store just \emph{slightly more} than $f^{1/3}$ many stars that contain this edge $e$, then we are closer to making the presence of this edge $e$ robust to edge deletions.
    \item So, suppose that the edge $e$ participates in $\approx f^{1/3}$ many stars and that some of these stars even have roots that remain high degree after deletions. \emph{It is still possible} that the vertices $u, w$ become low degree, and thus the star does not necessarily witness a short path between $u$ and $w$. This shortcoming is essentially unavoidable; indeed, one can consider the case where all of the incident edges on $u, w$ except $(u,w)$ are deleted. To address this case, we add \emph{sparse recovery sketches} which ensure that if the degrees of $u$ or $w$ fall too low, then any remaining edges are recovered explicitly. 
\end{enumerate}

\begin{algorithm}[H]
    \caption{ConstructOracle$(G = (V, E))$}\label{alg:ConstructOracle}
    $\mathrm{TargetDegree} = n^{1/2}f^{1/3} \log(n)$. \\
    $\mathrm{CoveringCounts}: E \rightarrow \Z$, initialized to all be $0$. \\
    $\mathrm{CoveringThreshold} = 10 \cdot f^{1/3}$. \\
    $\mathrm{CoveredEdges} = \emptyset$. \\
    $\mathrm{UnusedVertices} = V$. \\
    $\mathrm{Stars} = \emptyset$. \\
    $\mathrm{StarDegrees} = \emptyset$. \\
    $\mathrm{SparseRecoverySketches} = \emptyset$. \\
    \For{$d \in \{n/2, n/4, n/8, \dots \mathrm{TargetDegree}\}$}{
    \label{line:iterateDegree}
    \While{$\mathrm{Peel}(G - \{e \in E: \mathrm{CoveringCounts}(e) \geq \mathrm{CoveringThreshold}\}, d) \neq \emptyset$\label{line:whileCondition}}{
    $G_d = \mathrm{Peel}(G - \{e \in E: \mathrm{CoveringCounts}(e) \geq \mathrm{CoveringThreshold}\}, d)$. \\
    Let $v \in V(G_d) \cap \mathrm{UnusedVertices}$. \label{line:newVertex}\\
    Let $\mathrm{Star} = \mathrm{ConstructStar}(G_d, v, d)$. \\
    $\mathrm{Stars} \leftarrow \mathrm{Stars} \cup \mathrm{Star}$. \label{line:storeStar} \\
    $\mathrm{UnusedVertices} \leftarrow \mathrm{UnusedVertices} \setminus \{v\}$. \\
    \For{$e \in G_d[\mathrm{Star}]$}{
    $\mathrm{CoveringCounts}(e) \leftarrow \mathrm{CoveringCounts}(e) + 1$. \\
    }
    $\mathrm{SparseRecoveryStar} = \emptyset$. \\
    $\mathrm{DegreesSingleStar} = \emptyset$. \\
    \For{$u \in \mathrm{Star}$}{
    $\mathrm{SparseRecoveryStar} \leftarrow \mathrm{SparseRecoveryStar} \cup \mathrm{SparseRecovery}(\Gamma_{G_d[\mathrm{Star}]}(u), 100 \cdot f^{1/3})$. \label{line:storeSparseRecovery} \\
    $\mathrm{deg}_{\mathrm{Star}}(u) = \mathrm{deg}_{G_d[\mathrm{Star}]}(u)$. \\
    $\mathrm{DegreesSingleStar} \leftarrow \mathrm{DegreesSingleStar} \cup \mathrm{deg}_{\mathrm{Star}}(u)$.\label{line:storeDegrees} \\
    }
    $\mathrm{SparseRecoverySketches} \leftarrow \mathrm{SparseRecoverySketches} \cup \mathrm{SparseRecoveryStar}$. \\
    $\mathrm{StarDegrees} \leftarrow \mathrm{StarDegrees} \cup \mathrm{DegreesSingleStar}$. \\
    }
    }
    Let $G_{\mathrm{remaining}} = G - \{e \in E: \mathrm{CoveringCounts}(e) \geq \mathrm{CoveringThreshold}\}$. \\
    \Return{$\bigg ( G_{\mathrm{remaining}}, \mathrm{Stars}, \mathrm{SparseRecoverySketches}, \mathrm{DegreesSingleStar} \bigg )$.\label{line:returnOutput}}
\end{algorithm}

With the algorithm stated, we now proceed to bound the space of the returned data structure and prove that it does indeed suffice as a distance oracle. 

\subsubsection{Bounding the Space}

To start, we bound the space required to store each star:

\begin{claim}\label{clm:boundStarSpace}
    For a star $H$ on $\ell$ many vertices, \cref{alg:ConstructOracle} stores $\widetilde{O}(\ell f^{1/3})$  bits.
\end{claim}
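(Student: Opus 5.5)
We account separately for each item that \cref{alg:ConstructOracle} records for a single star $H$ with $\ell$ vertices. These items are added in \cref{line:storeStar}, \cref{line:storeSparseRecovery} and \cref{line:storeDegrees}. There are three of them: the vertex list output by $\mathrm{ConstructStar}$, one degree value per vertex, and one sparse recovery sketch per vertex.

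\textbf{Vertex list.} $\mathrm{ConstructStar}$ outputs only vertex names, never edges. We store the root $v$ and then the members of $L_1$ and of $L_2$, with a separator or a tag of $O(1)$ bits marking each vertex's layer. Each name costs $O(\log n)$ bits, so the list costs $O(\ell\log n)$ bits in total. Knowing the layers determines exactly which edges of $G_d$ belong to the star: the edges from $v$ to $L_1$ and the edges between $L_1$ and $L_2$. This is needed later to decode deletions, but it costs nothing extra to store.

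\textbf{Degrees.} For each $u$ in the star we store $\mathrm{deg}_{G_d[\mathrm{Star}]}(u)\le n$, which takes $O(\log n)$ bits. Over all vertices this is $O(\ell\log n)$ bits.

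\textbf{Sparse recovery sketches.} For each $u$ we store $\mathrm{SparseRecovery}(\Gamma_{G_d[\mathrm{Star}]}(u),100 f^{1/3})$. By \cref{cor:recoverNeighborsLowDegree} with $k=100f^{1/3}$, each sketch takes $O(k\log n)=O(f^{1/3}\log n)$ bits. Concretely, \cref{clm:efficientUniqueSum} gives $2k+1$ field elements of $\F_q$ with $q=\mathrm{poly}(n)$. Summed over the $\ell$ vertices, this is $O(\ell f^{1/3}\log n)$ bits.

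Adding the three contributions gives $O(\ell\log n)+O(\ell\log n)+O(\ell f^{1/3}\log n)=\widetilde{O}(\ell f^{1/3})$ bits, which proves the claim.

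There is no genuine obstacle in this argument. The one point that needs care is the bit cost of each sketch. It is linear in the threshold $k$ only because we use the field-based sketch of \cref{clm:efficientUniqueSum} (equivalently, \cref{fact:uniqueSum} with values bounded by $\mathrm{poly}(u^k)$). Its size is independent of the actual degree of $u$, and in particular it is not $k$ edges times $\log n$ bits per edge taken over the full neighborhood.
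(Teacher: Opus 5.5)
Your proof is correct and follows essentially the same argument as the paper. You account for the same three stored items (the vertex description, the per-vertex degrees, and the per-vertex sparse recovery sketches with threshold $100f^{1/3}$) and obtain the same bounds of $O(\ell\log n)$, $O(\ell\log n)$ and $\widetilde{O}(\ell f^{1/3})$ bits via \cref{cor:recoverNeighborsLowDegree}.
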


\begin{proof}
    Indeed, the space consumption comes from three places:
    \begin{enumerate}
        \item In \cref{line:storeStar}, \cref{alg:ConstructOracle} stores the exact description of the vertex set of the star (i.e., $v, L_1, L_2$), which requires $O(\ell \log(n))$ many bits of space. 
        \item In \cref{line:storeDegrees}, \cref{alg:ConstructOracle} stores a list of the degrees of the vertices in the star which requires $O(\ell \log(n))$ bits of space.
        \item In \cref{line:storeSparseRecovery}, \cref{alg:ConstructOracle} stores a sparse recovery sketch for each vertex in the star, with a sparsity threshold of $O(f^{1/3})$. By \cref{cor:recoverNeighborsLowDegree}, this requires $\widetilde{O}(f^{1/3})$ many bits per vertex in the star, yielding a total of $\widetilde{O}(\ell f^{1/3})$  bits.
    \end{enumerate}
    This yields the claim. 
\end{proof}

With this claim in hand, we can prove the following key lemma:

\begin{lemma}\label{lem:boundSpaceSingleValued}
    Consider all the stars which are stored for a single value of $d$ in \cref{line:iterateDegree} in \cref{alg:ConstructOracle}. The total space complexity of these stored stars (including their sparse recovery sketches) is $\widetilde{O} \left ( \frac{n^{2}f^{2/3}}{d} \right )$ bits.
\end{lemma}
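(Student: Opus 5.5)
The bound will come from a charging argument against the covering counters. By \cref{clm:boundStarSpace}, a star on $\ell$ vertices costs $\widetilde{O}(\ell f^{1/3})$ bits. So it suffices to show that, summed over all stars created while the loop of \cref{line:iterateDegree} sits at a fixed value $d$,
\[
\sum_{\text{stars}} \ell = \widetilde{O}\!\left(\frac{n^2 f^{1/3}}{d}\right).
\]

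\textbf{Step 1 (global budget).} Each iteration of the while loop at \cref{line:whileCondition} builds its star inside $G_d$. Since $G_d$ is a peel of $G$ minus all edges with $\mathrm{CoveringCounts}(e) \geq \mathrm{CoveringThreshold}$, every edge $e \in G_d[\mathrm{Star}]$ has count below $10 f^{1/3}$ when it is incremented. Hence each of the at most $\binom{n}{2}$ edges is incremented at most $10 f^{1/3}$ times over the whole algorithm. Summed over all stars at level $d$, this gives
\[
\sum_{\text{stars}} |E(G_d[\mathrm{Star}])| \leq 10 f^{1/3} n^2 .
\]

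\textbf{Step 2 (per-star lower bound).} I will show that every star on $\ell$ vertices contributes $|E(G_d[\mathrm{Star}])| = \Omega(\ell \cdot d/n)$ increments. Dividing the budget of Step 1 by this then gives
\[
\sum \ell = O\!\left(\frac{n^3 f^{1/3}}{d \cdot n}\right) = O\!\left(\frac{n^2 f^{1/3}}{d}\right),
\]
and multiplying by the $\widetilde{O}(f^{1/3})$ per-vertex cost yields the lemma.

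For a $1$-hop star this is immediate: \cref{lem:highDegreeLowDiam} gives minimum degree $\geq d/10$ in $G_d[\mathrm{Star}]$, hence $\geq \ell d/20$ edges, which is much more than needed.

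For a $2$-hop star, \cref{lem:highDegreeLowDiam} guarantees that the bipartite graph between $L_1$ and $L_2$ has minimum degree $\geq d^2/(2n)$. This already gives $\Omega(\ell d^2/n)$ edges. Since $d \geq 1$, we have $\ell d^2/n \geq \ell d/n$, so the requirement of Step 2 holds. In fact, this counting yields the stronger estimate $\sum \ell = O(n^3 f^{1/3}/d^2)$.

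\textbf{Main obstacle.} I expect the hard step to be reconciling the exact exponent with the statement. The naive conversion from Steps 1 and 2 gives $\widetilde{O}(n^3 f^{2/3}/d^2)$, whereas the lemma claims $\widetilde{O}(n^2 f^{2/3}/d)$. Since $d \leq n$, the latter is at most the former, so it is the stronger claim. To obtain it I would aim for a per-star bound of $\Omega(\ell d)$ edges in $G_d[\mathrm{Star}]$, not merely $\Omega(\ell d^2/n)$.

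My first attempt at this stronger bound is to use that $\mathrm{CoveringCounts}$ is incremented on the whole induced subgraph $G_d[\mathrm{Star}]$, not just on the star edges. Every vertex of $L_1 \cup L_2$ has degree $\geq d$ in $G_d$, and I would try to show that a constant fraction of these edges stays inside the vertex set of the star. This is plausible for $L_1$, because in case 2 of \cref{lem:highDegreeLowDiam} most edges out of $\Gamma(v)$ go into $\Gamma^2(v)$. For $L_2$ it is less clear.

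If that fails, I would fall back on the root-distinctness constraint of \cref{line:newVertex}: each star consumes a fresh root, so there are at most $n$ stars in total. I would combine this with the edge-charging bound, interpolating between "at most $n$ stars, each of size at most $n$" and "few but large stars". Whether this interpolation reaches $n^2 f^{2/3}/d$ exactly, rather than a slightly weaker bound, is the point I would check first.
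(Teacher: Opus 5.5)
Your proposal does not establish the lemma, and the slack you are looking for is in Step~1, not Step~2. The per-star bound you already have, $\Omega(\ell d^2/n)$ covered edges for a star on $\ell$ vertices (from the minimum degree $d^2/(2n)$ in \cref{lem:highDegreeLowDiam}), is exactly what the paper uses.

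What is missing is a \emph{level-dependent} edge budget. When the loop of \cref{line:iterateDegree} reaches $d$, the while loop for $2d$ has just terminated, so $\mathrm{Peel}(G - \{e \in E: \mathrm{CoveringCounts}(e) \geq \mathrm{CoveringThreshold}\}, 2d) = \emptyset$. For $d = n/2$ this holds trivially, since all degrees are below $n$. Each peeling step removes fewer than $2d$ edges and there are at most $n$ steps, so the unsaturated graph has fewer than $2nd$ edges at the start of level $d$. Saturated edges never reappear in any later $G_d$, so only these edges can be incremented during level $d$, each at most $10f^{1/3}$ times. The budget is therefore $20 f^{1/3} nd$, not your $10 f^{1/3} n^2$. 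Dividing by $\Omega(d^2/n)$ per star vertex gives $\sum \ell = O(n^2 f^{1/3}/d)$, and \cref{clm:boundStarSpace} finishes the proof.

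The factor $n/d$ you lose is not cosmetic. Your bound $\widetilde{O}(n^3 f^{2/3}/d^2)$, summed down to $d = n^{1/2}f^{1/3}\log n$, gives only $\widetilde{O}(n^2)$, which would make \cref{lem:finalConstantStretchSpaceBound} trivial. Separately, the display in your Step~2 has an arithmetic slip: $10 f^{1/3}n^2$ divided by $d/n$ is $10 f^{1/3} n^3/d$. So an $\Omega(\ell d/n)$ per-star bound would be far from enough.

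Neither repair you suggest can work.

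\textbf{A per-star bound of $\Omega(\ell d)$ is false in general.} Take a root $v$ with $\Gamma(v) = A$, $|A| = d$, and no edges inside $A$. Give each vertex of $A$ exactly $d$ neighbors in a set $B$ of size $n/2$, arranged so that each $b \in B$ has $2d^2/n$ neighbors in $A$. Put each $b$'s remaining $d$ neighbors in a disjoint set $C$. All degrees lie in $[d, 2d)$, so this is a legitimate level-$d$ configuration. The construction of \cref{lem:highDegreeLowDiam} returns $L_1 = A$ and $L_2 = B$, so $\ell = \Theta(n)$. Yet even the full induced subgraph on the star has only $d + d^2 = \Theta(\ell d^2/n)$ edges. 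Moreover, under the eligibility rule the paper uses to identify covering stars, edges inside $L_1$ or inside $L_2$ are never charged anyway.

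\textbf{Root-distinctness is also too weak.} It gives at most $n$ stars of at most $n$ vertices each, so $\sum\ell \le n^2$. Combined with your global charging, this yields $\min(n^2, O(n^3 f^{1/3}/d^2))$. Both terms exceed $n^2 f^{1/3}/d$ whenever $f^{1/3} < d < n$, and this holds at every level because $d \geq n^{1/2} f^{1/3} \log n$.
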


Note that this lemma relies on a key observation:

\begin{fact}
    Once an edge has been used in $\geq \mathrm{CoveringThreshold} = 10 f^{1/3}$ many stars, it is effectively \emph{removed} from the graph, and not used in any future stars.
\end{fact}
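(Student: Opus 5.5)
The fact follows directly from how \cref{alg:ConstructOracle} selects the graph on which each new star is built. I would argue from the loop invariant at \cref{line:whileCondition}: every star is constructed inside $G_d = \mathrm{Peel}(G - \{e \in E : \mathrm{CoveringCounts}(e) \geq \mathrm{CoveringThreshold}\}, d)$. The plan has three steps.

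First, I would observe that the counters are monotone. $\mathrm{CoveringCounts}(e)$ is only ever incremented, and only in the loop over $e \in G_d[\mathrm{Star}]$. So once $\mathrm{CoveringCounts}(e) \geq 10 f^{1/3}$, this stays true for the rest of the execution. This holds across all later iterations of the while loop and all later values of $d$ in \cref{line:iterateDegree}.

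Second, I would show that such an edge never appears in any later $G_d$. At the start of each later while-iteration, $G_d$ is obtained from $G$ minus the current set of edges with count at least the threshold, and that set contains $e$ by the first step. $\mathrm{Peel}$ returns an induced subgraph of its input graph, so it only removes vertices and their edges and never restores deleted edges. Hence $e \notin G_d$.

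Third, I would conclude that $e$ is used in no future star. $\mathrm{ConstructStar}(G_d, v, d)$ outputs a vertex set, and the edges charged to that star, the edges placed in its sparse recovery sketches, and the edges counted in its stored degrees are exactly the edges of $G_d[\mathrm{Star}] \subseteq G_d$. Since $e \notin G_d$, the edge $e$ is neither incremented nor sketched, and it is not counted in any degree for that star. Consequently no count ever exceeds the threshold by more than the increments within the single star that crossed it. Each star increments a given edge at most once, so every count is at most $10 f^{1/3}$.

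The only subtle point is this last step. The stored star is a vertex set, so it could contain both endpoints of an already-covered edge. One must check that the algorithm always interprets star edges relative to $G_d$ and never relative to $G$. I would verify this line by line in \cref{line:storeSparseRecovery} and \cref{line:storeDegrees}, where $\Gamma_{G_d[\mathrm{Star}]}$ and $\mathrm{deg}_{G_d[\mathrm{Star}]}$ are used. This is also exactly what the query procedure must respect when it later assigns deletions to stars.
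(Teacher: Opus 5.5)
Your proposal is correct and takes the same route as the paper, which states this Fact without a separate proof and treats it as immediate from the construction. The graph passed to $\mathrm{Peel}$ at \cref{line:whileCondition} excludes every edge whose count has reached the threshold, $\mathrm{Peel}$ only removes vertices, and all per-star increments, sketches and degrees are taken over $G_d[\mathrm{Star}] \subseteq G_d$. One tiny nit that does not affect the Fact: when $10 f^{1/3}$ is not an integer, your closing bound on the counts should read $\lceil 10 f^{1/3} \rceil$ rather than $10 f^{1/3}$.
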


\begin{proof}[Proof of \cref{lem:boundSpaceSingleValued}]
    Indeed, consider a fixed value of $d$. Note that, by \cref{line:whileCondition}, it must be the case that $\mathrm{Peel}(G - \{e \in E: \mathrm{CoveringCounts}(e) \geq \mathrm{CoveringThreshold}\}, 2d) = \emptyset$, as otherwise the algorithm would not yet have progressed to a current degree of $d$. Importantly, the fact that $\mathrm{Peel}(G - \{e \in E: \mathrm{CoveringCounts}(e) \geq \mathrm{CoveringThreshold}\}, 2d) = \emptyset$ implies that at the start of the iteration with value $d$, $G - \{e \in E: \mathrm{CoveringCounts}(e) \geq \mathrm{CoveringThreshold}\}$ has $\leq 2n d$ many edges, as otherwise iteratively peeling vertices of degree $\leq 2d$ would not yield an empty graph.
    
    Now, for each of the $\leq 2nd$ many edges in the graph, each one can be covered at most $10f^{1/3}$ many times before it is removed from the graph. In total then, there are at most $10f^{1/3} \cdot 2nd$ many edge slots (with multiplicity) to be covered by stars. Because the minimum degree in the graph is $d$, each star we recover on $\ell$ vertices covers $\geq \ell \cdot \frac{d^2}{2n}$ many edges (using that the minimum degree in each star is $\frac{d^2}{2n}$ by \cref{lem:highDegreeLowDiam}). Thus, we see that across all the stars we recover in this iteration, the number of vertices present in the stars we recover is bounded by 
    \[
    \frac{10 f^{1/3} \cdot 2nd}{\frac{d^2}{2n}} \leq \frac{40n^{2}f^{1/3}}{d}.
    \]
    In conjunction with \cref{clm:boundStarSpace}, we then see that the total amount of space dedicated to storing stars in the iteration with minimum degree $d$ is bounded by 
    \[
 \widetilde{O} \left ( \frac{n^{2}f^{1/3}}{d} \cdot f^{1/3}\right ) =  \widetilde{O} \left ( \frac{n^{2}f^{2/3}}{d} \right )
    \]
    bits, as we pay $\widetilde{O}(f^{1/3})$ bits for each vertex in a star we store.
\end{proof}

With this, we can now conclude the total space usage of \cref{alg:ConstructOracle}:

\begin{lemma}\label{lem:finalConstantStretchSpaceBound}
    When invoked on a graph $G$ on $n$ vertices, \cref{alg:ConstructOracle} requires $\widetilde{O}(n^{3/2}f^{1/3})$ bits of space. 
\end{lemma}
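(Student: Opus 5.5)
The space splits into two parts: the stars (with their degrees and sparse recovery sketches) stored over all iterations of \cref{line:iterateDegree}, and the final graph $G_{\mathrm{remaining}}$ returned in \cref{line:returnOutput}. I will bound each by $\widetilde{O}(n^{3/2}f^{1/3})$ bits and add.

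For the stars, \cref{lem:boundSpaceSingleValued} gives $\widetilde{O}(n^2 f^{2/3}/d)$ bits for the iteration with degree $d$. The values of $d$ form a geometric sequence $n/2, n/4, \dots, \mathrm{TargetDegree}$, so there are $O(\log n)$ iterations. The sum is dominated by the smallest value, $d = \mathrm{TargetDegree} = n^{1/2}f^{1/3}\log(n)$, and so it is
\[
\widetilde{O}\left(\frac{n^2 f^{2/3}}{n^{1/2} f^{1/3}}\right) = \widetilde{O}(n^{3/2} f^{1/3}).
\]

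For $G_{\mathrm{remaining}}$, the for-loop ends only after the while loop for $d = \mathrm{TargetDegree}$ has exited. At that point $\mathrm{Peel}(G - \{e : \mathrm{CoveringCounts}(e)\ge \mathrm{CoveringThreshold}\}, \mathrm{TargetDegree}) = \emptyset$. By the peeling argument of \cref{clm:edgesToMinDegree}, each peeled vertex removes fewer than $\mathrm{TargetDegree}$ edges, so the graph had fewer than $n\cdot\mathrm{TargetDegree}$ edges. Hence $G_{\mathrm{remaining}}$ has at most $n \cdot n^{1/2} f^{1/3}\log n$ edges. Storing each edge with $O(\log n)$ bits gives $\widetilde{O}(n^{3/2}f^{1/3})$ bits.

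Two further checks remain. First, the algorithm terminates: each while iteration consumes a fresh root from $\mathrm{UnusedVertices}$ (\cref{line:newVertex}), or strictly increases some covering count, so the loops are finite. Second, the returned degree lists are already counted in \cref{clm:boundStarSpace}. One subtlety is whether the minimum-degree guarantee of \cref{lem:highDegreeLowDiam} applies (the $1$-hop case gives degree $d/10 \ge d^2/(2n)$, which is even better). However, this is already absorbed into \cref{lem:boundSpaceSingleValued}, so the main obstacle is only the $G_{\mathrm{remaining}}$ edge count, which is the peeling argument above. Summing both parts yields the lemma.
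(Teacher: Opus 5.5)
Your proof is correct and matches the paper's argument: sum the per-iteration bound of \cref{lem:boundSpaceSingleValued} over the geometric sequence of $d$ values, which is dominated by $d=\mathrm{TargetDegree}$, and bound $G_{\mathrm{remaining}}$ by $n\cdot\mathrm{TargetDegree}$ edges using the empty-peel exit condition. One small correction to your aside: $d/10 \ge d^2/(2n)$ holds only for $d \le n/5$, but since $d \le n$ the $1$-hop minimum degree is always within a constant factor of $d^2/(2n)$, so the bound is unaffected.
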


\begin{proof}
    Indeed, as per \cref{lem:boundSpaceSingleValued}, for each choice of $d$ in \cref{line:iterateDegree} in \cref{alg:ConstructOracle}, the total space complexity of these stored stars (including their sparse recovery sketches) is $ \widetilde{O} \left ( \frac{n^{2}f^{2/3}}{d} \right )$ bits. Thus across all values of $d \in \{n/2, n/4, n/8, \dots \mathrm{TargetDegree} \}$ (with $\mathrm{TargetDegree} = n^{1/2}f^{1/3}\log(n)$) the total space consumption is
    \[
    \leq \sum_{d \in \{n/2, n/4, n/8, \dots n^{1/2}f^{1/3}\log(n) \}} \widetilde{O} \left ( \frac{n^{2}f^{2/3}}{d} \right ) =  \widetilde{O} \left ( \frac{n^{2}f^{2/3}}{n^{1/2}f^{1/3}\log(n)} \right ) = \widetilde{O}(n^{3/2}f^{1/3})
    \]
    bits.

    Finally, in \cref{line:returnOutput}, \cref{alg:ConstructOracle} also stores all remaining uncovered edges after the termination of \cref{line:iterateDegree}. Note that, at this point the number of edges in the graph must be $\leq n^{1/2}f^{1/3} \log(n) \cdot n$, as $\mathrm{Peel}(G - \{e \in E: \mathrm{CoveringCounts}(e) \geq \mathrm{CoveringThreshold}\}, n^{1/2}f^{1/3} \log(n))$ yields an empty graph. Thus, storing these remaining edges require only $\widetilde{O}(n^{3/2}f^{1/3})$ many bits of space. 

    Together with the above, this yields the desired claim. 
\end{proof}

\subsubsection{Reporting Distances}

Before proceeding to the exact stretch analysis, we also show that \cref{alg:ConstructOracle} is able to run to completion; in particular, as stated currently, \cref{line:newVertex} in \cref{alg:ConstructOracle} implicitly assumes that the algorithm is always able to find a new vertex to use as the starting point for each star. We prove this is indeed possible below:

\begin{claim}
Whenever \cref{alg:ConstructOracle} reaches \cref{line:newVertex}, the set $V(G_d) \cap \mathrm{UnusedVertices} \neq \emptyset$, provided $f \geq n$. 
\end{claim}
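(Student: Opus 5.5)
Since $G_d$ is a nonempty output of $\mathrm{Peel}$ with threshold $d$, it has minimum degree at least $d$, so $|V(G_d)| \ge d+1$. The plan is to show that the algorithm can never have used up all of these vertices as roots. Each star constructed consumes exactly one vertex of $\mathrm{UnusedVertices}$. So it suffices to show that the number of roots used before reaching \cref{line:newVertex} is small compared to $d+1$, or more simply, that the total number of stars ever built is less than $n$ minus something. I would bound the total number of stars across the run and compare with $n$.

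The first step is to count the stars built in a single iteration of the degree loop with value $d$. By the argument of \cref{lem:boundSpaceSingleValued}, at the start of this iteration the remaining graph has at most $2nd$ edges. Each edge can be covered at most $10 f^{1/3}$ times before it is removed, so there are at most $20 n d f^{1/3}$ edge-coverings available. Each star covers at least
\[
|L_1|\cdot \frac{d^2}{2n} \ge \frac{d^4}{4n^2}
\]
edges in the $2$-hop case. In the $1$-hop case the star has minimum degree $d/10$ and at least $d/10$ vertices, so it covers $\Omega(d^2)$ edges, which is even more. Hence the number of stars in this iteration is at most
\[
O\!\left(\frac{n^3 f^{1/3}}{d^3}\right).
\]

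The second step is to sum over the geometric sequence of $d$ values. This sum is dominated by the smallest value $d = n^{1/2} f^{1/3}\log n$, which gives
\[
O\!\left(\frac{n^3 f^{1/3}}{n^{3/2} f \log^3 n}\right) = O\!\left(\frac{n^{3/2}}{f^{2/3}\log^3 n}\right).
\]
Using $f \ge n$, this is at most $O(n^{5/6}/\log^3 n)$, which is far below $n^{1/2}f^{1/3}\log n \le d$. So at any moment the number of used roots is smaller than $d+1 \le |V(G_d)|$, and some vertex of $G_d$ must be unused.

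The main obstacle is the per-star edge lower bound in the $2$-hop case. The lemma gives only that the bipartite graph between $L_1$ and $L_2$ has minimum degree $d^2/(2n)$ and that $|L_1| \ge d^2/(2n)$. I would count the edges covered as at least $|L_1| \cdot d^2/(2n)$, using that every vertex of $L_1$ has bipartite degree at least $d^2/(2n)$. I would also double check that the covering increments really count these edges, since $G_d[\mathrm{Star}]$ includes at least the star edges. If finer control is needed, the same comparison still goes through with room to spare, because the gap between $n^{5/6}$ and $n^{5/6}\log n$ is absorbed by the $\log^4$ factor slack.
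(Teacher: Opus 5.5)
Your proposal is correct and follows essentially the same argument as the paper. Both bound the number of stars at degree level $d$ by $O(n^3 f^{1/3}/d^3)$, using the $O(ndf^{1/3})$ covering budget and the $d^4/(4n^2)$ coverings per star. Both then sum the geometric series to get $O(n^{3/2}/(f^{2/3}\log^3 n))$ and use $f\ge n$ to compare this against $|V(G_d)| > d \ge n^{1/2}f^{1/3}\log n$. Your explicit handling of the $1$-hop case is a small detail the paper leaves implicit: there a star covers $\Omega(d^2)$ edges, which dominates $d^4/(4n^2)$ up to constants since $d\le n/2$.
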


\begin{proof}
    First, we remark that the minimum degree in the graph in \cref{alg:ConstructOracle} is $\geq n^{1/2}f^{1/3} \log(n)$. This immediately implies that there are $\geq n^{1/2}f^{1/3} \log(n)$ many distinct vertices in the graph at all times. 

    Thus, it suffices to show that less than $n^{1/2}f^{1/3} \log(n)$ many distinct stars are peeled off, as this would then directly imply that there is always some new choice of star center which is unused. 
    
    To bound the number of peeled stars, we revisit the proof of \cref{lem:boundSpaceSingleValued}; here, we showed that in the iterations where the minimum degree bound is $d$, each star we peel (on $\ell$ vertices) covers $\ell \cdot d^2 / 2n$ edges. Because the minimum degree in each star is $\geq \frac{d^2}{2n}$ by \cref{lem:highDegreeLowDiam}, we then also know that each peeled star covers $\geq \frac{d^4}{4n^2}$ many edges. As discussed in \cref{lem:boundSpaceSingleValued}, there are at most $2nd \cdot f^{1/3}$ many edge slots to cover in the iterations when $d$ is the minimum degree bound, thus, at most 
    \[
    \frac{2nd \cdot f^{1/3}}{\frac{d^4}{4n^2}}  = \frac{8 n^{3}f^{1/3}}{d^3}
    \]
    stars are peeled in these iterations. 

    Across all choices of $d$, we then see that 
    \[
    \leq \sum_{d \in \{n/2, n/4, n/8, \dots n^{1/2}f^{1/3}\log(n) \}} \frac{8 n^{3}f^{1/3}}{d^3} = O \left ( \frac{n^{3}f^{1/3}}{n^{3/2} f \log^3(n)}\right ) = O \left ( \frac{n^{3/2}}{f^{2/3}\log^3(n)}\right ) < n^{1/2}f^{1/3},
    \]
    provided that $f \geq n$. Thus, at most $n^{1/2}f^{1/3}$ many stars are peeled off in the entire execution of the algorithm, as we desire.
\end{proof}

Next, we proceed to report distances. To do this, we first use the following key claim:

\begin{claim}
    Let $\bigg ( G_{\mathrm{remaining}}, \mathrm{Stars}, \mathrm{SparseRecoverySketches}, \mathrm{DegreesSingleStar} \bigg )$ be as returned by \cref{alg:ConstructOracle} on a graph $G = (V, E)$, and let $F \subseteq E$ be a set of $f$ edge deletions. Then, given access only to $\mathrm{Stars}$, for any edge $e \in F$, we can determine \emph{exactly} which stars cover the edge $e$.
\end{claim}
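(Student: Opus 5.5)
The plan is to \emph{replay} the covering counts of \cref{alg:ConstructOracle} for the single edge $e$. This works because whether a star covers $e$ depends only on $e$'s endpoints' positions in that star and on how many earlier stars already covered $e$. I will assume, as the construction naturally provides, that $\mathrm{Stars}$ is stored as an \emph{ordered} list in the order the stars were created in \cref{line:storeStar}. I also assume each star records its root, its layers $L_1$ (and $L_2$ when present), and whether it is a $1$-hop or a $2$-hop star. All of this costs only $O(\ell\log n)$ bits per star, so it is consistent with \cref{clm:boundStarSpace}.

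\textbf{Step 1: a local characterization.} Consider the $i$-th star $S_i$, built from $G_d = \mathrm{Peel}(G_{\mathrm{cur}}, d)$, where $G_{\mathrm{cur}} = G - \{e' : \mathrm{CoveringCounts}(e') \geq \mathrm{CoveringThreshold}\}$ at that moment. Peel returns an \emph{induced} subgraph of $G_{\mathrm{cur}}$, and $S_i \subseteq V(G_d)$. Hence for $e=(x,y)$ with $x,y \in S_i$ we have $e \in G_d$ iff $e \in G_{\mathrm{cur}}$. Since $F \subseteq E$, we know $e \in G$, so $e \in G_{\mathrm{cur}}$ iff $e$'s count before star $i$ is below $\mathrm{CoveringThreshold}$.

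Therefore $S_i$ covers $e$ if and only if both of the following hold:
\begin{enumerate}
\item[(a)] $e$ is \emph{structurally compatible} with $S_i$. For a $1$-hop star this means $x,y \in \{v\}\cup L_1$, since $G_d[\mathrm{Star}]$ is induced. For a $2$-hop star it means $e$ joins the root to $L_1$, or joins $L_1$ to $L_2$.
\item[(b)] The number of stars $S_1,\dots,S_{i-1}$ that covered $e$ is less than $10f^{1/3}$.
\end{enumerate}

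\textbf{Step 2: the replay.} Initialize a counter $c_e=0$ and scan the stars in order. At star $i$, declare that $S_i$ covers $e$ exactly when (a) holds and $c_e < 10 f^{1/3}$, and in that case increment $c_e$. By induction on $i$, $c_e$ equals $\mathrm{CoveringCounts}(e)$ just before star $i$ was processed. The reason is that $\mathrm{CoveringCounts}(e)$ changes only when a star covers $e$, and Step 1 says that happens exactly when (a) and (b) hold. So the declared set is exactly the set of stars covering $e$. This also tells us whether $e$ ended in $G_{\mathrm{remaining}}$, namely iff the final $c_e < 10f^{1/3}$.

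\textbf{Main obstacle.} The only delicate point is Step 1's claim that membership in $G_d$ reduces to the threshold condition. Here I will rely on two facts: Peel only deletes vertices, so it keeps the induced structure, and the star's vertices all survive Peel. A second subtlety is the definitional ambiguity between ``edges of $G_d[\mathrm{Star}]$'' and the star edges. For $2$-hop stars I will read the loop in \cref{alg:ConstructOracle} as ranging over the star's edges. This matches the definition of a $2$-hop star, and the compatibility test (a) is chosen to agree with whichever convention the construction uses.
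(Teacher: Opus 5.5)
Your proposal is correct and follows essentially the same route as the paper. The paper shows that $e$ is covered by exactly the first $10f^{1/3}$ stars, in creation order, for which $e$ is ``eligible'' (your structural-compatibility test (a)), so one only needs to scan the stored stars; your counter replay computes precisely this set. Your observation that $\mathrm{Peel}$ returns an induced subgraph containing the star's vertices supplies the justification behind the paper's terser step that an eligible edge lies in the star iff it is still present in the graph when the star is created. Your choice for the $2$-hop edge convention agrees with the paper's, which excludes edges inside $L_1$ or inside $L_2$.
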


Before proving the claim, we remark that such a claim is necessary in order to use our sparse recovery sketches: indeed, such sketches work only when the neighborhood of a vertex is updated accurately. If one mistakenly deletes edges that were not present to begin with, then the output from the sparse recovery sketch is not guaranteed to be meaningful (i.e., not guaranteed to be the true set of remaining edges incident on a given vertex). 

\begin{proof}
    First, for a star $\{v, L_1\}$, we say an edge $(u,w)$ is eligible to be in the star if $(u,w) \subseteq \{v, L_1\}$. For a star $\{v, L_1, L_2\}$, we say that $(u,w)$ is eligible to be in the star if $(u,w) \subseteq \{v \} \times L_1$ or $(u,w) \subseteq L_1 \times L_2$ (note that we do not consider $(u,w)$ eligible when both $u, w$ are in $L_1$ or both in $L_2$).

    With this notion, we claim that $(u,w)$ is covered exactly by the first $10f^{1/3}$ many stars for which $(u,w)$ is eligible. Thus, determining which stars cover the edge $e$ is deterministic; one must only iterate through the stored stars and compute whether an edge $(u,w)$ is eligible to be covered by such a star.

    The proof that eligibility mirrors the covering of edges is essentially by definition: an edge is present in a star if and only if (a) the edge is present in the graph at the time the star is created, and (b) the edge is eligible to be in the star (in this sense, stars are maximal). Note that condition (a) is always true when an edge has been covered $< 10 f^{1/3}$ many times. Together, these imply the above claim. 
\end{proof}

With this, we now present the algorithm for reporting distances when provided with a set of deletions $F$, and two vertices $s, t$ for whom the approximate distance is desired.

\begin{algorithm}[H]
    \caption{ReportDistances$(G_{\mathrm{remaining}}, \mathrm{Stars}, \mathrm{SparseRecoverySketches}, \mathrm{StarDegrees}, F, s, t)$}\label{alg:reportDistanceConstant}
    \For{$e = (u,w) \in F$}{
    Let $\mathrm{Stars}_e$ denote all stars which cover the edge $e$. \\
    \For{$\mathrm{Star} \in \mathrm{Stars}_e$}{
    Let $\mathrm{SparseRecoverySketches}_e$ denote the corresponding sparse recovery sketches for this star on vertices $u$ and $w$. Update these sketches to remove $e$ from the support of the sketch. \label{line:updateSupports}\\
    Let $\mathrm{DegreesSingleStar}_e$ denote the degree counts in $\mathrm{Star}$ for vertices $u$ and $w$. Decrement these degree counts by $1$. \\
    }
    }
$G_{\mathrm{approx}} = G_{\mathrm{remaining}} - F$. \\
\For{$\mathrm{Star} \in \mathrm{Stars}$}{
Let $\mathrm{Degrees}_{\mathrm{Star}}$ be the updated degree counts for vertices in $\mathrm{Star}$ (post deletions). \\
Let $\mathrm{SparseRecoverySketches}_{\mathrm{Star}}$ be the updated recovery sketches counts for vertices in $\mathrm{Star}$ (post deletions). \\
Let $v$ denote the center of $\mathrm{Star}$. \\
Let $\mathrm{RootDecrease}$ denote the number of covered edges by $\mathrm{Star}$ that were deleted in $F$ and incident to the root $v$. \\
\If{$\mathrm{RootDecrease} \leq \frac{n^{1/2}f^{1/3} \log(n)}{ 2}$}{
Let $v_{\mathrm{High}}$ denote all vertices in $\mathrm{Star}$ with post-deletion degree $\geq \frac{5 f^{1/3}}{\log^2(n)}$, and place all other vertices in $\mathrm{Star}$ in $v_{\mathrm{Low}}$. \\
$G_{\mathrm{approx}} \leftarrow G_{\mathrm{approx}} \cup K_{v_{\mathrm{High}}}$. \\
For vertices in $v_{\mathrm{Low}}$, use $\mathrm{SparseRecoverySketches}_{\mathrm{Star}}$ to exactly recover their neighborhoods, and add these edges to $G_{\mathrm{approx}}$. \\
}
}
\Return{$7 \cdot \mathrm{dist}_{G_{\mathrm{approx}}}(s,t)$.}
\end{algorithm}

Ultimately, we will show the following:

\begin{lemma}\label{lem:goodApprox}
    Let $G$ be a graph, let $G_{\mathrm{remaining}}, \mathrm{Stars}, \mathrm{SparseRecoverySketches}, \mathrm{StarDegrees}$ be as constructed by \cref{alg:ConstructOracle}, let $F \subseteq G$ be a set of $\leq f$ deletions, and let $\hat{d}_{G - F}(s, t)$ be the output from \cref{alg:reportDistanceConstant}. Then, 
    \[
    \mathrm{dist}_{G - F}(s, t) \leq \widehat{\mathrm{dist}}_{G - F}(s, t) \leq 7 \cdot \mathrm{dist}_{G - F}(s, t).
    \]
\end{lemma}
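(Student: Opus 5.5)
Write $\widehat{\mathrm{dist}}_{G-F}(s,t) = 7\cdot \mathrm{dist}_{G_{\mathrm{approx}}}(s,t)$. The proof reduces to two containment-type statements about $G_{\mathrm{approx}}$:
\[
\mathrm{dist}_{G_{\mathrm{approx}}}(s,t)\le \mathrm{dist}_{G-F}(s,t)
\quad\text{and}\quad
\mathrm{dist}_{G-F}(s,t)\le 7\cdot \mathrm{dist}_{G_{\mathrm{approx}}}(s,t).
\]
These give the two sides of the lemma, since multiplying the first by $7$ gives the upper bound and the second is the lower bound verbatim. This mirrors \cref{cor:distanceLowerBound} and \cref{clm:distanceUpperBound} from the expander-based oracle.

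\textbf{Upper bound on $\mathrm{dist}_{G-F}$ by $G_{\mathrm{approx}}$.} Every edge of $G_{\mathrm{approx}}$ is one of three kinds:
\begin{enumerate}
\item an edge of $G_{\mathrm{remaining}}-F$;
\item an edge recovered from a sparse recovery sketch;
\item a clique edge on $v_{\mathrm{High}}$ for some star whose root lost at most half its star-degree.
\end{enumerate}
Edges of the first two kinds lie in $G-F$. For recovered edges this uses the previous claim: we can determine exactly which stars cover each $e\in F$, so the sketches are updated correctly and decode to the true remaining neighborhood in $G_d[\mathrm{Star}]-F$. For a clique edge $(x,y)$, I would show $\mathrm{dist}_{G-F}(x,y)\le 7$. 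Both endpoints have post-deletion star degree above the threshold, and the root lost at most half of the star's minimum degree. By \cref{clm:highDegreeConnectRootDepth2} (or \cref{clm:highDegreeConnectRoot} for 1-hop stars), $x$ and $y$ reach the root within $3$ or $4$ hops.

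The catch is that a clique joins \emph{all} high vertices, not only endpoints of star edges. Two $L_2$ vertices would give $4+4=8$, which exceeds $7$. So I would restrict attention to clique edges used along a shortest path whose endpoints are adjacent in the star, or argue that only such pairs matter. The cleaner route is to adapt the construction/analysis so clique edges are charged only as in \cref{cor:nDeletionsStretch8}. Failing that, I would note that a pair in $L_2\times L_2$ can be replaced by a path through $L_1$, and handle it as an edge pair. This is the first delicate point.

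\textbf{Lower bound ($G-F\subseteq G_{\mathrm{approx}}$ up to short paths).} Fix $e=(x,y)\in G-F$. If $e\in G_{\mathrm{remaining}}$, it is kept directly. Otherwise $e$ was covered by $10f^{1/3}$ stars, each with a distinct root (roots are taken from $\mathrm{UnusedVertices}$). A root is \emph{spoiled} if it loses more than $\mathrm{TargetDegree}/2$ edges. Since $|F|\le f$, at most $2f/\mathrm{TargetDegree}<10f^{1/3}$ roots are spoiled, so some covering star has an unspoiled root. In that star, either $x$ or $y$ is in $v_{\mathrm{Low}}$, so $e$ is recovered exactly by sparse recovery, or both are in $v_{\mathrm{High}}$, so $e$ lies in the clique. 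Either way $e\in G_{\mathrm{approx}}$.

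\textbf{Main obstacle: parameter compatibility.} I must check that the degree threshold $5f^{1/3}/\log^2 n$ used for $v_{\mathrm{High}}$ matches $5f/d^*$ from \cref{clm:highDegreeConnectRootDepth2}, where the star minimum degree is $d^*\ge d^2/2n\ge f^{2/3}\log^2 n/2$. I must also check the hypothesis $d^*/2\ge 5f/d^*$, which holds for $f\ge n$, and that vertices in $v_{\mathrm{Low}}$ have degree below the sparse recovery threshold $100f^{1/3}$. Constant factors may need adjusting: $5f/d^*\approx 10f^{1/3}/\log^2 n$, not $5f^{1/3}/\log^2 n$. There is also a mismatch between the root test in the algorithm, which compares against $\mathrm{TargetDegree}/2$, and the claim's requirement of at most $d^*/2$ root deletions; since $d^*$ can be smaller than $\mathrm{TargetDegree}$, I would reconcile this by comparing against the stored root degree. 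I expect this bookkeeping, together with the $L_2$–$L_2$ clique issue, to be where most of the care is needed.
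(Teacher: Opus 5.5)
Your two-part plan is the same as the paper's. The first part is the containment $G-F\subseteq G_{\mathrm{approx}}$ (\cref{clm:containsGminusF}: $10f^{1/3}$ covering stars with distinct roots, averaging to find a lightly-hit root, then clique or sparse recovery). The second is the claim that every edge of $G_{\mathrm{approx}}$ is realized by a path of length at most $7$ in $G-F$ (\cref{clm:shortPath}, via \cref{cor:nDeletionsStretch8}). Your containment half is fine. One small slip: fewer than $4f/\mathrm{TargetDegree}$ roots can be spoiled, not $2f/\mathrm{TargetDegree}$, which is still below $10f^{1/3}$.

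The second half is not closed. The obstacles you flag are genuine, and the paper's proof of \cref{clm:shortPath} passes over them too.
\begin{itemize}
\item \textbf{Clique pairs that are not star edges.} The paper applies \cref{cor:nDeletionsStretch8} to every pair in $K_{v_{\mathrm{High}}}$, but that corollary only covers star edges (root--$L_1$ or $L_1$--$L_2$). For two high vertices of $L_2$, \cref{clm:highDegreeConnectRootDepth2} gives only $4+4=8$, and nothing in the robustness argument yields a shorter path.
\item \textbf{Root test.} The test $\mathrm{RootDecrease}\le\mathrm{TargetDegree}/2$ does not give the hypothesis of at most $d^*/2$ root deletions. The root's star degree is only guaranteed to be at least $d^2/(2n)$, which at the last level is about $f^{2/3}\log^2(n)/2$. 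That is below $\mathrm{TargetDegree}/2=n^{1/2}f^{1/3}\log(n)/2$ whenever $f^{1/3}\log(n)<n^{1/2}$ (e.g.\ $f=n$). So a root that has lost all of its star edges can pass the test and its clique is still added, while the robustness claims then guarantee nothing, not even connectivity.
\end{itemize}

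The remedies you sketch on the analysis side cannot work. The output is $7\cdot\mathrm{dist}_{G_{\mathrm{approx}}}(s,t)$, and a shortest $s$--$t$ path in $G_{\mathrm{approx}}$ may use an $L_2$--$L_2$ clique edge at cost $1$. You cannot ``restrict attention'' to star-adjacent pairs, or re-route that edge through $L_1$, without changing $\mathrm{dist}_{G_{\mathrm{approx}}}$ itself. The repair has to be made in \cref{alg:reportDistanceConstant}:
\begin{enumerate}
\item[(i)] Instead of $K_{v_{\mathrm{High}}}$, add only the \emph{eligible} pairs among high vertices: $\{v\}\times L_1$ and $L_1\times L_2$ for a $2$-hop star, all pairs for a $1$-hop star. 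This is computable from the stored vertex lists and still contains every star edge, so containment survives. Each added pair then has a path of length at most $3+4=7$ (at most $6$ for $1$-hop stars).
\item[(ii)] Compare $\mathrm{RootDecrease}$ with $d^*/2$, where $d^*$ is the star's minimum pre-deletion degree, available from the stored star degrees. The root found by averaging has at most $2f/(10f^{1/3})=f^{2/3}/5\le d^*/2$ deletions, so containment is unaffected.
\item[(iii)] Use $5f/d^*$ as the cutoff for $v_{\mathrm{High}}$. This is below the recovery threshold $100f^{1/3}$ and absorbs the factor-$2$ discrepancy you noticed.
\end{enumerate}
With these changes your argument proves the lemma with stretch $7$. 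For the algorithm as literally written, neither your proposal nor the paper's argument establishes $\mathrm{dist}_{G-F}(s,t)\le\widehat{\mathrm{dist}}_{G-F}(s,t)$.
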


We first prove the upper bound of the above expression. To do this, we require the following claim:

\begin{claim}\label{clm:containsGminusF}
     Let $G$ be a graph, let $G_{\mathrm{remaining}}, \mathrm{Stars}, \mathrm{SparseRecoverySketches}, \mathrm{StarDegrees}$ be as constructed by \cref{alg:ConstructOracle}, let $F \subseteq G$ be a set of $\leq f$ deletions, and let $G_{\mathrm{approx}}$ be as constructed by \cref{alg:reportDistanceConstant}. Then, $G- F \subseteq G_{\mathrm{approx}}$.
\end{claim}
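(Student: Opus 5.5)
We must show every edge $e=(u,w)\in G-F$ appears in $G_{\mathrm{approx}}$. The plan is a case split on how $e$ was handled by \cref{alg:ConstructOracle}. If $\mathrm{CoveringCounts}(e) < \mathrm{CoveringThreshold}$ at termination, then $e\in G_{\mathrm{remaining}}$. Since $e\notin F$, it survives into $G_{\mathrm{approx}} = G_{\mathrm{remaining}}-F$ and we are done.

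Otherwise $e$ was covered by exactly $10f^{1/3}$ stars, each with a distinct root, since roots are drawn from $\mathrm{UnusedVertices}$. The first key step is to show that at least one of these stars passes the root test in \cref{alg:reportDistanceConstant}, i.e.\ has $\mathrm{RootDecrease}\le \frac{n^{1/2}f^{1/3}\log(n)}{2}$. Each failing star's root has more than $\frac{n^{1/2}f^{1/3}\log(n)}{2}$ deleted edges incident to it. Since the roots are distinct and each deleted edge touches at most two roots, the number of failing roots is at most
\[
\frac{2f}{n^{1/2}f^{1/3}\log(n)/2}=\frac{4f^{2/3}}{n^{1/2}\log(n)}.
\]
This is below $10f^{1/3}$ whenever $f^{1/3}\le \frac{5}{2}n^{1/2}\log(n)$, which holds in the regime $f\le n^{3/2}$. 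So some star $\mathrm{Star}$ covering $e$ has a good root.

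Fix such a star. The deletion bookkeeping is correct by the preceding claim that we can determine exactly which stars cover each $e'\in F$. Hence the updated degree counts equal the true degrees in $G_d[\mathrm{Star}]-F$, and the updated sketches equal $g$ applied to the true surviving neighborhoods. We now split on the post-deletion degrees of $u$ and $w$ in this star. If both lie in $v_{\mathrm{High}}$, the clique $K_{v_{\mathrm{High}}}$ contains $e$. If one of them, say $u$, is in $v_{\mathrm{Low}}$, its surviving degree is below $\frac{5f^{1/3}}{\log^2(n)}\le 100f^{1/3}$, the sketch threshold. So \cref{cor:recoverNeighborsLowDegree} recovers exactly $\Gamma_{G_d[\mathrm{Star}]-F}(u)$, which contains $e$ because $e\in G_d[\mathrm{Star}]$ and $e\notin F$. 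Thus $e$ is added to $G_{\mathrm{approx}}$.

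The main obstacle is the counting step that guarantees a good-root star: the parameter arithmetic (threshold $\frac{n^{1/2}f^{1/3}\log n}{2}$ versus $10f^{1/3}$ copies, with the factor $2$ for edges touching two roots) must close, and it uses $f\le n^{3/2}$ together with the distinctness of roots. The other subtle point is that the sketch update is exact, which rests on the eligibility claim; a single mis-deletion would invalidate the recovered neighborhood.
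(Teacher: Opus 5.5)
Your proposal is correct and follows the paper's proof: the same split between edges in $G_{\mathrm{remaining}}-F$ and fully covered edges, the same bound of $2f$ on deletion incidences over the $10f^{1/3}$ distinct roots to find a star passing the root test, and the same $v_{\mathrm{High}}$/$v_{\mathrm{Low}}$ case split. You count failing roots where the paper averages, which is equivalent, and you also check explicitly, using $f\le n^{3/2}$, that the resulting bound falls below the root-test threshold $\frac{n^{1/2}f^{1/3}\log(n)}{2}$, a step the paper leaves implicit.
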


\begin{proof}
    First, observe that for any edge in $G_{\mathrm{remaining}} - F$, such an edge is also by default in $G_{\mathrm{approx}}$. So, we must only show that edges in $(G - G_{\mathrm{remaining}}) - F$ are in $G_{\mathrm{approx}}$. For these edges, we know that each one must be covered by $10 f^{1/3}$ many different stars as per the construction in \cref{alg:ConstructOracle}. 

    Now, fix such an edge $e = (u,w) \in (G - G_{\mathrm{remaining}}) - F$. Because $(u,w)$ is covered by $10 f^{1/3}$ many different stars, each with a different root, after applying the deletions from $F$, there must be some root which receives 
    \[
    \leq \frac{2f}{10 f^{1/3}} \leq \frac{f^{2/3}}{5}
    \]
    many deletions. Let us fix this star, and denote it by $\{v, L_1, L_2\}$ (the case where it is a $1$-hop star follows similarly). If $u, w$ have degree $\geq \frac{5f^{1/3}}{\log(n)}$, then $u,w$ are covered together by a clique in $G_{\mathrm{approx}}$, and so the edge $u,w$ is present. Otherwise, if one of $u, w$ has degree $< \frac{5f^{1/3}}{\log(n)}$, then the edge $(u,w)$ is recovered explicitly via our sparse recovery sketch (see \cref{clm:efficientUniqueSum}), and added to our graph $G_{\mathrm{approx}}$.
\end{proof}

\begin{corollary}\label{cor:upperBoundApprox}
     Let $G$ be a graph, let $G_{\mathrm{remaining}}, \mathrm{Stars}, \mathrm{SparseRecoverySketches}, \mathrm{StarDegrees}$ be as constructed by \cref{alg:ConstructOracle}, let $F \subseteq G$ be a set of $\leq f$ deletions, and let $G_{\mathrm{approx}}$ be as constructed by \cref{alg:reportDistanceConstant}. Then, 
     \[
     \widehat{\mathrm{dist}}_{G - F}(s, t) \leq 7 \cdot \mathrm{dist}_{G - F}(s, t).
     \]
\end{corollary}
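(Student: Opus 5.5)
The upper bound should follow directly from \cref{clm:containsGminusF}. By that claim, $G - F \subseteq G_{\mathrm{approx}}$. Any shortest $s$--$t$ path in $G-F$ is therefore also a path in $G_{\mathrm{approx}}$, which gives $\mathrm{dist}_{G_{\mathrm{approx}}}(s,t) \leq \mathrm{dist}_{G-F}(s,t)$. The oracle returns $\widehat{\mathrm{dist}}_{G-F}(s,t) = 7\cdot \mathrm{dist}_{G_{\mathrm{approx}}}(s,t)$ by the last line of \cref{alg:reportDistanceConstant}. Multiplying the previous inequality by $7$ yields the corollary. If $s,t$ are disconnected in $G-F$, the inequality is trivial.

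Since this is a one-line consequence, the real work is in \cref{clm:containsGminusF}, and I would double-check its ingredients.

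First, pigeonhole on roots. Each of the $10f^{1/3}$ stars covering $e$ has a distinct root, because \cref{line:newVertex} always picks an unused vertex. Deletions hit at most $2f$ edge endpoints in total. Hence some covering star's root loses at most $f^{2/3}/5$ edges. This is below the threshold $\mathrm{TargetDegree}/2$ used in \cref{alg:reportDistanceConstant}, so that star is processed rather than skipped.

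Second, within that star both endpoints must be handled. If both endpoints are high-degree, the clique $K_{v_{\mathrm{High}}}$ contains $e$. If one endpoint is low-degree, the sparse recovery sketch must decode correctly. Correct decoding needs the low-degree threshold in the reporting algorithm to be at most the sketch's sparsity parameter $100 f^{1/3}$, which it is. It also needs the sketch updates to be exact, which is guaranteed by the preceding claim identifying exactly which stars cover each deleted edge.

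The main (minor) obstacle is keeping this threshold bookkeeping consistent. For this corollary, however, it only matters that $G_{\mathrm{approx}}$ contains every edge of $G-F$, and that containment is exactly \cref{clm:containsGminusF}.
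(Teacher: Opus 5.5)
Your proposal is correct and takes the same approach as the paper: invoke \cref{clm:containsGminusF} to get $G-F \subseteq G_{\mathrm{approx}}$, deduce $\mathrm{dist}_{G_{\mathrm{approx}}}(s,t) \le \mathrm{dist}_{G-F}(s,t)$, and multiply by the factor $7$ returned by \cref{alg:reportDistanceConstant}. Your write-up correctly compares against $\mathrm{dist}_{G-F}$ throughout, whereas the paper's own proof has the typo $\mathrm{dist}_{G}$ in that chain of inequalities.
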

\begin{proof}
    By \cref{clm:containsGminusF}, we know that $G- F \subseteq G_{\mathrm{approx}}$.

    So, any edge $(u,w) \in G - F$ is also present in $G_{\mathrm{approx}}$, so 
    \[
    \mathrm{dist}_{G_{\mathrm{approx}}}(s,t) \leq \mathrm{dist}_{G}(s,t),
    \]
    and thus 
    \[
    \widehat{\mathrm{dist}}_{G_{\mathrm{approx}}}(s,t) = 7 \cdot \mathrm{dist}_{G_{\mathrm{approx}}}(s,t) \leq 7 \cdot \mathrm{dist}_{G}(s,t),
    \]
    This concludes the proof.
\end{proof}

Now, it remains only to prove the lower bound; i.e., that $\mathrm{dist}_{G - F}(s, t) \leq \widehat{\mathrm{dist}}_{G - F}(s, t)$. For this, we rely on the following claim:

\begin{claim}\label{clm:shortPath}
    Let $G$ be a graph, let $G_{\mathrm{remaining}}, \mathrm{Stars}, \mathrm{SparseRecoverySketches}, \mathrm{StarDegrees}$ be as constructed by \cref{alg:ConstructOracle}, let $F \subseteq G$ be a set of $\leq f$ deletions, and let $G_{\mathrm{approx}}$ be as constructed by \cref{alg:reportDistanceConstant}. Then, for any edge $e = (u,w) \in G_{\mathrm{approx}}$, $u$ and $w$ are connected in $G - F$ by a path of length $\leq 7$.
\end{claim}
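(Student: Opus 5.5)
I would prove \cref{clm:shortPath} by case analysis on how the edge $e=(u,w)$ entered $G_{\mathrm{approx}}$ in \cref{alg:reportDistanceConstant}. There are three sources.

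\emph{Edges of $G_{\mathrm{remaining}} - F$.} Such an edge lies in $G - F$, so the distance is $1$.

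\emph{Recovered neighbourhoods of low-degree vertices.} Fix a star and a vertex $x \in v_{\mathrm{Low}}$. Every stored degree and sketch has been updated exactly for the deleted edges of $F$ that the star covers; this uses the preceding claim that the covering stars of each $e\in F$ are determined exactly by eligibility. Hence the updated sketch of $x$ encodes precisely $\Gamma_{G_d[\mathrm{Star}]-F}(x)$. Its size is the updated degree, which is below the recovery threshold $100f^{1/3}$, since the low/high cutoff is at most $5f^{1/3}$. By \cref{clm:efficientUniqueSum} the decoded set is exactly this neighbourhood, so every recovered edge is an actual edge of $G-F$.

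\emph{Clique edges on $v_{\mathrm{High}}$.} This is the main case. Here $u,w$ are both high-degree in some star $H=G_d[\mathrm{Star}]$ with root $v$ that passed the test $\mathrm{RootDecrease}\le \frac{n^{1/2}f^{1/3}\log n}{2}$. Note that $u,w$ need not be adjacent in $H$. The plan is to show that each of $u,w$ has a path of length at most $3$ or $4$ to $v$ in $H-F\subseteq G-F$, using \cref{clm:highDegreeConnectRootDepth2} and \cref{clm:highDegreeConnectRoot}, and then concatenate the two paths.

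To apply those claims, I need three conditions with $d^*$ the minimum degree of $H$:
\begin{enumerate}
\item At most $d^*/2$ root edges are deleted.
\item The high-degree threshold is at least $5f/d^*$.
\item $d^*/2 \ge 5f/d^*$.
\end{enumerate}
By \cref{lem:highDegreeLowDiam} with $d\ge \mathrm{TargetDegree}=n^{1/2}f^{1/3}\log n$, we get $d^*\ge d^2/2n \ge f^{2/3}\log^2 n/2$. Then $5f/d^* = O(f^{1/3}/\log^2 n)$, which matches the algorithm's threshold $\frac{5f^{1/3}}{\log^2 n}$ up to constants. Condition 3 also holds for $f\ge n$. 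Condition 1 must be checked against the root-decrease test; I would compare $\frac{n^{1/2}f^{1/3}\log n}{2}$ with $d^*/2$ and, if needed, tighten the constants or thresholds so the test guarantees at most $d^*/2$ root deletions. This bookkeeping is where the argument is most fragile.

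The length bound is the main obstacle. The two-sided bound of $3+4=7$ assumes one endpoint is in $L_1$ and the other in $L_2$. For an arbitrary clique pair, two vertices of $L_2$ would give $4+4=8$. To handle this I would try two fixes.
\begin{itemize}
\item Refine the $L_2$ argument: the high-degree neighbour $w'\in L_1$ found in \cref{clm:highDegreeConnectRootDepth2} keeps more than $d^*/2$ neighbours, so it can be chosen non-$\mathrm{BAD}$. That gives a length-$2$ route to the root from $w'$, hence length $3$ from the $L_2$ vertex. Then every high vertex reaches $v$ within $3$ (or $4$ for $L_2$), and I would aim to show every $L_2$ high vertex has distance at most $3$.
\item Alternatively, restrict the clique to pairs whose bound totals at most $7$, or note that the returned value only needs the corrected stretch constant.
\end{itemize}
I would first try the first fix, since it yields total length at most $7$ and completes the claim.
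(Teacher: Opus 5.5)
Your case split matches the paper's, and your first two cases are fine: recovered edges are genuine because covering stars are identified exactly, and the low cutoff is far below the recovery threshold $100f^{1/3}$. The gap is in the clique case. You are right that $K_{v_{\mathrm{High}}}$ contains pairs that are not edges of the star. So \cref{cor:nDeletionsStretch8}, which is all the paper's proof invokes at this point, does not cover them, and two high vertices of $L_2$ only get $4+4=8$ from \cref{clm:highDegreeConnectRootDepth2}. But your main repair does not work. A $2$-hop star is bipartite with sides $\{v\}\cup L_2$ and $L_1$, so every $L_2$ vertex is at even distance from $v$ in $H-F$. ``At most $3$'' would therefore mean it has a surviving neighbour in $L_1\setminus\mathrm{BAD}$, and nothing forces this. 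The vertex $w'\in L_1$ produced in \cref{clm:highDegreeConnectRootDepth2} has more than $d^*/2$ surviving neighbours, but they lie in $L_2$. Hence $|\mathrm{BAD}|\le d^*/2$ says nothing about whether $w'$ kept its own root edge. Moreover, since $5f/d^*\le d^*/2$, the adversary can make every surviving neighbour of a high $L_2$ vertex bad.

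The value $8$ is actually attained. Let the $L_1$--$L_2$ part of the star consist of two disjoint dense bipartite blocks joined only through $v$, and take $u,u'\in L_2$ in different blocks. Delete all but $t=\lceil 10f^{1/3}/\log^2 n\rceil$ edges at each, and delete the root edges of their $2t\le d^*/2$ surviving neighbours. Then $u,u'\in v_{\mathrm{High}}$ and the root test passes. Yet every $u$--$u'$ path in $H-F$ passes through $v$, so $\mathrm{dist}_{H-F}(u,u')=8$. For $G$ equal to such a star (with $v$ chosen as a root), the statement as written fails. So the claim is only provable after the change in your second bullet, e.g.\ omitting $L_2$--$L_2$ pairs from the clique. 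That costs nothing in \cref{clm:containsGminusF}, because only star edges need to be covered. Alternatively, replace $7$ by $8$.

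Your Condition 1 is also not mere bookkeeping. The only guarantee is $d^*\ge d^2/(2n)\ge f^{2/3}\log^2 n/2$, and the test threshold $\frac{n^{1/2}f^{1/3}\log n}{2}$ exceeds $\frac{f^{2/3}\log^2 n}{4}$ whenever $f<8n^{3/2}/\log^3 n$. So the test admits more root deletions than \cref{clm:highDegreeConnectRootDepth2} tolerates. It must be replaced by $\mathrm{RootDecrease}\le d^*_{\mathrm{Star}}/2$, which is computable from the stored degrees; the pigeonhole in \cref{clm:containsGminusF} still supplies a root with at most $f^{2/3}/5\le d^*/2$ deletions.

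Finally, ``up to constants'' does not suffice for Condition 2, because the hypothesis $\deg\ge 5f/d^*$ is used exactly. The cutoff $5f^{1/3}/\log^2 n$ can be a factor $2$ too small and must be raised. This is harmless, since recovery works up to $100f^{1/3}$.
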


\begin{proof}
    For any edge $e \in G_{\mathrm{remaining}} - F$, these edges have a path of length $1$ in $G - F$, and so the property holds trivially. 

    Otherwise, we consider any edge $(u,w)$ in $G_{\mathrm{approx}} - (G_{\mathrm{remaining}} - F)$. There are only two ways for such an edge $(u,w)$ to be present:
    \begin{enumerate}
        \item There is some star such that either $u$ or $w$ was in $v_{\mathrm{Low}}$ in this star, and the edge $(u,w)$ was recovered by opening $\mathrm{SparseRecoverySketches}$. In this case, because we perform the appropriate deletions in \cref{line:updateSupports}, it must be the case that the edge $(u,w)$ is present in the graph $G - F$, and thus $u, w$ are connected by a path of length $1$ in $G - F$.
        \item There is some star whose root underwent $\leq n^{1/2}f^{1/3} \log(n) /2$ deletions and $u, w$ both have post-deletion degree $\geq \frac{5f^{1/3}}{\log^2(n)}$. In this case, by construction, we know that every star has minimum degree $d \geq (n^{1/2} f^{1/3} \log(n))^2 / n = f^{2/3}\log^2(n)$ (see \cref{lem:highDegreeLowDiam}). Thus, by \cref{cor:nDeletionsStretch8} if $u, w$ both have degree $\geq \frac{5f^{1/3}}{\log(n)}$, then $u$ and $w$ must be connected by a path of length $7$ within the post-deletion star, which ensures that $G - F$ contains a path between $u, v$ of length $\leq 7$.
    \end{enumerate}

    Thus, in all cases, for any edge $e  = (u,w) \in G_{\mathrm{approx}}$, there is a path of length $\leq 7$ between $(u,w)$ in $G - F$.
\end{proof}

This immediately gives the following corollary:

\begin{corollary}\label{cor:lowerBoundApprox}
    Let $G$ be a graph, let $G_{\mathrm{remaining}}, \mathrm{Stars}, \mathrm{SparseRecoverySketches}, \mathrm{StarDegrees}$ be as constructed by \cref{alg:ConstructOracle}, let $F \subseteq G$ be a set of $\leq f$ deletions, and let $G_{\mathrm{approx}}$ be as constructed by \cref{alg:reportDistanceConstant}. Then, 
     \[
     \widehat{\mathrm{dist}}_{G - F}(s, t) \geq \mathrm{dist}_{G - F}(s, t).
     \]
\end{corollary}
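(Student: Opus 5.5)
The plan is to derive \cref{cor:lowerBoundApprox} directly from \cref{clm:shortPath}, by replacing each edge of a shortest path in $G_{\mathrm{approx}}$ with a short path in $G - F$. Recall that \cref{alg:reportDistanceConstant} outputs $\widehat{\mathrm{dist}}_{G-F}(s,t) = 7 \cdot \mathrm{dist}_{G_{\mathrm{approx}}}(s,t)$. It therefore suffices to show
\[
\mathrm{dist}_{G-F}(s,t) \leq 7 \cdot \mathrm{dist}_{G_{\mathrm{approx}}}(s,t).
\]

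First dispose of the disconnected case. If $s$ and $t$ are disconnected in $G_{\mathrm{approx}}$, the right-hand side is $\infty$ and the inequality holds trivially. Otherwise, fix a shortest path $s = x_0, x_1, \dots, x_k = t$ in $G_{\mathrm{approx}}$, where $k = \mathrm{dist}_{G_{\mathrm{approx}}}(s,t)$.

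Each edge $(x_i, x_{i+1})$ lies in $G_{\mathrm{approx}}$. By \cref{clm:shortPath}, $x_i$ and $x_{i+1}$ are joined in $G - F$ by a path of length at most $7$. Concatenating these $k$ walks gives an $s$–$t$ walk in $G - F$ of length at most $7k$. Hence $\mathrm{dist}_{G-F}(s,t) \leq 7k$, which is exactly the claimed inequality.

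All of the substance sits in \cref{clm:shortPath}, which I take as given. That claim in turn relies on two facts: the sparse-recovery updates in \cref{line:updateSupports} being applied to exactly the right stars, which is guaranteed by the eligibility claim, and \cref{cor:nDeletionsStretch8}. The only point to check here is the handling of unreachable pairs. With that in place, the argument is a routine path-substitution and I expect no real obstacle.
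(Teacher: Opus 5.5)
Your proposal is correct and is exactly the paper's argument: since the oracle returns $7\cdot\mathrm{dist}_{G_{\mathrm{approx}}}(s,t)$, you replace each edge of a shortest $s$--$t$ path in $G_{\mathrm{approx}}$ by a path of length at most $7$ in $G-F$, which \cref{clm:shortPath} provides. Your explicit treatment of the case where $s$ and $t$ are disconnected in $G_{\mathrm{approx}}$ is a harmless addition that the paper leaves implicit.
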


\begin{proof}
    Indeed, we know that $\widehat{\mathrm{dist}}_{G - F}(s, t) = 7 \cdot \mathrm{dist}_{G_{\mathrm{approx}}}(s, t)$. For any edge $(u,w)$ in this shortest path from $s$ to $t$ in $G_{\mathrm{approx}}$, we know that there is a path from $u$ to $w$ in $G - F$ of length $\leq 7$ by \cref{clm:shortPath}. This yields the corollary.
\end{proof}

\begin{proof}[Proof of \cref{lem:goodApprox}]
    This follows from \cref{cor:lowerBoundApprox} and \cref{cor:upperBoundApprox}.
\end{proof}

\begin{proof}[Proof of \cref{thm:constantStretch}.]
The space bound follows from \cref{lem:finalConstantStretchSpaceBound} and the correctness follows from \cref{lem:goodApprox}.
    
\end{proof}

\section{On $\widetilde{O}(n \sqrt{f})$ Size Fault-Tolerant Spanners Against an Oblivious Adversary}\label{sec:ObliviousSpanners}

In some settings, it is desirable to have a spanner instead of a distance oracle. Unfortunately, in the fault-tolerant setting, it is known that any spanner tolerant to $f$ edge deletions must retain $\Omega(nf)$ edges. However, this lower bound only holds for deterministic spanners, or more generally, spanners that are fault-tolerant against \emph{adaptive} edge deletions. In this section, we show that if one relaxes this condition and instead requires spanners that are resilient only against \emph{oblivious} edge deletions (i.e., edge deletions that do not depend on the internal randomness of the spanner), then it is in fact possible to beat this $nf$ lower bound. Formally, we show the following:

\begin{theorem}\label{thm:obliviousSpanner}
    Given a graph $G = (V, E)$ and a parameter $f$, there is a randomized sketch $\mathcal{S}(G)$ of size $\widetilde{O}(n \sqrt{f})$ bits, such that for any set of edge deletions $F \subseteq E, |F| \leq f$, with probability $1 - 1 / n^{100}$ over the randomness of the sketch, $\mathcal{S}(G)$ recovers an $O(\log(n)\log\log(n))$ spanner of $G - F$.
\end{theorem}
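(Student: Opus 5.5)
We reuse the recursive high-degree expander decomposition of \cref{alg:buildDataStructure}, with $D = 8\sqrt{f}\log(n)$ and $O(\log n)$ levels, keeping the final crossing set $X_\ell$ of $\widetilde{O}(n\sqrt{f})$ edges explicitly. We then replace the "clique on high-degree vertices" step of \cref{alg:reportDistance} with actual edges of $G-F$. For each expander $G_j[V_i^{(j)}]$ and each vertex $v$ in it, we store three things:
\begin{enumerate}
\item its vertex set and degrees, as before;
\item the deterministic sparse recovery sketch of \cref{clm:efficientUniqueSum} with threshold $k = 4f/D = O(\sqrt{f}/\log n)$;
\item $k\cdot\mathrm{polylog}(n)$ independent $\ell_0$-samplers (\cref{thm:ell0samp}) on the incidence row of $v$ restricted to the expander (\cref{def:graphEncoding}).
\end{enumerate}
In addition, for each expander we store a uniformly random sample $R_i^{(j)}$ of $|E(G_j[V_i^{(j)}])|\log^3(n)/D'$ of its edges, with $D' = \Theta(D)$. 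This has $\widetilde{O}(|V_i^{(j)}|)$ edges in expectation, because $|E| \le |V_i^{(j)}|\cdot n$ is too weak. A cleaner choice is to sample each edge with probability $\log^3(n)/D$, which gives $\widetilde{O}(\mathrm{Vol}/D)$ edges per expander. To keep the total at $\widetilde{O}(n\sqrt f)$, we run the decomposition with degree threshold $D$ and sample at rate $\min(1,\log^3 n/\deg)$ relative to the actual min-degree. I expect this accounting to need care. The fallback is to sample $\widetilde{O}(\sqrt f)$ random incident edges per vertex via the $\ell_0$-samplers themselves. Space is then $\widetilde{O}(n\sqrt f)$, since each vertex per level uses $\widetilde{O}(k)$ bits plus samplers.

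On a query $F$, we route each deleted edge to its unique expander by \cref{clm:findingEdge}. We update the degrees and the sparse recovery sketches, and cancel $e$ from the $\ell_0$-samplers of both endpoints using \cref{rmk:removeEdge}. This is valid because $F$ is oblivious, so the deletions are independent of the sampler randomness. We also remove $F$ from the global samples. The output graph $H$ is the union of four parts:
\begin{enumerate}
\item $X_\ell - F$;
\item the exact neighborhoods of vertices of post-deletion degree $\le k$, recovered as in \cref{clm:Hcontains};
\item for each vertex of degree $\ge k+1$, the $\widetilde{O}(k)$ edges returned by its samplers;
\item $R_i^{(j)} - F$ restricted to $V_{\mathrm{good}}$.
\end{enumerate}
All these edges lie in $G-F$, so $H \subseteq G-F$ and distances never shrink.

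For stretch, the argument follows \cref{clm:distanceUpperBound}. Every edge of $G-F$ either lies in $H$, or has both endpoints of degree $>k$ in some expander, and we must show those endpoints are at distance $O(\log n\log\log n)$ in $H$. We have $|V_{\mathrm{bad}}| \le k$ by \eqref{eq:boundBadSizefdel}. A vertex of degree $\ge k+1$ therefore has at least a $1/(k+1)$ fraction of good neighbors. Drawing $k\log n$ uniform samples then finds a good neighbor with probability $1-n^{-\Omega(1)}$; the samples are uniform over the post-deletion neighborhood because deletions were applied linearly. It remains to show that the sampled good–good edges form an $\Omega(1/\log n)$ lopsided expander with high probability. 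By \cref{clm:goodGoodExpanderfDeletion}, $H'[V_{\mathrm{good}}]$ is such an expander with min-degree $\ge D/4$. The global sample restricted to it is a uniform sample of that graph at rate $\log^3(n)/D$, because $F$ and hence $V_{\mathrm{good}}$ are fixed before the randomness. So \cref{clm:subsampleExpander} applies, and \cref{clm:diameterBound} then gives diameter $O(\log n\log\log n)$. Composing these gives paths of length $O(\log n \log\log n)+2$.

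The main obstacle is the sampling step. We must argue cleanly that a sample drawn from $H$ before deletions, once restricted to the post-deletion induced graph on $V_{\mathrm{good}}$, is distributed as a rate-$p$ independent sample of $H'[V_{\mathrm{good}}]$. This holds for independent Bernoulli sampling, which is why we choose it over fixed-size sampling. We also need the sample budget to stay within $\widetilde{O}(n\sqrt f)$ over all levels. Finally, a union bound over the $\mathrm{poly}(n)$ expander/vertex events, with boosted repetitions, yields success probability $1-n^{-100}$.
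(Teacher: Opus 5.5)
There is a genuine gap in the space bound for the per-expander edge sample, and it is the step you flag as needing care. With the single threshold $D = 8\sqrt f\log n$, nothing prevents the first level of \cref{thm:expanderDecomp} from returning one expander with $\Theta(n^2)$ edges whose minimum degree is only $\widetilde{\Theta}(\sqrt f)$. An example is a clique on half the vertices, with each of the remaining $n/2$ vertices joined to $\Theta(D\log^2 n)$ random clique vertices: this survives the peeling and has constant conductance.

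To invoke \cref{clm:subsampleExpander} on the good core, you need a uniform rate of order $\log^3 n/\mathrm{mindeg}$. On this example that rate stores $\widetilde{\Theta}(n^2/\sqrt f)$ edges, which is far above $\widetilde{O}(n\sqrt f)$ whenever $f \ll n$. So the sample is not $\widetilde{O}(|V_i^{(j)}|)$ in general.

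Your patches do not close this. A uniform rate set by the expander's actual minimum degree fails on the same example, because that minimum degree is itself $\widetilde{\Theta}(\sqrt f)$. A degree-dependent per-edge rate, or the fallback of each vertex drawing $\widetilde{O}(\sqrt f)$ random neighbors, is no longer uniform edge sampling: edge $(u,v)$ survives with probability depending on $d_u, d_v$. Then \cref{clm:subsampleExpander} does not apply, and you give no replacement argument that the resulting graph has $O(\log n\log\log n)$ diameter on $V_{\mathrm{good}}$. A further complication for such a replacement is that rates fixed from pre-deletion degrees can be badly mismatched to the post-deletion degrees of good vertices.

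The missing idea, which the paper uses, is to let the degree threshold track the density of what remains. Run the min-degree expander decomposition with thresholds $D = n/2, n/4, \dots, 8\sqrt f\log n$. At each threshold, keep peeling off expanders until the residual graph has $O(nD\log^2 n)$ edges.

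Whenever threshold $D$ is active, the current graph therefore has only $O(nD\log^2 n)$ edges. A rate-$\mathrm{polylog}(n)/D$ sample of all its expanders then costs $\widetilde{O}(n)$ per level. The per-vertex sketches cost $\widetilde{O}(f/D) = \widetilde{O}(\sqrt f)$, because every threshold satisfies $D \ge 8\sqrt f\log n$; the same inequality keeps \cref{lem:robustnessfDeletion} valid at every level. Deleted edges are routed to the expander with the largest $D$, and then the smallest level, as in \cref{clm:findingEdgeSpanner}.

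With this schedule in place, the rest of your argument goes through essentially as written, with $k = 4f/D$ now level-dependent. Your explicit Bernoulli sample restricted to $V_{\mathrm{good}}$ and your use of deterministic sparse recovery for low-degree vertices are sound choices.
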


As in the previous section, we start by presenting the construction of the sketch, followed by a space analysis, and lastly the correctness guarantees.

\subsection{Building the Spanner Sketch}

The sketch follows the same general outline as in \cref{sec:buildDO}, with the minor caveat that now, it no longer suffices to simply know that a given component forms a good expander; instead, we require the ability to actually \emph{recover a subset of edges} which witnesses the low diameter property of this expander.

In order to accomplish this, within each expander component $V_i^{(j)}$, we instantiate $\frac{|G_j[V_i^{(j)}]|}{\log^3(n)}{D}$ many linear sketches of $\ell_0$-samplers. These sketches have the guarantee that, after any deletion of edges, they still return uniform, unbiased samples of the remaining edge set with high probability. Thus, for whichever expander components remain intact after deletion, we can then recover edges using these $\ell_0$-samplers and invoke \cref{clm:subsampleExpander} to argue that the recovered subgraph is indeed a witness to the expansion.

However, the above introduces an extra layer of subtlety: in particular, it is possible (for instance) that in the first round of the algorithm, the entire graph is an expander, and the entire graph can have $\Omega(n^2)$ edges. Thus, $\frac{|G_j[V_i^{(j)}]|}{\log^3(n)}{D} = \widetilde{\Omega} \left ( \frac{n^2}{\sqrt{f}}\right )$, which is much larger than we can afford to store (particularly in the regime when $f \leq n$). However, in this regime when the graph has $\Omega(n^2)$ edges, there is no reason for us to start with our minimum degree requirement being only $\approx \sqrt{f}$. Instead, we start with our minimum degree being as large as $n/2$, and iteratively decrease it by factors of $2$, thereby ensuring that when our minimum degree requirement is $D$, the number of edges in the graph is also bounded by $\widetilde{O}(n \cdot D)$.

With this, we formally present our algorithm below as \cref{alg:buildObliviousSpannerSketch}.

\begin{algorithm}[H]
    \caption{BuildObliviousSpannerSketch$(G = (V, E), f)$}\label{alg:buildObliviousSpannerSketch} 
    $X_{n} = E$. \\
    $D = n/2$. \\
    \While{$D \geq 8 \sqrt{f}\log(n)$}{
    $\ell = 1$. \\
    $G_{\ell, D} = (V, X_{2 \cdot D})$. \\
    Let $C$ be a sufficiently large constant. \\
    \While{$G_{\ell, D}$ has $\geq C n D \log^2(n)$ edges}{
    Let $V^{(\ell, D)}_1, \dots V^{(\ell, D)}_{k_{\ell, D}}$ denote the high-degree expander decomposition of $G_{\ell, D}$ as guaranteed by \cref{thm:expanderDecomp} with $D$, and let $X_{\ell+1, D}$ denote the crossing edges. \\
    \For{$j \in [k_{\ell, D}]$, and each vertex $v \in V_j^{(\ell, D)}$}{ 
    Let $\Gamma_{G_{\ell, D}[V_j^{(\ell, D)}]}(v)$ denote the edges leaving $v$ in the expander $G_{\ell, D}[V_j^{(\ell, D)}]$. \\
    Let $\mathcal{S}_{v, \ell_0}^{(\ell, D)}$ be $\frac{4f}{D} +1$ many $\ell_0$-samplers (with independent randomness) over $\Gamma_{G_{\ell, D}[V_j^{(\ell, D)}]}(v)$. \\
    }
    \For{$j \in [k_{\ell, D}]$}{ 
    Let $\mathcal{S}^{(j, \ell, d)}_{\ell_0}(G_{\ell, D}[V_j^{(\ell, D)}])$ denote a set of $\frac{|G_{\ell, D}[V_j^{(\ell, D)}]| \log^5(n)}{D}$ many linear sketches of $\ell_0$ samplers (each with independent randomness) over the edge set of $G_{\ell, D}[V_j^{(\ell, D)}]$.
    }
    Let $G_{\ell +1, D} = (V, X_{\ell+1, D})$. \\
    $\ell \leftarrow \ell + 1$.
    }
    $X_{D} = X_{\ell, D}$. \\
    $D \leftarrow D /2 $. 
    }
    $X = X_{\ell, D}$. \\
    $\mathcal{S}(G) = \bigg ( \bigg \{(V^{(j, D)}_1, \dots V^{(j, D)}_{k_{j, D}}): j \in [\ell], D \in \{n/2, \dots 8 \sqrt{f}\log(n) \} \bigg \},$ \\
    \quad \quad \quad \quad \quad  $\{\mathrm{deg}_{G_{j, D}[V_i^{(j, D)}]}(v): j \in [\ell], i \in k_{j, D}, v \in V^{(j, D)}_i, D \in \{n/2, \dots 8 \sqrt{f}\log(n) \} \},$ \\
    \quad \quad \quad \quad \quad $\{ \mathcal{S}_{v, \ell_0}^{(\ell, D)}: j \in [\ell], v \in \bigcup_{i \in k_{j, D}} V^{(j, D)}_i,  D \in \{n/2, \dots 8 \sqrt{f}\log(n) \}\}, $ \\
    \quad \quad \quad \quad \quad $\{\mathcal{S}^{(j, \ell, d)}_{\ell_0}(G_{\ell, D}[V_j^{(\ell, D)}]): j \in [\ell], i \in k_{j, D}, D \in \{n/2, \dots 8 \sqrt{f}\log(n) \} \}$ \\
    \quad \quad \quad \quad \quad $X \bigg )$ \\
\Return{$\mathcal{S}(G)$}.
\end{algorithm}

\subsection{Bounding the Space}

Now, we bound the space used by \cref{alg:buildObliviousSpannerSketch}. We prove the following lemma:

\begin{lemma}\label{lem:spaceBoundSpanner}
    For a graph $G = (V, E)$, and a parameter $f$, \cref{alg:buildObliviousSpannerSketch} returns a sketch that requires $\widetilde{O}(n \sqrt{f})$ bits of space.
\end{lemma}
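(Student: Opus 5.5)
Tally the space of \cref{alg:buildObliviousSpannerSketch} component by component, mirroring \cref{lem:spaceBound}. The new ingredients are the extra outer loop over $D$ and the per-expander collections of $\ell_0$-samplers. First bound the number of iterations. The outer loop halves $D$ from $n/2$ down to $8\sqrt{f}\log(n)$, so it runs $O(\log n)$ times. For a fixed $D$, the argument of \cref{lem:spaceBound} applies: each round of \cref{thm:expanderDecomp} leaves at most $|X|/2 + C' n D\log^2(n)$ crossing edges, so while $|G_{\ell,D}| \geq C nD\log^2(n)$ with $C = 4C'$ the edge count shrinks by a $3/4$ factor. Hence the inner loop also runs $O(\log n)$ times, giving $O(\log^2 n)$ levels overall. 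At each level the stored partitions and degrees cost $\widetilde{O}(n)$ bits, so these pieces contribute only $\widetilde{O}(n)$ in total.

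Next, handle the per-vertex samplers. Each vertex at each level stores $\frac{4f}{D}+1$ samplers. By \cref{thm:ell0samp} with $u=\binom{n}{2}$ and $\delta = 1/\mathrm{poly}(n)$, each sampler uses $O(\log^3 n)$ bits. Since $D \geq 8\sqrt{f}\log(n)$, the per-vertex cost is $\widetilde{O}(f/D + 1) = \widetilde{O}(\sqrt{f})$. Summed over $n$ vertices and $O(\log^2 n)$ levels, this gives $\widetilde{O}(n\sqrt{f})$.

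The main obstacle is the global samplers. A component $G_{\ell,D}[V_j]$ stores $|G_{\ell,D}[V_j]|\log^5(n)/D$ samplers, and this could blow up if the graph at level $D$ had many more than $nD$ edges. I will establish the invariant that, when the outer loop is at value $D$, the input $X_{2D}$ has $O(nD\log^2 n)$ edges. This holds because the previous outer iteration, with degree $2D$, ran its inner loop until fewer than $C n (2D)\log^2(n)$ edges remained. At the first iteration, $D = n/2$ and the bound $|E| \leq n^2 = O(nD)$ holds trivially. Since the components at a level are vertex-disjoint induced subgraphs of a graph with at most $|X_{2D}|$ edges, the number of global samplers at that level is at most $|X_{2D}|\log^5(n)/D = \widetilde{O}(n)$. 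Each sampler costs $\mathrm{polylog}(n)$ bits, so each level contributes $\widetilde{O}(n)$, and over $O(\log^2 n)$ levels the total is $\widetilde{O}(n)$.

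Finally, the stored residual set $X$ is the output at $D \approx 8\sqrt{f}\log(n)$, which has $O(nD\log^2 n) = \widetilde{O}(n\sqrt{f})$ edges. Each edge takes $O(\log n)$ bits, so $X$ costs $\widetilde{O}(n\sqrt{f})$ bits. Adding the four contributions gives $\widetilde{O}(n\sqrt{f})$ bits. If $8\sqrt{f}\log n > n/2$, the loop never executes and we store $E$ directly, which has at most $n^2 = \widetilde{O}(n\sqrt{f})$ edges in that regime.
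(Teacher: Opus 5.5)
Your proposal is correct and follows essentially the same approach as the paper. The key step matches: the invariant that the input at level $D$ has fewer than $C n (2D)\log^2(n)$ edges bounds the global $\ell_0$-samplers, and the per-vertex samplers, partitions, degrees, and residual $X$ are tallied the same way. The only differences are cosmetic, since you bound the inner loop by $O(\log n)$ rounds where the paper uses that same invariant to get $O(1)$ rounds per value of $D$, and you explicitly treat the degenerate regime $8\sqrt{f}\log(n) > n/2$; neither affects the $\widetilde{O}(n\sqrt{f})$ bound.
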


\begin{proof}
    First, we bound the number iterations in \cref{alg:buildObliviousSpannerSketch}. For this, observe that for a given value of $D$, it must be the case that the starting graph $G_{1, D}$ has at most $2C n D \log^2(n)$ edges (as otherwise, the algorithm would not have terminated the previous iterations with $2 \cdot D$ instead of $D$. Now, by \cref{thm:expanderDecomp}, there exists a constant $C'$ such that after each invocation of the expander decomposition, the number of edges in $X_{i, D}$ is at most $|X_{i-1, D}|/2 + C' \cdot n \cdot D \cdot\log^2(n)$. So, if we let our large constant in \cref{alg:buildDataStructure} be $C = 4 \cdot C'$, then we know that in every iteration where $|X_{i, D}| \geq CnD\log^2(n)$, it is the case that $|X_{i+1, D}| \leq \frac{|X_{i, D|}}{2} + \frac{|X_{i, D}|}{4} \leq \frac{3|X_{i, D}|}{4}$. Thus after only $\ell = O(1)$ iterations, it must be the case that $|X_{\ell, D}| \leq C \cdot nD\log^2(n)$. Finally, because there are at most $O(\log(n))$ choices of $D$, we thus obtain that there are at most $O(\log(n))$ iterations in  \cref{alg:buildObliviousSpannerSketch}.

    The first piece of information that is stored in our data structure is the identity of all the components in each level of our expander decomposition. These components always form a partition of our vertex set, and thus storing $(V^{(\ell)}_1, \dots V^{(j)}_{k_{j}})$ for a single iteration requires $\widetilde{O}(n)$ bits of space. Across all $O(\log(n))$ of the decomposition, storing this information still constitutes only $\widetilde{O}(n)$ bits of space.

    Next, to store all the vertex degrees, for each vertex we store a number in $[n]$, which requires $O(\log(n))$ bits. Across all $n$ vertices and $O(\log(n))$ levels of the decomposition, this too constitutes only $\widetilde{O}(n)$ bits of space.

    Next, we consider the sparse recovery sketches that we store for each vertex. Recall that each sketch is $ \frac{4f}{D} + 1$ many $\ell_0$-samplers.  By \cref{thm:ell0samp}, we can instantiate each $\ell_0$-sampler with universe $u = \binom{V}{2}$, and failure parameter $\delta = 1 / n^{1000}$, and these will still take total space $\widetilde{O}(4f / D)$. Summing across all $n$ vertices and all $O(\log(n))$ iterations, the total space required by these samplers is bounded by $\widetilde{O}(nf / \sqrt{f}) = \widetilde{O}(n\sqrt{f})$, where we have used that $D \geq \sqrt{f}$ in each iteration.

    Next, for the $\ell_0$-samplers that we store for sampling expanders in each iteration, recall that for each expander component $V^{(j, D)}$, we store $\frac{|G_{j, D}[V^{(j, D)_i}]|\log^5(n)}{D}$ many linear sketches of $\ell_0$-samplers. Thus, in a single iteration, the number of $\ell_0$-samplers we store is bounded by
    \[
    \sum_{i \in [k_{j, D}]} \frac{|G_{j, D}[V^{(j, D)_i}]|\log^5(n)}{D} \leq \frac{|G_{j, D}| \log^5(n)}{D} \leq \frac{2CnD\log^2(n)\log^5(n)}{D} = \widetilde{O}(n).
    \]
    By \cref{thm:ell0samp}, we can instantiate each $\ell_0$-sampler with universe $u = \binom{V}{2}$, and failure parameter $\delta = 1 / n^{1000}$, and these will still take total space $\widetilde{O}(n)$. Summing across all $O(\log(n))$ iterations then yields a total space bound of $\widetilde{O}(n)$ bits for the $\ell_0$-samplers.

    Finally, we bound the space of the final crossing edges $X$. By the termination condition in \cref{alg:buildObliviousSpannerSketch}, it must be the case that the final crossing edges $|X| = O(n \sqrt{f}\log^3(n))$ (as in the last iteration, $D \leq 16 \sqrt{f}\log(n)$), and therefore the total space required for storing these edges is $\widetilde{O}(n \sqrt{f})$ bits of space. This concludes the lemma. 
\end{proof}

\subsection{Proving Spanner Recovery}

Finally, in this section we show how to use our aforementioned sketch to recover a spanner (with high probability) in the post-deletion graph. This recovery procedure is very similar to \cref{alg:reportDistance}, except that instead of replacing each expander component with a clique, we instead replace it with a sub-sample of the expander (but which retains the same expansion properties).

Note that just as \cref{alg:reportDistance} required \cref{clm:findingEdge}, we use the following generalization of \cref{clm:findingEdge} to our setting (where now the degree lower bound is also evolving over time):

\begin{claim}\label{clm:findingEdgeSpanner}
    Let $G = (V, E)$ be a graph, let $f$ be a parameter, and let $\mathcal{S}(G)$ be constructed as in \cref{alg:buildObliviousSpannerSketch}. Then, an edge $e \in G_{j, D}[V_i^{(j, D)}]$ if and only if $V_i^{(j, D)}$ is the component with (1) the largest value of $D$, and (2) conditioned on having the largest value of $D$, has the \emph{smallest} value of $j$ such that $e \subseteq V_i^{(j, D)}$.
\end{claim}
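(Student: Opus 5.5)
The plan is to follow the lifetime of a single edge $e=(u,w)\in E$ through \cref{alg:buildObliviousSpannerSketch}. The key invariant is that the edge sets processed by the algorithm are nested and only shrink. Every invocation of \cref{thm:expanderDecomp} produces \emph{induced} subgraphs $G_{j,D}[V_i^{(j,D)}]$ of the current graph $G_{j,D}$, and only the crossing edges $X_{j+1,D}$ are passed on. The leftover set $X_D$ at the end of the inner loop for $D$ is then the starting edge set for the outer iteration with $D/2$. So $e$ has a single ``time of absorption'': the first pair $(D,j)$ in the processing order (decreasing $D$, then increasing $j$) at which $e$ lies inside some part. It is never present afterwards. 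I would first state this ordering precisely and check that it is a total order in which every stage's input is the crossing set of the previous stage.

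For the forward direction, suppose $e\in G_{j,D}[V_i^{(j,D)}]$. Then $e\subseteq V_i^{(j,D)}$, so this part is a candidate. Consider any earlier stage $(D',j')$, meaning either $D'>D$, or $D'=D$ and $j'<j$. Since $e$ survived to $G_{j,D}$, it was a crossing edge at that earlier stage. Two facts then apply. First, the parts at any stage are disjoint. Second, each part is induced on the stage's current graph, and $e$ belonged to that graph because edge sets only shrink. Together these imply that no part $V_{i'}^{(j',D')}$ contained both endpoints. Otherwise $e$ would have been in $G_{j',D'}[V_{i'}^{(j',D')}]$ and not crossing. So $V_i^{(j,D)}$ is the lexicographically first part containing $e$, and by disjointness it is the unique such part at its stage.

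For the converse, let $V_i^{(j,D)}$ be the lexicographically first part with $e\subseteq V_i^{(j,D)}$. Minimality together with the argument above shows that $e$ was a crossing edge at every earlier stage, so $e$ survived into $G_{j,D}$. Since the part is an induced subgraph of $G_{j,D}$ and contains both endpoints, $e\in G_{j,D}[V_i^{(j,D)}]$.

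The only real obstacle is bookkeeping at the boundary between outer iterations. One must check that the input $G_{1,D}=(V,X_{2D})$ is exactly the leftover edge set from the last inner stage for $2D$. One must also check that vertices peeled by the minimum-degree step of \cref{thm:expanderDecomp} belong to no part, so their edges count as crossing and are carried forward. Once these two points hold, the argument above goes through verbatim, exactly as in \cref{clm:findingEdge}.
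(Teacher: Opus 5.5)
Your proposal is correct and follows the same argument as the paper, which simply reuses the reasoning of \cref{clm:findingEdge}. That argument is: parts are induced subgraphs of the current graph $G_{j,D}$, edge sets only shrink along the processing order (decreasing $D$, then increasing $j$), and so an edge is absorbed exactly by the first part containing both of its endpoints. Your version spells out the induction and the boundary bookkeeping that the paper leaves implicit, namely the hand-off $G_{1,D}=(V,X_{2D})$ and the fact that peeled vertices lie in no part.
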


This claim follows from the same reasoning as \cref{clm:findingEdge}: because we construct our expanders as induced subgraphs, whichever induced subgraph is the \emph{first} to contain an edge $e$ will be the expander that contains $e$. In \cref{alg:buildObliviousSpannerSketch}, expanders are built in decreasing order of $D$, and then increasing order of $j$, hence yielding \cref{clm:findingEdgeSpanner}.

Now, the formal pseudocode of our recovery algorithm is provided below as \cref{alg:recoverSpanner}.

\begin{algorithm}[H]
\caption{RecoverSpanner$(\mathcal{S}(G), F)$}\label{alg:recoverSpanner}
\For{$e = (u,v) \in F$}{
\If{exists component $V^{(j, D)}_i$ such that $e \subseteq V^{(j, D)}_i$}{
Let $V^{(j, D)}_i$ denote the first such component (i.e., for the largest value of $D$, smallest value of $j$, as per \cref{clm:findingEdgeSpanner}). \\
$\mathrm{deg}_{G_{j, D}[V_i^{(j, D)}]}(v) \leftarrow \mathrm{deg}_{G_{j, D}[V_i^{(j, D)}]}(v) -1 $. \\
$\mathrm{deg}_{G_{j, D}[V_i^{(j, D)}]}(u) \leftarrow \mathrm{deg}_{G_{j}[V_i^{(j, D)}]}(u) -1 $. \\
$\mathcal{S}_{v, \ell_0}^{(j, D)} \leftarrow \mathcal{S}_{v, \ell_0}^{(j, D)} - \mathcal{S}_{v, \ell_0}^{(j, D)}(e)$. \label{line:deleteNeighborhoodEdges1}\\
$\mathcal{S}_{u, \ell_0}^{(j, D)} \leftarrow \mathcal{S}_{u, \ell_0}^{(j, D)} - \mathcal{S}_{u, \ell_0}^{(j, D)}(e)$.\label{line:deleteNeighborhoodEdges2} \\
$\mathcal{S}^{(i, j, D)}_{\ell_0}(G_{\ell, D}[V_i^{(j, D)}]) \leftarrow \mathcal{S}^{(i, j, d)}_{\ell_0}(G_{j, D}[V_i^{(j, D)}]) - \mathcal{S}^{(i, j, d)}_{\ell_0}(e)$. \\
}
\Else{
$X \leftarrow X- e$. \label{line:deleteCrossingEdgesinF}\\
}
}
Let $H = (V, \emptyset)$. \\
\For{$V^{(j, D)}_i: D \in \{n/2, \dots 8\sqrt{f}\log(n)\}, j \in [\ell], i \in [k_{j, D}]$}{
Let $V'^{(j, D)}_i \subseteq V^{(j, D)}_i$ denote all vertices $u \in V^{(j, D)}_i$ such that $\mathrm{deg}_{G_{j}[V_i^{(j)}]}(u) \geq D/2$. \label{line:defineV'}\\
\For{$u \in V^{(j, D)}_i - V'^{(j, D)}_i$}{
Open the $\ell_0$-samplers $\mathcal{S}_{v, \ell_0}^{(j, D)}$ to reveal edges $\Gamma_{V^{(j, D)}_i}(u)$. \\
$H \leftarrow H + \Gamma_{V^{(j)}_i}(u)$. \label{line:addFullNeighborhoodSpanner} \\
$\mathcal{S}^{(i, j, D)}_{\ell_0}(G_{\ell, D}[V_i^{(j, D)}]) \leftarrow \mathcal{S}^{(i, j, d)}_{\ell_0}(G_{j, D}[V_i^{(j, D)}]) - \mathcal{S}^{(i, j, D)}_{\ell_0}(\Gamma_{V^{(j)}_i}(u))$. \label{line:deleteEdgesell0samp}\\
}
Open the $\ell_0$-samplers $\mathcal{S}^{(i,j, D)}_{\ell_0}(G_{j, D}[V_i^{(j, D)}])$ to reveal a subgraph $\widetilde{G}_{j, D}[V_i^{(j, D)}]$. \\
$H \leftarrow H + \widetilde{G}_{j, D}[V_i^{(j, D)}]$
\label{line:addSubsampleExpander} \tcp{Add the subsampled expander.} 

}
$H \leftarrow H + X$. \label{line:finishHSpanner}\\
\Return{$H$.}
\end{algorithm}

Ultimately, we will show that $H$ is an $O(\log(n)\log\log(n))$ spanner of $G - F$ with high probability. We start by showing that $H$ is indeed a subgraph of $G - F$:

\begin{claim}\label{clm:containsH}
    Let $G = (V, E)$ be a graph, let $f$ be a parameter, and let $\mathcal{S}(G)$ be the result of invoking \cref{alg:buildObliviousSpannerSketch} on $G, f$. Now, suppose that we call \cref{alg:recoverSpanner} with $\mathcal{S}(G)$ and a set of deletions $F$. Then, the graph $H$ as returned by \cref{alg:buildObliviousSpannerSketch} is a subgraph of $G - F$. 
\end{claim}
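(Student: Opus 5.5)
We show that every edge placed into $H$ by \cref{alg:recoverSpanner} lies in $G - F$, by going through the three places where edges are added.

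\textbf{Step 1: the leftover edges.} In \cref{line:finishHSpanner} the algorithm adds the final crossing edges $X$. By construction $X \subseteq E$. Take any $e \in F \cap X$. By \cref{clm:findingEdgeSpanner}, $e$ is not contained in any component $V_i^{(j,D)}$, since otherwise it would belong to that expander and not to $X$. So the \texttt{else} branch of \cref{line:deleteCrossingEdgesinF} fires for $e$ and removes it. Hence the added set is exactly $X - F \subseteq G - F$.

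\textbf{Step 2: the sketches track the true post-deletion edge sets.} Fix a component $V_i^{(j,D)}$.
\begin{itemize}
\item By \cref{clm:findingEdgeSpanner}, an edge $e \in F$ is subtracted from this component's sketches if and only if $e \in G_{j,D}[V_i^{(j,D)}]$.
\item The vertex samplers and the component samplers are linear (\cref{rmk:removeEdge}). Therefore, after the deletion loop, $\mathcal{S}_{u,\ell_0}^{(j,D)}$ is a sketch of $\Gamma_{G_{j,D}[V_i^{(j,D)}] - F}(u)$.
\item Likewise, the component sketch is a sketch of $G_{j,D}[V_i^{(j,D)}] - F$.
\end{itemize}
The point to be careful about is this: subtracting an edge that was never in the sketched vector would create a $-1$ coordinate, and a sampler could then output a non-edge. \cref{clm:findingEdgeSpanner} rules this out for deletions made in the deletion loop.

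\textbf{Step 3: the low-degree neighbourhoods.} In \cref{line:addFullNeighborhoodSpanner} the algorithm opens the per-vertex $\ell_0$-samplers of a low-degree vertex $u$. Each sampler returns either $\perp$ or an element of the support of its vector. By Step 2, that support is $\Gamma_{G_{j,D}[V_i^{(j,D)}]-F}(u) \subseteq G - F$. So every edge added here lies in $G - F$.

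\textbf{Step 4: the subsampled expander.} This step must handle the subtraction in \cref{line:deleteEdgesell0samp}, where edges recovered in Step 3 are removed from the component sketch.
\begin{itemize}
\item For a single $u$, the samplers return edges within $u$'s neighbourhood, and we subtract each distinct recovered edge once.
\item If both endpoints $u, w$ of an edge are low-degree, the same edge could be recovered twice and subtracted twice. We handle this by subtracting the set of distinct edges recovered across all low-degree vertices of the component, equivalently skipping any edge already removed.
\end{itemize}
With that convention, the component sketch now encodes a $0/1$ vector: it is $G_{j,D}[V_i^{(j,D)}] - F$ minus a subset of that same set. The subtraction is valid by \cref{rmk:removeEdge}, since the recovered edges are genuine support elements. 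Any edge the component samplers return in \cref{line:addSubsampleExpander} therefore lies in $G_{j,D}[V_i^{(j,D)}] - F \subseteq G - F$.

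Combining Steps 1, 3 and 4, $H \subseteq G - F$. This holds deterministically, except for the event that some sampler errs. That failure probability is $1/\mathrm{poly}(n)$ per sampler, and a union bound over all samplers keeps the total failure probability at most $1/n^{100}$.
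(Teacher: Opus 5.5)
Your proof is correct and follows the paper's argument. Like the paper, you check the three sources of edges ($X - F$, the recovered low-degree neighbourhoods, and the subsampled expander) and show each lies in $G - F$, because the deletions were subtracted from the linear sketches and an $\ell_0$-sampler only returns elements of the support. Your explicit appeal to \cref{clm:findingEdgeSpanner}, to justify that each deletion hits exactly the right sketches, is a useful detail the paper leaves implicit. The distinct-edge convention in Step~4 is harmless but unnecessary: a doubly subtracted edge still lies in $G_{j,D}[V_i^{(j,D)}] - F$, so the sampler's support is unaffected. Also, under \cref{def:ell0samp} a sampler never returns a non-support element, so the conclusion holds deterministically and the closing failure-probability caveat is not needed.
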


\begin{proof}
    Let us consider any edge $e \in H$. We claim that any such edge $e$ must also be in $G - F$. Indeed, we can observe that there are only three ways for an edge to be added to $H$:
    \begin{enumerate}
        \item In \cref{line:finishHSpanner}, we add the final crossing edges $X$ which were not in $F$ (in \cref{line:deleteCrossingEdgesinF} we deleted all edges in $X \cap F$). Note that $X \subseteq G$ by construction, and so $X - F \subseteq G - F$.
        \item In \cref{line:addFullNeighborhoodSpanner}, whenever a vertex has degree $\leq D/2$ in an expander component $G_{j, D}[V_i^{(j, D)}] - F$, we open its $\ell_0$ samplers to recover edges in its neighborhood and add those edges to $H$. Importantly, these are still edges that are a subset of $G - F$, as in \cref{line:deleteNeighborhoodEdges1} and \cref{line:deleteNeighborhoodEdges2}, we subtracted $g(e)$ from $ \mathcal{S}_u^{(j, D)}$ whenever $e = (u,v)$ appeared in $F$. 
        \item Otherwise, in \cref{line:addSubsampleExpander}, for vertices whose degree remains $\geq \frac{4f}{D} + 1$ in some expander component $G_{j, D}[V_i^{(j, D)}]$ after performing deletions $F$, we open our $\ell_0$-samplers for the remaining edges. Importantly, because we have deleted all edges in $F$ from these samplers (see \cref{line:deleteEdgesell0samp}), by \cref{def:ell0samp}, the edges we recover must be in $G_{j, D}[V_i^{(j, D)}] - F$, and so these edges must be contained in $G- F$ as well.
    \end{enumerate}
\end{proof}

Next, we show that with high probability, $H$ is indeed an $O(\log(n)\log\log(n))$ spanner of $G$. To do this, we have the following subclaim:

\begin{claim}\label{clm:recoverExpander}
    Let $G, F, H$ be as in \cref{clm:containsH}. Then, for a lopsided-expander component $G_{j, D}[V_i^{(j, D)}]$, when opening the $\ell_0$ samplers $\mathcal{S}^{(i, j, D)}_{\ell_0}(G_{j, D}[V_i^{(j, D)}])$ to reveal a subgraph $\widetilde{G}_{j, D}[V_i^{(j, D)}]$, it is the case that $\widetilde{G}_{j, D}[V_i^{(j, D)}]$ contains an $\Omega(1 / \log(n))$-lopsided expander over the vertices $V'^{(j, D)}_i$ (defined in \cref{line:defineV'}) with probability $1 - 1 / n^{997}$.
\end{claim}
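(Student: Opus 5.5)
The plan is to reduce \cref{clm:recoverExpander} to \cref{clm:subsampleExpander}, applied to the deterministic graph $H^\star := (G_{j,D}[V_i^{(j,D)}] - F)[V'^{(j,D)}_i]$. Three things are needed: $H^\star$ has large minimum degree, $H^\star$ is a lopsided expander, and the opened samplers return a uniform sample of roughly the required size from a support that contains all of $H^\star$.

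\emph{Step 1: structure of $H^\star$.} I will redo the argument of \cref{clm:goodGoodExpanderfDeletion}, now with the threshold $D/2$ of \cref{line:defineV'} playing the role of $V_{\mathrm{good}}$. Since $D \geq 8\sqrt{f}\log(n)$, the counting there gives two bounds. First, $\mathrm{Vol}$ of the removed vertices is at most $4f$, so there are at most $4f/D$ of them. Second, every vertex of $V'^{(j,D)}_i$ has degree at least $D/4$ in $H^\star$. From these, the two-case conductance argument (large-volume cuts lose at most $5f$ crossing edges; small-volume cuts have at least half of each vertex's edges leaving) shows that $H^\star$ is an $\Omega(1/\log(n))$-lopsided expander with minimum degree $D/4$. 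This is deterministic and holds for every $F$.

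\emph{Step 2: sampler support.} In \cref{alg:recoverSpanner}, the global samplers of the component have all edges of $F$ subtracted. They also have subtracted the full remaining neighborhoods of the low-degree vertices $V^{(j,D)}_i \setminus V'^{(j,D)}_i$, in \cref{line:deleteEdgesell0samp}. A low-degree vertex has degree at most $D/2$, so the earlier part of the claim's argument must first ensure its per-vertex samplers recover its whole neighborhood. I will take this as given: the neighborhood has at most $4f/D+1$ edges whenever such samplers suffice, otherwise the count is handled by the threshold. I also have to use \cref{rmk:removeEdge} with care, because these recovered edges come from independent randomness.

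After the subtraction, the support of the global samplers is exactly $E(H^\star)$. Since $F$ is oblivious and the per-vertex sampler randomness is independent of the global sampler randomness, each surviving global $\ell_0$-sampler returns a uniform edge of $H^\star$, or $\perp$ with probability $n^{-1000}$. There are $|G_{j,D}[V_i^{(j,D)}]|\log^5(n)/D \geq |E(H^\star)|\log^5(n)/D$ samplers. A union bound over all of them gives failure probability at most $n^{2}\cdot n^{-1000}$.

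\emph{Step 3: apply subsampling.} The samples are i.i.d. uniform draws, taken with replacement, from $E(H^\star)$. Their number is at least $|E(H^\star)|\log^5(n)/D \geq |E(H^\star)|\log^3(n)/(D/4)$, which is what \cref{clm:subsampleExpander} needs for a graph of minimum degree $D/4$. Sampling with replacement only changes constants in the Chernoff bound; alternatively, the distinct sampled edges stochastically dominate a smaller sample. So $\widetilde{G}$ restricted to $V'^{(j,D)}_i$ contains an $\Omega(1/\log(n))$-lopsided expander with probability $1-2^{-\Omega(\log^2 n)}$. Combining this with Step 2 gives success probability $1-n^{-997}$.

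The main obstacle is Step 2, namely certifying that the subtracted low-degree neighborhoods are exactly correct. A recovered neighborhood that is wrong or incomplete would leave edges outside $H^\star$ in the support, or corrupt the linear sketch. I would first try to argue this via the exact-recovery guarantee of the per-vertex samplers under the degree threshold, together with independence of randomness.
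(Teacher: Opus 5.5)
Your overall architecture matches the paper's. You use the robustness argument to show that $H^\star=(G_{j,D}[V_i^{(j,D)}]-F)[V'^{(j,D)}_i]$ is an $\Omega(1/\log(n))$-lopsided expander of minimum degree at least $D/4$. You then argue that the global $\ell_0$-samplers deliver enough uniform samples from $E(H^\star)$, and finish with \cref{clm:subsampleExpander} and a union bound.

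The gap is in Step 2. The claim that, after the subtractions, the support of the global samplers is exactly $E(H^\star)$ is false in general, and your proposed way to certify it cannot work. A vertex is excluded from $V'^{(j,D)}_i$ whenever its post-deletion degree is below $D/2$, but it has only $4f/D+1$ per-vertex $\ell_0$-samplers. Since $D\ge 8\sqrt{f}\log(n)$, we get $4f/D+1\le \sqrt{f}/(2\log(n))+1 \ll 4\sqrt{f}\log(n)\le D/2$. So a low-degree vertex whose remaining degree lies strictly between $4f/D+1$ and $D/2$ cannot have its whole neighborhood recovered; the paper itself claims full recovery only for degree $\le 4f/D$, in \cref{lem:lowStretch}. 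Its unrecovered edges stay in the support. Hence the global samplers do not return uniform edges of $E(H^\star)$, and Step 3 as written does not go through.

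The missing idea is that exactness is unnecessary, because the number of global samplers is calibrated to the \emph{original} component size.

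Edges with both endpoints in $V'^{(j,D)}_i$ are removed only via $F$, and every surviving coordinate is an edge of the component. So $E(H^\star)\subseteq \mathrm{supp}\subseteq E(G_{j,D}[V_i^{(j,D)}])$. This also absorbs the case where an edge between two low-degree vertices is recovered by both and subtracted twice.

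The support is fixed independently of the global samplers' randomness, since $F$ is oblivious and the per-vertex randomness is independent. Hence each non-failing global sampler lands in $E(H^\star)$ with probability at least $|E(H^\star)|/|G_{j,D}[V_i^{(j,D)}]|$. Conditioned on landing there, it is uniform on $E(H^\star)$.

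There are $|G_{j,D}[V_i^{(j,D)}]|\log^5(n)/D$ samplers, so the expected number of hits is at least $|E(H^\star)|\log^5(n)/D=\Omega(D\log^5(n))$, using $|E(H^\star)|=\Omega(D^2)$. A Chernoff bound makes the count at least half its mean with overwhelming probability. Conditioned on the count, the hits are i.i.d.\ uniform on $E(H^\star)$, so \cref{clm:subsampleExpander} applies with minimum degree $D/4$. Sampled edges outside $H^\star$ do no harm. This is exactly the paper's $W_{\mathrm{sub}}/W$ argument.
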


\begin{proof}
    Indeed, let us consider the subgraph $G_{j, D}[V_i'^{(j, D)}]$ (before sampling). By \cref{lem:robustnessfDeletion}, we know that $G_{j, D}[V_i'^{(j, D)}]$ is an $\Omega(1 / \log(n))$ lopsided-expander.

    Now, let us denote the number of edges in $G_{j, D}[V_i'^{(j, D)}]$ by $W_{\mathrm{sub}}$, and denote the number of edges in $G_{j, D}[V_i^{(j, D)}]$ by $W$. By the definition of $\ell_0$-samplers, whenever we open such a sampler (and it does not return $\perp$), we expect the sample to lie in $G_{j, D}[V_i'^{(j, D)}]$ with probability $\frac{W_{\mathrm{sub}}}{W}$ (see \cref{def:ell0samp}). Thus, when opening $\frac{|G_{\ell, D}[V_i^{(j, D)}]| \log^4(n)}{D}$ many $\ell_0$-samplers (and conditioning on no failures), we expect
    \[
    \geq \frac{|G_{j, D}[V_i^{(j, D)}]| \log^5(n)}{D} \cdot \frac{W_{\mathrm{sub}}}{W} \geq \frac{W_{\mathrm{sub}}\log^5(n)}{D}
    \]
    many samples to land in $G_{j, D}[V_i'^{(j, D)}]$. In particular, using the fact that the minimum degree in $G_{j, D}[V_i'^{(j, D)}]$ is still $\Omega(D)$, and that $G_{j, D}[V_i'^{(j, D)}]$ still has $\Omega(D^2)$ edges (see \cref{lem:robustnessfDeletion} for the proof of this), this implies that opening the $\ell_0$-samplers yields a random sample of 
    \[
    \Omega \left (  \frac{|G_{j, D}[V_i'^{(j, D)}]| \cdot \log^5(n)}{\mathrm{mindeg}(G_{j, D}[V_i'^{(j, D)}])} \right )
    \]
    in expectation, and thus, using a Chernoff bound along with \cref{clm:subsampleExpander}, implies that probability $1 - 2^{- \Omega(\log^2(n))}$ the resulting subsample contains an $\Omega(1 / \log(n))$ lopsided-expander over $G_{j, D}[V_i'^{(j, D)}]$. Taking a union over the aforementioned failure probabilities, along with the failure probabilities of the individual $\ell_0$-samplers ($1 / n^{1000}$) of which there are at most $n^2 \mathrm{polylog}(n)$, then yields the above claim. 
\end{proof}

Finally then, we can prove the bound on the stretch of our spanner:

\begin{lemma}\label{lem:lowStretch}
    Let $G, F, H$ be as in \cref{clm:containsH}. Then, $H$ is an $O(\log(n)\log\log(n))$ spanner of $G - F$ with probability $1 - 1 / n^{100}$.
\end{lemma}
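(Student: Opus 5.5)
By \cref{clm:containsH} we already have $H \subseteq G - F$, so distances in $H$ never undercount. It therefore suffices to show the following: with probability $1 - 1/n^{100}$, every edge $(a,b) \in G - F$ has $\mathrm{dist}_H(a,b) = O(\log(n)\log\log(n))$. Concatenating these detours along a shortest $s$--$t$ path in $G-F$ then gives the stretch bound. I would first take a union bound over all components $V_i^{(j,D)}$ (there are at most $\widetilde{O}(n)$ of them) and over all $\ell_0$-samplers (at most $n^2\mathrm{polylog}(n)$ of them, each failing with probability $n^{-1000}$). Applying \cref{clm:recoverExpander}, each $\widetilde{G}_{j,D}[V_i^{(j,D)}]$ contains an $\Omega(1/\log(n))$-lopsided expander on $V_i'^{(j,D)}$. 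By \cref{clm:diameterBound}, that expander has diameter $O(\log(n)\log\log(n))$. The failure probability totals well below $n^{-100}$.

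Next, fix $(a,b) \in G - F$ and case on where $e=(a,b)$ lives.

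\emph{Case 1: $e \in X$.} Then $e \in X - F$ is added to $H$ directly in \cref{line:finishHSpanner}.

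\emph{Case 2: $e$ lies in a component.} By \cref{clm:findingEdgeSpanner}, $e$ lies in a unique component $C = G_{j,D}[V_i^{(j,D)}]$, and all deletions of edges of $C$ were applied to that component's degree counts and sketches.
\begin{itemize}
\item If one endpoint, say $a$, has post-deletion degree in $C - F$ at most $4f/D$, then its $4f/D+1$ samplers, which are $\ell_0$-samplers over a support of size at most $4f/D$, recover the whole neighborhood with high probability. I would argue this by taking enough independent samplers, or by noting that each returns a uniform remaining edge, so that coupon-collector with a $\mathrm{polylog}$ blow-up in sampler count suffices; this may require slightly more samplers than stated. Hence $e \in H$.
\item If both endpoints have degree at least $4f/D+1$, I would use the argument from \cref{lem:robustnessfDeletion}. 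Since $|V_{\mathrm{bad}}| \le 4f/D$, each of $a$ and $b$ has a neighbor $a'$, $b'$ in $V_{\mathrm{good}}$ of $C-F$. It remains to show that the edges $(a,a')$ and $(b,b')$ are in $H$, or are otherwise bridged, and that $a'$ and $b'$ are joined by a short path inside the sampled expander on $V_i'$.
\end{itemize}

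The main obstacle is a mismatch in thresholds. The recovery algorithm uses the threshold $D/2$ in \cref{line:defineV'} to define $V'$ and to decide which vertices get their neighborhoods opened. A vertex of intermediate degree, between $4f/D+1$ and $D/2$, is therefore in $V_{\mathrm{bad}}$, and its neighborhood sketch of $4f/D+1$ samplers returns only a random sample of its edges rather than all of them. I would handle this as follows. Each such sampler returns a uniformly random remaining edge of $a$. At most $4f/D$ of $a$'s neighbors are bad, and $\deg(a) \ge 4f/D+1$, so a constant fraction of $a$'s neighbors are good whenever $\deg(a)\ge 8f/D$. Then $4f/D+1 \ge \Omega(\log n)$ independent samples hit a good neighbor with high probability, when $f/D$ is at least logarithmic; otherwise I would boost the sampler count by a $\log n$ factor, which does not change the $\widetilde{O}$ space. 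If $a$ lies between $4f/D+1$ and $8f/D$, I would instead lower the exact-recovery threshold to about $8f/D$ with logarithmic oversampling. Either way, every vertex of degree above the threshold gets a sampled edge into $V' = V_{\mathrm{good}}$ that is present in $H$. The vertices at or below the threshold have their neighborhoods recovered completely.

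Finally, combine the pieces. Each edge of $G-F$ is either in $H$, or is replaced by the path $a \to a' \leadsto b' \to b$ in $H$ of length $2 + O(\log(n)\log\log(n))$. This gives the claimed stretch with probability $1 - 1/n^{100}$.
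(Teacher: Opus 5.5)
Your skeleton is the paper's proof: the same case split (edge in $X$; an endpoint of low post-deletion degree; both endpoints of degree at least $4f/D+1$). You also use $|V_{\mathrm{bad}}|\le 4f/D$ in the same way to route each endpoint to a neighbor in $V'$, and you appeal to \cref{clm:recoverExpander} and \cref{clm:diameterBound} for the $O(\log(n)\log\log(n))$ detour. The gap is in how you use the per-vertex samplers. You treat the $4f/D+1$ samplers as independent uniform draws, and under that reading both subcases really do fail with the stated count:
\begin{itemize}
\item $4f/D+1$ uniform draws recover a support of size $4f/D$ in full only with probability exponentially small in $4f/D$ (coupon collector).
\item An endpoint of degree $4f/D+1$ whose neighbors are almost all bad misses every good neighbor with constant probability.
\end{itemize}
As you acknowledge, your argument therefore proves the lemma only for a modified sketch (a $\log n$ factor more samplers and a shifted exact-recovery threshold). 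It does not prove it for the $H$ of \cref{clm:containsH}.

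The missing idea is to open the samplers one at a time and peel, which is what \cref{rmk:removeEdge} is for. After sampler $i$ returns an edge $e_i$, subtract $e_i$ from samplers $i+1,\dots,4f/D+1$. This is legitimate because $e_i$ depends only on the randomness of sampler $i$ and on $G,F$, and $F$ is oblivious. Let $d_u$ be the post-deletion degree of $u$ in its component. Unless some sampler fails (total probability at most $(4f/D+1)\cdot n^{-1000}$), peeling returns $\min(d_u,4f/D+1)$ \emph{distinct} edges of $u$.
\begin{itemize}
\item If $d_u\le 4f/D$, the whole remaining neighborhood is recovered, with the extra sampler certifying that the support is empty, so $e\in H$.
\item If $d_u\ge 4f/D+1$ and $u\notin V'$, you get $4f/D+1$ distinct neighbors. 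At most $4f/D$ vertices of the component lie outside $V'$, so one recovered neighbor lies in $V'$ by pigeonhole, deterministically.
\end{itemize}
This is why the paper's proof charges failure only to the $\ell_0$-samplers' own failure probabilities. It also removes the ``threshold mismatch'' you raised: every vertex outside $V'$ has its samplers opened, and the pigeonhole applies to all of them wherever $d_u$ falls in $[4f/D+1, D/2)$. Your boosted variant would still give \cref{thm:obliviousSpanner} within $\widetilde{O}(n\sqrt{f})$ space, but it changes the construction, and the change is unnecessary.
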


\begin{proof}
    Consider any edge $(u,v) \in G - F$. We claim that the distance between $u$ and $v$ in $H$ will be bounded by $O(\log(n)\log\log(n))$ (with high probability), and thus $H$ is an $O(\log(n)\log\log(n))$ spanner.

    There are a few cases for us to consider:

    \begin{enumerate}
        \item Suppose there is no choice of $j, D$ such that $e = (u,v) \subseteq V_i^{(j, D)}$. This means that $e$ must be in $X - F$. But, the set $X$ is explicitly stored ($F$ is removed) in $\mathcal{S}(G)$, and thus $e \in H$. 
        \item Instead, suppose there is a choice of $i, j, D$ such that $e = (u,v) \subseteq V_i^{(j, D)}$, and let $V_i^{(j, D)}$ be the first such set (as per \cref{clm:findingEdgeSpanner}). Again, we have several cases:
        \begin{enumerate}
            \item If one of $u, v$ (WLOG $u$) has degree $\leq \frac{4f}{D}$, then when opening the $\ell_0$-samplers $\mathcal{S}_{u, \ell_0}^{(j, D)}$, with probability $1 - 1 / n^{999}$ (taking a union bound over the failure probabilities of the $\ell_0$-samplers) it is the case that all edges in the neighborhood of $u$ (in this lopsided expander component) are recovered. Thus, $e = (u,v)$ is recovered and is therefore in $H$. 
            \item Otherwise, suppose that both of $u, v$ have degree $\geq \frac{4f}{D} +1$. Then, as per \cref{lem:robustnessfDeletion}, we know that there are at most $\frac{4f}{D}$ vertices in $V_{i}^{(j, D)}$ which are \emph{not in} $V_{i}'^{(j, D)}$. So, $u, v$ both recover with probability $1 - 1 / n^{999}$ (taking a union bound over the failure probabilities of the $\ell_0$-samplers), an edge to some vertices $u', v'$ in $V_{i}'^{(j, D)}$. Now, by \cref{clm:recoverExpander}, we know that we have also recovered an $\Omega(1 / \log(n))$ lopsided-expander over $V_{i}'^{(j, D)}$, and so $u', v'$ are at distance $O(\log(n)\log\log(n))$ in $H$. Thus, $u, v$ are at distance $O(\log(n)\log\log(n)) + 2$, thereby yielding the claim. 
        \end{enumerate}
    \end{enumerate}
\end{proof}

Finally, we prove \cref{thm:obliviousSpanner}.

\begin{proof}[Proof of \cref{thm:obliviousSpanner}.]
The size of the sketch follows from \cref{lem:spaceBoundSpanner}. The spanner property follows from \cref{clm:containsH} and \cref{lem:lowStretch}.
\end{proof}

\section{Bounded Deletion Streams}\label{sec:streaming}

The previous sections showed how to construct distance oracles and spanners in the \emph{static setting}, where one is provided with the graph $G$, and then tasked with constructing a data structure or sketch which can handle future deletions.

Perhaps more naturally though, one may ask if the same space bounds are possible when working in the \emph{streaming setting}. Here, in every iteration, an edge is either inserted or deleted. The goal is to maintain a sketch such that at the termination of the stream, the sketch can be used to recover a distance oracle or spanner of the resulting graph. As above, we let $f$ denote the number of potential \emph{deletions} that appear in the stream, and we are concerned with bounding the maximum space taken up by our data structure / sketch at any point of the stream.

\subsection{Deterministic Distance Oracles}\label{sec:streamDO}

Our first streaming result shows that in a strong sense, nontrivial space guarantees are possible even in this more restrictive setting:

\begin{theorem}\label{thm:faultTolerantDOStream}
    There is a deterministic streaming algorithm which, when given a parameter $f \in [0, \binom{n}{2}]$, and then invoked on a stream of edge insertions building a graph $G$, and at most $f$ deletions (denoted $\mathrm{Deletions} \subseteq G$):
    \begin{enumerate}
        \item Uses space at most $\widetilde{O}(n^{4/3} f^{1/3})$ bits. 
        \item Supports a query operation which, for any vertices $u, v \in V$, returns a value $\widehat{\mathrm{dist}}_{G - \mathrm{Deletions}}(u,v)$ such that 
        \[
        \mathrm{dist}_{G - \mathrm{Deletions}}(u,v)\leq \widehat{\mathrm{dist}}_{G - \mathrm{Deletions}}(u,v) \leq O(\log(n)\log\log(n)) \cdot \mathrm{dist}_{G - \mathrm{Deletions}}(u,v).
        \]
    \end{enumerate}
\end{theorem}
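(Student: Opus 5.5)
We will follow the buffer-based scheme sketched in the technical overview. Fix $D = \max(n^{1/3}f^{1/3}, 8\sqrt{f}\log(n))$, up to polylogarithmic factors. The algorithm maintains three objects. First, a buffer $B$ of explicitly stored edges, with capacity $T = C n D\log^2(n)$. Second, an ordered list of expander components, each created by some invocation of \cref{thm:expanderDecomp} with minimum degree $D$. Third, for each component $V_i$, the vertex list, the current degrees of its vertices inside the component, and a deterministic sparse recovery sketch (\cref{fact:uniqueSum}) with threshold $k = 4f/D$ for every vertex of $V_i$.

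Updates are handled as follows.
\begin{itemize}
\item When an edge $e=(u,v)$ is inserted, scan the components in creation order. If some component contains both $u$ and $v$, take the first one, increment the degrees of $u$ and $v$ there, and add $g(e)$ to both of their sketches. Otherwise, put $e$ in $B$.
\item When $|B|$ reaches $T$, run \cref{thm:expanderDecomp} on $(V,B)$. Every resulting part becomes a new component appended to the list, its induced buffer edges are loaded into fresh sketches and degree counts, and those edges are removed from $B$. By item 3 of \cref{thm:expanderDecomp}, at least a constant fraction of the buffer is absorbed once $T$ dominates the additive term.
\item When an edge $e$ is deleted, first check whether $e \in B$; if so, remove it. Otherwise, locate its component by the same first-containing-component rule and subtract from the degrees and sketches.
\end{itemize}
The key invariant is that an edge's home is determined deterministically. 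An edge inserted at time $t$ goes to the first component existing at time $t$ that contains both endpoints. If there is none, it sits in $B$ until some decomposition absorbs it, and that decomposition can only create components later in the order. So on deletion, checking $B$ first and then applying the first-containing-component rule finds exactly the structure holding the edge. One subtlety is that a later decomposition could place both endpoints of a buffered edge inside a new component while the edge itself, being a crossing edge, stays in $B$. We handle this because deletions test membership in $B$ explicitly, and $B$ is stored exactly. We must also argue that no older component contains both endpoints of an edge still in $B$; this holds because insertion already checked all older components.

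Queries mirror \cref{alg:reportDistance}. We build $H$ from three kinds of edges: $B$; a clique on the vertices of each component whose current degree is at least $4f/D+1$; and the recovered neighborhoods of the low-degree vertices. We then return $O(\log(n)\log\log(n))\cdot \mathrm{dist}_H$. Containment $G-\mathrm{Deletions}\subseteq H$ follows exactly as in \cref{clm:Hcontains}, using linearity of the sketches. The reverse stretch bound uses \cref{cor:robustnessfDeletionInsertion}: each component started as an $\Omega(1/\log n)$ lopsided expander of minimum degree $D\ge 8\sqrt f\log n$ and afterwards received arbitrary insertions and at most $f$ deletions. One gap must be checked here. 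The corollary is stated for expanders, while the decomposition gives lopsided expanders; the proof of \cref{lem:robustnessfDeletion} goes through verbatim for the insertion case. A second subtlety is that deletions may precede insertions in time. We handle this by viewing the component's initial contents as $H$, later insertions as additions, and deletions as removals from the union. This is legitimate because the robustness argument only counts at most $f$ removed edges relative to the union graph, which has minimum degree at least $D$.

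The main obstacle is the space accounting, since components accumulate over the stream. Each decomposition empties a constant fraction of a full buffer, so at least $\Omega(T)$ insertions separate consecutive decompositions. The total number of decompositions is therefore at most $O(n^2/T)=O(n/(D\log^2 n))$. Each decomposition costs $\widetilde O(n\cdot f/D)$ bits of sketches, plus $\widetilde O(n)$ bits for vertex lists and degrees. The total is therefore
\[
\widetilde O\!\left(\frac{n}{D}\cdot \frac{nf}{D} + nD\right),
\]
where the $nD$ term is the buffer. Setting $D=n^{1/3}f^{1/3}$ balances the terms to $\widetilde O(n^{4/3}f^{1/3})$. This choice is also consistent with $D\ge\sqrt f$ whenever $f\le n^2$, which is the relevant range; otherwise we take $D=\sqrt f\,\mathrm{polylog}$, and the bound is then trivial. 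A subtlety is that deletions shrink the edge count. However, the number of decompositions is bounded by insertions, which number at most $\binom n2 + f$, so the count above still holds up to constants.
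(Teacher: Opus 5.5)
Your algorithm and correctness argument are essentially the paper's: a buffer of capacity $\approx nD$, repeated use of \cref{thm:expanderDecomp}, routing each insertion to the first existing component containing both endpoints, sparse-recovery sketches with threshold $4f/D$, and \cref{cor:robustnessfDeletionInsertion} for the stretch. Applying deletions online rather than storing them until query time is a harmless variation. (The case where a later part contains both endpoints of an edge that stays in $B$ cannot actually occur, since parts are induced subgraphs of the buffer. Checking $B$ first is unnecessary but does no harm.)

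\textbf{The gap is in the space accounting.} You note that each decomposition stores $\widetilde O(n)$ bits of vertex lists and degree counters. Also, every sketch costs at least $\Theta(\log n)$ bits even when $4f/D<1$. But this term is then dropped from your total. Over $\Theta(n/D)$ decompositions it contributes $\widetilde O(n^2/D)=\widetilde O(n^{5/3}f^{-1/3})$. That exceeds $n^{4/3}f^{1/3}$ exactly when $f<\sqrt n$; for $f=1$ you get $n^{5/3}$ instead of $n^{4/3}$.

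This is not just slack in the analysis. For example, stream the parallel classes of lines of an affine geometry whose lines have size $\approx D\,\mathrm{polylog}(n)$, one class after another. With the natural decomposition (each line's clique becomes a part), every decomposition produces components covering essentially all $n$ vertices. No later edge has both endpoints on an earlier line, so nothing is absorbed into old components.

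Retuning $D$ cannot rescue the scheme either. Since $\max(n^2/D,\,nD)\ge n^{3/2}>n^{4/3}f^{1/3}$ whenever $f<\sqrt n$, the expander/buffer approach alone cannot meet the bound in that regime. Your remark about $f\le n^2$ addresses the wrong end of the parameter range.

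The paper closes this with a separate case. For $f\le\sqrt n$, it maintains on the insertions a greedy $f$-fault-tolerant $O(\log n)$-stretch spanner. Its $\widetilde O(nf)$ edges fit in $\widetilde O(n^{4/3}f^{1/3})$ bits precisely in this regime. Queries are answered in the spanner minus the at most $f$ stored deletions, which is valid because the spanner is fault-tolerant for the final inserted graph.

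The buffer/expander scheme is used only for $f\ge\sqrt n$. There the per-fill sketch cost $\widetilde O(nf/D)=\widetilde O(n^{2/3}f^{2/3})$ dominates the $\widetilde O(n)$ overhead, and your accounting goes through. You need to add this case, or some other treatment of small $f$, to complete the proof.
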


\subsubsection{Streaming Algorithm Implementation}

In this section, we discuss the implementation of the aforementioned streaming algorithm. The general structure resembles that of \cref{sec:buildDO}, with the difference being that instead of performing the decomposition over the entire graph at once, we instead store a \emph{buffer} of edges, and iteratively perform an expander decomposition whenever the number of edges in the buffer becomes sufficiently large.

In order to achieve our desired space bound tradeoff, we let our buffer capacity be set at $C \cdot n^{4/3}f^{1/3} \log^4(n)$, where $C$ is a sufficiently large constant. Now, whenever a new edge is inserted into the stream, we first check if there is some \emph{existing} expander component in which this edge can be inserted. If not, then we insert the edge into the buffer. Once the buffer reaches capacity, we perform an expander decomposition using minimum degree $D = n^{1/3} f^{1/3} \log^2(n)$, and store our sketches for these expander components. The crossing edges (non-expander edges) are added back into the buffer, and we repeat this process. 

This procedure is formally outlined in \cref{alg:streamingBuildDO}.

\begin{algorithm}[p]
    \caption{StreamingDistanceOracle$(f, e_i, \Delta_i)$}\label{alg:streamingBuildDO}
    $\mathrm{Deletions} = \emptyset$. \\
    $\mathrm{Buffer} = \emptyset$. \\
    $\mathrm{ExpanderSketches} = []$. \\
    $r = 1$. \\
    Let $D = n^{1/3}f^{1/3}\log^2(n)$ and let $g$ be a function in accordance with \cref{fact:uniqueSum} with parameters $u = \binom{n}{2}$ and $k = \frac{4f}{D}$. \\
    \For{$(e_i = (u_i, v_i), \Delta_i)$ in the stream}{
    \If{$\Delta_i = -1$}{
    $\mathrm{Deletions} \leftarrow \mathrm{Deletions} \cup \{ e_i \}$. \\
    }
    \ElseIf{$\exists V_{j}^{(\ell)} \in \mathrm{ExpanderSketches}: e_i \subseteq V_{j}^{(\ell)}$}{
    Let $V_{j}^{(\ell)}$ be the first such vertex set in $\mathrm{ExpanderSketches}$ (i.e., smallest value of $\ell$).\label{line:checkFirst} \\
    $\mathcal{S}_{v_i}^{(\ell)} \leftarrow \mathcal{S}_{v_i}^{(\ell)} + g(e_i)$. \\
    $\mathcal{S}_{u_i}^{(\ell)} \leftarrow \mathcal{S}_{u_i}^{(\ell)} + g(e_i)$. \\
    $\mathrm{deg}_{G_{\ell}[V_{j}^{(\ell)}]}(u_i) \leftarrow \mathrm{deg}_{G_{\ell}[V_{j}^{(\ell)}]}(u_i) +1$. \\
    $\mathrm{deg}_{G_{\ell}[V_{j}^{(\ell)}]}(v_i) \leftarrow \mathrm{deg}_{G_{\ell}[V_{j}^{(\ell)}]}(v_i) +1$. \\
    }
    \Else{
    $\mathrm{Buffer} \leftarrow \mathrm{Buffer} \cup \{e_i \}$. \\
    \If{$|\mathrm{Buffer}| \geq C \cdot n^{4/3}f^{1/3} \log^4(n)$ \label{line:bufferFull}}{
    Let $V^{(r)}_1, \dots V^{(r)}_{k_r}$ denote the high-degree expander decomposition of $\mathrm{Buffer}$ as guaranteed by \cref{thm:expanderDecomp} with $D = n^{1/3}f^{1/3}\log^2(n)$, and let $X$ denote the crossing edges. \\
    Let $G_r$ denote the expander edges. \\
    \For{$j \in [k_{r}]$, and each vertex $v \in V_j^{(r)}$}{ 
    Let $\Gamma_{G_{r}[V_j^{(r)}]}(v)$ denote the edges leaving $v$ in the expander $G_{r}[V_j^{(r)}]$. \\
    Let $\mathcal{S}_v^{(r)} = g(\Gamma_{G_{r}[V_j^{(r)}]}(v))$. \\
    $\mathrm{deg}_{G_{r}[V_{j}^{(r)}]}(v) = |\Gamma_{G_{r}[V_j^{(r)}]}(v)|$. \\
    }
    $\mathrm{ExpanderSketches} \leftarrow \mathrm{ExpanderSketches} \cup \left \{ \left(V_j^{(r)}, \mathcal{S}_v^{(r)}: v \in V_j^{(r)}, \mathrm{deg}_{G_{r}[V_{j}^{(r)}]}(v):v \in V_j^{(r)} \right): j \in [k_r] \right\}$. \\
    $\mathrm{Buffer} \leftarrow X$.\\
    $r \leftarrow r + 1$. 
    }
    }
    }
\Return${\mathrm{Deletions}, \mathrm{Buffer}, \mathrm{ExpanderSketches}}$.
\end{algorithm}

\subsubsection{Bounding the Space}

We now bound the maximum space required by \cref{alg:streamingBuildDO}.

\begin{claim}\label{clm:spaceBoundStreamDO}
    When \cref{alg:streamingBuildDO} is invoked on a stream of edges with at most $f \geq \sqrt{n}$ deletions on an $n$ vertex graph, the maximum space stored by the algorithm is $\widetilde{O}(n^{4/3}f^{1/3})$ many bits. 
\end{claim}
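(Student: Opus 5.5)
We account separately for the four objects \cref{alg:streamingBuildDO} keeps: the set $\mathrm{Deletions}$, the $\mathrm{Buffer}$, the partitions $V^{(r)}_1,\dots,V^{(r)}_{k_r}$ together with their stored degrees, and the sparse recovery sketches $\mathcal{S}^{(r)}_v$. The plan is to bound the number of rounds $r$ by $\widetilde{O}(n^{2/3}/f^{1/3})$, and then to show that each round costs $\widetilde{O}(n\cdot f/D)=\widetilde{O}(n^{2/3}f^{2/3})$ bits. Multiplying gives $\widetilde{O}(n^{4/3}f^{1/3})$, matching the buffer capacity.

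\textbf{Step 1: the buffer and the deletions.} The buffer is emptied down to the crossing set $X$ as soon as it reaches $C n^{4/3}f^{1/3}\log^4(n)$ edges, so it never holds more than that many edges, i.e.\ $\widetilde{O}(n^{4/3}f^{1/3})$ bits. For $\mathrm{Deletions}$, I would store the edges explicitly, at a cost of $O(f\log n)$ bits. This is within budget whenever $f\le n^{2}$, since then $f = f^{1/3}f^{2/3}\le f^{1/3}n^{4/3}$.

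\textbf{Step 2: progress per round, which I expect to be the main obstacle.} I apply \cref{thm:expanderDecomp} to the buffer $B$ with $m=|B|\ge C n^{4/3}f^{1/3}\log^4 n$ and $D=n^{1/3}f^{1/3}\log^2 n$. The number of crossing edges is then at most $m/2+O(nD\log^2 n)=m/2+O(n^{4/3}f^{1/3}\log^4 n)\le 3m/4$ once $C$ is large. So each round moves at least $m/4=\Omega(n^{4/3}f^{1/3}\log^4 n)$ edges out of the buffer and into expander components, and those edges never return to the buffer. Later insertions that land in an existing component also never enter the buffer. Each buffer edge is an insertion, and there are at most $\binom n2$ distinct live edges plus at most $f$ re-insertions after deletions. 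This requires a careful charging argument to make rigorous. The charging I would use is that the total number of edges ever placed into expanders is at most $n^2+f$. Hence the number of rounds is
\[
R\le \frac{n^2+f}{\Omega(n^{4/3}f^{1/3}\log^4 n)}=\widetilde{O}\bigl(n^{2/3}f^{-1/3}\bigr),
\]
using $f\le n^2$.

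\textbf{Step 3: cost per round.} Each round stores a partition of at most $n$ vertices and one degree per vertex, which is $O(n\log n)$ bits. It also stores one sketch $\mathcal{S}^{(r)}_v$ per vertex. By \cref{fact:uniqueSum} with $k=4f/D$ and $u=\binom n2$, each sketch is a sum of at most $n^2$ values bounded by $\mathrm{poly}(n)^k$, so it takes $O(k\log n)=O(f\log n/D)=\widetilde{O}(f^{2/3}/n^{1/3})$ bits. The sketches therefore cost $\widetilde{O}(n^{2/3}f^{2/3})$ bits per round. Insertions into existing components only change values inside existing sketches and counters, so they add no space, although the counters need $O(\log n)$ bits and the sketch magnitudes grow by at most a $\mathrm{poly}(n)$ factor. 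Here the hypothesis $f\ge\sqrt n$ guarantees $f/D\ge 1$ up to logs, so the $O(n\log n)$ overhead per round is dominated by the sketch cost.

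\textbf{Conclusion.} Summing over the rounds, the total stored is
\[
R\cdot\widetilde{O}\bigl(n^{2/3}f^{2/3}+n\bigr)=\widetilde{O}\bigl(n^{2/3}f^{-1/3}\bigr)\cdot \widetilde{O}\bigl(n^{2/3}f^{2/3}\bigr)=\widetilde{O}\bigl(n^{4/3}f^{1/3}\bigr)
\]
bits. Adding the buffer bound from Step 1 gives the claim.
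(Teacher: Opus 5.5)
Your proposal is correct and takes essentially the same route as the paper. It bounds the number of buffer refills by $\widetilde{O}(n^{2/3}/f^{1/3})$ using the constant-fraction progress guaranteed by \cref{thm:expanderDecomp}. It charges $\widetilde{O}(nf/D)=\widetilde{O}(n^{2/3}f^{2/3})$ bits of sketches per refill, with $f\ge\sqrt{n}$ absorbing the $\widetilde{O}(n)$ partition and degree cost, then multiplies and adds the buffer capacity. Your explicit accounting of the $\mathrm{Deletions}$ list (which costs $O(f\log n)\le\widetilde{O}(n^{4/3}f^{1/3})$ since $f\le n^2$), of re-insertions, and of sketch growth under later insertions covers details the paper leaves implicit.
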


\begin{proof}
    First, we bound the final value of $r$ (i.e., the number of times that the buffer reaches capacity). For this, observe that every time the buffer reaches capacity (in \cref{line:bufferFull}), we perform the expander decomposition of \cref{thm:expanderDecomp} with $D = n^{1/3}f^{1/3}\log^2(n)$. In particular, this guarantees that the number of crossing edges is bounded by \[
    \frac{C \cdot n^{4/3}f^{1/3} \log^4(n)}{2} + C' \cdot n \cdot D \cdot \log^2(n) \leq \frac{C \cdot n^{4/3}f^{1/3} \log^4(n)}{2} + C' \cdot n^{4/3}f^{1/3} \log^4(n),
    \]
    where $C'$ is a sufficiently large constant to make the crossing edge bound hold in \cref{thm:expanderDecomp}. Now, by choosing $C$ such that $C = 4 \cdot C'$, we then can bound the number of crossing edges by 
    \[
    \frac{3C \cdot n^{4/3}f^{1/3} \log^4(n)}{4},
    \]
    which ensures that another $\frac{C \cdot n^{4/3}f^{1/3} \log^4(n)}{4}$ edge insertions are required before the buffer fills up once again. Thus, the number of times the buffer fills is bounded by $\widetilde{O}(\frac{n^2}{n^{4/3}f^{1/3}}) = \widetilde{O}(\frac{n^{2/3}}{f^{1/3}})$.

    Now, each time the buffer fills, we store three things: the sparse recovery sketches, the degrees of the vertices, and the identities of the expander components. To store the degrees of the vertices requires $O(\log(n))$ bits per vertex, and so $\widetilde{O}(n)$ bits across all vertices. Likewise, the expander components form a partition of the vertex set, and so require only $\widetilde{O}(n)$ bits to store. Lastly, for the sparse recovery sketches, we store $\mathcal{S}_v^{(r)} = g(\Gamma_{G_{r}[V_j^{(r)}]}(v))$ which is the sum of at most $n$ numbers, each of size at most $\mathrm{poly}(\binom{n}{2}^{4f/D})$. The total bit complexity of such a number is then bounded by 
    \[
    O\left(\log\left (\mathrm{poly}\left(\binom{n}{2}^{4f/D}\right) \right)\right) = \widetilde{O}\left(\frac{4f}{D}\right) = \widetilde{O}\left(\frac{f^{2/3}}{n^{1/3}}\right).
    \]
    Now, across all $n$ vertices, this requires space $\widetilde{O}(n^{2/3}f^{2/3})$. Thus, across all $\widetilde{O}(\frac{n^{2/3}}{f^{1/3}})$ refills of the buffer, the total space required by our sketches is bounded by $\widetilde{O}(n^{4/3}f^{1/3})$ many bits (provided $f \geq n^{1/2}$).

    Finally, we store the remaining crossing edges at the end of the stream. By construction, these are bounded by the size of our buffer, and so require only $\widetilde{O}(n^{4/3}f^{1/3})$ bits to store. This concludes the claim. 
\end{proof}

\subsubsection{Reporting Distances}

Now, we explain how to use the data stored by the streaming algorithm to report distances after the deletions. This follows the intuition of \cref{sec:buildDO} very closely. As we did there, we first use the following claim which allows us to know \emph{which expander} we delete an edge from:

\begin{claim}\label{clm:findingEdgeStream}
    Let $G = (V, E)$ be the graph resulting from a stream of edges, let $f$ be a parameter, and let $\mathrm{ExpanderSketches}$ be constructed as in \cref{alg:streamingBuildDO}. Then, an edge $e \in G_{j}[V_i^{(j)}]$ if and only if $V_i^{(j)}$ is the component with the \emph{smallest} value of $j$ such that $e \subseteq V_i^{(j)}$.
\end{claim}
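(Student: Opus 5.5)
The plan is to prove a single invariant by induction over the stream, together with a short case analysis on when each edge is inserted. The invariant is: at every point of the stream, each inserted edge $e$ is either in $\mathrm{Buffer}$ or charged to exactly one component. If it is charged, the component is $V_i^{(j)}$ with $j$ minimal among all components created so far whose vertex set contains $e$. If it is in the buffer, no component created so far contains $e$. Within a single round $r$ the sets $V_1^{(r)},\dots,V_{k_r}^{(r)}$ are disjoint. Hence for each $j$ there is at most one $i$ with $e\subseteq V_i^{(j)}$, so ``smallest $j$'' identifies a unique component. Applying the invariant at the end of the stream gives both directions of \cref{clm:findingEdgeStream}, because the components are then fixed. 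I assume each edge is inserted once, as the graph is simple. Deletions are only recorded in $\mathrm{Deletions}$ during the stream, so they never move an edge between components.

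\emph{Case 1: when $e$ is inserted, some existing component contains $e$.} Then \cref{line:checkFirst} charges $e$ to the first such component, i.e.\ the one with smallest index, and $e$ never enters the buffer. Every component created afterwards has a strictly larger round index, so minimality is preserved for the rest of the stream.

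\emph{Case 2: when $e$ is inserted, no existing component contains $e$.} Then $e$ goes into $\mathrm{Buffer}$. Consider each buffer-full event at a round $r$ while $e$ is in the buffer. The decomposition of \cref{thm:expanderDecomp} produces induced subgraphs of the current buffer graph. So if both endpoints of $e$ lie in some $V_i^{(r)}$, then $e\in G_r[V_i^{(r)}]$ and $e$ leaves the buffer charged to round $r$. Otherwise $e$ is a crossing edge, and this happens exactly when no $V_i^{(r)}$ contains $e$; in that case $e$ stays in the buffer.

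From these two cases the invariant follows. Let $r$ be the round at which $e$ is charged, if any. Every component from a round before $e$'s insertion does not contain $e$, by the case assumption. Every component from a round between $e$'s insertion and $r$ does not contain $e$, because $e$ was crossing at those rounds. Hence $r$ is the smallest index of a component containing $e$. If $e$ is never charged, the same two facts show that no component at all contains $e$, which gives the converse direction for edges left in the buffer.

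The main subtlety, and the step I would write most carefully, is that in the streaming setting components are induced only with respect to the buffer at the moment they are created, not with respect to the whole graph. So $e\subseteq V_i^{(j)}$ does \emph{not} by itself imply $e\in G_j[V_i^{(j)}]$. For example, $e$ may have been charged to an earlier component, or may be inserted later into an earlier one. The resolution is exactly the case analysis above: an edge with both endpoints in a component at the moment it is created must either already be charged elsewhere (to a smaller index) or be in the buffer (and then it is captured). Any edge arriving later is routed to the smallest-index component by \cref{line:checkFirst}.
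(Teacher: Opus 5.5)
Your proof is correct and takes the same route as the paper. The paper justifies the claim in a single remark: each expander is an induced subgraph of the buffer at the moment it is created, and \cref{line:checkFirst} routes every later insertion to the first existing component that contains it, so the invariant is preserved. Your invariant with its two-case analysis is a careful write-out of that remark, and it correctly notes that deletions are only recorded during the stream and so never move an edge.
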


Note that this claim is only true because of a small detail in our streaming algorithm. Namely, in \cref{line:checkFirst}, we check whether an edge $e_i$ which is being inserted can be placed into some \emph{already created} expander. If this is the case, then we place it in its first possible expander, thereby keeping the above invariant in place. With this stated, the formal recovery algorithm is as stated ion \cref{alg:reportDistanceStream}.

\begin{algorithm}[h!]
\caption{StreamReportDistance$(\mathrm{ExpanderSketches}, \mathrm{Deletions}, \mathrm{Buffer}, a, b)$}\label{alg:reportDistanceStream}
\For{$e = (u,v) \in \mathrm{Deletions}$}{
\If{exists component $V^{(j)}_i$ such that $e \subseteq V^{(j)}_i$}{
Let $V^{(j)}_i$ denote the first such component (i.e., for the smallest value of $j$). \\
$\mathrm{deg}_{G_{j}[V_i^{(j)}]}(v) \leftarrow \mathrm{deg}_{G_{j}[V_i^{(j)}]}(v) -1 $. \\
$\mathrm{deg}_{G_{j}[V_i^{(j)}]}(u) \leftarrow \mathrm{deg}_{G_{j}[V_i^{(j)}]}(u) -1 $. \\
$\mathcal{S}_v^{(j)} \leftarrow \mathcal{S}_v^{(j)} - g(e)$. \\
$\mathcal{S}_u^{(j)} \leftarrow \mathcal{S}_u^{(j)} - g(e)$. \\
}
\Else{
$\mathrm{Buffer} \leftarrow \mathrm{Buffer} - e$. \\
}
}
Let $H = (V, \emptyset)$. \\
\For{$V^{(j)}_i: j \in [\ell], i \in [k_j]$}{
Let $V'^{(j)}_i \subseteq V^{(j)}_i$ denote all vertices $u \in V^{(j)}_i$ such that $\mathrm{deg}_{G_{j}[V_i^{(j)}]}(u) \geq \frac{4f}{D} + 1$. \\
$H \leftarrow H + K_{V'^{(j)}_i}$. \label{line:addCliqueStream} \tcp{Add a clique on the high degree vertices.} 
\For{$u \in V^{(j)}_i - V'^{(j)}_i$}{
Let $\Gamma_{V^{(j)}_i}(u)$ denote the unique set of $\leq \frac{4f}{D}$ edges in $\binom{n}{2}$ such that $g(\Gamma_{V^{(j)}_i}(u)) = \mathcal{S}_u^{(j)}$. \\
$H \leftarrow H + \Gamma_{V^{(j)}_i}(u)$. \label{line:addFullNeighborhoodStream}
}
}
$H \leftarrow H + X_{\ell}$. \label{line:finishHStream}\\
Let $\delta$ denote the distance between $a, b$ in $H$, and let $C''$ denote a large constant.  \\
\Return{$\delta \cdot C'' \cdot \log(n)\log\log(n)$.}
\end{algorithm}

We include the correctness proofs for this recovery algorithm though they now follow exactly the outline established in \cref{sec:buildDO}.

\begin{claim}\label{clm:HcontainsStream}
    Let $G = (V, E)$ be a graph, let $f$ be a parameter, and let $\mathrm{ExpanderSketches}, \mathrm{Deletions}, \mathrm{Buffer}$ be the result of invoking \cref{alg:streamingBuildDO} on a stream that yields $G$ (in particular, $G$ here only includes the insertions). Now, suppose that we call \cref{alg:reportDistanceStream} with $\mathrm{ExpanderSketches}, \mathrm{Deletions}, \mathrm{Buffer}$ and vertices $a, b$. Then, it must be the case that $G - \mathrm{Deletions} \subseteq H$, where $H$ is the resulting auxiliary graph whose construction terminates in \cref{line:finishHStream}.
\end{claim}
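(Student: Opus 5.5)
The plan is to mirror the proof of \cref{clm:Hcontains}, fixing an arbitrary edge $e=(u,v)\in G-\mathrm{Deletions}$ and tracking where \cref{alg:streamingBuildDO} placed it. The one new ingredient relative to the static setting is an invariant: every inserted edge lies in exactly one place at the end of the stream, either in the final $\mathrm{Buffer}$ or in a unique component $G_j[V_i^{(j)}]$. Its sketch $\mathcal{S}^{(j)}$ and degree counts then include $e$, and $V_i^{(j)}$ is the component of smallest index $j$ containing both endpoints (this is \cref{clm:findingEdgeStream}).

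I would prove this invariant by induction over the stream, and I expect it to be the main bookkeeping obstacle. When $e$ is inserted, there are two possibilities.
\begin{enumerate}
\item \cref{line:checkFirst} finds an existing component containing $e$. Then $e$ goes into the first such component, that component's sketch and degrees are updated, and no component created later can be the "first" one for $e$.
\item No existing component contains $e$, so $e$ goes to the buffer. At each refill, it either lands in one of the newly created $G_r[V_j^{(r)}]$ or stays in the crossing set $X$.
\end{enumerate}
In the second case I need that if $e$ lands in a new component at round $r$, then no earlier component contains both endpoints. This holds because such an earlier component would have captured $e$ at insertion time. If $e$ stays crossing, the new components are induced on the buffer, so no round-$r$ component contains both endpoints either, and the induction continues. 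Since $e$ is inserted once and is not deleted in between (deletions are only recorded), the sketches and degrees equal those of the insertion-only graph restricted to each component.

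With the invariant in hand, the deletion phase of \cref{alg:reportDistanceStream} subtracts each $e'\in\mathrm{Deletions}$ from exactly the component that received it, or removes it from $\mathrm{Buffer}$. Afterwards, $\mathcal{S}_u^{(j)}=\sum_{e'\in\Gamma_{G_j[V_i^{(j)}]-\mathrm{Deletions}}(u)} g(e')$ and the stored degrees equal the true post-deletion degrees within each component. This uses that deletions are a subset of insertions, so no phantom subtraction ever occurs.

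Now case on $e=(u,v)\notin\mathrm{Deletions}$.
\begin{enumerate}
\item If $e$ is in the final buffer, it survives the removals and is added in \cref{line:finishHStream} (reading $X_\ell$ there as the buffer).
\item Otherwise $e\in G_j[V_i^{(j)}]$. If both $u$ and $v$ have post-deletion degree at least $\frac{4f}{D}+1$, they lie in $V'^{(j)}_i$ and $e$ is covered by the clique of \cref{line:addCliqueStream}.
\item Otherwise one endpoint, say $u$, has post-deletion degree at most $\frac{4f}{D}$. Then $\mathcal{S}_u^{(j)}$ is the $g$-value of a set of at most $k=\frac{4f}{D}$ edges. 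By the uniqueness in \cref{fact:uniqueSum}, the recovered set $\Gamma_{V_i^{(j)}}(u)$ is exactly the remaining neighborhood of $u$ in the component, which contains $e$, and it is added in \cref{line:addFullNeighborhoodStream}.
\end{enumerate}
Hence $G-\mathrm{Deletions}\subseteq H$.
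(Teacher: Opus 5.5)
Your proposal is correct and follows essentially the same route as the paper's proof: fix $e\in G-\mathrm{Deletions}$ and split into three cases. Either $e$ is in the final buffer, or both endpoints stay high-degree and $e$ is covered by the clique, or a low-degree endpoint's sketch is decoded exactly by the uniqueness property of \cref{fact:uniqueSum}. Your inductive argument that each inserted edge is routed to its smallest-index containing component, so that the post-deletion sketches and degree counts are exact, is a welcome explicit justification of \cref{clm:findingEdgeStream}, which the paper only asserts via the remark about \cref{line:checkFirst}.
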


\begin{proof}
    Let us consider any edge $e \in G - \mathrm{Deletions}$. By definition, any such edge must have also been in the initial graph $G$ on which we constructed our data structure. Now, there are several cases for us to consider:
    \begin{enumerate}
        \item Suppose $e \notin G_{j}[V_i^{(j)}]$ for any $i, j$. Then, it must be the case that $e \in \mathrm{Buffer}$ (i.e., the final set of crossing edges). But in \cref{line:finishHStream}, we add all edges in $\mathrm{Buffer} - \mathrm{Deletions}$ to $H$, and so if $e \notin \mathrm{Deletions}$ and $e \in \mathrm{Buffer}$, then $e \in H$. 
        \item Otherwise, this means that $e \in G_{j}[V_i^{(j)}]$ for some $i, j$. Again, we have two cases:
        \begin{enumerate}
            \item Suppose that after performing the deletions from $\mathrm{Deletions}$ it is the case that \[
            \mathrm{deg}_{G_{j}[V_i^{(j)}] - \mathrm{Deletions}}(u) \geq \frac{4f}{D} + 1\] and $\mathrm{deg}_{G_{j}[V_i^{(j)}] - \mathrm{Deletions}}(v) \geq \frac{4f}{D} + 1$. Then, this means that $u, v \in V'^{(j)}_i $, and so $e \in K_{V'^{(j)}_i}$. Because we add $K_{V'^{(j)}_i}$ to $H$, then $e \in H$. 
            \item Otherwise, at least one of $u,v $ (WLOG $u$) has $\mathrm{deg}_{G_{j}[V_i^{(j)}] - F}(u) \leq \frac{4f}{D}$. This means that $u$ has at most $\frac{4f}{D}$ neighbors in $G_{j}[V_i^{(j)}]$ after performing the deletions in $\mathrm{Deletions}$, and that in particular, $\mathcal{S}_u^{(j)}$ is the sum of $\leq \frac{4f}{D}$ many terms. By \cref{fact:uniqueSum}, there must then be a unique set of $\leq \frac{4f}{D}$ edges in $\binom{V}{2}$, denoted $\Gamma_{V^{(j)}_i}(u)$, such that $g(\Gamma_{V^{(j)}_i}(u)) = \mathcal{S}_u^{(j)}$. Because at this point \[
            \mathcal{S}_u^{(j)} = \sum_{e \in \Gamma_{G_{j}[V_i^{(j)}]}(u)} g(e) - \sum_{e \in \Gamma_{G_{j}[V_i^{(j)}]}(u) \cap \mathrm{Deletions}} g(e) = \sum_{e \in \Gamma_{G_{j}[V_i^{(j)}] - \mathrm{Deletions}}(u)} g(e),
            \]
            this means that this set of edges $\Gamma_{V^{(j)}_i}(u)$ that we recover are indeed all remaining incident edges on vertex $u$ in $G_{j}[V_i^{(j)}] - \mathrm{Deletions}$. Because $e \notin \mathrm{Deletions}$, it must also be the case then that $e \in \Gamma_{V^{(j)}_i}(u)$, and so is therefore also in $H$. 
        \end{enumerate}
    \end{enumerate}
    
    Thus, for any edge $e \in G - \mathrm{Deletions}$, we do indeed have that $e \in H$.
\end{proof}

As an immediate corollary to the above claim, we can also lower bound the true distance between any vertices in $G - F$:

\begin{corollary}\label{cor:distanceLowerBoundStream}
    Let $G, H, \mathrm{Deletions}$ be defined as in \cref{clm:HcontainsStream}. Then, for any vertices $a, b \in V$,  $\mathrm{dist}_{H}(a,b) \leq \mathrm{dist}_{G - \mathrm{Deletions}}(a,b)$.
\end{corollary}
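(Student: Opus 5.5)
This follows immediately from \cref{clm:HcontainsStream}, mirroring \cref{cor:distanceLowerBound} in the static setting. The plan is to take a shortest path in $G - \mathrm{Deletions}$ and observe that it survives verbatim in $H$.

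Fix $a, b \in V$. If $a$ and $b$ are disconnected in $G - \mathrm{Deletions}$, then $\mathrm{dist}_{G - \mathrm{Deletions}}(a,b) = \infty$ and there is nothing to prove. Otherwise, let $a = v_0 \rightarrow v_1 \rightarrow \dots \rightarrow v_k = b$ be a shortest path in $G - \mathrm{Deletions}$, so that $k = \mathrm{dist}_{G - \mathrm{Deletions}}(a,b)$.

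Each edge $(v_p, v_{p+1})$ on this path lies in $G - \mathrm{Deletions}$. By \cref{clm:HcontainsStream}, which gives $G - \mathrm{Deletions} \subseteq H$, each such edge is also an edge of $H$. Hence the same sequence of vertices is a walk of length $k$ from $a$ to $b$ in $H$, and so $\mathrm{dist}_H(a,b) \leq k$.

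There is no real obstacle here. All the substance sits in \cref{clm:HcontainsStream}, and in particular in the invariant of \cref{clm:findingEdgeStream} that deletions are charged to the correct expander sketch. The only care needed is that $H$ is treated as an unweighted graph, so that distances compare edge-for-edge.
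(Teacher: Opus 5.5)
Your proof is correct and matches the paper's argument: the containment $G - \mathrm{Deletions} \subseteq H$ from \cref{clm:HcontainsStream} means any $a$--$b$ path in $G - \mathrm{Deletions}$ is also a path in $H$, which gives $\mathrm{dist}_H(a,b) \leq \mathrm{dist}_{G - \mathrm{Deletions}}(a,b)$. Your separate treatment of the disconnected case is a harmless detail that the paper leaves implicit.
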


\begin{proof}
    This trivially follows because any path in $G - \mathrm{Deletions}$ between $a, b$ will also be a path between $a, b$ in $H$. 
\end{proof}

More subtly, we also have the following upper bound on the distance between any vertices in $G - \mathrm{Deletions}$:

\begin{claim}\label{clm:distanceUpperBoundStream}
    Let $G, H, \mathrm{Deletions}$ be defined as in \cref{clm:HcontainsStream}. Then, for any vertices $a, b \in V$,  $\mathrm{dist}_{G - \mathrm{Deletions}}(a,b) \leq O(\log(n)\log\log(n)) \cdot \mathrm{dist}_{H}(a,b)$.
\end{claim}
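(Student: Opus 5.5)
The plan is to mirror the proof of \cref{clm:distanceUpperBound}, replacing \cref{lem:robustnessfDeletion} with \cref{cor:robustnessfDeletionInsertion}, since stream components receive insertions after they are formed. Fix a shortest path $a = v_0 \rightarrow v_1 \rightarrow \dots \rightarrow v_{w+1} = b$ in $H$. It suffices to show that every edge $(v_p, v_{p+1})$ of $H$ has endpoints at distance $O(\log(n)\log\log(n))$ in $G - \mathrm{Deletions}$. Concatenating these detours then gives $\mathrm{dist}_{G - \mathrm{Deletions}}(a,b) \leq O(\log(n)\log\log(n)) \cdot \mathrm{dist}_H(a,b)$.

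First I classify the edges of $H$ by how \cref{alg:reportDistanceStream} adds them.

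\textbf{Buffer edges} (\cref{line:finishHStream}). The final $\mathrm{Buffer}$ contains only inserted edges, and every edge of $\mathrm{Deletions}$ that lives in the buffer has been removed. These edges therefore lie in $G - \mathrm{Deletions}$.

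\textbf{Recovered neighborhoods} (\cref{line:addFullNeighborhoodStream}). These edges are recovered from $\mathcal{S}_u^{(j)}$ when the tracked degree is at most $\frac{4f}{D}$. By \cref{clm:findingEdgeStream}, every insertion and every deletion of an edge was applied to exactly the component that owns it. Hence the sketch equals $g$ applied to the true current neighborhood of $u$ in $G_j[V_i^{(j)}]$ after insertions and deletions, and the tracked degree equals the true degree. By \cref{fact:uniqueSum}, decoding returns exactly that neighborhood, so these edges also lie in $G - \mathrm{Deletions}$.

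\textbf{Clique edges} (\cref{line:addCliqueStream}). Both endpoints have final degree at least $\frac{4f}{D}+1$ in some component.

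For a clique edge, let $H_0 = G_r[V_i^{(r)}]$ be the component at the moment it was created by the buffer decomposition. By \cref{thm:expanderDecomp}, $H_0$ is an $\Omega(1/\log(n))$ lopsided expander with minimum degree $D = n^{1/3}f^{1/3}\log^2(n)$. I check that $D \geq 8\sqrt{f}\log(n)$, which holds whenever $n^{1/3}f^{1/3}\log(n) \geq 8\sqrt{f}$, i.e.\ for $f \leq \widetilde{O}(n^2)$; the remaining range is trivial.

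After creation, the component receives an arbitrary number of later insertions (those routed to it by \cref{line:checkFirst}) and at most $f$ deletions in total. This is exactly the setting of \cref{cor:robustnessfDeletionInsertion}, stated for lopsided expanders just as in \cref{lem:robustnessfDeletion}. That corollary says any two vertices whose final degree in the component is at least $\frac{4f}{D}+1$ are joined by a path of length $O(\log(n)\log\log(n))$ inside the final component. The final component is a subgraph of $G - \mathrm{Deletions}$.

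One subtlety needs care. Deletions applied to the component may include edges that were inserted after creation, not only edges of $H_0$. This is harmless, since it only makes the set of original edges removed smaller, which keeps us within the hypotheses of the corollary. The corollary's proof also applies deletions to $H_0$ alone, so I should confirm it tolerates deletions of later-inserted edges. It does, since the good set is defined relative to $H_0 - \mathrm{Deletions}$ and at most $\frac{4f}{D}$ vertices are bad.

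The main obstacle is the correctness of the bookkeeping in the recovered-neighborhood and clique cases: the stored degrees and sketches must reflect the true post-stream component, or the degree threshold and decoding are meaningless. This rests on \cref{clm:findingEdgeStream}. I would justify it by induction over the stream. An inserted edge goes to the earliest existing component containing both endpoints. Otherwise it enters the buffer, and buffer edges at a later decomposition cannot lie inside an earlier component, since they would have been routed there. So a deletion, located by the same ``first component'' rule, always hits the component actually holding the edge. With this in hand, the three cases combine to give the claim.
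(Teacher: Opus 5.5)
Your proposal is correct and follows the paper's proof: you split the edges of $H$ into buffer edges, sketch-recovered edges, and clique edges, and you bound the clique edges using the robustness of high-degree expanders. The one difference is that you invoke \cref{cor:robustnessfDeletionInsertion}, and also check $D \geq 8\sqrt{f}\log(n)$ and the routing invariant of \cref{clm:findingEdgeStream}, where the paper cites the deletion-only \cref{lem:robustnessfDeletion}; since stream components keep receiving insertions after they are created, your citation is the more accurate one.
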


\begin{proof}
    Let $a = v_0 \rightarrow v_1 \rightarrow v_2 \dots \rightarrow v_w \rightarrow b = v_{w+1}$ denote the shortest path between $a, b$ in $H$. Now consider any edge $v_{p}, v_{p+1}$ which is taken in this path. We claim that either: 
    \begin{enumerate}
        \item $(v_{p}, v_{p+1}) \in G - \mathrm{Deletions}$. 
        \item $v_{p}, v_{p+1}$ were vertices whose degree remained $\geq \frac{4f}{D} + 1$ in some expander component $G_{j}[V_i^{(j)}]$.
    \end{enumerate}

    This follows essentially by construction. There are three ways edges are added to $H$:
    \begin{enumerate}
        \item In \cref{line:finishHStream}, we add the final crossing edges $\mathrm{Buffer}$ which were not in $\mathrm{Deletions}$. Note that $\mathrm{Buffer} \subseteq G$ by construction, and so $\mathrm{Buffer} - \mathrm{Deletions} \subseteq G - \mathrm{Deletions}$.
        \item In \cref{line:addFullNeighborhoodStream}, whenever a vertex has degree $\leq \frac{4f}{D}$ in an expander component $G_{j}[V_i^{(j)}] - \mathrm{Deletions}$, we recover its entire neighborhood and add those edges to $H$. Importantly, these are still edges that are a subset of $G - \mathrm{Deletions}$.
        \item Otherwise, in \cref{line:addCliqueStream}, for vertices whose degree remains $\geq \frac{4f}{D} + 1$ in some expander component $G_{j}[V_i^{(j)}]$ after performing deletions $\mathrm{Deletions}$, we add a clique on these vertices.
    \end{enumerate}

    So, either an edge in $H$ is in $G - \mathrm{Deletions}$, or the edge was added in \cref{line:addCliqueStream}, and corresponds to the remaining high degree vertices in some expander component $G_{j}[V_i^{(j)}]$, as we claimed.

    Finally then, to conclude, we claim that for any edge $(v_{p}, v_{p+1})$ which are vertices whose degree remained $\geq \frac{4f}{D} + 1$ in some expander component $G_{j}[V_i^{(j)}]$, it must be the case that 
    \[
    \mathrm{dist}_{G-\mathrm{Deletions}}(v_p, v_{p+1}) \leq O(\log(n)\log\log(n)).
    \]
    This follows exactly via \cref{lem:robustnessfDeletion}: indeed, starting with the graph $G_{j}[V_i^{(j)}]$ which has minimum degree $D$ (with $D \geq 8 \sqrt{f}\log(n)$), we performed $\leq f$ deletions. \cref{lem:robustnessfDeletion} shows exactly that, under these conditions, any vertices whose degree remains $\geq \frac{4f}{D} + 1$ must be connected by a path of length $O(\log(n)\log\log(n))$.

    Thus, for the path $a = v_0 \rightarrow v_1 \rightarrow v_2 \dots \rightarrow v_w \rightarrow b = v_{w+1}$ in $H$, we know that every edge used in the path is either present in $G - \mathrm{Deletions}$, or can be replaced by a path of length $O(\log(n)\log\log(n))$ in $G - \mathrm{Deletions}$. Thus, $\mathrm{dist}_{G - \mathrm{Deletions}}(a,b) \leq O(\log(n)\log\log(n)) \cdot \mathrm{dist}_{H}(a,b)$.
\end{proof}

Now, we prove \cref{thm:faultTolerantDOStream}.

\begin{proof}[Proof of \cref{thm:faultTolerantDOStream}.]
First, if $f$ is $\leq \sqrt{n}$, then we simply use an $f$-fault-tolerant $O(\log(n)\log\log(n))$ spanner, which requires $\widetilde{O}(n f) \leq \widetilde{O}(n^{4/3} f^{1/3})$ bits of space, and can be constructed using a greedy algorithm in an insertion-only stream (see \cite{chechik2009fault,bodwin2022partially} for discussion of this greedy algorithm).

Otherwise, for other values of $f$, given a stream of edges whose insertions yield a graph $G$ and with deletions $\mathrm{Deletions}$, we instantiate the data structure using \cref{alg:streamingBuildDO}. By \cref{clm:spaceBoundStreamDO}, we know that this data structure requires $\widetilde{O}(n^{4/3} f^{1/3})$ bits of space to store. 
        
        After performing the deletions  $\mathrm{Deletions}$, and when queried with vertices $u, v$, using \cref{alg:reportDistanceStream}, we know that the algorithm constructs a graph $H$ (depending on $F$) such that (by \cref{cor:distanceLowerBoundStream} and \cref{clm:distanceUpperBoundStream})
        \[
        \mathrm{dist}_{H}(u,v) \leq \mathrm{dist}_{G - \mathrm{Deletions}}(u,v) \leq O(\log(n)\log\log(n)) \cdot \mathrm{dist}_{H}(u,v).
        \]
        In particular, if we let $C''$ denote a sufficiently large constant such that 
        \[
        \mathrm{dist}_{H}(u,v) \leq \mathrm{dist}_{G - \mathrm{Deletions}}(u,v) \leq C'' \log(n)\log\log(n) \cdot \mathrm{dist}_{H}(u,v),
        \]
        then outputting $\widehat{\mathrm{dist}}_{G - \mathrm{Deletions}}(u,v) = C'' \log(n)\log\log(n) \cdot \mathrm{dist}_{H}(u,v)$ satisfies 
        \[
        \mathrm{dist}_{G - \mathrm{Deletions}}(u,v) \leq \widehat{\mathrm{dist}}_{G - \mathrm{Deletions}}(u,v) \leq C'' \log(n)\log\log(n) \cdot \mathrm{dist}_{G - \mathrm{Deletions}}(u,v),
        \]
        thereby satisfying the conditions of our theorem. 
\end{proof}

\subsection{Fault-Tolerant Oblivious Spanners}

Finally, we show that in bounded-deletion streams, we can also construct \emph{oblivious spanners}, matching the same space bounds as our distance oracles from the previous section. Formally, we show the following:

\begin{theorem}\label{thm:obliviousSpannerStream}
    There is a streaming algorithm which, when given a parameter $f \in [0, \binom{n}{2}]$, and then invoked on a stream of edge insertions building a graph $G$, and at most $f$ oblivious deletions (denoted $\mathrm{Deletions} \subseteq G$):
    \begin{enumerate}
        \item Uses space at most $\widetilde{O}(n^{4/3} f^{1/3})$ bits. 
    \item Returns (with probability $1 - 1 / n^{100}$ over the randomness of the sketch) an $O(\log(n)\log\log(n))$ spanner of $G - \mathrm{Deletions}$. 
    \end{enumerate}
\end{theorem}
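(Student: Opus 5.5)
The plan is to combine the buffered streaming skeleton of \cref{alg:streamingBuildDO} with the sampler-based recovery of \cref{sec:ObliviousSpanners}. Replace every deterministic sparse recovery sketch $\mathcal{S}_v^{(r)}$ with $\frac{4f}{D}+1$ independent linear $\ell_0$-samplers over $\Gamma_{G_r[V_j^{(r)}]}(v)$. For each expander component, also store $\frac{|G_r[V_j^{(r)}]|\log^5(n)}{D}$ linear $\ell_0$-samplers over its edge set.

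Buffer capacity and $D = n^{1/3}f^{1/3}\log^2(n)$ are as before. An inserted edge is routed to the first existing component containing both endpoints, and the corresponding samplers, the degree counters, and the whole-component samplers are updated linearly; otherwise the edge goes to the buffer. Deletions are buffered in $\mathrm{Deletions}$, and for $f\le\sqrt n$ we fall back to the greedy fault-tolerant spanner, as in \cref{thm:faultTolerantDOStream}.

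At the end of the stream, \cref{clm:findingEdgeStream} tells us deterministically which component each deleted edge lives in, so we subtract it from every affected sampler by linearity (\cref{rmk:removeEdge}). Recovery then mirrors \cref{alg:recoverSpanner}:
\begin{itemize}
\item for each low-degree vertex (post-deletion degree at most $4f/D$), open its vertex samplers to recover its whole neighborhood and subtract those edges from the component samplers;
\item for each remaining vertex, open its vertex samplers to get at least one edge to a good vertex;
\item open the component samplers to obtain a subsample;
\item finally, add $\mathrm{Buffer}-\mathrm{Deletions}$.
\end{itemize}

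\textbf{Space.} The space analysis follows \cref{clm:spaceBoundStreamDO}. There are $\widetilde O(n^{2/3}/f^{1/3})$ buffer refills. Each refill stores $\widetilde O(n\cdot f/D)=\widetilde O(n^{2/3}f^{2/3})$ bits of vertex samplers, plus component samplers. The latter need care, because later insertions can grow a component well beyond its initial size.

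The fix is to size the component samplers by the worst case. Each component gets $\frac{|V_j|^2\log^5 n}{D}$ samplers, or equivalently we allocate $\widetilde O(n^2/D)$ samplers in total per refill. Multiplying by the number of refills gives $\widetilde O\big(\tfrac{n^2}{D}\cdot\tfrac{n^{2/3}}{f^{1/3}}\big)=\widetilde O(n^{7/3}/f^{2/3})$. This is at most $\widetilde O(n^{4/3}f^{1/3})$ once $f\ge n$, but not below that. So the first thing I would try is to charge the inserted edges to refills: every edge is inserted once and lands in a unique component, so the total final edge mass across all components is at most $n^2$. We can then allocate samplers adaptively, keeping a doubling family of sampler batches per component that grows as edges are routed in. This gives a total of $\widetilde O(n^2/D)=\widetilde O(n^{5/3}/f^{1/3})$, which is at most $n^{4/3}f^{1/3}$ when $f\ge n^{1/2}$. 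Batches created after an edge's insertion miss earlier edges, so I would instead maintain samplers at geometric scales from the component's creation, each subsampling edges at rate $2^{-i}$, and use the scale matching the final size.

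\textbf{Correctness.} \Cref{cor:robustnessfDeletionInsertion} gives that the good vertices survive with few bad vertices even after arbitrary insertions. The induced subgraph on good vertices, with insertions included, still has min degree $\Omega(D)$ and lopsided expansion $\Omega(1/\log n)$, by the argument of \cref{clm:goodGoodExpanderfDeletion} applied with insertions only increasing cut edges. \Cref{clm:subsampleExpander} then shows the sampled edges contain such an expander, exactly as in \cref{clm:recoverExpander}, and the case analysis of \cref{lem:lowStretch} finishes the proof.

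\textbf{Main obstacle.} I expect the main obstacle to be twofold. First, sampler allocation must track component growth under insertions while preserving uniformity. Second, insertions can change volumes in the lopsided-expansion argument, so the expansion of the good core with inserted edges must be reverified rather than cited. Obliviousness ensures $\mathrm{Deletions}$ are independent of sampler randomness, which justifies the union bounds with failure probability $1/n^{100}$.
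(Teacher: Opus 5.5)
You have the paper's skeleton: a buffer of capacity $\widetilde O(n^{4/3}f^{1/3})$, $D=n^{1/3}f^{1/3}\log^2(n)$, first-component routing, vertex-level samplers, and the greedy fallback for $f\le\sqrt n$. There is, however, a genuine gap exactly at what you call the main obstacle, and you create it yourself by routing later insertions into the component-level samplers. Neither consequence is resolved.

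\emph{Space.} A sketch allocated when the component is created can only return as many edges as its preallocated size allows, however you organize geometric rates $2^{-i}$. Guaranteeing a per-edge sampling rate of $\log^5(n)/D$ for an unknown final edge count therefore forces budgeting for $|V_j|^2$ edges. That is the worst-case $\widetilde O(n^{7/3}/f^{2/3})$ you computed, which exceeds the target whenever $f<n$.

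\emph{Correctness.} The argument that insertions ``only increase cut edges'' does not preserve conductance. Inserting a dense subgraph inside some $S\subseteq V_j$ raises $\mathrm{Vol}(S)$ without changing $\delta(S)$. So the post-insertion good core need not be an $\Omega(1/\log n)$-expander, and \cref{clm:subsampleExpander} cannot be applied to it.

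The missing idea, which is the paper's route, is to never feed insertions into the component samplers. Keep $|G_r[V_j^{(r)}]|\log^5(n)/D$ samplers over the \emph{creation-time} expander only, i.e.\ $\widetilde O(|\mathrm{Buffer}|/D)=\widetilde O(n)$ per refill. Do not subtract deletions from them, since you cannot tell whether a deleted edge predates the component; instead discard deleted edges from the recovered sample. The vertices that keep creation-time degree at least $D/2$ after deletions then form exactly the core of \cref{clm:goodGoodExpanderfDeletion}, and its subsample is an expander by \cref{clm:subsampleExpander}. Expansion of the post-insertion graph is never needed, because extra edges only shorten paths.

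Insertions then enter only through the degree counters and vertex sketches, whose $\widetilde O(f/D)$ size per vertex does not depend on insertions. By \cref{cor:robustnessfDeletionInsertion}, at most $4f/D$ vertices lie outside the core, so every endpoint of degree $>4f/D$ has a core neighbor, while low-degree endpoints have their neighborhoods recovered. The total space is $\widetilde O(n^{2/3}/f^{1/3})\cdot\widetilde O(n^{2/3}f^{2/3})=\widetilde O(n^{4/3}f^{1/3})$.

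Your abandoned doubling batches would also have worked had each batch covered only the edges routed in during its own window. The doubling schedule fixes that window's size in advance, and deletions would be filtered rather than subtracted.

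A smaller point, which the paper's write-up shares: $4f/D+1$ independent $\ell_0$-samplers fail in two ways. They do not recover a neighborhood of size about $4f/D$ with high probability (coupon collecting). Nor do they guarantee a core neighbor when the degree barely exceeds $4f/D$. A deterministic sparse-recovery sketch with threshold $8f/D$, plus $O(\log n)$ samplers for larger degrees (where over half the neighbors lie in the core), fixes this at no asymptotic cost.
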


\subsubsection{Building the Spanner Sketch}

Naturally, the streaming algorithm is a combination of \cref{sec:streamDO} and \cref{sec:ObliviousSpanners}, see \cref{alg:streamingBuildSpanner}.

\begin{algorithm}[p]
    \caption{StreamingObliviousSpanner$(f, e_i, \Delta_i)$}\label{alg:streamingBuildSpanner}
    $\mathrm{Deletions} = \emptyset$. \\
    $\mathrm{Buffer} = \emptyset$. \\
    $\mathrm{ExpanderSketches} = []$. \\
    $r = 1$. \\
    Let $D = n^{1/3}f^{1/3}\log^2(n)$. \\
    \For{$(e_i = (u_i, v_i), \Delta_i)$ in the stream}{
    \If{$\Delta_i = -1$}{
    $\mathrm{Deletions} \leftarrow \mathrm{Deletions} \cup \{ e_i \}$. \\
    }
    \ElseIf{$\exists V_{j}^{(\ell)} \in \mathrm{ExpanderSketches}: e_i \subseteq V_{j}^{(\ell)}$}{
    Let $V_{j}^{(\ell)}$ be the first such vertex set in $\mathrm{ExpanderSketches}$ (i.e., smallest value of $\ell$). \\
    $\mathrm{deg}_{G_{\ell}[V_j^{(\ell)}]}(v) \leftarrow \mathrm{deg}_{G_{\ell}[V_j^{(\ell)}}(v) +1 $. \\
$\mathrm{deg}_{G_{\ell}[V_j^{(\ell)}}(u) \leftarrow \mathrm{deg}_{G_{\ell}[V_j^{(\ell)}}(u) +1 $. \\
$\mathcal{S}_{v, \ell_0}^{(\ell)} \leftarrow \mathcal{S}_{v, \ell_0}^{(\ell)} + \mathcal{S}_{v, \ell_0}^{(\ell)}(e)$. \\
$\mathcal{S}_{u, \ell_0}^{(\ell)} \leftarrow \mathcal{S}_{u, \ell_0}^{(\ell)} + \mathcal{S}_{u, \ell_0}^{(\ell)}(e)$. \\
    }
    \Else{
    $\mathrm{Buffer} \leftarrow \mathrm{Buffer} \cup \{e_i \}$. \\
    \If{$|\mathrm{Buffer}| \geq C \cdot n^{4/3}f^{1/3} \log^4(n)$}{
    Let $V^{(r)}_1, \dots V^{(r)}_{k_r}$ denote the high-degree expander decomposition of $\mathrm{Buffer}$ as guaranteed by \cref{thm:expanderDecomp} with $D = n^{1/3}f^{1/3}\log^2(n)$, and let $X$ denote the crossing edges. \\
    Let $G_r$ denote the expander edges. \\
    \For{$j \in [k_{r}]$, and each vertex $v \in V_j^{(r)}$}{ 
    Let $\Gamma_{G_{r}[V_j^{(r)}]}(v)$ denote the edges leaving $v$ in the expander $G_{r}[V_j^{(r)}]$. \\
    Let $\mathcal{S}_{v, \ell_0}^{(r)}$ be $\frac{4f}{D} +1$ many $\ell_0$-samplers (with independent randomness) over $\Gamma_{G_{r}[V_j^{(r)}]}(v)$. \\
    $\mathrm{deg}_{G_{r}[V_{j}^{(r)}]}(v) = |\Gamma_{G_{r}[V_j^{(r)}]}(v)|$.\\
    Let $\mathcal{S}^{(j, r)}_{\ell_0}(G_{r}[V_j^{(r)}])$ denote a set of $\frac{|G_{r}[V_j^{(r)}]| \log^5(n)}{D}$ many linear sketches of $\ell_0$ samplers (each with independent randomness) over the edge set of $G_{r}[V_j^{(r)}]$.
    }
    $\mathrm{ExpanderSketches} \leftarrow \mathrm{ExpanderSketches} \cup \left \{ \left(V_j^{(r)}, \mathcal{S}_v^{(r)}: v \in V_j^{(r)}, \mathrm{deg}_{G_{r}[V_{j}^{(r)}]}(v):v \in V_j^{(r)}, \mathcal{S}^{(j, r)}_{\ell_0} \right): j \in [k_r] \right\}$. \\
    $\mathrm{Buffer} \leftarrow X$.\\
    $r \leftarrow r + 1$. 
    }
    }
    }
\Return${\mathrm{Deletions}, \mathrm{Buffer}, \mathrm{ExpanderSketches}}$.
\end{algorithm}

\subsubsection{Bounding the Space}

We now bound the maximum space required by \cref{alg:streamingBuildSpanner}.

\begin{claim}\label{clm:spaceBoundStreamSpanner}
    When \cref{alg:streamingBuildSpanner} is invoked on a stream of edges with at most $f \geq \sqrt{n}$ deletions on an $n$ vertex graph, the maximum space stored by the algorithm is $\widetilde{O}(n^{4/3}f^{1/3})$ many bits. 
\end{claim}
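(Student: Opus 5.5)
The plan is to follow the proof of \cref{clm:spaceBoundStreamDO} step for step, changing only the accounting for what each expander component stores. The algorithm's state consists of $\mathrm{Deletions}$, the $\mathrm{Buffer}$, and the list $\mathrm{ExpanderSketches}$. Both the buffer-refill argument and the bound on the final crossing edges carry over verbatim, because \cref{alg:streamingBuildSpanner} uses the same threshold $C \cdot n^{4/3}f^{1/3}\log^4(n)$ and the same degree $D = n^{1/3}f^{1/3}\log^2(n)$. Each call to \cref{thm:expanderDecomp} with $C = 4C'$ leaves at most a $3/4$ fraction of the buffer behind, so each refill consumes $\Omega(n^{4/3}f^{1/3}\log^4(n))$ fresh insertions. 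Since at most $\binom{n}{2}$ distinct edges are inserted, there are $R = \widetilde{O}(n^{2/3}/f^{1/3})$ refills. The buffer never exceeds $\widetilde{O}(n^{4/3}f^{1/3})$ edges. The deletion list has $f$ entries; for $f \le n^{4/3}f^{1/3}$, i.e.\ $f\le n^2$, this is within budget.

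Next I bound the cost of a single refill. Storing the partition $V_1^{(r)},\dots,V_{k_r}^{(r)}$ and all per-vertex degrees takes $\widetilde{O}(n)$ bits. For the per-vertex samplers, each vertex keeps $\frac{4f}{D}+1$ $\ell_0$-samplers. Using \cref{thm:ell0samp} with $u=\binom n2$ and $\delta = n^{-1000}$, each sampler costs $\mathrm{polylog}(n)$ bits, so one refill pays $\widetilde{O}(n f/D) = \widetilde{O}(n^{2/3}f^{2/3})$ bits for these. The global samplers for component $j$ number $|G_r[V_j^{(r)}]|\log^5(n)/D$. Summing over components gives at most $|\mathrm{Buffer}|\log^5(n)/D = \widetilde{O}(n^{4/3}f^{1/3}/(n^{1/3}f^{1/3})) = \widetilde{O}(n)$ samplers, hence $\widetilde{O}(n)$ bits.

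Multiplying by $R$, the per-vertex samplers contribute $\widetilde{O}(n^{2/3}f^{2/3}\cdot n^{2/3}/f^{1/3}) = \widetilde{O}(n^{4/3}f^{1/3})$. The remaining per-refill terms contribute $\widetilde{O}(n\cdot n^{2/3}/f^{1/3}) = \widetilde{O}(n^{5/3}/f^{1/3})$, and this is at most $\widetilde{O}(n^{4/3}f^{1/3})$ exactly when $f \ge n^{1/2}$. That is where the hypothesis $f\ge\sqrt n$ enters.

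Finally, I must check that insertions into existing components do not increase space. Because the $\ell_0$-samplers are linear, adding an edge only updates existing counters in place. Assuming the stream has $\mathrm{poly}(n)$ length, as in the footnote to \cref{thm:ell0samp}, the counters stay at $O(\log n)$ bits. The global samplers are fixed at creation and are likewise only updated in place; I expect this is the intended reading, though the pseudocode does not update them on insertion.

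The main obstacle I expect is the refill count. The argument uses that the sum of all insertions is at most $n^2$, which requires each edge to be inserted at most once, or more generally that the insertion count is charged properly. If an edge is reinserted after a deletion, the total number of insertions is at most $\binom n2 + f$, and $f \le n^2$ keeps $R$ within the same bound. I would state this explicitly.
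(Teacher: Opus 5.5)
Your proposal is correct and matches the paper's proof: you inherit the $\widetilde{O}(n^{2/3}/f^{1/3})$ buffer-refill count from \cref{clm:spaceBoundStreamDO}, charge $\widetilde{O}(n^{2/3}f^{2/3})$ per refill for the per-vertex $\ell_0$-samplers and $\widetilde{O}(n)$ for the partition, degrees and global samplers, and use $f \geq \sqrt{n}$ exactly so that the $\widetilde{O}(n)$ terms are dominated. Your additional bookkeeping — the stored $\mathrm{Deletions}$ list, in-place linear updates on insertions into existing components, and the effect of re-insertions on the refill count — is left implicit in the paper, but it is consistent with it and only makes the argument more complete.
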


\begin{proof}
    As in \cref{clm:spaceBoundStreamDO}, we know that the buffer fills $\widetilde{O}(n^{2/3} / f^{1/3})$ times before the end of the stream. So, it remains only to bound the additional space used by the sketch each time the buffer fills. 

    Indeed, whenever the buffer fills, the algorithm stores $O(f / D)$ $\ell_0$-samplers for each vertex (with failure probability $\delta = 1 / n^{1000}$) for a total space consumption of $O(nf/D) = \widetilde{O}(n^{2/3}f^{2/3})$ bits of space (using \cref{thm:ell0samp}). Likewise, the algorithm stores the degrees of all the vertices, which requires $\widetilde{O}(n)$ bits of space across all $n$ vertices. The algorithm also stores the identity of the expander components, which, because they form a partition, requires $\widetilde{O}(n)$ bits of space. Lastly, the algorithm stores a set of $\frac{|G_{r}[V_j^{(r)}]| \log^5(n)}{D}$ many linear sketches of $\ell_0$ samplers (each with independent randomness) over the edge set of $G_{r}[V_j^{(r)}]$. Because there are a total of $\widetilde{O}(n^{4/3}f^{1/3})$ edges, the total number of $\ell_0$-samplers we store here is bounded by 
    \[
    \frac{\widetilde{O}(n^{4/3}f^{1/3})}{D} = \widetilde{O}(n),
    \]
    and using the same $\ell_0$-sampler parameters, leads to a space bound of $\widetilde{O}(n)$ bits. Thus, when $f \geq \sqrt{n}$, the space used is dominated by the $\widetilde{O}(n^{2/3}f^{2/3}$ bits for the vertex $\ell_0$-samplers.

    Now, across all $\widetilde{O}(n^{2/3} / f^{1/3})$ buffer refills, the total space consumption is 
    \[
    \widetilde{O}(n^{2/3} / f^{1/3}) \cdot\widetilde{O}(n^{2/3}f^{2/3}) = \widetilde{O}(n^{4/3}f^{1/3})
    \]
    bits of space.

    Likewise, the space required to store the buffer is always $\widetilde{O}(n^{4/3}f^{1/3})$ bits of space, and so our total space bound is still $\widetilde{O}(n^{4/3}f^{1/3})$ bits. 
\end{proof}

\subsubsection{Proving Spanner Recovery}

First, we present the spanner recovery algorithm, see \cref{alg:recoverSpannerStream}. As in \cref{sec:streamDO}, we use \cref{clm:findingEdgeStream} to denote the rule for finding the \emph{first} expander that contains an edge $e$.

\begin{algorithm}[h!]
\caption{RecoverSpannerStream$(\mathrm{ExpanderSketches}, \mathrm{Buffer}, \mathrm{Deletions})$}\label{alg:recoverSpannerStream}
\For{$e = (u,v) \in \mathrm{Deletions}$}{
\If{exists component $V^{(j)}_i$ such that $e \subseteq V^{(j)}_i$}{
Let $V^{(j)}_i$ denote the first such component (i.e., smallest value of $j$, as per \cref{clm:findingEdgeStream}). \\
$\mathrm{deg}_{G_{j}[V_i^{(j)}]}(v) \leftarrow \mathrm{deg}_{G_{j}[V_i^{(j)}]}(v) -1 $. \\
$\mathrm{deg}_{G_{j}[V_i^{(j)}]}(u) \leftarrow \mathrm{deg}_{G_{j}[V_i^{(j)}]}(u) -1 $. \\
$\mathcal{S}_{v, \ell_0}^{(j)} \leftarrow \mathcal{S}_{v, \ell_0}^{(j)} - \mathcal{S}_{v, \ell_0}^{(j)}(e)$. \label{line:deleteNeighborhoodEdges1Stream}\\
$\mathcal{S}_{u, \ell_0}^{(j)} \leftarrow \mathcal{S}_{u, \ell_0}^{(j)} - \mathcal{S}_{u, \ell_0}^{(j)}(e)$.\label{line:deleteNeighborhoodEdges2Stream} \\
$\mathcal{S}^{(i, j)}_{\ell_0}(G_{j}[V_i^{(j)}]) \leftarrow \mathcal{S}^{(i, j)}_{\ell_0}(G_{j}[V_i^{(j)}]) - \mathcal{S}^{(i,j)}_{\ell_0}(e)$. \\
}
\Else{
$\mathrm{Buffer} \leftarrow \mathrm{Buffer}- e$. \label{line:deleteCrossingEdgesinFStream}\\
}
}
Let $H = (V, \emptyset)$. \\
\For{$V^{(j)}_i: j \in [r], i \in [k_{j}]$}{
Let $V'^{(j)}_i \subseteq V^{(j)}_i$ denote all vertices $u \in V^{(j)}_i$ such that $\mathrm{deg}_{G_{j}[V_i^{(j)}]}(u) \geq D/2$. \label{line:defineV'Stream}\\
\For{$u \in V^{(j)}_i - V'^{(j)}_i$}{
Open the $\ell_0$-samplers $\mathcal{S}_{v, \ell_0}^{(j)}$ to reveal edges $\Gamma_{V^{(j)}_i}(u)$. \\
$H \leftarrow H + \Gamma_{V^{(j)}_i}(u)$. \label{line:addFullNeighborhoodSpannerStream} \\
$\mathcal{S}^{(i,j)}_{\ell_0}(G_{j}[V_i^{(j)}]) \leftarrow \mathcal{S}^{(i,j)}_{\ell_0}(G_{j}[V_i^{(j)}]) - \mathcal{S}^{(i,j)}_{\ell_0}(\Gamma_{V^{(j)}_i}(u))$. \label{line:deleteEdgesell0sampStream}\\
}
Open the $\ell_0$-samplers $\mathcal{S}^{(i, j, d)}_{\ell_0}(G_{j}[V_i^{(j)}])$ to reveal a subgraph $\widetilde{G}_{j}[V_i^{(j)}]$. \\
$H \leftarrow H + \widetilde{G}_{j}[V_i^{(j)}] - \mathrm{Deletions}$.\label{line:removeDeletions} \\
\label{line:addSubsampleExpanderStream} \tcp{Add the subsampled expander.} 

}
$H \leftarrow H + \mathrm{Buffer}$. \label{line:finishHSpannerStream}\\
\Return{$H$.}
\end{algorithm}

Now we prove that the above algorithm does indeed return an $O(\log(n)\log\log(n))$ spanner of the post-deletion graph. 

We start by showing that $H$ is indeed a subgraph of $G - \mathrm{Deletions}$:

\begin{claim}\label{clm:containsHstream}
    Let $G = (V, E)$ be the graph resulting from all the insertions of the stream, let $f$ be a parameter, and let $\mathrm{ExpanderSketches}, \mathrm{Buffer}$ be the result of invoking \cref{alg:streamingBuildSpanner} on $G$, and let $\mathrm{Deletions}$ denote the set of $\leq f$ deletions. Now, suppose that we call \cref{alg:recoverSpannerStream} with $\mathrm{ExpanderSketches}, \mathrm{Buffer}$ and $\mathrm{Deletions}$. Then, the graph $H$ as returned by \cref{alg:recoverSpannerStream} is a subgraph of $G - \mathrm{Deletions}$. 
\end{claim}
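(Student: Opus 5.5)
We prove that every edge added to $H$ in \cref{alg:recoverSpannerStream} lies in $G - \mathrm{Deletions}$. The argument mirrors \cref{clm:containsH}. We classify edges by the line of \cref{alg:recoverSpannerStream} that adds them to $H$. There are three such lines: \cref{line:finishHSpannerStream} (buffer edges), \cref{line:addFullNeighborhoodSpannerStream} (edges recovered from per-vertex $\ell_0$-samplers), and \cref{line:removeDeletions} (edges recovered from the component-wide $\ell_0$-samplers). For each source we argue that the edge was inserted in the stream, so it lies in $G$, and that it is not in $\mathrm{Deletions}$.

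\textbf{Buffer edges.} The $\mathrm{Buffer}$ only ever contains inserted edges, since the buffer is reset to crossing edges $X$ of decompositions of buffer contents. For every $e \in \mathrm{Deletions}$ that does not fall into some component, \cref{line:deleteCrossingEdgesinFStream} removes it. An edge of the buffer is never contained in an existing component at the time it enters the buffer. We must still rule out an edge in $\mathrm{Deletions}$ that sits in the buffer while lying inside some component created later. We handle this by the invariant behind \cref{clm:findingEdgeStream}: each inserted edge is routed deterministically to exactly one place. That place is either the first component containing it at insertion or later decomposition time, or the buffer. The deletion rule of \cref{alg:recoverSpannerStream} locates that same place. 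The remaining subtlety is a buffer edge whose endpoints later become a component via a decomposition of a later buffer. Such an edge would itself be in that buffer, and decomposition is induced, so it becomes an expander edge and not a buffer edge. Hence buffer edges are exactly those never contained in any component, and their deletions are all applied.

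\textbf{Per-vertex and component samplers.} By the same invariant, each deleted edge $e$ is subtracted from the correct linear sketches $\mathcal{S}^{(j)}_{u,\ell_0}$, $\mathcal{S}^{(j)}_{v,\ell_0}$ and $\mathcal{S}^{(i,j)}_{\ell_0}$ (\cref{line:deleteNeighborhoodEdges1Stream}, \cref{line:deleteNeighborhoodEdges2Stream}). These are exactly the sketches into which $e$ was added, either at decomposition time or via the insertion branch. By linearity, the underlying vectors equal the indicator of the surviving edges of that component incident to $u$, respectively of the whole component. An $\ell_0$-sampler only ever returns an element of the support or $\perp$ (\cref{def:ell0samp}). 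So every recovered edge in \cref{line:addFullNeighborhoodSpannerStream} lies in $G_j[V_i^{(j)}] - \mathrm{Deletions} \subseteq G - \mathrm{Deletions}$.

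In \cref{line:deleteEdgesell0sampStream} we additionally subtract edges already recovered. This is valid by \cref{rmk:removeEdge} because those edges are genuinely in the support. The component-wide sampler still returns only support elements. In any case, \cref{line:removeDeletions} explicitly subtracts $\mathrm{Deletions}$, so these edges are also in $G - \mathrm{Deletions}$.

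The main obstacle is the bookkeeping invariant that every deletion is applied to precisely the sketch containing the edge. The linearity argument relies on this, because a mis-routed subtraction would create spurious support. Once that invariant is established via \cref{clm:findingEdgeStream}, the rest is routine.
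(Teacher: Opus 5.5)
Your proposal is correct and is essentially the paper's proof. It uses the same three-way split by the line that adds an edge to $H$: buffer edges by explicit removal, per-vertex-sampler edges by linear cancellation of the routed deletions, and component-sampler edges by the explicit subtraction of $\mathrm{Deletions}$ in \cref{line:removeDeletions}. Your argument that no final buffer edge lies inside any component justifies a step the paper only asserts.

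One correction concerns the component-wide sampler. The insertion branch of \cref{alg:streamingBuildSpanner} never adds edges to $\mathcal{S}^{(i,j)}_{\ell_0}$, and \cref{line:deleteEdgesell0sampStream} subtracts an edge twice when both its endpoints are low-degree. So your claims that this sketch's vector equals the indicator of the surviving component edges, and that the subtracted edges are ``genuinely in the support'', are false. That case instead rests, as in the paper, on every update to this sketch being a signed indicator of an edge of $G$, so its support stays inside $G$, together with \cref{line:removeDeletions}.
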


\begin{proof}
    Let us consider any edge $e \in H$. We claim that any such edge $e$ must also be in $G - \mathrm{Deletions}$. Indeed, we can observe that there are only three ways for an edge to be added to $H$:
    \begin{enumerate}
        \item In \cref{line:finishHSpannerStream}, we add the final crossing edges $\mathrm{Buffer}$ which were not in $\mathrm{Deletions}$ (in \cref{line:deleteCrossingEdgesinFStream} we deleted all edges in $\mathrm{Buffer} \cap \mathrm{Deletions}$). Note that $\mathrm{Buffer} \subseteq G$ by construction, and so $\mathrm{Buffer} - \mathrm{Deletions} \subseteq G - \mathrm{Deletions}$.
        \item In \cref{line:addFullNeighborhoodSpannerStream}, whenever a vertex has degree $\leq D/2$ in an expander component $G_{j, D}[V_i^{(j, D)}] - \mathrm{Deletions}$, we open its $\ell_0$ samplers to recover edges in its neighborhood and add those edges to $H$. Importantly, these are still edges that are a subset of $G - \mathrm{Deletions}$, as in \cref{line:deleteNeighborhoodEdges1Stream} and \cref{line:deleteNeighborhoodEdges2Stream}, we subtracted $g(e)$ from $ \mathcal{S}_u^{(j, D)}$ whenever $e = (u,v)$ appeared in $\mathrm{Deletions}$. 
        \item Otherwise, in \cref{line:addSubsampleExpanderStream}, for vertices whose degree remains $\geq \frac{4f}{D} + 1$ in some expander component $G_{j, D}[V_i^{(j, D)}]$ after performing deletions $\mathrm{Deletions}$, we open our $\ell_0$-samplers for the remaining edges. Importantly, because we have deleted all edges in $\mathrm{Deletions}$ from these recovered edges (see \cref{line:removeDeletions}), by \cref{def:ell0samp}, the edges we recover must be in $G_{j, D}[V_i^{(j, D)}] - \mathrm{Deletions}$, and so these edges must be contained in $G- \mathrm{Deletions}$ as well.
    \end{enumerate}
\end{proof}

Next, we show that with high probability, $H$ is indeed an $O(\log(n)\log\log(n))$ spanner of $G$. To do this, we have the following subclaim:

\begin{claim}\label{clm:recoverExpanderStream}
    Let $G, \mathrm{Deletions}, H$ be as in \cref{clm:containsHstream}. Then, for an expander component $G_{j}[V_i^{(j)}]$, when opening the $\ell_0$ samplers $\mathcal{S}^{(i,j)}_{\ell_0}(G_{j}[V_i^{(j)}])$ to reveal a subgraph $\widetilde{G}_{j}[V_i^{(j)}]$, it is the case that $\widetilde{G}_{j}[V_i^{(j)}]$ contains an $\Omega(1 / \log(n))$-lopsided-expander over the vertices $V'^{(j)}_i$ (defined in \cref{line:defineV'Stream}) with probability $1 - 1 / n^{997}$.
\end{claim}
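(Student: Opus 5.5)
I would mirror the proof of \cref{clm:recoverExpander}. The one new feature in the streaming setting is that the component $G_j[V_i^{(j)}]$ is no longer only the expander found when the buffer filled. It is that expander plus an arbitrary set of later insertions routed to it (by \cref{clm:findingEdgeStream}), minus the deletions assigned to it. Write $H_0$ for the expander at creation time, which has minimum degree $D$ and $\Omega(1/\log n)$ lopsided expansion by \cref{thm:expanderDecomp}. Write $H^{+}$ for the component after all insertions, and $H'$ for the component after the deletions. Here $V'^{(j)}_i$ is the set of vertices whose degree in $H'$ is at least $D/2$.

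\textbf{Step 1 (structure of the target).} Since $D=n^{1/3}f^{1/3}\log^2(n)\ge 8\sqrt f\log n$ once $f\le n^2$, the argument of \cref{clm:goodGoodExpanderfDeletion} applies. Apply it to $H_0$ with the deletions alone: the vertices keeping degree $\ge D/2$ under deletions induce an $\Omega(1/\log n)$ lopsided expander with minimum degree $\ge D/4$. Insertions only add edges inside $V_i^{(j)}$. I would redo the conductance computation of \cref{clm:goodGoodExpanderfDeletion} with $H^{+}$ in place of $H$. Insertions raise volumes and cut sizes together, so the key quantity is still $\delta_{H^+}(S)-5f$ against $\min(\mathrm{Vol}_{H^+}(S),\mathrm{Vol}_{H^+}(\bar S))$. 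The bad-volume bound $\mathrm{Vol}(V_{\mathrm{bad}})\le 4f$ must now be read relative to $H^{+}$: a vertex in $V_i^{(j)}\setminus V'^{(j)}_i$ lost more than half its $H^{+}$-degree, since its $H^{+}$-degree is at least $D$. Rather than asserting that $H^{+}$ is an expander, I would reduce to the deletion-only case. The key fact is that $H'[V'^{(j)}_i]$ contains, as a spanning subgraph, a graph with minimum degree $\Omega(D)$ and expansion $\Omega(1/\log n)$; call it $K$.

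\textbf{Step 2 (sampling).} After \cref{line:deleteEdgesell0sampStream}, the global samplers are supported on a superset of $E(H'[V'^{(j)}_i])$, and $\ell_0$-sampling is uniform over the current support. There are $|H_0|\log^5(n)/D$ samplers, and the support size $W$ is at most the total number of edges ever routed to this component. If $W$ exceeds $|H_0|$, I bound the sample count differently, and this is the main obstacle. The intended fix is to normalize by the minimum degree of the target rather than by $W$. Each edge of $K$ is hit with probability at least $1-(1-1/W)^{s}$, where $s$ is the number of samplers. I would compare against \cref{clm:subsampleExpander}: each cut of $K$ receives an expected number of sampled crossing edges $\Omega(\delta_K(S)\,s/W)$, and the sampled volume is controlled likewise. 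If the insertions blow up $W$, I would charge them: with a sample rate of $\log^5(n)/D$ per original-edge unit, each vertex of $K$ still expects $\Omega(\log^4 n)$ incident samples whenever $W=O(|H_0|\log n)$. Failing that, I would argue directly that the conductance ratio is preserved, since the Chernoff bounds in \cref{clm:subsampleExpander} depend only on ratios once expected cut counts are $\Omega(k\log^2 n)$.

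\textbf{Step 3 (union bound).} Condition on no sampler returning $\perp$; there are at most $n^2\mathrm{polylog}(n)$ samplers, each failing with probability $n^{-1000}$. Then apply Chernoff and the union bound over cut sizes $k$ as in \cref{clm:subsampleExpander}, giving failure probability $2^{-\Omega(\log^2 n)}$. Summing, the total failure probability is at most $1/n^{997}$. Finally, sampled edges that lie in Deletions are already removed in \cref{line:removeDeletions}, and those outside $V'^{(j)}_i$ are irrelevant, since containing an expander on $V'^{(j)}_i$ is all that is claimed.
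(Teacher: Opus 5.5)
\textbf{There is a genuine gap, in your Step~2, and you leave it open yourself.}

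\emph{The sampling step is unresolved.} You model the global samplers $\mathcal{S}^{(i,j)}_{\ell_0}$ as absorbing every insertion routed to the component. Their support $W$ can then far exceed $|H_0|$, while the number of samplers stays $s=|H_0|\log^5(n)/D$. You only handle $W=O(|H_0|\log n)$. The ``failing that'' fallback is circular: the Chernoff step of \cref{clm:subsampleExpander} needs expected crossing counts of order $k\log^2(n)$, and that is exactly what a diluted rate $s/W$ destroys.

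Under your model the claim actually fails. Suppose later insertions turn half of a component on $N$ vertices into a clique, with $N$ a large polylogarithmic factor above $D$. Then $W=\Theta(N^2)$ but $s=\widetilde{O}(N)$. A vertex in the other half still has degree $\widetilde{\Theta}(D)$, so it stays in $V'^{(j)}_i$. Yet it expects only $\widetilde{O}(D/N)=o(1)$ sampled incident edges, so it is typically isolated in $\widetilde{G}_j[V_i^{(j)}]$.

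\emph{Step~1 has a matching weakness.} The spanning expander $K\subseteq H'[V'^{(j)}_i]$ is asserted rather than derived. Because $K$ may use inserted edges, nothing ties it to the edge set the global samplers actually draw from.

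\emph{The missing observation.} In \cref{alg:streamingBuildSpanner}, an insertion routed to an existing component updates only the degree counters and the per-vertex samplers $\mathcal{S}^{(j)}_{v,\ell_0}$. The global samplers are built once over the creation-time edge set and never see later insertions. So their support has size at most $W=|G_j[V_i^{(j)}]|$ at creation, which is exactly what the sampler count is sized to. The computation of \cref{clm:recoverExpander} then carries over verbatim: at least $W_{\mathrm{sub}}\log^5(n)/D$ samples are expected to land in the target. Via the deletion-only robustness argument (\cref{lem:robustnessfDeletion}, \cref{clm:goodGoodExpanderfDeletion}), the paper takes this target to be an $\Omega(1/\log n)$ lopsided expander with minimum degree $\Omega(D)$ and $\Omega(D^2)$ edges.

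\emph{The paper's streaming-specific step concerns deletions, not insertions.} Deletions are not propagated into the global samplers during the stream, so up to $f$ sampled edges may be deleted ones. These are discarded at \cref{line:removeDeletions}. Since $f\le D^2/\log n$ against $\Omega(D^2)$ target edges, the expected number of genuine target samples is still $\Omega\bigl(|G_j[V'^{(j)}_i]|\log^5(n)/\mathrm{mindeg}(G_j[V'^{(j)}_i])\bigr)$. After that, \cref{clm:subsampleExpander} and the union bound of your Step~3 finish the proof.

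\emph{Consequence for your Step~1.} The sampled graph contains only creation-time edges, so the expander you can actually certify is on the deletion-only good set, not something built from insertions. The paper itself does not separately discuss vertices of $V'^{(j)}_i$ that clear the $D/2$ threshold only because of insertions.
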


\begin{proof}
    Indeed, let us consider the subgraph $G_{j}[V'^{(j)}_i]$ (before sampling). By \cref{lem:robustnessfDeletion}, we know that $G_{j}[V'^{(j)}_i]$ is an $\Omega(1 / \log(n))$ lopsided-expander.

    Now, let us denote the number of edges in $G_{j}[V'^{(j)}_i]$ by $W_{\mathrm{sub}}$, and denote the number of edges in $G_{j}[V_i^{(j)}]$ by $W$. By the definition of $\ell_0$-samplers, whenever we open such a sampler (and it does not return $\perp$), we expect the sample to lie in $G_{j}[V'^{(j)}_i]$ with probability $\frac{W_{\mathrm{sub}}}{W}$ (see \cref{def:ell0samp}). Thus, when opening $\frac{|G_{j}[V_i^{(j)}]| \log^5(n)}{D}$ many $\ell_0$-samplers (and conditioning on no failures), we expect
    \[
    \geq \frac{|G_{j}[V_i^{(j)}]| \log^5(n)}{D}\cdot \frac{W_{\mathrm{sub}}}{W} \geq \frac{W_{\mathrm{sub}}\log^5(n)}{D}
    \]
    many samples to land in $G_{j}[V'^{(j)}_i]$. Next, we also recall that  the minimum degree in $G_{j}[V'^{(j)}_i]$ is still $\Omega(D)$, and that $G_{j}[V'^{(j)}_i]$ still has $\Omega(D^2)$ edges (see \cref{lem:robustnessfDeletion} for the proof of this). This implies that opening the $\ell_0$-samplers yields a random sample of 
    \[
    \Omega \left (  \frac{|G_{j}[V'^{(j)}_i]| \cdot \log^5(n)}{\mathrm{mindeg}(G_{j}[V'^{(j)}_i])} \right )
    \]
    many edges from $G_{j}[V'^{(j)}_i]$ in expectation. Now, observe that when we perform deletions in \cref{alg:streamingBuildSpanner}, \emph{we do not} propagate these deletions to the $\ell_0$-samplers of $G_{j}[V_i^{(j)}]$. Thus, it is possible that as many as $f$ edges that are recovered are in fact edges in $\mathrm{Deletions}$. When this is the case, we ignore those edges. Importantly, $f  \leq \frac{D^2}{\log(n)}$, and so in expectation, we still expect 
    \[
    \Omega \left (  \frac{|G_{j}[V'^{(j)}_i]| \cdot \log^5(n)}{\mathrm{mindeg}(G_{j}[V'^{(j)}_i])} \right )
    \]
    genuine edges from $G_{j}[V'^{(j)}_i]$. 
    
    Finally, using a Chernoff bound along with \cref{clm:subsampleExpander}, implies that probability $1 - 2^{- \Omega(\log^2(n))}$ the resulting subsample contains an $\Omega(1 / \log(n))$ expander over $G_{j}[V'^{(j)}_i]$. Taking a union over the aforementioned failure probabilities, along with the failure probabilities of the individual $\ell_0$-samplers ($1 / n^{1000}$) of which there are at most $n^2 \mathrm{polylog}(n)$, then yields the above claim. 
\end{proof}

Finally then, we can prove the bound on the stretch of our spanner:

\begin{lemma}\label{lem:lowStretchStream}
    Let $G, \mathrm{Deletions}, H$ be as in \cref{clm:containsHstream}. Then, $H$ is an $O(\log(n)\log\log(n))$ spanner of $G - \mathrm{Deletions}$ with probability $1 - 1 / n^{100}$.
\end{lemma}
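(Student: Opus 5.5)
The plan is to follow the static proof of \cref{lem:lowStretch} nearly verbatim, replacing the static components by the streaming ones. We fix an arbitrary edge $e=(u,v) \in G - \mathrm{Deletions}$ and show that, with probability $1 - 1/n^{\Omega(1)}$ (tuned to $1/n^{100}$ after a union bound over at most $n^2$ edges and all components), $\mathrm{dist}_H(u,v) = O(\log(n)\log\log(n))$. Combined with \cref{clm:containsHstream} ($H \subseteq G - \mathrm{Deletions}$), this makes $H$ an $O(\log(n)\log\log(n))$-spanner.

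\emph{Case split via \cref{clm:findingEdgeStream}.} If no component $V_i^{(j)}$ contains both endpoints of $e$, then $e$ was never routed to an expander, so it sits in the final $\mathrm{Buffer}$. Since $e \notin \mathrm{Deletions}$, it survives into $H$ at \cref{line:finishHSpannerStream}. Otherwise, let $V_i^{(j)}$ be the first component containing $e$. The invariant of \cref{clm:findingEdgeStream}, maintained by the ``first expander'' check on insertion, guarantees that $e$ lies in exactly this component's sketches. It also guarantees that every deletion was subtracted from the correct vertex samplers and degree counters. Hence the stored degrees equal the true degrees in $G_j[V_i^{(j)}] - \mathrm{Deletions}$, where $G_j[V_i^{(j)}]$ includes later insertions.

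\emph{Low- and high-degree endpoints.} If some endpoint, say $u$, has post-deletion degree at most $4f/D$, then its $4f/D+1$ independent $\ell_0$-samplers over the correctly updated neighborhood recover all remaining incident edges. By \cref{thm:ell0samp} and a coupon-collector and union bound, this holds with probability $1 - n^{-\Omega(1)}$. The edge $e$ is then added at \cref{line:addFullNeighborhoodSpannerStream}. Here I must handle the mismatch between the threshold $D/2$ in \cref{line:defineV'Stream} and the threshold $4f/D$. Vertices with degree between $4f/D$ and $D/2$ are not fully recovered, but they get at least one sampled neighbor, which is all the high-degree case needs. If both endpoints have degree $\ge 4f/D + 1$, then \cref{cor:robustnessfDeletionInsertion} (applicable since $D = n^{1/3}f^{1/3}\log^2 n \ge 8\sqrt f \log n$ when $f \le n^2$) gives two facts. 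First, $|V_i^{(j)} \setminus V'^{(j)}_i| \le 4f/D$. Second, a sample from each endpoint's samplers lands in $V'^{(j)}_i$ with good probability: among $4f/D+1$ distinct neighbors at least one is good, and with many independent samplers one hits a good neighbor, or $u$ is itself good.

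\emph{The main obstacle.} The crux is the path between the good neighbors $u', v'$. \Cref{clm:recoverExpanderStream} says the opened global samplers $\widetilde G_j[V_i^{(j)}] - \mathrm{Deletions}$ contain an $\Omega(1/\log n)$-lopsided expander on $V'^{(j)}_i$. By \cref{clm:diameterBound}, that expander has diameter $O(\log(n)\log\log(n))$. So $\mathrm{dist}_H(u,v) \le 2 + O(\log(n)\log\log(n))$. The subtle points I would double-check are twofold. First, insertions into an existing component must be fed into the component-wide samplers, or those samplers must only be needed for edges present at creation; otherwise I would argue that the creation-time expander restricted to good vertices already suffices by the insertion-free part of \cref{cor:robustnessfDeletionInsertion}. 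Second, obliviousness of $\mathrm{Deletions}$ is needed so that the sampler randomness is independent of which edges are deleted and then discarded.

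Finally, I would union bound over the $\mathrm{poly}(n)$ samplers, each failing with probability $n^{-1000}$, and over all components and edges to obtain $1 - 1/n^{100}$.
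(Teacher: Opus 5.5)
Your case analysis matches the paper's: a buffer edge survives directly; a low-degree endpoint recovers its neighbourhood; otherwise both endpoints reach $V'^{(j)}_i$ and are joined through the sampled expander of \cref{clm:recoverExpanderStream}. However, your justification of the two sampler steps fails. Each vertex has only $4f/D+1$ per-vertex samplers, and if they are opened independently they are just $4f/D+1$ independent uniform draws from its current neighbourhood. Suppose $u$ has exactly $k=4f/D$ remaining edges. Then $k+1$ independent draws see all of them only with probability $k!\binom{k+1}{2}/k^{k+1}$, which is exponentially small in $k$. And $k$ is polynomially large, about $4n^{1/3}/\log^2 n$ when $f=n$. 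A coupon-collector bound would need $\Theta(k\log n)$ samplers, not $k+1$. So the step ``$e$ is recovered and is therefore in $H$'' is not established. Similarly, in the high-degree case the only fact you use, $|V_i^{(j)}\setminus V'^{(j)}_i|\le 4f/D$, guarantees only two good neighbours when $\mathrm{deg}(u)=4f/D+1$. That does not make independent draws hit a good neighbour with high probability.

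The missing idea is to open the samplers sequentially and peel. Once the $t$-th sampler returns $e_t$, subtract $e_t$ from every sampler not yet opened. This is valid by \cref{rmk:removeEdge}, because $e_t$ is determined by randomness independent of theirs (and by the oblivious $\mathrm{Deletions}$). Then every sampler that does not output $\perp$ returns a new edge. Outside the per-sampler failure events (probability $n^{-1000}$ each, absorbed by the union bound), you deterministically obtain $\min(\mathrm{deg}(u),4f/D+1)$ distinct incident edges. When $\mathrm{deg}(u)\le 4f/D$ this is the whole remaining neighbourhood. Otherwise it is $4f/D+1$ distinct neighbours, at least one of which must lie in $V'^{(j)}_i$. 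This is what the paper's ``union bound over the failure probabilities of the $\ell_0$-samplers'' tacitly relies on.

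Separately, your worry about insertions into an existing component is well founded; the paper's proof passes over it by simply citing \cref{clm:recoverExpanderStream}. Your fallback only works if good/bad is decided by creation-time degree minus deletions. A vertex can regain degree $D/2$ purely from inserted edges. It then opens none of its own samplers and may receive no surviving edge from the component-wide samplers.
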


\begin{proof}
    Consider any edge $(u,v) \in G - \mathrm{Deletions}$. We claim that the distance between $u$ and $v$ in $H$ will be bounded by $O(\log(n)\log\log(n))$ (with high probability), and thus $H$ is an $O(\log(n)\log\log(n))$ spanner.

    There are a few cases for us to consider:

    \begin{enumerate}
        \item Suppose there is no choice of $i, j$ such that $e = (u,v) \subseteq V_i^{(j)}$. This means that $e$ must be in $\mathrm{Buffer} - \mathrm{Deletions}$. But, the set $\mathrm{Buffer}$ is explicitly stored ($\mathrm{Deletions}$ is removed) and thus $e \in H$. 
        \item Instead, suppose there is a choice of $i, j$ such that $e = (u,v) \subseteq V_i^{(j)}$, and let $V_i^{(j)}$ be the first such set (as per \cref{clm:findingEdgeStream}). Again, we have several cases:
        \begin{enumerate}
            \item If one of $u, v$ (WLOG $u$) has degree $\leq \frac{4f}{D}$, then when opening the $\ell_0$-samplers $\mathcal{S}_{u, \ell_0}^{(j)}$, with probability $1 - 1 / n^{999}$ (taking a union bound over the failure probabilities of the $\ell_0$-samplers) it is the case that all edges in the neighborhood of $u$ (in this expander component) are recovered. Thus, $e = (u,v)$ is recovered and is therefore in $H$. 
            \item Otherwise, suppose that both of $u, v$ have degree $\geq \frac{4f}{D} +1$. Then, as per \cref{lem:robustnessfDeletion}, we know that there are at most $\frac{4f}{D}$ vertices in $V_{i}^{(j)}$ which are \emph{not in} $V_{i}'^{(j)}$. So, $u, v$ both recover with probability $1 - 1 / n^{999}$ (taking a union bound over the failure probabilities of the $\ell_0$-samplers), an edge to some vertices $u', v'$ in $V_{i}'^{(j)}$. Now, by \cref{clm:recoverExpanderStream}, we know that we have also recovered an $\Omega(1 / \log(n))$ lopsided-expander over $V_{i}'^{(j)}$, and so $u', v'$ are at distance $O(\log(n)\log\log(n))$ in $H$. Thus, $u, v$ are at distance $O(\log(n)\log\log(n)) + 2$, thereby yielding the claim. 
        \end{enumerate}
    \end{enumerate}
\end{proof}

Finally, we prove \cref{thm:obliviousSpannerStream}.

\begin{proof}[Proof of \cref{thm:obliviousSpannerStream}.]
First, if $f$ is $\leq \sqrt{n}$, then we simply use an $f$-fault-tolerant $O(\log(n)\log\log(n))$ spanner, which requires $\widetilde{O}(n f) \leq \widetilde{O}(n^{4/3} f^{1/3})$ bits of space, and can be constructed using a greedy algorithm in an insertion-only stream (see \cite{chechik2009fault,bodwin2022partially} for discussion of this greedy algorithm).

Otherwise, for other values of $f$, the size of the sketch follows from \cref{lem:spaceBoundSpanner}. The spanner property follows from \cref{clm:containsH} and \cref{lem:lowStretch}.
\end{proof}
    
\bibliographystyle{alpha}
\bibliography{ref}

@inproceedings{GGKKS22,
  author       = {Anupam Gupta and
                  Vijaykrishna Gurunathan and
                  Ravishankar Krishnaswamy and
                  Amit Kumar and
                  Sahil Singla},
  editor       = {Joseph (Seffi) Naor and
                  Niv Buchbinder},
  title        = {Online Discrepancy with Recourse for Vectors and Graphs},
  booktitle    = {Proceedings of the 2022 {ACM-SIAM} Symposium on Discrete Algorithms,
                  {SODA} 2022, Virtual Conference / Alexandria, VA, USA, January 9 -
                  12, 2022},
  pages        = {1356--1383},
  publisher    = {{SIAM}},
  year         = {2022},
  url          = {https://doi.org/10.1137/1.9781611977073.57},
  doi          = {10.1137/1.9781611977073.57},
  bibsource    = {dblp computer science bibliography, https://dblp.org}
}

@inproceedings{GKKS20,
  author       = {Anupam Gupta and
                  Ravishankar Krishnaswamy and
                  Amit Kumar and
                  Sahil Singla},
  editor       = {Nitin Saxena and
                  Sunil Simon},
  title        = {Online Carpooling Using Expander Decompositions},
  booktitle    = {40th {IARCS} Annual Conference on Foundations of Software Technology
                  and Theoretical Computer Science, {FSTTCS} 2020, December 14-18, 2020,
                  {BITS} Pilani, {K} {K} Birla Goa Campus, Goa, India (Virtual Conference)},
  series       = {LIPIcs},
  volume       = {182},
  pages        = {23:1--23:14},
  publisher    = {Schloss Dagstuhl - Leibniz-Zentrum f{\"{u}}r Informatik},
  year         = {2020},
  url          = {https://doi.org/10.4230/LIPIcs.FSTTCS.2020.23},
  doi          = {10.4230/LIPICS.FSTTCS.2020.23},
  bibsource    = {dblp computer science bibliography, https://dblp.org}
}

@inproceedings{peleg1987optimal,
  title={An optimal synchronizer for the hypercube},
  author={Peleg, David and Ullman, Jeffrey D},
  booktitle={Proceedings of the sixth annual ACM Symposium on Principles of distributed computing},
  pages={77--85},
  year={1987}
}

@inproceedings{dark2020optimal,
  author       = {Jacques Dark and
                  Christian Konrad},
  editor       = {Shubhangi Saraf},
  title        = {Optimal Lower Bounds for Matching and Vertex Cover in Dynamic Graph
                  Streams},
  booktitle    = {35th Computational Complexity Conference, {CCC} 2020, July 28-31,
                  2020, Saarbr{\"{u}}cken, Germany (Virtual Conference)},
  series       = {LIPIcs},
  volume       = {169},
  pages        = {30:1--30:14},
  publisher    = {Schloss Dagstuhl - Leibniz-Zentrum f{\"{u}}r Informatik},
  year         = {2020},
  url          = {https://doi.org/10.4230/LIPIcs.CCC.2020.30},
  doi          = {10.4230/LIPICS.CCC.2020.30},
  bibsource    = {dblp computer science bibliography, https://dblp.org}
}

@article{peleg1989trade,
  title={A trade-off between space and efficiency for routing tables},
  author={Peleg, David and Upfal, Eli},
  journal={Journal of the ACM (JACM)},
  volume={36},
  number={3},
  pages={510--530},
  year={1989},
  publisher={ACM New York, NY, USA}
}

@article{awerbuch1992efficient,
  title={Efficient broadcast and light-weighted spanners},
  author={Awerbuch, Baruch},
  journal={manuscript},
  year={1992}
}

@book{peleg2000distributed,
  title={Distributed computing: a locality-sensitive approach},
  author={Peleg, David},
  year={2000},
  publisher={SIAM}
}

@article{althofer1993sparse,
  title={On sparse spanners of weighted graphs},
  author={Alth{\"o}fer, Ingo and Das, Gautam and Dobkin, David and Joseph, Deborah and Soares, Jos{\'e}},
  journal={Discrete \& Computational Geometry},
  volume={9},
  number={1},
  pages={81--100},
  year={1993},
  publisher={Springer}
}

@inproceedings{levcopoulos1998efficient,
  title={Efficient algorithms for constructing fault-tolerant geometric spanners},
  author={Levcopoulos, Christos and Narasimhan, Giri and Smid, Michiel},
  booktitle={Proceedings of the thirtieth annual ACM symposium on Theory of computing},
  pages={186--195},
  year={1998}
}

@inproceedings{bodwin2021optimal,
  title={Optimal vertex fault-tolerant spanners in polynomial time},
  author={Bodwin, Greg and Dinitz, Michael and Robelle, Caleb},
  booktitle={Proceedings of the 2021 ACM-SIAM Symposium on Discrete Algorithms (SODA)},
  pages={2924--2938},
  year={2021},
  organization={SIAM}
}

@inproceedings{KLS93,
  author       = {Sanjeev Khanna and
                  Nathan Linial and
                  Shmuel Safra},
  title        = {On the Hardness of Approximating the Chromatic Number},
  booktitle    = {Second Israel Symposium on Theory of Computing Systems, {ISTCS} 1993,
                  Natanya, Israel, June 7-9, 1993, Proceedings},
  pages        = {250--260},
  publisher    = {{IEEE} Computer Society},
  year         = {1993},
  url          = {https://doi.org/10.1109/ISTCS.1993.253464},
  doi          = {10.1109/ISTCS.1993.253464},
  bibsource    = {dblp computer science bibliography, https://dblp.org}
}

@article{bringmann2025near,
  title={Near-optimal directed low-diameter decompositions},
  author={Bringmann, Karl and Fischer, Nick and Haeupler, Bernhard and Latypov, Rustam},
  journal={arXiv preprint arXiv:2502.05687},
  year={2025}
}

@inproceedings{popova2026new,
  title={New Greedy Spanners and Applications},
  author={Popova, Elizaveta and Tzalik, Elad},
  booktitle={17th Innovations in Theoretical Computer Science Conference (ITCS 2026)},
  pages={107--1},
  year={2026},
  organization={Schloss Dagstuhl--Leibniz-Zentrum f{\"u}r Informatik}
}

@inproceedings{parter2022nearly,
  title={Nearly optimal vertex fault-tolerant spanners in optimal time: sequential, distributed, and parallel},
  author={Parter, Merav},
  booktitle={Proceedings of the 54th Annual ACM SIGACT Symposium on Theory of Computing},
  pages={1080--1092},
  year={2022}
}

@inproceedings{BHP24,
  author       = {Greg Bodwin and
                  Bernhard Haeupler and
                  Merav Parter},
  editor       = {David P. Woodruff},
  title        = {Fault-Tolerant Spanners against Bounded-Degree Edge Failures: Linearly
                  More Faults, Almost For Free},
  booktitle    = {Proceedings of the 2024 {ACM-SIAM} Symposium on Discrete Algorithms,
                  {SODA} 2024, Alexandria, VA, USA, January 7-10, 2024},
  pages        = {2609--2642},
  publisher    = {{SIAM}},
  year         = {2024},
  url          = {https://doi.org/10.1137/1.9781611977912.93},
  doi          = {10.1137/1.9781611977912.93},
  bibsource    = {dblp computer science bibliography, https://dblp.org}
}

@inproceedings{bodwin2018optimal,
  title={Optimal vertex fault tolerant spanners (for fixed stretch)},
  author={Bodwin, Greg and Dinitz, Michael and Parter, Merav and Williams, Virginia Vassilevska},
  booktitle={Proceedings of the Twenty-Ninth Annual ACM-SIAM Symposium on Discrete Algorithms},
  pages={1884--1900},
  year={2018},
  organization={SIAM}
}

@article{bodwin2018optimalCorrection,
  title={Private Communication},
  author={Bodwin, Greg and Dinitz, Michael and Parter, Merav and Williams, Virginia Vassilevska},
  year={2025},
}

@article{guruswami2012essential,
  title={Essential coding theory},
  author={Guruswami, Venkatesan and Rudra, Atri and Sudan, Madhu},
  journal={Draft available at http://www. cse. buffalo. edu/atri/courses/coding-theory/book},
  volume={2},
  number={1},
  year={2012}
}

@inproceedings{khanna2025streaming,
  author       = {Sanjeev Khanna and
                  Christian Konrad and
                  Jacques Dark},
  editor       = {Keren Censor{-}Hillel and
                  Fabrizio Grandoni and
                  Jo{\"{e}}l Ouaknine and
                  Gabriele Puppis},
  title        = {Streaming Maximal Matching with Bounded Deletions},
  booktitle    = {52nd International Colloquium on Automata, Languages, and Programming,
                  {ICALP} 2025, July 8-11, 2025, Aarhus, Denmark},
  series       = {LIPIcs},
  volume       = {334},
  pages        = {106:1--106:20},
  publisher    = {Schloss Dagstuhl - Leibniz-Zentrum f{\"{u}}r Informatik},
  year         = {2025},
  url          = {https://doi.org/10.4230/LIPIcs.ICALP.2025.106},
  doi          = {10.4230/LIPICS.ICALP.2025.106},
  bibsource    = {dblp computer science bibliography, https://dblp.org}
}

@article{khanna2025near,
  title={Near-optimal Hypergraph Sparsification in Insertion-only and Bounded-deletion Streams},
  author={Khanna, Sanjeev and Putterman, Aaron and Sudan, Madhu},
  journal={arXiv preprint arXiv:2504.16321},
  year={2025}
}

@inproceedings{jayaram2018data,
  title={Data streams with bounded deletions},
  author={Jayaram, Rajesh and Woodruff, David P},
  booktitle={Proceedings of the 37th ACM SIGMOD-SIGACT-SIGAI Symposium on Principles of Database Systems},
  pages={341--354},
  year={2018}
}

@article{leighton1999multicommodity,
  title={Multicommodity max-flow min-cut theorems and their use in designing approximation algorithms},
  author={Leighton, Tom and Rao, Satish},
  journal={Journal of the ACM (JACM)},
  volume={46},
  number={6},
  pages={787--832},
  year={1999},
  publisher={ACM New York, NY, USA}
}

@inproceedings{jowhari2011tight,
  title={Tight bounds for lp samplers, finding duplicates in streams, and related problems},
  author={Jowhari, Hossein and Sa{\u{g}}lam, Mert and Tardos, G{\'a}bor},
  booktitle={Proceedings of the thirtieth ACM SIGMOD-SIGACT-SIGART symposium on Principles of database systems},
  pages={49--58},
  year={2011}
}

@inproceedings{filtser2021graph,
  title={Graph spanners by sketching in dynamic streams and the simultaneous communication model},
  author={Filtser, Arnold and Kapralov, Michael and Nouri, Navid},
  booktitle={Proceedings of the 2021 ACM-SIAM Symposium on Discrete Algorithms (SODA)},
  pages={1894--1913},
  year={2021},
  organization={SIAM}
}

@inproceedings{bodwin2019trivial,
  title={A trivial yet optimal solution to vertex fault tolerant spanners},
  author={Bodwin, Greg and Patel, Shyamal},
  booktitle={Proceedings of the 2019 ACM Symposium on Principles of Distributed Computing},
  pages={541--543},
  year={2019}
}

@article{chebyshev1852memoire,
  title={M{\'e}moire sur les nombres premiers},
  author={Chebyshev, Pafnuty Lvovich},
  journal={J. Math. Pures Appl.},
  volume={17},
  pages={366--390},
  year={1852}
}

@inproceedings{bodwin2022partially,
  title={Partially optimal edge fault-tolerant spanners},
  author={Bodwin, Greg and Dinitz, Michael and Robelle, Caleb},
  booktitle={Proceedings of the 2022 Annual ACM-SIAM Symposium on Discrete Algorithms (SODA)},
  pages={3272--3286},
  year={2022},
  organization={SIAM}
}

@inproceedings{dinitz2020efficient,
  title={Efficient and simple algorithms for fault-tolerant spanners},
  author={Dinitz, Michael and Robelle, Caleb},
  booktitle={Proceedings of the 39th Symposium on Principles of Distributed Computing},
  pages={493--500},
  year={2020}
}

@inproceedings{chechik2009fault,
  title={Fault-tolerant spanners for general graphs},
  author={Chechik, Shiri and Langberg, Michael and Peleg, David and Roditty, Liam},
  booktitle={Proceedings of the forty-first annual ACM symposium on Theory of computing},
  pages={435--444},
  year={2009}
}

@inproceedings{awerbuch1990network,
  title={Network synchronization with polylogarithmic overhead},
  author={Awerbuch, Baruch and Peleg, David},
  booktitle={Proceedings [1990] 31st Annual Symposium on Foundations of Computer Science},
  pages={514--522},
  year={1990},
  organization={IEEE}
}

@article{peleg1989graph,
  title={Graph spanners},
  author={Peleg, David and Sch{\"a}ffer, Alejandro A},
  journal={Journal of graph theory},
  volume={13},
  number={1},
  pages={99--116},
  year={1989},
  publisher={Wiley Online Library}
}

@inproceedings{BBGNSS22,
  author       = {Aaron Bernstein and
                  Jan van den Brand and
                  Maximilian Probst Gutenberg and
                  Danupon Nanongkai and
                  Thatchaphol Saranurak and
                  Aaron Sidford and
                  He Sun},
  editor       = {Mikolaj Bojanczyk and
                  Emanuela Merelli and
                  David P. Woodruff},
  title        = {Fully-Dynamic Graph Sparsifiers Against an Adaptive Adversary},
  booktitle    = {49th International Colloquium on Automata, Languages, and Programming,
                  {ICALP} 2022, July 4-8, 2022, Paris, France},
  series       = {LIPIcs},
  volume       = {229},
  pages        = {20:1--20:20},
  publisher    = {Schloss Dagstuhl - Leibniz-Zentrum f{\"{u}}r Informatik},
  year         = {2022},
  url          = {https://doi.org/10.4230/LIPIcs.ICALP.2022.20},
  doi          = {10.4230/LIPICS.ICALP.2022.20},
  bibsource    = {dblp computer science bibliography, https://dblp.org}
}

@article{SW18,
  author       = {Thatchaphol Saranurak and
                  Di Wang},
  title        = {Expander Decomposition and Pruning: Faster, Stronger, and Simpler},
  journal      = {CoRR},
  volume       = {abs/1812.08958},
  year         = {2018},
  url          = {http://arxiv.org/abs/1812.08958},
  eprinttype    = {arXiv},
  eprint       = {1812.08958},
  bibsource    = {dblp computer science bibliography, https://dblp.org}
}

@inproceedings{CKL22,
  author       = {Yu Chen and
                  Sanjeev Khanna and
                  Huan Li},
  title        = {On Weighted Graph Sparsification by Linear Sketching},
  booktitle    = {63rd {IEEE} Annual Symposium on Foundations of Computer Science, {FOCS}
                  2022, Denver, CO, USA, October 31 - November 3, 2022},
  pages        = {474--485},
  publisher    = {{IEEE}},
  year         = {2022},
  url          = {https://doi.org/10.1109/FOCS54457.2022.00052},
  doi          = {10.1109/FOCS54457.2022.00052},
  bibsource    = {dblp computer science bibliography, https://dblp.org}
}

@article{KPS24d,
  title={Near-optimal Size Linear Sketches for Hypergraph Cut Sparsifiers},
  booktitle={To appear, FOCS 2024},
  author={Khanna, Sanjeev and Putterman, Aaron L and Sudan, Madhu},
  journal={arXiv preprint arXiv:2407.03934},
  year={2024}
}

@inproceedings{AGM12,
  author       = {Kook Jin Ahn and
                  Sudipto Guha and
                  Andrew McGregor},
  editor       = {Yuval Rabani},
  title        = {Analyzing graph structure via linear measurements},
  booktitle    = {Proceedings of the Twenty-Third Annual {ACM-SIAM} Symposium on Discrete
                  Algorithms, {SODA} 2012, Kyoto, Japan, January 17-19, 2012},
  pages        = {459--467},
  publisher    = {{SIAM}},
  year         = {2012},
  url          = {https://doi.org/10.1137/1.9781611973099.40},
  doi          = {10.1137/1.9781611973099.40},
  bibsource    = {dblp computer science bibliography, https://dblp.org}
}

@inproceedings{AGM12b,
  author       = {Kook Jin Ahn and
                  Sudipto Guha and
                  Andrew McGregor},
  editor       = {Michael Benedikt and
                  Markus Kr{\"{o}}tzsch and
                  Maurizio Lenzerini},
  title        = {Graph sketches: sparsification, spanners, and subgraphs},
  booktitle    = {Proceedings of the 31st {ACM} {SIGMOD-SIGACT-SIGART} Symposium on
                  Principles of Database Systems, {PODS} 2012, Scottsdale, AZ, USA,
                  May 20-24, 2012},
  pages        = {5--14},
  publisher    = {{ACM}},
  year         = {2012},
  url          = {https://doi.org/10.1145/2213556.2213560},
  doi          = {10.1145/2213556.2213560},
  bibsource    = {dblp computer science bibliography, https://dblp.org}
}

@inproceedings{GMT15,
  author       = {Sudipto Guha and
                  Andrew McGregor and
                  David Tench},
  editor       = {Tova Milo and
                  Diego Calvanese},
  title        = {Vertex and Hyperedge Connectivity in Dynamic Graph Streams},
  booktitle    = {Proceedings of the 34th {ACM} Symposium on Principles of Database
                  Systems, {PODS} 2015, Melbourne, Victoria, Australia, May 31 - June
                  4, 2015},
  pages        = {241--247},
  publisher    = {{ACM}},
  year         = {2015},
  url          = {https://doi.org/10.1145/2745754.2745763},
  doi          = {10.1145/2745754.2745763},
  bibsource    = {dblp computer science bibliography, https://dblp.org}
}

@article{CF14,
  author       = {Graham Cormode and
                  Donatella Firmani},
  title        = {A unifying framework for  l0-sampling algorithms},
  journal      = {Distributed Parallel Databases},
  volume       = {32},
  number       = {3},
  pages        = {315--335},
  year         = {2014},
  url          = {https://doi.org/10.1007/s10619-013-7131-9},
  doi          = {10.1007/S10619-013-7131-9},
  bibsource    = {dblp computer science bibliography, https://dblp.org}
}

@inproceedings{ADKKP16,
  author       = {Ittai Abraham and
                  David Durfee and
                  Ioannis Koutis and
                  Sebastian Krinninger and
                  Richard Peng},
  editor       = {Irit Dinur},
  title        = {On Fully Dynamic Graph Sparsifiers},
  booktitle    = {{IEEE} 57th Annual Symposium on Foundations of Computer Science, {FOCS}
                  2016, 9-11 October 2016, Hyatt Regency, New Brunswick, New Jersey,
                  {USA}},
  pages        = {335--344},
  publisher    = {{IEEE} Computer Society},
  year         = {2016},
  url          = {https://doi.org/10.1109/FOCS.2016.44},
  doi          = {10.1109/FOCS.2016.44},
  bibsource    = {dblp computer science bibliography, https://dblp.org}
}

\appendix

\section{Proof of \cref{clm:efficientUniqueSum}}\label{sec:efficientUniqueSum}

First, we restate \cref{clm:efficientUniqueSum} for the convenience of the reader (similar claims have been used in i.e., \cite{KPS24d}, Appendix A):

\begin{claim}
    For a universe $[u]$ and a constant $k$, let $q$ be a prime of size $[u, 2u]$. There is a polynomial time computable linear function $g: \zo^{[u]} \rightarrow \F_q^{2k+1}$ such that:
    \begin{enumerate}
        \item For any $S \subseteq [u]$, $|S| \leq k$, $g(\zo^S)$ is unique. 
        \item For any vector $z \in \F_q^{2k+1}$ such that there exists $S \subseteq [u]$, $|S| \leq k$ and $g(\zo^S) = z$, there is a polynomial time algorithm for recovering $S$ given $z$.
    \end{enumerate}
\end{claim}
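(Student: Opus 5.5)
We will realize $g$ as a Reed--Solomon-style syndrome (power-sum) map. Identify each element $i \in [u]$ with a distinct nonzero field element $\alpha_i \in \F_q$; this is possible since $q \geq u$ (take $q$ with $u < q \leq 2u$ by Bertrand's postulate \cite{chebyshev1852memoire}, and map $i \mapsto i$, using $q > u$ so that all labels are nonzero). Define
\[
g(x) = \Big( \sum_{i \in [u]} x_i \alpha_i^{j} \Big)_{j = 0, 1, \dots, 2k} \in \F_q^{2k+1},
\]
i.e., $g(x) = Mx$ where $M$ is the $(2k+1) \times u$ Vandermonde-type matrix with $M_{j,i} = \alpha_i^j$. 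Linearity is immediate, and $g$ is computable in $\mathrm{poly}(u,k)$ time. For $x = \zo^S$, the coordinates are the power sums $p_j = \sum_{i \in S} \alpha_i^j$, with $p_0 = |S| \bmod q$. Uniqueness comes from the fact that $M$ is the parity-check matrix of a code with minimum distance greater than $2k$: any $2k$ columns of $M$ (restricted to rows $1, \dots, 2k$) form an invertible Vandermonde matrix times $\mathrm{diag}(\alpha_i)$, which is nonsingular because the $\alpha_i$ are distinct and nonzero.

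For uniqueness (item 1), suppose $g(\zo^S) = g(\zo^T)$ with $|S|, |T| \leq k$. Then $M(\zo^S - \zo^T) = 0$, and $\zo^S - \zo^T$ has support of size at most $2k$ contained in $S \triangle T$. Restricting to the columns in $S \triangle T$ and rows $1, \dots, 2k$ gives a nonsingular square-or-tall Vandermonde system, so the difference vector is $0$ and hence $S = T$. Note that the entries $\pm 1$ are nonzero in $\F_q$ since $q > 2$.

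For recovery (item 2), the planned route is standard syndrome decoding. From $z$, read off $p_1, \dots, p_{2k}$, and note that $s = |S|$ can be taken as the least $s$ for which the decoder below succeeds, or read from $p_0$ since $k < q$. Form the error-locator polynomial $\Lambda(X) = \prod_{i \in S}(1 - \alpha_i X)$. Its coefficients satisfy the Newton/Berlekamp--Massey linear recurrences $p_{t} + \Lambda_1 p_{t-1} + \dots + \Lambda_s p_{t-s} = 0$ for $t = s+1, \dots, 2k$. Equivalently, and characteristic-free (the obstacle with Newton's identities is division by integers up to $k$, fine here since $k < q$), the Hankel system $(p_{a+b})$ has rank exactly $s$ and a unique solution for the locator. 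Solve this with Berlekamp--Massey or Gaussian elimination in polynomial time. Then find the roots of $\Lambda$ by brute-force evaluation at all $u$ candidate points $\alpha_i^{-1}$, which is polynomial in $u$. The set of $i$ with $\Lambda(\alpha_i^{-1}) = 0$ is $S$.

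The main obstacle is justifying that the Hankel/recurrence system determines $\Lambda$ uniquely. I would handle this by the factorization $H = V^\top D V$, where $V$ is Vandermonde in $\{\alpha_i\}_{i \in S}$ and $D = \mathrm{diag}(\alpha_i)$. This factorization gives nonsingularity of the $s \times s$ Hankel matrix, and singularity for sizes greater than $s$. As a fallback, uniqueness from item 1 allows an alternative proof of correctness: any $S'$ output with $|S'| \leq k$ and $g(\zo^{S'}) = z$ must equal $S$. So it suffices to check the output, which is also polynomial time.
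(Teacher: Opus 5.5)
Your proposal is correct and takes essentially the paper's route: $g$ is the syndrome map of a (generalized) Reed--Solomon code, uniqueness follows because no nonzero vector of weight at most $2k$ lies in the kernel, and recovery is syndrome decoding. The only difference is that you make each step explicit where the paper simply cites Reed--Solomon syndrome decoding. You use the power-sum Vandermonde parity-check matrix, which also matches the stated codomain $\F_q^{2k+1}$, and you give a concrete error-locator/Hankel decoder whose correctness rests on the $V^\top D V$ nonsingularity argument. You also correctly treat $\zo^S - \zo^T$ as a $\{0,\pm 1\}$-vector supported on $S \triangle T$, whereas the paper writes it as $\zo^{S \oplus T}$.
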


\begin{proof}
    First, to see that there is a prime in $[u, 2u]$ we use Bertrand's postulate \cite{chebyshev1852memoire}. Finding such a prime can then be done by brute force in time $\mathrm{poly}(u)$.

    Next, we construct the function $g$. To do this, we fix a Reed-Solomon code $C$ of distance $d = 2k+1$ and block-length $u$ over field $q$. Because Reed-Solomon codes are maximum distance separable (MDS) (see \cite{guruswami2012essential}), we know that $\mathrm{dim}(C) = u - (2k+1)+1 = u - 2k$. Now, we let $H \in \F_q^{u \times (u - (2k))}$ denote a generating matrix of $C$, i.e., such that $C = \mathrm{Im}(H)$.

    Finally, we let $G$ denote the parity check matrix of $H$, i.e., $\mathrm{rank}(G) = 2k$, $G \in \F_q^{ (2k) \times u}$ and $G H = 0$. Equivalently stated, a vector $v \in \mathrm{Im}(G)$ if and only if $v H = 0$. Now, for a set $S \subseteq [u]$, the function $g$ computes $g(\zo^S) = G \cdot \zo^S$, where $\zo^S$ is understood to be the $\zo$-valued indicator vector of $S$ in $[u]$. Thus, clearly $g$ is a linear function, and $g$ is polynomial time computable. 

    Now, it remains to show that the above conditions are satisfied. To see condition 1, observe that $C$ has distance $\geq 2k+1$, which means that any codeword $c \in C$ has $\mathrm{wt}(c) \geq 2k+1$. Now, a codeword $c \in C$ if and only if $G \cdot c = 0$, so the above implies that  $G \cdot c = 0$ if and only if $\wt(c) = 0$ or $\wt(c) \geq 2k+1$. Let us suppose for the sake of contradiction that there exist two sets $S, S', |S| \leq k, |S'| \leq k$ such that $g(\zo^S ) = g(\zo^{S'})$. This implies that $g(\zo^{S \oplus S'}) = 0$, but $1 \leq \wt(\zo^{S \oplus S'}) \leq 2k$, which thus contradicts the aforementioned property. 

    Finally, to see the efficient decoding property, we recall (see for instance \cite{guruswami2012essential}, Section 12) that syndrome decoding of Reed-Solomon codes is known to be polynomial time computable up to distance $d/2 > k$. In our setting, the syndrome is given by $z = G \cdot \zo^S$, and syndrome decoding refers to finding the \emph{smallest Hamming weight vector} $v$ such that $z = Gv$, which, when $|S| \leq k$, must be $v = \zo^S$. Thus, there is a polynomial time algorithm for recovering $S$.
\end{proof}

\section{On Relaxing the ``Edge-Subset'' Condition for Fault-Tolerant Distance Oracles}\label{sec:edgeSubsetCondition}

Recall that, in defining a distance-oracle (see \cref{def:distanceOracle}), when we instantiate such a data structure on a graph $G = (V, E)$, we only require that the data structure returns the correct answer for queries $F, a, b$ when $F \subseteq E$ (and for $|F| \leq f$). Naturally, one can ask whether the complexity of such a data structure changes when we instead allow for \emph{arbitrary} deletion sets $F \subseteq \binom{V}{2}$, $|F| \leq f$, with the understanding that distances should be reported for the graph $G - F\cap G$ (i.e., one projects the deletion set back onto the original graph). 

It turns out that in this regime, there is a clear $\Omega(nf)$ lower bound:

\begin{claim}
    For integers $f, t$, let $\mathcal{S}$ be a data structure, such that, when instantiated on a graph $G$, $\mathcal{S}(G)$ supports queries on triples $F, a, b$, $|F| \leq f, F \subseteq \binom{V}{2}$, and returns a value $\widehat{\mathrm{dist}}_{G-F}(a, b)$ such that
    \[
    \mathrm{dist}_{G - F \cap G}(a, b) \leq \widehat{\mathrm{dist}}_{G-F}(a, b) \leq t \cdot \mathrm{dist}_{G - F \cap G}(a, b).
    \]
    Then, there exist graphs $G$ for which $\mathcal{S}(G)$ requires $\Omega(nf)$ bits to store.
\end{claim}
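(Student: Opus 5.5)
We argue by an encoding (incompressibility) argument. We exhibit a family $\mathcal{G}$ of graphs with $\log_2|\mathcal{G}| = \Omega(nf)$ and show that $\mathcal{S}(G)$, queried only with deletion sets $F \subseteq \binom{V}{2}$, lets us recover $G$ exactly. Then two distinct graphs in $\mathcal{G}$ cannot share a data structure, so some $G$ needs $\Omega(nf)$ bits. Here we use the relaxed model essentially: $F$ need not be a subset of $E(G)$, so a query can probe edge slots whose presence is unknown to the querier.

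\textbf{The family.} Assume $f \le n/4$; for larger $f$ the bound caps at $\Omega(n^2)$ and the same argument with $f' = \min(f, n/4)$ gives $\Omega(n \min(f,n))$, which is the meaningful form. Split $V = A \cup B$ with $|A| = n/2$ and $|B| = n/2$. Let $\mathcal{G}$ consist of all graphs in which $B$ is a fixed clique and each $a \in A$ is joined to an arbitrary subset $N(a) \subseteq B_0$. Here $B_0 \subseteq B$ is a fixed set of size $f$, so the deletion budget can cover every slot from $a$ into $B_0$. There are $2^{f \cdot n/2}$ such graphs, giving $\log_2|\mathcal{G}| = \Omega(nf)$.

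\textbf{Decoding.} Fix $a \in A$ and $b \in B_0$, and let $F_{a,b}$ be all slots $\{a,b'\}$ with $b' \in B_0 \setminus \{b\}$. This set has $f-1 \le f$ elements and is legal in this model even though some of its slots may be non-edges. In $G - F_{a,b}\cap G$, vertex $a$ is adjacent to at most $b$ and has no other neighbors, because $A$ is independent and $a$ has no edges to $B \setminus B_0$. Hence $\mathrm{dist}(a,b) = 1$ if $\{a,b\} \in E$, and $\mathrm{dist}(a,b) = \infty$ otherwise. Any finite stretch $t$ distinguishes these two cases: the reported value is at most $t$ in the first case and must be $\infty$ (or "disconnected") in the second. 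Running this query over all pairs $(a,b)$ reconstructs $G$. Therefore $\mathcal{S}$ is injective on $\mathcal{G}$ and must use $\log_2|\mathcal{G}| = \Omega(nf)$ bits on some graph.

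\textbf{Main obstacle.} The only delicate point is making sure the deletion set can isolate the single slot $\{a,b\}$ without relying on knowledge of $G$. This is exactly why $F$ ranges over all slots rather than over $E$, and it contrasts with the edge-subset model, where this query is unavailable. The construction handles it by putting all uncertainty about $a$'s neighborhood inside $B_0$, a set of size $f$. We also need to handle $\infty$ as a legal oracle output, or equivalently, if one prefers connected graphs, attach a long path of length $>t$ from $a$ into $B$ so that the two cases give distances $1$ versus $>t$. The intended query set can then be adjusted by at most a constant.
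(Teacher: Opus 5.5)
Your proof is correct and is essentially the paper's argument. The paper uses $n/f$ disjoint blocks of $f$ vertices carrying arbitrary graphs, which gives $2^{n(f-1)/2}$ graphs instead of your one-sided family with neighborhoods confined to the fixed $f$-set $B_0$, which gives $2^{nf/2}$; it decodes the slot $\{u,v\}$ exactly as your $F_{a,b}$ does, by deleting all other slots from $u$ into its block so that $u$ is either at distance $1$ from $v$ or isolated (the paper likewise relies on the oracle reporting $\infty$, so your optional long-path variant is unnecessary, and your explicit $\Omega(n\min(f,n))$ caveat is a point the paper leaves implicit).
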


\begin{proof}[Proof sketch.]
    For the vertex set $V$ on $n$ vertices, let us break $V$ into $\frac{n}{f}$ groups of $f$ vertices, which we denote by $V_1, \dots V_{n/f}$. On each one of these vertex sets, we define an arbitrary graph $G_i = (V_i, E_i)$, and let $G = G_1 \cup G_2 \cup \dots \cup G_{\frac{n}{f}}$.

    We claim that $\mathcal{S}(G)$ can be used to recover the exact identity of any $G$ of this form. Because there are $\left (2^{\binom{f}{2}} \right )^{n/f} = 2^{n(f-1)/2}$ many graphs of this form, it follows that $\mathcal{S}(G)$ requires $\Omega(nf)$ bits to represent.

    To see why such reconstruction is possible, let us focus our attention on $V_1$, and let $u \in V_1$. Now, to decide if there is an edge from $u \in V_1$ to another vertex $v \in V_1$, we simply perform the query with $u, v, F = V_1 - \{u,v\}$. If there is an edge from $u$ to $v$, then the distance from $u$ to $v$ is finite, and this will be reflected by the query. Otherwise, if there is no edge from $u$ to $v$, there will now also not be any edges from $u$ to $V_1 - \{u,v\}$, and so $u$ is an isolated vertex. Thus, the reported distances will be $\infty$. This enables us to determine the presence of any edge $(u,v)$ in the graph $G$, and thus recover the entire graph $G$. 
\end{proof}

In fact, notice that in the above argument, the queries we perform are not even adaptive. Indeed, there is a simple set of $f \cdot (f-1) \cdot \frac{n}{f}$ many deterministic queries that can be done independently of one another to determine the entire graph $G$. This means that, even in the \emph{oblivious} sketch model, there is no non-trivial compression that can be achieved.

\section{One-Way Communication Protocols and Deriving Lower Bounds}\label{sec:oneWayCommunication}

Both our data structures immediately yield efficient communication protocols in the {\em one-way two-party communication setting}. In this setting, Alice holds the edges $E$ of a graph $G=(V, E)$ and Bob holds at most $f$ edge deletions $F \subseteq E$. Alice sends a message to Bob who then outputs the result of the computation. Protocols have access to both private and public randomness, and the objective is to solve a problem with Alice communicating as few bits as possible. This setting has previously been considered, for example, in \cite{dark2020optimal, khanna2025streaming}.

Since, in this setting, Alice can compute both our data structures from \cref{thm:introDO} and \cref{thm:IntroObliviousSpanner} on her edge set $E$, send the resulting data structure to Bob who then applies the deletions $F$, we obtain a deterministic protocol for $f$-fault-tolerant $O(\log(n)\log\log(n))$-distance oracles, and a randomized communication protocol for the construction of a $O(\log(n)\log\log(n))$-spanner when Bob's edge deletions are oblivious. Both protocols have communication cost $\tilde{O}(n \sqrt{f})$ bits.

We also believe that improving the communication cost to $o(n \sqrt{f})$ bits in either protocol may be hard to achieve. Indeed, let $R(n,f)$ denote the communication complexity, i.e., the minimum amount of communication required, on $n$-vertex graphs when at most $f$ deletions are allowed. Then, a direct sum argument using a graph construction similar to the one used in \cite{khanna2025streaming} allows us to establish the following inequality:
$$R(n, f) =  \Omega(n / \sqrt{f} \cdot R(\sqrt{f}, f)) \ . $$
This inequality is justified as follows. Let $\Pi$ be a protocol designed for $n$-vertex graphs that allows for at most $f$ deletions. We consider an input graph $G=(V, E)$ consisting of $n/\sqrt{f}$ components $V_1, V_2, \dots, V_{n/\sqrt{f}}$, each of size $\sqrt{f}$, so that, for every $i$, the edges of each component $V_i$ are chosen independently of the other components, and $G[V_i]$ constitutes a hard input graph on $\sqrt{f}$ vertices when $f$ deletions are allowed. Then, since Bob can add the $f$ edge deletions to any of the components $V_1, \dots, V_{n/\sqrt{f}}$, the message sent in protocol $\Pi$ must contain sufficient information so that Bob can solve each of the components in the presence of up to $f$ deletions. We note that since all $n / \sqrt{f}$ components are independent, a clever compression of information is not possible. Hence, the protocol $\Pi$ must be able to solve $n / \sqrt{f}$ independent instances, each on $\sqrt{f}$ vertices with up to $f$ edge deletions per instance.

This idea can be formalized by, first, arguing that the information complexity of a problem in the one-way communication setting is within constant factors of its communication complexity, which follows from standard message compression arguments. Then, the idea outlined above can implemented via a traditional direct sum argument for information complexity.

To justify that $R(n, f)$ might be as large as $\Omega(n \cdot \sqrt{f})$, we note that, currently, no non-trivial protocol is known if Bob's deletions are {\em arbitrary}, i.e., $F$ is an arbitrary subset of the edge set $E$. Indeed, for a graph on $n$ vertices, with an arbitrary number of deletions, the best known upper bound for constructing oblivious $n^{\frac{2}{3}(1 - \alpha)}$-spanners in this setting is due to the work of Filtser, Kapralov, and Nouri \cite{filtser2021graph}, who created such sketches in space $\widetilde{O}(n^{1+ \alpha})$. Unfortunately, when we desire stretch $\mathrm{polylog}(n)$, this forces the size to be $n^{2 - o(1)}$. While there is no exact matching lower bound, the work of Chen, Khanna, and Li \cite{CKL22} showed that for sketches which only store \emph{vertex incidence information} and for $\alpha \in [0,1/10]$, it is indeed the best possible (up to $n^{o(1)}$ factors in the space). Because vertex incidence sketches make up the \emph{vast} majority of linear sketches used for graph problems (\cite{AGM12, AGM12b, GMT15} to name a few) we view $n^{2 - o(1)}$ as a potential lower bound for general linear sketches of stretch $\mathrm{polylog}(n)$.

Hence, if $R(n, n^2)$ was indeed as large as $n^{2 - o(1)}$, then we would obtain $R(n, f) = \Omega \left( (n / \sqrt{f}) \cdot f^{1 - o(1)} \right) = \Omega(n f^{1/2 - o(1)})$, thereby nearly matching the upper bounds achieved by our protocols.

\end{document}